\documentclass{article}
\usepackage{PRIMEarxiv}

\usepackage[utf8]{inputenc} % allow utf-8 input
\usepackage[T1]{fontenc}    % use 8-bit T1 fonts
\usepackage{url}            % simple URL typesetting
\usepackage{booktabs}       % professional-quality tables
\usepackage{amsfonts}       % blackboard math symbols
\usepackage{nicefrac}       % compact symbols for 1/2, etc.
\usepackage{microtype}      % microtypography
\usepackage{lipsum}
\usepackage{fancyhdr}       % header
\usepackage{graphicx}       % graphics
\usepackage[table]{xcolor}
\graphicspath{{media/}}     % organize your images and other figures under media/ folder
\usepackage{makecell}
\usepackage[normalem]{ulem}
\usepackage{algpseudocode}
\usepackage[ruled]{algorithm2e} % For algorithms

\usepackage{amsfonts}
\usepackage{amsmath}
\usepackage{amssymb}
\usepackage{amsthm}
\usepackage{boxedminipage}
\usepackage{caption} 
\usepackage{enumitem}
\usepackage{graphicx}
\usepackage{multirow}
\usepackage{natbib}  
\usepackage{pifont}
\usepackage{pgfplots}
\usepackage{comment}
\usepackage{xcolor}         % colors
\usepackage{tikz}
\usepackage{thm-restate}
\setlist{nolistsep}
\usetikzlibrary{arrows}
\usetikzlibrary{patterns}
\usetikzlibrary{positioning,fit,calc,arrows.meta,backgrounds}
\usetikzlibrary{shapes.geometric}

\newtheorem{lemma}{Lemma}%
\newtheorem{definition}{Definition}%
\newtheorem{theorem}{Theorem}%
\newtheorem{proposition}{Proposition}%
\newtheorem{corollary}{Corollary}%
\newtheorem{example}{Example}

\newcommand{\Ch}{\mathrm{Ch}}

\newcommand{\zhnew}[1]{\textcolor{black}{#1}}
\newcommand{\zh}[1]{\textcolor{red}{Zhaohong says: #1}}

\newcommand{\cmark}{\ding{51}}%
\newcommand{\xmark}{\ding{55}}%
\newcommand{\gcmark}{\textcolor{green!60!black}{\cmark}}
\newcommand{\rxmark}{\textcolor{red!80!black}{\xmark}}

\pgfplotsset{compat=1.18}

\allowdisplaybreaks

\usepackage{authblk}

\newcommand{\False}{\operatorname{false}}
\newcommand{\val}{\operatorname{val}}
\newcommand{\occ}{\operatorname{occ}}

\newcommand{\LLMS}{\textsc{LocalLexMax-4-SAT}}

\definecolor{familyblue}{RGB}{221,235,247}
\definecolor{daycaregreen}{RGB}{226,239,218}
\definecolor{transitionorange}{RGB}{252,228,214}
\definecolor{mutedgray}{RGB}{245,245,245}

\tikzset{
 family/.style={draw,rounded corners,fill=familyblue,minimum width=17mm,minimum height=8mm,align=center},
 daycare/.style={draw,rounded corners,fill=daycaregreen,minimum width=18mm,minimum height=9mm,align=center},
 trans/.style={draw,rounded corners,fill=transitionorange,minimum width=20mm,minimum height=9mm,align=center},
 edge/.style={-{Latex[length=2mm]},thick},
 contract/.style={-{Latex[length=1.8mm]},semithick},
 faint/.style={draw=gray!55,dashed},
 note/.style={font=\footnotesize,align=center}
}

\title{\bf Fairness and Non-wastefulness in Matching with Siblings and Initial Enrollments: Compatibility and 
Complexity}

\author[1]{Haris Aziz}
\author[2,3]{Gergely Csáji}
\author[4,5]{Zhaohong Sun}
\author[4]{Makoto Yokoo}
\affil[1]{UNSW Sydney, Australia}
\affil[2]{HUN-REN SZTAKI, Hungary}
\affil[3]{Budapest University of Technology and Economics, Hungary}
\affil[4]{Kyushu University, Japan}
\affil[5]{CyberAgent, AI Lab, Japan}

\date{\vspace{-10mm}}

\begin{document}

\maketitle

\begin{abstract}
We study a two-sided matching problem between families and daycare centers in which a family may have multiple children with joint preferences, and some children may already be enrolled in daycare centers while seeking transfers. Our model combines and generalizes two well-studied matching frameworks: matching with couples and school choice with initial enrollments. The interaction between joint family preferences and initial enrollments creates fundamental challenges for the existence of desirable outcomes and the design of computationally efficient algorithms.

Since stable matchings need not exist, we decompose stability into individual rationality, fairness, and non-wastefulness, and study the trade-off between fairness and non-wastefulness from two complementary directions. To preserve non-wastefulness, we first consider master-list-based approaches and show that they are insufficient in the presence of initial enrollments. We then introduce \emph{resettlement}, which protects displaced families by allowing them to return to their initial assignments, and develop a greedy improvement algorithm that establishes the existence of matchings satisfying non-wastefulness together with resettlement-based fairness. To preserve fairness, we first examine a generalized cutoff approach and then develop a fairness-preserving non-wasteful algorithm that guarantees fairness together with a relaxed notion of non-wastefulness.

We further characterize the computational complexity of these solution concepts. While several fairness notions can be verified in polynomial time, deciding whether a non-wasteful matching satisfying them exists is NP-complete. Resettlement restores universal existence, but computing resettlement-based outcomes is PLS-hard, and verifying the stronger dominance-based fairness notion is coNP-complete. Our results identify both axiomatic and computational trade-offs in reconciling fairness and non-wastefulness in matching markets with siblings and initial enrollments.
\end{abstract}

% \haris{Groups are discussed in abstracrt; siblings discussed elsewhere; make it consistent.}

\section{Introduction}
The daycare matching problem is a pressing issue faced by many cities worldwide. In many regions, daycare slots are in high demand, with long waiting lists resulting from the scarcity of qualified teachers and limited physical space~\citep{KaKo24a, SYT+24a}. This creates substantial challenges for families seeking to balance work and caregiving responsibilities. Decentralized approaches to assigning children to daycare centers may lead to inefficient outcomes and fail to make effective use of available resources. To address this issue, we adopt a centralized market-design approach to daycare matching. By eliciting families’ preferences and daycare centers’ priorities, this approach aims to identify desirable matchings that balance efficiency and fairness.

The daycare matching problem is not only an important practical matching market~\citep{KMT14a,GJR+21a,STM+23a}, but also provides a theoretically rich model that combines and extends two established matching problems: matching with couples and matching with initial enrollments. First, each family may have multiple children and submit joint preferences over assignments of its children to daycare centers. This sibling feature generalizes the Hospitals/Residents problem with Couples~\citep{RoPe99a}, in which couples of doctors express joint preferences over pairs of hospital positions. Second, some children may already be enrolled in daycare centers while seeking transfers to other centers. Such initial enrollments have been studied extensively in school choice~\citep{HHKS+17a,STY18a,STY+23a}.

Combining these two features creates challenges that do not arise when either is considered in isolation. Stability is a well-established solution concept in two-sided matching markets~\citep{GaSh62a,Roth08a} and plays a central role in many real-world applications, including the National Resident Matching Program in the US~\citep{Roth84a} and school choice programs in Boston and New York~\citep{APRS05a}. In matching with couples, stability is typically defined through the absence of blocking individuals and blocking coalitions involving doctors, couples, and hospitals~\citep{KlKl05a,KlKl07a,KPR13a,PDB16a}. Even for couples of two doctors, defining a blocking coalition requires careful treatment of different cases depending on whether the couple applies to the same or different hospitals, the number of available seats, and the hospitals’ priorities~\citep{McMa10a,BMM14a,MMT17a}. For example, five specific cases are carefully analyzed in~\citep{BIS11a}.

A case-by-case definition becomes increasingly cumbersome when a family may contain more than two children, because a deviation may simultaneously involve several daycare centers, vacancies, priority comparisons, and changes to the assignments of multiple siblings. Initial enrollments introduce an additional complication: when some siblings move to new daycare centers, seats released from their initial or current assignments may become available to other siblings. Thus, simply extending the usual definitions for couples does not provide a convenient general formulation. This motivates our first research question:
\emph{How can stability be defined concisely and precisely in daycare matching with siblings and initial enrollments?}

Informally speaking, a feasible matching is stable if no family (i) receives an unacceptable assignment (blocking individual) or (ii) can obtain a strictly preferred assignment that the relevant daycare centers are willing to accept (blocking coalition). A family’s improvement may involve three forms of reallocation. Such reallocations can occur in two primary ways: by occupying unfilled slots or by displacing children with lower priority. When siblings are involved, a third form of reallocation becomes possible: a child may occupy a seat released by one of its own siblings as part of the same deviation. We refer to the latter as \emph{seat withdrawal}. By formulating blocking coalitions through daycare choice functions, we provide a unified definition that captures all three forms without enumerating the many possible configurations separately.

\begin{figure}[tb]
\centering

\begin{tikzpicture}[
    >=stealth,
    pfeil/.style={
        ->,
        shorten >=1pt,
        thick
    },
    undirected/.style={
        -,
        thick,
        dashed
    },
    onlytext/.style={
        text centered
    },
    fairbox/.style={
        draw=red!45,
        fill=red!5,
        dashed,
        rounded corners=5pt,
        thin
    },
    stabbox/.style={
        draw=blue!45,
        fill=blue!5,
        dashed,
        rounded corners=5pt,
        thin
    },
    nwbox/.style={
        draw=green!45!black,
        fill=green!5,
        dashed,
        rounded corners=5pt,
        thin
    },
    resetbox/.style={
        draw=purple!45,
        fill=purple!5,
        dashed,
        rounded corners=5pt,
        thin
    }
]

% ============================================================
% Nodes
% ============================================================

\node[onlytext] (FRM) at (-4,2) {FRM};
\node[onlytext] (FR)  at (-2,2) {FR};
\node[onlytext] (FRW) at (0,2)  {FRW};
\node[onlytext] (F)   at (2,2)  {F};
\node[onlytext] (FMR) at (4,2)  {FMR};

\node[onlytext] (WS) at (-2,0) {WS};
\node[onlytext] (S)  at (0,0)  {S};

\node[onlytext] (SFMR) at (6,0) {SFMR};

\node[onlytext] (NW)  at (2,-2) {NW};
\node[onlytext] (FNW) at (4,-2) {FNW};

% ============================================================
% Colored group boxes
% ============================================================

\begin{scope}[on background layer]

\node[
    fairbox,
    fit=(FRM)(FR)(FRW)(F)(FMR),
    inner xsep=13pt,
    inner ysep=22pt
] (fairgroup) {};

\node[
    stabbox,
    fit=(WS)(S),
    inner xsep=18pt,
    inner ysep=22pt
] (stabgroup) {};

\node[
    nwbox,
    fit=(NW)(FNW),
    inner xsep=18pt,
    inner ysep=22pt
] (nwgroup) {};

\node[
    resetbox,
    fit=(SFMR)(FMR),
    inner xsep=18pt,
    inner ysep=22pt
] (resetgroup) {};

\end{scope}

% ============================================================
% Group labels
% ============================================================

\node[
    font=\small\bfseries,
    text=red!60!black
] at ([yshift=-7pt]fairgroup.north)
{Fairness-related properties};

\node[
    font=\small\bfseries,
    text=blue!60!black
] at ([yshift=-7pt]stabgroup.south)
{Stability-related properties};

\node[
    font=\small\bfseries,
    text=green!50!black
] at ([yshift=7pt]nwgroup.south)
{Non-wastefulness-related properties};

\node[
    font=\small\bfseries,
    text=purple!60!black,
    align=center
] at ([yshift=8pt]resetgroup.east)
{Resettlement-related\\properties};

% ============================================================
% Implication relationships
% ============================================================

\draw[pfeil] (S) -- (WS);
\draw[pfeil] (S) -- (NW);
\draw[pfeil] (S) -- (F);

\draw[pfeil] (WS) -- (FR);

\draw[pfeil] (NW) -- (FNW);

\draw[pfeil] (F) -- (FMR);
\draw[pfeil] (SFMR) -- (FMR);

\draw[pfeil] (F) -- (FRW);
\draw[pfeil] (FRW) -- (FR);
\draw[pfeil] (FR) -- (FRM);

% ============================================================
% Compatibility relationships
% ============================================================

\draw[undirected] (NW) -- (SFMR);
\draw[undirected] (NW) -- (FMR);
\draw[undirected] (FNW) -- (SFMR);
\draw[undirected] (FNW) -- (F);

\end{tikzpicture}

\caption{
Relationships among the proposed properties.
Directed arrows indicate implication, while dashed edges indicate
compatibility between the corresponding properties.
S: Stability;
WS: Weak Stability;
NW: Non-wastefulness;
FNW: Fairness-preserving Non-wastefulness;
F: Fairness;
FMR: Fairness by Master List and Resettlement;
SFMR: Strong Fairness by Master List and Resettlement;
FRW: Fairness by Replacement and Withdrawal;
FR: Fairness by Replacement; and
FRM: Fairness by Replacement and Master List.
}
\label{fig:relations}

\end{figure}

Stable matchings, however, need not exist once joint preferences of couples or siblings are introduced~\citep{Roth84a}. A natural approach is therefore to decompose stability into weaker requirements. Following the school-choice literature, we consider individual rationality, non-wastefulness, and fairness~\citep{EHYY14a,KHIY17a,AzSu21a,SoYe22a}. Individual rationality protects families from unacceptable assignments. Non-wastefulness rules out improvements that can be achieved without changing the assignments of other families. In our setting, such improvements include not only the use of seats that are currently vacant but also seats released through seat withdrawal. Fairness, in contrast, rules out profitable deviations that displace other families.

\begin{table}[tb]
\centering
\caption{
Comparison of the four proposed algorithms:
Serial Selection (SS), Greedy Improvement (GI),
Generalized Cutoff (GC), and Fairness-preserving Non-Wasteful (FN).
}
\label{tab:comparison}
\renewcommand{\arraystretch}{1.5}

\begin{tabular}{|l|c|c|c|c|}
    \hline
    \rowcolor{blue!10}
    \textbf{Properties}
    & \textbf{SS}
    & \textbf{GI}
    & \textbf{GC}
    & \textbf{FN}
    \\ \hline

    Non-wastefulness (NW)
    & \gcmark
    & \gcmark
    & \rxmark
    & \rxmark
    \\ \hline

    Fairness-preserving Non-wastefulness (FNW)
    & \gcmark
    & \gcmark
    & \rxmark
    & \gcmark
    \\ \hline

    Fairness (F)
    & \rxmark
    & \rxmark
    & \rxmark
    & \gcmark
    \\ \hline

    Fairness by Replacement and Withdrawal (FRW)
    & \rxmark
    & \rxmark
    & \rxmark
    & \gcmark
    \\ \hline

    Fairness by Replacement (FR)
    & \rxmark
    & \rxmark
    & \gcmark
    & \gcmark
    \\ \hline

    Fairness by Replacement and Master List (FRM)
    & \rxmark
    & \rxmark
    & \gcmark
    & \gcmark
    \\ \hline

    Fairness by Master List and Resettlement (FMR)
    & \rxmark
    & \gcmark
    & \rxmark
    & \gcmark
    \\ \hline

    Strong Fairness by Master List and Resettlement (SFMR)
    & \rxmark
    & \gcmark$^{\dagger}$
    & \rxmark
    & \gcmark$^{\ast}$
    \\ \hline
\end{tabular}

\vspace{1mm}

\begin{minipage}{0.96\linewidth}
\footnotesize
$^{\dagger}$ The SFMR guarantee for GI refers to its dominance-based
extension in Algorithm~\ref{alg:sfmr-greedy}.
\\
$^{\ast}$ FN is parameterized by a fairness property $A$. Each fairness entry in
the FN column refers to the corresponding choice of $A$, assuming that the
algorithm is initialized with a feasible and individually rational matching
satisfying $A$.
The SFMR entry refers to the instantiation $\mathsf A=\mathrm{SFMR}$
and does not imply that ordinary fairness (F) implies SFMR.
\end{minipage}
\end{table}

In the presence of siblings, these requirements may be incompatible, and initial enrollments further restrict which displaced families can be moved without violating their existing entitlements. Existing studies have considered couples or siblings and initial enrollments separately, particularly in hospital matching and school choice~\citep{NgVo18a,HHKS+17a,STY18a}. Some studies incorporate both features under restrictive preference domains~\citep{DMP22a,CEE+22a}, but much less is known under general joint preferences. 
This motivates our second research question:
\emph{When fairness and non-wastefulness cannot be simultaneously achieved, how can one property be relaxed while retaining the other, and which relaxations restore existence while remaining economically meaningful?}

Existence and axiomatic compatibility, however, do not by themselves resolve the computational questions raised by these solution concepts. Even when a desirable matching exists, it may be difficult to verify the relevant fairness property, decide whether such a matching exists in a given instance, or compute one efficiently. This motivates our third research question:
\emph{What is the computational complexity of verifying the proposed fairness notions, deciding the existence of matchings satisfying them together with non-wastefulness, and computing such matchings?}

The central theme of this paper is the tension between fairness and non-wastefulness created by the interaction of siblings and initial enrollments. Rather than seeking a single universally desirable solution concept, we study this trade-off from two complementary directions: preserving non-wastefulness while weakening fairness, and preserving fairness while weakening non-wastefulness. We identify which relaxations restore compatibility, develop algorithms that achieve them, and characterize the computational limits of the resulting solution concepts.

We summarize our contributions as follows.

First, we provide a unified treatment of stability, fairness, and
non-wastefulness for daycare matching with siblings and initial
enrollments. Our formulation identifies three distinct ways in which a
family may improve its assignment: claiming vacant seats, replacing
lower-priority children, and using seats released through sibling
withdrawal. By expressing these possibilities through daycare choice
functions, our framework avoids the case-by-case definitions commonly
required in matching with couples and provides a common basis for
comparing the solution concepts studied throughout the paper.

Second, we preserve non-wastefulness while weakening fairness and study
two complementary algorithmic approaches. The \emph{Serial Selection
(SS) Algorithm}, inspired by serial dictatorship, processes families
according to a master list and allows them to take available improvements
without altering the assignments of other families. This naturally
guarantees non-wastefulness, but initial enrollments make conventional
master-list-based fairness too demanding: displacing a lower-ranked
family may deprive it of an assignment protected by its initial
enrollment. This motivates the \emph{Greedy Improvement (GI) Algorithm},
which combines master-list priority with \emph{resettlement}. A
higher-ranked family may displace lower-ranked families, while the
displaced families are returned to their initial assignments whenever
such resettlement is feasible. We show constructively that GI guarantees
non-wastefulness together with fairness by master list and resettlement.
The same principle extends to our stronger dominance-based fairness
notion, under which displaced families can be selected directly according
to daycare priorities.

Third, we consider the opposite direction: preserving fairness while
weakening non-wastefulness. The \emph{Generalized Cutoff (GC) Algorithm}
extends the previous cutoff approach \citep{KaKo24a,ABB24a}: daycare cutoffs are
gradually relaxed, and families choose their most preferred assignments
available under the current cutoffs. Joint preferences among siblings,
however, create interactions across daycare centers that prevent this
natural extension from guaranteeing full fairness; we show that it
guarantees only fairness by replacement. This limitation motivates our
\emph{Fairness-preserving Non-wasteful (FN) Algorithm}. Rather than
restricting improvements through cutoffs, FN permits an improvement
whenever it does not create a prescribed fairness violation. This yields
full fairness together with a new relaxation of non-wastefulness, which
we call \emph{fairness-preserving non-wastefulness}. The underlying
principle is modular: by changing the fairness criterion used to screen
improvements, the same approach can accommodate different notions of
fairness.

% \haris{Give an intuitive overview of the algorithms studied in the paper. Why these algorithms? Can't some ingredients of these algorithms be combined to come up with a new algorithm with better properties?}

% \zhnew{
The four algorithms represent distinct ways of addressing the same
fairness–non-wastefulness tension. GC controls access through
\emph{daycare cutoffs}, FN screens improvements through a
\emph{fairness test}, SS protects existing assignments through
\emph{serial selection}, and GI permits displacement through
\emph{resettlement}. Simply combining these ingredients cannot in
general eliminate the underlying incompatibility: our impossibility
results show that non-wastefulness remains incompatible with several
successively weaker notions of fairness. Resettlement provides a way
around this incompatibility by protecting the initial assignments of
displaced families. Figure~\ref{fig:relations} summarizes the
relationships and compatibility among the resulting solution concepts,
while Table~\ref{tab:comparison} compares the guarantees of the four
algorithms.
% }

Finally, we characterize the computational limits of these solution
concepts. We distinguish among verifying whether a given matching
satisfies a fairness notion, deciding whether a matching satisfying both
fairness and non-wastefulness exists, and computing such a matching.
Fairness, fairness by replacement and withdrawal, fairness by
replacement, and fairness by replacement and master list can be verified
in polynomial time, whereas deciding the existence of a non-wasteful
matching satisfying these fairness requirements is NP-hard. Introducing
resettlement restores universal existence: a matching satisfying
non-wastefulness and fairness by master list and resettlement, or its
stronger dominance-based variant, always exists. This compatibility,
however, comes with a computational cost. Computing such matchings is
PLS-hard. Moreover, detecting a violation of the stronger dominance-based
fairness notion is NP-complete, and hence verifying whether a matching
satisfies this notion is coNP-complete. Thus, restoring existence does
not necessarily restore computational tractability.
Table~\ref{tab:complexity} summarizes these results.

\begin{table}[tb]
\centering
\caption{
Summary of the compatibility of different fairness notions
with non-wastefulness (NW), and the complexity of verifying,
deciding the existence of, and computing a matching satisfying
NW and the corresponding fairness notion.
}
\label{tab:complexity}

\renewcommand{\arraystretch}{1.5}

\begin{tabular}{l|c|c|c|c}
    \hline

\rowcolor{blue!10}\textbf{Fairness notion}
& \makecell{\textbf{Compatibility}\\\textbf{with NW}}
& \textbf{Verification}
& \textbf{Decision}
& \textbf{Computation}
\\ \hline

    Fairness (F)
    & \rxmark
    & P
    & NP-C (Theo.~\ref{theo:decision:nw-fairness})
    & NP-hard (Theo.~\ref{theo:decision:nw-fairness})
    \\ \hline

    \makecell[l]{Replacement and\\Withdrawal (FRW)}
    & \rxmark
    & P
    & NP-C (Theo.~\ref{theo:decision:fr_frw_NW})
    & NP-hard (Theo.~\ref{theo:decision:fr_frw_NW})
    \\ \hline

    Replacement (FR)
    & \rxmark
    & P
    & NP-C (Theo.~\ref{theo:decision:fr_frw_NW})
    & NP-hard (Theo.~\ref{theo:decision:fr_frw_NW})
    \\ \hline

    \makecell[l]{Replacement and\\Master List (FRM)}
    & \rxmark
    & P
    & NP-C (Theo.~\ref{thm:decision:FRM_NW})
    & NP-hard (Theo.~\ref{thm:decision:FRM_NW})
    \\ \hline

    \makecell[l]{Master List and\\Resettlement (FMR)}
    & \gcmark
    & P
    & P
    & PLS-C (Theo.~\ref{theo:computation:FMR})
    \\ \hline

    \makecell[l]{Strong Master List and\\Resettlement (SFMR)}
    & \gcmark
    & coNP-C (Theo.~\ref{theo:verification:SFMR_NW})
    & P
    & PLS-hard (Theo.~\ref{theo:computation:SFMR})
    \\ \hline
\end{tabular}

\end{table}

% \newpage
\section{Literature}

Our work lies at the intersection of matching with couples, daycare
matching with siblings, and matching markets with initial endowments. It
also relates to the literature on master lists, approximate feasibility,
and the computational complexity of stable matching.

\paragraph{Matching with couples.}
Our model is closely related to matching with couples, where couples have
joint preferences over pairs of positions; see \citet{BiKl13a} for a
comprehensive survey. A fundamental difficulty in this setting is that
stable matchings need not exist. Moreover, deciding whether a stable
matching exists is computationally hard. \citet{Ronn90a} established
NP-hardness even when each hospital has only one position.
\citet{McMa10a} showed that hardness persists when couples restrict their
preferences to pairs of positions within the same hospital, while
\citet{BIS11a} obtained hardness results under common rankings of doctors.
Thus, neither simple capacities nor substantial structure on preferences
and priorities is sufficient in general to restore tractability.

Our setting shares the central difficulty of matching with couples:
members of the same family are evaluated jointly rather than independently.
However, families in our model may contain more than two children and,
more importantly, some children may already be enrolled in daycares. These
initial enrollments introduce entitlements that are absent from the
standard matching-with-couples model and fundamentally affect the
interaction between fairness and non-wastefulness.

\paragraph{Restricted preferences and large markets.}
One approach to overcoming the non-existence of stable matchings with
couples is to impose structure on joint preferences. \citet{KlKl05a}
introduced weakly responsive preferences, under which stable matchings are
guaranteed to exist. Roughly, weak responsiveness requires that an
improvement in one member's assignment, holding the other member's
assignment fixed, be regarded as an improvement for the couple.
\citet{HaKo10a} subsequently studied bilateral substitutes in the matching
with contracts framework of \citet{HaMi05a} and showed the connection
between this condition and weak responsiveness. Such restrictions,
however, substantially constrain the complementarity between siblings'
assignments. We instead allow general joint family preferences and study
weaker desiderata when full stability cannot be achieved.

A complementary literature obtains positive results through market
largeness rather than preference restrictions. \citet{KPR13a} show that
stable matchings exist with high probability in sufficiently large markets
when the number of couples grows sufficiently slowly relative to market
size. \citet{ABH14a} substantially extend the range of market sizes for
which such asymptotic existence results hold, and more recent work further
develops this large-market perspective \citep{SYY25a}. These results
provide an important explanation for why stable outcomes may often be
observed despite worst-case non-existence. Our analysis is instead
worst-case and axiomatic: we ask which fairness and non-wastefulness
properties remain mutually compatible for arbitrary instances.

\paragraph{Algorithms for matching with couples and siblings.}
The computational difficulties of matching with couples have motivated a
variety of practical solution methods. The National Resident Matching
Program employs a heuristic related to the incremental algorithm of
\citet{RoVa90a}. \citet{BFI16a} propose an alternative approach based on
Scarf's algorithm \citep{Scar67a}, first computing a fractional solution
whose integrality yields a stable matching.

Integer programming, constraint programming, and satisfiability-based
methods have also been extensively explored
\citep{MOPU07a,BMM14a,DPB15a,MMT17a}. For example, \citet{DPB15a}
formulate the stable matching problem with couples using both SAT and
integer programming, allowing modern general-purpose solvers to search for
stable outcomes. Related optimization approaches have recently been
adapted to daycare matching markets with siblings
\citep{STM+23a,SYT+24a}. These works primarily seek stable outcomes
computationally when they exist. In contrast, our focus is on identifying
alternative normative properties that remain meaningful and achievable
when stability itself is incompatible with the market structure.

The problem has also been studied from the perspective of parameterized
complexity. \citet{MaSc11a} show that deciding whether a stable matching
exists is \textsf{W[1]}-hard when parameterized by the number of couples.
On the positive side, additional structure, including common priority
orderings, can lead to fixed-parameter tractability
\citep{BIS11a}. These results complement our complexity analysis, which
focuses on the difficulty of verifying and computing outcomes satisfying
relaxations of stability rather than stability itself.

\paragraph{Capacity modification and approximate feasibility.}
Another approach to the non-existence of stable matchings is to relax
capacity constraints rather than stability. Building on the fractional
approach based on Scarf's lemma, \citet{NgVo18a} show that every instance
can be transformed into a nearby instance admitting a stable matching by
slightly perturbing hospital capacities. Their results provide bounded
changes both to individual hospital capacities and to aggregate capacity.
More recently, \citet{CMMT24a} develop a polynomial-time algorithm for
computing near-feasible stable matchings in which each hospital's capacity
is violated by only a small amount, under restrictions on couples'
preferences.

Our approach is complementary. We retain the original daycare capacities
and instead relax the normative requirements imposed on the matching.
In particular, we decompose stability into individual rationality,
fairness, and non-wastefulness and investigate how the latter two
properties can be weakened or refined while preserving the original
feasibility constraints. This distinction is particularly relevant in
daycare markets, where physical capacities may be difficult to modify.

\paragraph{Master lists.}
Master lists provide a common ordering of agents and have been used to
impose additional structure on matching markets. \citet{IMS08a} study
stable marriage problems in which preference lists are derived from master
preference lists and characterize the computational consequences of this
structure. Common rankings have also been studied in matching with couples.
In particular, \citet{BIS11a} show that the non-existence and computational
difficulties associated with couples can persist even when hospitals share
a common ranking of doctors.

The master list in our model plays a different role. Rather than generating
the daycare priority orderings, it provides a common ordering of families
that restricts which displacements are regarded as justified. This idea
allows us to weaken fairness in a structured way while retaining
non-wastefulness. We show, however, that a master list alone is insufficient
in the presence of initial enrollments. A lower-ranked family displaced by
a higher-ranked family may lose an assignment to which it was initially
entitled. This observation motivates our notion of \emph{resettlement},
under which displaced families can be returned to their initial
assignments.

\paragraph{Initial enrollments and endowments.}
Initial assignments have a long history in matching and allocation
problems. In house allocation with existing tenants, agents may already
possess houses before participating in the allocation mechanism, and their
initial endowments must therefore be respected
\citep{AbSo99a}. Closely related issues arise in student placement and
school choice, where students may have existing placements or priority
claims \citep{BaSo99a,AbSo03b}.

Subsequent work has studied initial endowments together with increasingly
general distributional constraints. \citet{HHKS+17a} consider
strategy-proof school choice mechanisms in the presence of minimum quotas
and initial endowments. \citet{STY18a} develop efficient allocation
mechanisms that accommodate both endowments and distributional
constraints. \citet{STY+23a} further extend this line of research to
M-convex distributional constraints while retaining strategy-proofness.
Together, these studies demonstrate that initial endowments can be
incorporated into allocation mechanisms even under rich feasibility
constraints.

Our setting introduces a different source of difficulty: families have
joint preferences over the assignments of multiple children. An initial
enrollment therefore cannot always be treated independently at the child
level. A transfer by one family may simultaneously involve several
daycares, while displacement may affect the initial entitlements of
another family. The interaction between joint family preferences and
initial enrollments is precisely what generates several of the
incompatibility results studied in this paper and motivates our
resettlement-based fairness notions.

\paragraph{Local search and computational complexity.}
Finally, our results connect to the literature on local-search complexity.
The complexity class PLS was introduced by \citet{JPY88a} to formalize
search problems in which a locally improving solution can be recognized
efficiently but finding a local optimum may nevertheless be computationally
difficult. This framework is particularly relevant to our greedy
improvement procedures, which repeatedly implement welfare-improving moves
while maintaining the desired matching properties.

Our complexity results reveal a distinction between existence,
verification, and computation. Several of our fairness notions can be
verified in polynomial time even though deciding whether they can coexist
with non-wastefulness is NP-hard. Conversely, our resettlement-based
fairness notions restore universal existence, but stronger variants become
computationally difficult to verify, and computing an outcome that is both
non-wasteful and strongly fair by master list and resettlement is PLS-hard.
Thus, the computational obstacles in our model do not arise solely from
the possible non-existence of desirable outcomes: they persist even for
solution concepts for which existence is guaranteed.

% \haris{Differentiate clearly from [28] "Daycare Matching in Japan: Transfers and Siblings".} \zh{What do you mean? Compare the concepts or complexity results?}

% %-------------------------------------
% % Section: Model
% %-------------------------------------
\section{Model}

An instance of the matching problem with siblings and initial enrollment is given by $I=(C,F,D,Q,\mathcal{X},\omega,\succ_F,\succ_D)$,
where $C$, $F$, and $D$ denote the sets of children, families, and daycares, respectively. Each child $c\in C$ belongs to exactly one
family, denoted by $f_c\in F$. For each family $f\in F$, 
let $C_f=(c_1^f,c_2^f,\ldots,c_{n_f}^f)$
denote its children, listed in a fixed order, where $n_f=|C_f|$. 
Each daycare $d\in D$ is subject to a capacity $Q_d$.

We impose no bound on family size unless explicitly stated otherwise.
All our complexity results other than the local-search hardness results
hold even when each family has at most three children.

Let $\mathcal{X}\subseteq C\times D$ denote the set of available contracts. A contract $x=(c,d)\in\mathcal{X}$ represents the assignment of child $c$ to daycare $d$. 
The initial enrollment is represented by a set of contracts $\omega\subseteq\mathcal{X}$, where each child appears in at most one contract. For convenience, let $\omega(c)=d$ if $(c,d)\in\omega$, and let $\omega(c)=\emptyset$ if child $c$ is initially unassigned.

For any subset $Y\subseteq\mathcal{X}$, let $Y_c$, $Y_d$, and $Y_f$ denote the subsets of contracts involving child $c$, daycare $d$, and family $f$, respectively. Similarly, let $C(Y)$, $D(Y)$, and $F(Y)$ denote the sets of children, daycares, and families appearing in contracts in $Y$. For any set of families $G\subseteq F$, let
\[
Y_G=\bigcup_{g\in G}Y_g
\quad\text{and}\quad
\omega_G=\bigcup_{g\in G}\omega_g.
\]

A \emph{matching} $Y\subseteq\mathcal{X}$ is a set of contracts such that each child is assigned to at most one daycare, 
\[
|Y_c|\leq 1 \qquad \text{for every }c\in C.
\]
A matching $Y$ is \emph{feasible} if no daycare exceeds its capacity:
\[
|Y_d|\leq Q_d \qquad \text{for every }d\in D.
\]
We assume that the initial enrollment $\omega$ is feasible.

For a feasible matching $Y$ and a family $f$, the set $Y_f$ is called the family's
\emph{assignment}. When convenient, we represent an assignment as a tuple according to the fixed ordering of the children in $C_f$, where $\emptyset$ denotes being unmatched. For example, if $C_f=(c_1,c_2)$, the assignment $\{(c_1,d_1),(c_2,d_2)\}$
may be written as $(d_1,d_2)$, while $\{(c_1,d_1)\}$ may be written as $(d_1,\emptyset)$. When used by itself
as a family assignment, $\emptyset$ denotes the empty assignment, in
which all children of the family are unmatched.

Let $\mathcal{Y}$ denote the set of all matchings, and for each family
$f\in F$, define
\[
\mathcal{W}_f=\{Y_f\mid Y\in\mathcal{Y}\}
\]
as the set of possible assignments for family $f$. Each family $f$ has
a strict preference ordering $\succ_f$ over $\mathcal{W}_f$, and let
$\succ_F=(\succ_f)_{f\in F}$ denote the family preference profile.
We assume that each family's preference ordering over its possible assignments is explicitly represented as part of the input. Thus, although
the number of possible assignments may grow with family size, the preference list itself is part of the input, and our complexity bounds are measured with respect to the length of this explicit representation.

Preferences over matchings are induced by preferences over family
assignments. Thus, for two matchings $Y$ and $Z$,
\[
Y\succ_f Z
\quad\Longleftrightarrow\quad
Y_f\succ_f Z_f.
\]
In particular, the initial assignment of family $f$ is $\omega_f$. A
feasible matching $Y$ is acceptable to family $f$ if
$Y_f\succ_f\omega_f$ or $Y_f=\omega_f$.  We will combine this family-side acceptability requirement with daycare-side acceptability when defining individual rationality.

Each daycare $d\in D$ has a strict priority ordering $\succ_d$ over
$\mathcal{X}_d\cup\{\emptyset\}$, where $\emptyset$ denotes the outside
option of leaving a seat unfilled. We use the same symbol $\emptyset$
for these notions of non-assignment; its meaning will be clear from
context. Let
$
\succ_D=(\succ_d)_{d\in D}
$
denote the daycare priority profile. A contract $x\in\mathcal{X}_d$ is acceptable to daycare $d$ if $x\succ_d\emptyset$. We assume without loss of generality that every contract in $\mathcal{X}$ is acceptable to the corresponding daycare, since unacceptable contracts can simply be excluded from $\mathcal{X}$. For notational simplicity, we sometime write $c\succ_d c'$ instead of $(c,d)\succ_d(c',d)$. 

We impose a natural priority condition motivated by daycare assignment systems with initial enrollment: a daycare gives higher priority to children already enrolled there than to children who are not initially enrolled there~\citep{STM+23a,SYT+24a}. This assumption is important for guaranteeing compatibility between initial enrollment and fairness. Without it, individual rationality and fairness may be incompatible; for example, a daycare with a single seat may initially enroll a child who has lower priority than a newly applying child.

\begin{definition}[Priority for Initially Enrolled Children]
\label{assumption}
For every daycare $d\in D$ and any children $c,c'\in C$, if
$\omega(c)=d$ and $\omega(c')\neq d,$
then $c\succ_d c'.$
\end{definition}

% \newpage
%%%%%%%%%%%%%%%%%%%%%%%%%%%%%%%%%%%%%%%%
% % Section : Properties
%%%%%%%%%%%%%%%%%%%%%%%%%%%%%%%%%%%%%%%%
\section{Properties}

In this section, we explore how to generalize the concept of stability by accounting for siblings and initial enrollments, and we demonstrate how to decompose stability into two distinct properties:  fairness and non-wastefulness.

Broadly speaking, a feasible matching is deemed stable if 
no family: (i) is assigned an unacceptable outcome (blocking individual), or (ii) can improve its assignment by reallocating a subset of its children to more preferred daycares which are willing to accept them (blocking coalition) \citep{Roth85a}. Such reallocations can occur in two ways: by occupying unfilled slots or by displacing lower-priority children. Non-wastefulness ensures that no family can improve its allocation by merely occupying vacant slots, while fairness prohibits families from achieving better outcomes by displacing at least one child with lower priority.

The presence of siblings introduces an additional consideration. When a family changes its assignment, a child may vacate a seat at a daycare, and that seat may subsequently be used by another child in the same family. We refer to this possibility as \emph{seat withdrawal}. Depending on whether such withdrawals are taken into account when evaluating a deviation, we obtain two notions of stability.

%-------------------------------------------
% SubSection : Stability and Seat Withdrawal
%-------------------------------------------
\subsection{Seat Withdrawal and Stability}

We begin by introducing the concept of a choice function, which provides a streamlined framework for representing key concepts in matching theory. 
% \citep{HaMi05a}.

\begin{definition}[Choice Function for Daycares] \label{def:choice_func} For a given set of contracts $X \subseteq \mathcal{X}$, the choice function of a daycare $d$, denoted by $\mathrm{Ch}_d(X_d)$, selects contracts from $X_d$ sequentially, in descending order of $\succ_d$, while ensuring that the quota $Q_d$ is not exceeded. This function naturally extends to a set of daycares $D$ by aggregating their individual choices:
\[
\mathrm{Ch}_D(X) = \bigcup_{d \in D} \mathrm{Ch}_d(X_d).
\]
\end{definition}

Individual rationality is a fundamental requirement that prevents blocking by individuals. A feasible matching is \emph{individually rational} if every family is matched to an acceptable assignment and no daycare is assigned an unacceptable contract. We impose individual rationality as a rigid requirement throughout the paper.

% \begin{definition}[Individual Rationality]
% \label{def:IR}
% A matching $Y$ is \emph{individually rational} if i) $\forall f \in F, Y_f \succsim_f \omega_f$, and ii) $\forall d \in D, \forall x \in Y_d, x \succ_d \emptyset$.
% \end{definition}

\begin{definition}[Individual Rationality]
\label{def:IR}
A feasible matching $Y$ is \emph{individually rational} if
\begin{enumerate}
    \item $Y_f\succ_f\omega_f$ or $Y_f = \omega_f$ for every family $f\in F$; and
    \item $x\succ_d\emptyset$ for every daycare $d\in D$ and every
    contract $x\in Y_d$.
\end{enumerate}
\end{definition}

The primary challenge of defining stability lies in properly defining blocking coalitions, which we address next. We first consider deviations without seat withdrawal. In such a deviation, the family's current contracts remain available to the daycares when its proposed contracts are evaluated. Formally, we define weak stability as follows:
\begin{definition}[Weak Stability] \label{def:weak_stability}
% Given an individually rational matching $Y$ and a set of contracts $W \neq Y_f$, a family $f$ and a set of daycares $D(W)$ are said to \emph{strongly block} the matching $Y$ if the following conditions are satisfied:
Let $Y$ be an individually rational matching. A family $f$ and a proposed assignment $W\in\mathcal{W}_f$ with $W\neq Y_f$ form a \emph{strong blocking coalition} against $Y$ if
\begin{enumerate}
\item $W \succ_f Y_f$,
\item $W \subseteq \mathrm{Ch}_D(Y \cup W)$.
\end{enumerate}
% An individually rational matching is said to be \emph{weakly stable} if no such strong blocking coalition exists.
An individually rational matching $Y$ is \emph{weakly stable} if it admits no strong blocking coalition. 
\end{definition}

The concept in Definition~\ref{def:weak_stability} generalizes the classical notion of stability, reducing to it when siblings are not considered. 
The first condition requires that the family strictly prefer the proposed assignment. The second requires every proposed contract to be selected by the corresponding daycare when the proposed contracts are added to the current matching. Importantly, the family's current contracts remain in the choice set. Thus, a child who is currently assigned to a daycare may prevent a sibling from being accepted by that daycare, even if the child prefers to be transferred to a different daycare.
Even under this weaker notion, existence is not guaranteed in general~\citep{Roth84a}.

% \begin{proposition}[\citep{Roth84a}]
%     \label{prop:non-existence:weak-stability}
%     A weakly stable matching may fail to exist exist even if no child is initially enrolled. 
% \end{proposition}
The following example illustrates why allowing a family to withdraw its current contracts before its proposed assignment is evaluated can be important. The idea of seat withdrawal has been considered in recent work \citep{STM+23a, SYT+24a}.

\begin{example}[Instance of Seat Withdrawal]
\label{example:instance}
Consider a family $f$ with two children $C_f=(c_1,c_2)$ and two daycares
$D=\{d_1,d_2\}$, each with capacity one: $Q_{d_1}=Q_{d_2}=1.$
Suppose that $\mathcal{X} =
%\{(c_1,d_1),(c_1,d_2),(c_2,d_1),(c_2,d_2)\}.
\{(c_1,d_1),(c_1,d_2),(c_2,d_2)\}.$
Since all contracts involve children from the same family $f$, we have
$\mathcal{X}=\mathcal{X}_f$.

Suppose the family most prefers assigning $c_1$ to $d_1$ and $c_2$ to
$d_2$, and its next-best relevant assignment assigns only $c_1$ to $d_2$.
For example, its preferences may satisfy
$
\{(c_1,d_1),(c_2,d_2)\}
\succ_f
\{(c_1,d_2)\}
\succ_f
\emptyset,
$
with all other assignments ranked below $\emptyset$.
Each daycare ranks $c_1$ above $c_2$: $
c_1\succ_{d_1}c_2$ and $
c_1\succ_{d_2}c_2.$

Consider current assignment
$Y=\{(c_1,d_2)\}$ and a new assignment $W=\{(c_1,d_1),(c_2,d_2)\}.$
Under Definition~\ref{def:weak_stability}, the family cannot block $Y$ with
its preferred assignment
$W.$
Indeed,
$
\mathrm{Ch}_D(Y\cup W)
=
\{(c_1,d_1),(c_1,d_2)\},
$
because each daycare ranks $c_1$ above $c_2$. Hence,
$
W\not\subseteq\mathrm{Ch}_D(Y\cup W).
$
Nevertheless, this outcome appears undesirable. If $c_1$ first gives up the seat at $d_2$ and moves to $d_1$, the vacated seat at $d_2$ becomes available to $c_2$. The family can therefore obtain
$
W=\{(c_1,d_1),(c_2,d_2)\},
$
its most preferred assignment, without displacing any child outside the family.
\end{example}

%%%%%%%%%%%%%%%%%%%%%%%%%%%%%%%%%%%%%%%%%%%%%%%
% % Figure : seat withdrawal
%%%%%%%%%%%%%%%%%%%%%%%%%%%%%%%%%%%%%%%%%%%%%%%
\begin{figure}[tb]
\centering

\begin{tikzpicture}[
    >=stealth,
    child/.style={
        circle,
        draw=blue!70!black,
        fill=blue!10,
        minimum size=7mm,
        inner sep=0pt
    },
    daycare/.style={
        circle,
        draw=orange!80!black,
        fill=orange!15,
        minimum size=7mm,
        inner sep=0pt
    },
    contractA/.style={
        ->,
        very thick,
        draw=blue!85
    },
    contractB/.style={
        ->,
        very thick,
        draw=green!65!black
    },
    contractC/.style={
        ->,
        very thick,
        draw=red!80
    },
    withdrawn/.style={
        ->,
        very thick,
        dashed,
        draw=blue!85
    },
    note/.style={
        align=center,
        font=\small
    }
]

% ============================================================
% Current matching Y
% ============================================================

\node[font=\bfseries] at (-3.35,1.9)
    {Current assignment};

\node[note] at (-3.35,1.45)
    {\(Y=\{(c_1,d_2)\}\)};

\node[child] (c1L) at (-4.2,0.5) {\(c_1\)};
\node[child] (c2L) at (-4.2,-0.9) {\(c_2\)};

\node[daycare] (d1L) at (-2.5,0.5) {\(d_1\)};
\node[daycare] (d2L) at (-2.5,-0.9) {\(d_2\)};

% Current contract (c_1,d_2)
\draw[contractA] (c1L) -- (d2L);

% ============================================================
% Transition
% ============================================================

\node[
    note,
    text=green!60!black
] (step1) at (-0.35,0.9)
    {\(c_1\) moves to \(d_1\)};

\node[
    note,
    text=blue!85
] (step2) at (-0.35,0)
    {seat at \(d_2\) is released};

\node[
    note,
    text=red!80
] (step3) at (-0.35,-0.9)
    {\(c_2\) takes \(d_2\)};

\draw[->, thick] (step1) -- (step2);
\draw[->, thick] (step2) -- (step3);

% ============================================================
% New assignment W
% ============================================================

\node[font=\bfseries] at (2.4,1.9)
    {New assignment};

\node[note] at (2.4,1.45)
    {\(W=\{(c_1,d_1),(c_2,d_2)\}\)};

\node[child] (c1R) at (1.5,0.5) {\(c_1\)};
\node[child] (c2R) at (1.5,-0.9) {\(c_2\)};

\node[daycare] (d1R) at (3.2,0.5) {\(d_1\)};
\node[daycare] (d2R) at (3.2,-0.9) {\(d_2\)};

% Withdrawn contract (c_1,d_2)
\draw[withdrawn] (c1R) -- (d2R);

% New contract (c_1,d_1)
\draw[contractB] (c1R) -- (d1R);

% New contract (c_2,d_2)
\draw[contractC] (c2R) -- (d2R);

\end{tikzpicture}

\caption{
Illustration of seat withdrawal.
Although \(f\) cannot block \(Y\) with \(W\) under weak stability,
\(c_1\) can first move from \(d_2\) to \(d_1\), thereby releasing the
seat at \(d_2\), which can then be assigned to \(c_2\).
Thus, the family can obtain \(W\) without displacing any child outside
the family.
}
\label{fig:weak-stability-example}
\end{figure}

Motivated by this observation, we define stability by allowing a family to withdraw all of its current contracts before its
proposed assignment is evaluated.

% The following stability concept extends weak stability by taking \emph{seat withdrawal} into account.

\begin{definition}[Stability]  
\label{def:stability}  
Let $Y$ be an individually rational matching. A family $f$ and a proposed assignment $W\in\mathcal{W}_f$ with $W\neq Y_f$ form a \emph{blocking coalition} against $Y$ if
% Given an individually rational matching $Y$ and a set of contracts $W \neq Y_f$, a family $f$ and a set of daycares $D(W)$ are said to \emph{block} the matching $Y$ if the following conditions hold:  
\begin{enumerate}  
    \item $W \succ_f Y_f$;
    \item $W \subseteq Ch_D(Y \setminus Y_f \cup W)$.
\end{enumerate}  
An individually rational matching is \emph{stable} if no such blocking coalition exists.  
\end{definition}  

% The subtle difference between Definition~\ref{def:weak_stability} and Definition~\ref{def:stability} lies in the second condition, where, in Definition~\ref{def:stability}, the contracts in $W$ are selected by the relevant daycares when the input is $Y \setminus Y_f \cup W$, which represents the current matching excluding the family's original assignment $Y_f$, with the newly proposed assignment $W$ added. 

The difference between weak stability and stability lies entirely in the set of contracts presented to the daycare choice functions. Under weak stability, the proposed assignment is evaluated together with the family's current assignment. Under stability, the family's current contracts are first withdrawn. Consequently, seats vacated by members of the deviating family may be used by their siblings.

\begin{proposition}
    \label{prop:connection:stability}
    Stability implies weak stability.
\end{proposition}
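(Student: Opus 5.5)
We argue by contraposition: every strong blocking coalition is also a blocking coalition in the sense of Definition~\ref{def:stability}. Both notions are defined over individually rational matchings and share the condition $W\succ_f Y_f$. So it suffices to show that, for a family $f$ and a proposed assignment $W\neq Y_f$,
\[
W\subseteq \Ch_D(Y\cup W) \quad\Longrightarrow\quad W\subseteq \Ch_D\big((Y\setminus Y_f)\cup W\big).
\]
Given this, a matching that is not weakly stable admits a strong blocking coalition $(f,W)$, hence a blocking coalition, and so is not stable.

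The key step is a monotonicity property of the responsive choice function $\Ch_d$ from Definition~\ref{def:choice_func}. Fix a daycare $d$ and a contract $x\in W_d$. The set $(Y\setminus Y_f)\cup W$ is a subset of $Y\cup W$, and it still contains $x$. For the priority-based choice rule, $x\in\Ch_d(X_d)$ holds exactly when fewer than $Q_d$ contracts in $X_d$ are ranked above $x$ by $\succ_d$ (all contracts are acceptable by assumption). Removing contracts other than $x$ can only decrease the number of contracts ranked above $x$. Hence $x\in\Ch_d((Y\cup W)_d)$ implies $x\in\Ch_d(((Y\setminus Y_f)\cup W)_d)$; this is the substitutability of $\Ch_d$.

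One detail needs care. Contracts in $Y_f\cap W$ are not lost when $Y_f$ is removed, because $W$ is added back. Therefore $x$ indeed remains in the smaller choice set. Applying the observation daycare by daycare gives the displayed inclusion, since $\Ch_D$ is the union of the $\Ch_d$.

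The only real obstacle is stating the characterization of $\Ch_d$ precisely, namely that $x$ is chosen iff its rank within $X_d$ is at most $Q_d$. This follows directly from the sequential, descending-priority description in Definition~\ref{def:choice_func}, and the rest of the argument is bookkeeping.
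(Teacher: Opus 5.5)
Your proposal is correct and follows essentially the same route as the paper: every strong blocking coalition is also a blocking coalition, because removing $Y_f$ from the choice set cannot cause a previously selected contract of $W$ to be rejected under the priority-based choice function. Your rank characterization of $\Ch_d$ and your remark about $Y_f\cap W$ spell out the step the paper states in a single sentence.
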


% \begin{proof}
% Consider any strong blocking coalition between family $f$ and $D(W)$ for a given set of contracts $W$. Since $W \subseteq \mathrm{Ch}_D(Y \cup W)$, it follows that $W \subseteq \mathrm{Ch}_D(Y \setminus Y_f \cup W)$. If the contracts in $Y_f$ are removed first, the selection of $W$ remains unaffected. This is because each daycare selects contracts sequentially based on its priority only. As a result, a strong blocking coalition implies a blocking coalition. Consequently, the absence of blocking coalitions (stability) ensures the absence of strong blocking coalitions (weak stability). This completes the proof of Proposition~\ref{prop:connection:stability}.
% \end{proof}

\begin{proof}
Suppose that family $f$ and assignment $W$ form a strong blocking coalition against $Y$. Then
\[
W\subseteq\mathrm{Ch}_D(Y\cup W).
\]
Removing contracts in $Y_f$ cannot cause a contract in $W$ that was
previously selected to become rejected, because each daycare chooses the highest-priority contracts up to its capacity. Hence,
\[
W\subseteq
\mathrm{Ch}_D\bigl((Y\setminus Y_f)\cup W\bigr).
\]
Since $W\succ_fY_f$, family $f$ and $W$ also form a blocking coalition against $Y$. Therefore, if $Y$ admits no blocking coalition, it admits no strong blocking coalition. Thus, stability implies weak stability.
\end{proof}

%-------------------------------------------
% SubSection : Non-wastefulness and Fairness
%-------------------------------------------
\subsection{Non-wastefulness and Fairness}

In this subsection, we decompose stability in Definition~\ref{def:stability} into two distinct properties: non-wastefulness and fairness. Both properties are defined based on blocking coalitions, but distinguish whether a profitable deviation causes any existing contract to be rejected. Note that weak stability implies neither non-wastefulness nor fairness.

Intuitively, non-wastefulness requires that no family can obtain a strictly preferred assignment by claiming unused seats without altering the assignments of other families. 
We use the term \emph{unused seats} to include not only seats that are vacant under the current matching, but also seats that become available through the withdrawal of contracts belonging to the same family. 
Formally, we define non-wastefulness as follows.

\begin{definition}[Non-wastefulness]
\label{def:nonwastefulness}
Given an individually rational matching $Y$ and a new assignment $W$,
a family $f$ is said to \emph{claim unused seats} if the following
conditions hold:
\begin{description}
    \item[(1) blocking coalition:]
    $W \succ_f Y_f$ and
    $W \subseteq Ch_D(Y \setminus Y_f \cup W)$.
    
    \item[(2) no rejection:]
    % $Y \setminus Y_f \cup W = \mathrm{Ch}_D(Y \setminus Y_f \cup W).$
     \[Y \setminus Y_f \cup W = \mathrm{Ch}_D(Y \setminus Y_f \cup W).\]
\end{description}
An individually rational matching is said to be \emph{non-wasteful}
if no family can claim unused seats.
\end{definition}

Condition~(2) of Definition~\ref{def:nonwastefulness} requires that no
contract be rejected when family $f$ changes its assignment from $Y_f$
to $W$. Thus, the assignments of all other families remain unchanged.
Note that Condition~(2) implies
\[
W \subseteq Ch_D(Y \setminus Y_f \cup W),
\]
and hence implies the daycare-acceptance requirement in Condition~(1). We nevertheless retain this requirement explicitly in Condition~(1)
to maintain a common formulation with stability and the fairness notions
introduced below.

In contrast, fairness concerns profitable deviations that cause at least one existing contract to be rejected. Such a deviation may involve replacing lower-priority children, possibly together with the use of vacant seats or seats withdrawn by children in the same family. The difference from stability in Definition~\ref{def:stability} is therefore the additional requirement that at least one contract be rejected.

\begin{definition}[Fairness]
\label{def:fairness}
Given an individually rational matching $Y$ and a new assignment $W$, a family $f$ is said to have \emph{justified envy} if the following conditions hold:
\begin{description}
    \item[(1) blocking coalition:]
    $W \succ_f Y_f$ and
    $W \subseteq Ch_D(Y \setminus Y_f \cup W)$.
    
    \item[(2) at least one rejection:]
    \[
    |Y \setminus Y_f \cup W|
    >
    |\mathrm{Ch}_D(Y \setminus Y_f \cup W)|.
    \]
\end{description}

Let
$\mathrm{Re} = (Y \setminus Y_f \cup W)
\setminus \mathrm{Ch}_D(Y \setminus Y_f \cup W)$
denote the set of rejected contracts. Family $f$ is said to have \emph{justified envy toward} the families in $F(\mathrm{Re})$, i.e., the families whose contracts are rejected as a result of the deviation.

An individually rational matching is said to be \emph{fair} if no family has justified envy.
\end{definition}

Non-wastefulness and fairness partition blocking coalitions according to whether the deviation causes a rejection. If no contract is rejected, the deviating family can claim unused seats; if at least one contract is rejected, the family has justified envy. This immediately gives the following relationship.

\begin{proposition}
\label{prop:stability:equivalence}
An individually rational matching is stable if and only if it is both non-wasteful and fair.
\end{proposition}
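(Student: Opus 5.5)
The plan is to show that blocking coalitions in the sense of Definition~\ref{def:stability} are split exactly into two disjoint classes: those that claim unused seats and those that exhibit justified envy. Fix an individually rational matching $Y$, a family $f$, and an assignment $W\in\mathcal{W}_f$ with $W\neq Y_f$. Write $Z=(Y\setminus Y_f)\cup W$. Condition~(1) of Definition~\ref{def:nonwastefulness} and Condition~(1) of Definition~\ref{def:fairness} are both literally the blocking-coalition conditions of Definition~\ref{def:stability}. So it suffices to show that, for any blocking pair $(f,W)$, exactly one of the following holds:
\[
Z=\mathrm{Ch}_D(Z) \qquad\text{or}\qquad |Z|>|\mathrm{Ch}_D(Z)|.
\]

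The key observation is that a choice function only selects. By Definition~\ref{def:choice_func}, each $\mathrm{Ch}_d(Z_d)$ is a subset of $Z_d$, and hence $\mathrm{Ch}_D(Z)\subseteq Z$. Because $Z$ is finite, this gives a clean dichotomy. Either $\mathrm{Ch}_D(Z)=Z$, which is Condition~(2) of non-wastefulness, or $\mathrm{Ch}_D(Z)\subsetneq Z$, which gives $|\mathrm{Ch}_D(Z)|<|Z|$, i.e.\ Condition~(2) of fairness. I will also check that the aggregation $\bigcup_d \mathrm{Ch}_d(Z_d)$ is harmless here: the sets $Z_d$ partition $Z$ by daycare, so the union is disjoint and the cardinality comparison is immediate.

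With the dichotomy established, both directions follow. For ($\Rightarrow$), suppose $Y$ is stable. Then no blocking coalition exists, so in particular no family satisfies Condition~(1) of either definition. Hence no family claims unused seats and no family has justified envy, so $Y$ is non-wasteful and fair; individual rationality is assumed throughout. For ($\Leftarrow$), suppose $Y$ is not stable, so some $(f,W)$ is a blocking coalition. By the dichotomy, either $f$ claims unused seats via $W$ or $f$ has justified envy via $W$. This contradicts non-wastefulness or fairness, respectively.

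There is no real obstacle in this argument. The only point needing care is justifying $\mathrm{Ch}_D(Z)\subseteq Z$, including the aggregation over daycares, so that the two cardinality conditions are genuinely complementary.
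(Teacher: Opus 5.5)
Your proof is correct and takes essentially the same approach as the paper: both directions rest on splitting every blocking coalition into one that causes no rejection (claiming unused seats) and one that causes at least one rejection (justified envy). Your explicit check that $\mathrm{Ch}_D(Z)\subseteq Z$, which makes the two cardinality conditions exhaustive, is a detail the paper leaves implicit.
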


\begin{proof}
Suppose first that $Y$ is stable. Then no family forms a blocking coalition against $Y$. Since both a wasteful improvement and justified envy require a blocking coalition, $Y$ is both non-wasteful and fair.

Conversely, suppose that $Y$ is both non-wasteful and fair. If $Y$ were not stable, there would exist a family $f$ and an assignment $W$ forming a blocking coalition against $Y$. Consider $Y \setminus Y_f \cup W.$
Either no contract is rejected by the daycare choice functions, in which case
\[
Y \setminus Y_f \cup W
=
\mathrm{Ch}_D(Y \setminus Y_f \cup W),
\]
and $f$ can wastefully improve, contradicting non-wastefulness; or at least one contract is rejected, in which case
\[
|Y \setminus Y_f \cup W|
>
|\mathrm{Ch}_D(Y \setminus Y_f \cup W)|,
\]
and $f$ has justified envy, contradicting fairness. Hence, no blocking coalition exists, and $Y$ is stable.
\end{proof}

%-------------------------------------------
% SubsubSection : Computational Complexity of F and NW
%-------------------------------------------
\subsubsection{Computational Complexity of Problems Related to F}
\label{subsection:complexity:F}
We next examine the computational complexity of non-wastefulness and
fairness. While verifying whether a given matching satisfies both
properties is computationally tractable, deciding whether such a matching
exists is NP-complete.

\begin{theorem}
\label{theo:verification:nw-fairness}
Given a feasible matching $Y$, whether $Y$ satisfies both
non-wastefulness and fairness can be verified in polynomial time.
\end{theorem}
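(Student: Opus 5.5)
By Proposition~\ref{prop:stability:equivalence}, an individually rational matching satisfies both non-wastefulness and fairness if and only if it is stable. So the plan is to verify individual rationality and then check directly that no blocking coalition in the sense of Definition~\ref{def:stability} exists. This avoids testing the two conditions separately.

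First, I check that $Y$ is a matching, is feasible, and is individually rational. Concretely, for every child $c$ I check $|Y_c|\le 1$, and for every daycare $d$ I check $|Y_d|\le Q_d$. For every family $f$ I look up the positions of $Y_f$ and $\omega_f$ in the explicit preference list $\succ_f$ and check that $Y_f=\omega_f$ or $Y_f\succ_f\omega_f$. Finally, I check that every contract in $Y$ satisfies $x\succ_d\emptyset$. Each of these checks is polynomial in the input size. If any check fails, $Y$ is not individually rational, so by definition it is neither non-wasteful nor fair and we reject.

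Second, I test for blocking coalitions. Because each family's preference ordering over $\mathcal{W}_f$ is given explicitly as part of the input, the candidate deviations $W$ with $W\succ_f Y_f$ are exactly the entries listed above $Y_f$ in $\succ_f$. Their total number over all families is bounded by the input length. For each such pair $(f,W)$ I form $Z=(Y\setminus Y_f)\cup W$ and compute $\mathrm{Ch}_D(Z)$. This is done by sorting each $Z_d$ according to $\succ_d$ and keeping the top $Q_d$ contracts, which takes polynomial time. I then test whether $W\subseteq \mathrm{Ch}_D(Z)$. If some pair passes, $Y$ is not stable. The same computation also tells us which property fails: if $Z=\mathrm{Ch}_D(Z)$, the family claims unused seats; otherwise it has justified envy. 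If no pair passes, $Y$ is stable, and by Proposition~\ref{prop:stability:equivalence} it is both non-wasteful and fair.

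The only point that needs care is the enumeration of candidate deviations. Its polynomiality rests entirely on the model's assumption that $\succ_f$ is represented explicitly. Under that assumption the number of pairs $(f,W)$ is at most the input length, and each pair costs polynomial time to test, so the whole procedure runs in polynomial time. Entries of $\succ_f$ that are not valid members of $\mathcal{W}_f$, for example because they use contracts outside $\mathcal{X}$, can simply be skipped. Assignments ranked below $\omega_f$ also need no special treatment, since they can never be ranked above $Y_f$ when $Y$ is individually rational.
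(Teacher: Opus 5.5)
Your proposal is correct and follows essentially the same route as the paper's proof. Both check individual rationality, reduce to stability via Proposition~\ref{prop:stability:equivalence}, and enumerate the explicitly listed assignments $W\succ_f Y_f$, testing $W\subseteq \mathrm{Ch}_D((Y\setminus Y_f)\cup W)$ for each in polynomial time; your extra remarks (which property fails, skipping invalid entries) are harmless refinements.
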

\begin{proof}
We first check whether $Y$ is individually rational. For each family
$f\in F$, we verify that $Y_f$ is weakly preferred to its initial
assignment $\omega_f$, and for each contract $(c,d)\in Y$, we verify
that it is acceptable to daycare $d$. Since family preferences and
daycare priorities are explicitly represented, these conditions can be
checked in polynomial time. If $Y$ is not individually rational, it does
not satisfy the two properties.
Suppose henceforth that $Y$ is individually rational. By
Proposition~\ref{prop:stability:equivalence}, $Y$ is both non-wasteful
and fair if and only if it is stable. It therefore suffices to determine
whether $Y$ admits a blocking coalition.
For each family $f\in F$, we examine every assignment
$W\succ_f Y_f$ in its preference ordering. For each such $W$, let
\[
Z=(Y\setminus Y_f)\cup W.
\]
We compute $Ch_D(Z)$ by applying the daycare choice function at each
daycare. By Definition~\ref{def:stability}, family $f$ and assignment
$W$ form a blocking coalition against $Y$ if and only if
\[
W\subseteq Ch_D(Z).
\]
Since the daycare priority orders are explicitly represented,
$Ch_D(Z)$ can be computed in polynomial time. Moreover, family
preference orderings are explicitly represented, so the total number of
assignments examined is polynomial in the input size. Hence, we can
determine in polynomial time whether any blocking coalition exists.
Therefore, whether $Y$ satisfies both non-wastefulness and fairness can
be verified in polynomial time.
\end{proof}

\begin{theorem}
\label{theo:decision:nw-fairness}
Given an instance, deciding whether there exists a matching that is both
non-wasteful and fair is NP-complete. Consequently, computing such a
matching, while correctly reporting when none exists, is NP-hard.
\end{theorem}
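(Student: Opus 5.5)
Membership in NP follows from Theorem~\ref{theo:verification:nw-fairness}. A certificate is a feasible matching $Y$. We check in polynomial time that $Y$ is individually rational and admits no blocking coalition, which by Proposition~\ref{prop:stability:equivalence} is the same as being non-wasteful and fair. The statement about computation is immediate once the decision problem is NP-hard: any algorithm that outputs a non-wasteful fair matching, or correctly reports that none exists, decides the problem.

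For NP-hardness I would reduce from the NP-complete problem of deciding whether a stable matching exists in the Hospitals/Residents problem with Couples~\citep{Ronn90a,McMa10a}. In the variant of \citet{BIS11a}, hardness holds even with a common ranking of doctors and unit capacities. The embedding is as follows.
\begin{itemize}
\item Each single resident becomes a one-child family with no initial enrollment.
\item Each couple becomes a two-child family whose joint preference list is the couple's list, restricted to acceptable pairs. Every assignment outside that list is ranked below $\emptyset$.
\item Hospitals become daycares with the same capacities and priorities.
\item We set $\omega=\emptyset$. Definition~\ref{assumption} is then vacuous, and individual rationality reduces to acceptability.
\end{itemize}
By Proposition~\ref{prop:stability:equivalence}, it then suffices to show that a matching is stable in our sense exactly when it is stable in the couples model.

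The main obstacle is that our blocking notion allows \emph{seat withdrawal}. It may therefore admit blocks that the classical couples definition does not, or treat the case where both members of a couple apply to the same hospital differently. For example, one member may vacate a seat that the other takes. I would handle this in two ways.

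First, I would use a source problem in which withdrawal cannot matter. One option is unit capacities with couples never listing a pair that involves the same hospital for both members. Another option is to note that the classical couples definition of \citet{McMa10a} already lets a couple use seats freed by its own members ("a member may take the partner's current position").

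Second, I would then verify case by case that the condition $W\subseteq \mathrm{Ch}_D((Y\setminus Y_f)\cup W)$ coincides with the classical conditions. For a single child, it says the daycare has a free seat or prefers the child to its worst assignee. For a couple applying to distinct daycares, the condition splits into independent per-daycare checks on $Y\setminus Y_f$. This matches the classical couple-blocking conditions, where the couple's own old seats count as free.

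If matching the definitions exactly proves delicate, the fallback is a direct reduction from (3,B2)-SAT or exact cover using small gadgets. In that reduction, each clause gadget is a copy of the classical no-stable-matching couples instance of \citet{Roth84a}. The gadget is "switched off" (made to admit a stable outcome) exactly when some literal family occupies a designated seat. Variable families with two children choose between true and false seats. Soundness and completeness are then checked gadget by gadget using the blocking condition of Definition~\ref{def:stability}. All families have at most three children, consistent with the paper's remark.
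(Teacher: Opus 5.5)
\textbf{Gap: the correspondence between the two stability notions fails under the restrictions you name.} Your overall architecture is the paper's. You get NP membership from Theorem~\ref{theo:verification:nw-fairness} and Proposition~\ref{prop:stability:equivalence}. You get hardness by embedding an HRC instance, with couples as two-child families, hospitals as daycares, $\omega=\emptyset$, and unlisted assignments ranked below $\emptyset$. The problem is the one step that carries the proof: showing that stability in the sense of Definition~\ref{def:stability} coincides with the couples stability of the source problem. Neither of your two options secures this.

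The restriction ``unit capacities and no pair using the same hospital twice'' does not neutralize seat withdrawal. Take unit-capacity hospitals $h_1,h_2,h_3$ and a couple $f=(r_1,r_2)$ with $Y_f=\{(r_1,h_1),(r_2,h_2)\}$. Let $h_3$ be vacant, let $W=\{(r_1,h_2),(r_2,h_3)\}\succ_f Y_f$ be an acceptable pair, and let $r_2\succ_{h_2}r_1$. Then $W\subseteq \mathrm{Ch}_D((Y\setminus Y_f)\cup W)$, because $r_2$'s withdrawal frees $h_2$, so $f$ blocks (in fact wastefully). Under case~3(a) of Definition~\ref{def:stability:McMa10a}, however, it does not block, since $h_2$ is full and does not prefer $r_1$ to $r_2$. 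A swap $(h_1,h_2)\to(h_2,h_1)$ shows the same discrepancy.

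Your second option misreads \citet{McMa10a}. The exclusion $M(h_k)\setminus\{r_j\}$ in case~2 protects a partner who \emph{stays}. In case~3(a) the comparison is against all of $M(h_k)$, including a departing partner. So the couple's old seats do not count as free, and your case-by-case verification would fail on exactly these configurations.

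\textbf{What is missing.} You need a cited source problem whose hard instances make withdrawal irrelevant. Citing \citet{Ronn90a} or \citet{BIS11a} does not help unless you check that their hard instances have this property. The paper uses the $(2,2)$-HRC instances of \citet{BMM14a}: there are no single residents, all capacities are one, and any two acceptable pairs of a couple involve disjoint hospitals. This structure closes the gap as follows:
\begin{itemize}
\item Disjointness gives $D(Y_f)\cap D(W)=\emptyset$. Hence $W\subseteq\mathrm{Ch}_D(Y\cup W)\iff W\subseteq\mathrm{Ch}_D((Y\setminus Y_f)\cup W)$, so stability coincides with weak stability.
\item Unit capacity rules out cases 3(b)--3(d).
\item Disjointness, together with couples being either jointly assigned or jointly unassigned, rules out case~2.
\item What remains is case~3(a), which is exactly $W\subseteq\mathrm{Ch}_D(Y\cup W)$.
\end{itemize}
Your fallback gadget reduction from SAT is too unspecified to close the gap on its own.
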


\begin{proof}
We first establish NP-hardness by a reduction from the
Hospitals/Residents problem with Couples (HRC). An instance of HRC
consists of a set of residents and a set of hospitals. Some residents
form couples and express joint preferences over pairs of hospitals,
while each hospital has a capacity and a strict priority ordering over
residents. A couple is either jointly unassigned or jointly assigned to
an acceptable pair of hospitals.

The stability notion used in the HRC literature is the MM-stability
notion introduced by \citet{McMa10a} and subsequently employed in
\citet{BMM14a,MMT17a,CMMT24a}. For completeness, we recall the
definition below.

\begin{definition}[MM-stability]
\label{def:stability:McMa10a}
A matching $M$ is \emph{MM-unstable} if at least one of the following
holds:
\begin{enumerate}
    \item A single resident $r_i$ and a hospital $h_j$ block $M$ as in
    the classical Hospitals/Residents problem.

    \item A coupled resident $r_i$, whose partner is $r_j$, and a
    hospital $h_k$ block $M$: the couple $(r_i,r_j)$ prefers $(h_k,M(r_j))$
    to $(M(r_i),M(r_j))$,
    and $h_k$ is either undersubscribed in $M$ or prefers $r_i$ to some
    member of $M(h_k)\setminus\{r_j\}$.

    \item A couple $(r_i,r_j)$ and hospitals $h_k$ and $h_l$, where
    $h_k\neq M(r_i)$ and $h_l\neq M(r_j)$, block $M$: the couple
    prefers $(h_k,h_l)$ to $(M(r_i),M(r_j))$, and one of the following
    holds:
    \begin{enumerate}
        \item $h_k\neq h_l$, and $h_k$ is either undersubscribed in $M$
        or prefers $r_i$ to some member of $M(h_k)$, while $h_l$ is
        either undersubscribed in $M$ or prefers $r_j$ to some member
        of $M(h_l)$;

        \item $h_k=h_l$, and $h_k$ has at least two free posts, i.e.,
        \[
            c_k-|M(h_k)|\geq 2;
        \]

        \item $h_k=h_l$, $h_k$ has exactly one free post, i.e.,
        \[
            c_k-|M(h_k)|=1,
        \]
        and $h_k$ prefers at least one of $r_i,r_j$ to some member of
        $M(h_k)$;

        \item $h_k=h_l$, $h_k$ is full, $h_k$ prefers $r_i$ to some
        $r_s\in M(h_k)$, and $h_k$ prefers $r_j$ to some
        \[
            r_t\in M(h_k)\setminus\{r_s\}.
        \]
    \end{enumerate}
\end{enumerate}
A matching is \emph{MM-stable} if it is not MM-unstable.
\end{definition}

For positive integers $\alpha$ and $\beta$, $(\alpha,\beta)$-HRC denotes
the restriction of HRC in which each single resident finds at most
$\alpha$ hospitals acceptable, each couple finds at most $\alpha$ pairs
of hospitals acceptable, and each hospital finds at most $\beta$
residents acceptable. Deciding whether an MM-stable matching exists in
$(2,2)$-HRC is NP-complete, even when there are no single residents and
every hospital has capacity one~\citep{BMM14a}.

Given such an instance, we construct an instance of our model by
interpreting each couple as a family with two children and each hospital
as a daycare with the same capacity and priority ordering. The joint
preference ordering of each couple over acceptable hospital pairs becomes
the preference ordering of the corresponding family over the associated
assignments. All assignments corresponding to acceptable hospital pairs
are ranked above $\emptyset$, while all other assignments are ranked
below $\emptyset$. There is no initial enrollment. Clearly, this
construction can be performed in polynomial time.

We next establish the correspondence between MM-stability and our
stability notion. We first show that MM-stability and weak stability
coincide on the restricted instances used in the NP-hardness construction
of \citet{BMM14a}. Besides containing no single residents and having
unit-capacity hospitals, these instances have the property that whenever
a couple has two acceptable hospital pairs, the two pairs involve
disjoint hospitals.

Consider a couple corresponding to a family $f$ and a strictly preferred
acceptable assignment $W$. Case~(1) of
Definition~\ref{def:stability:McMa10a} cannot arise because the
restricted instances contain no single residents.

Case~(2) cannot arise either. If the couple is currently matched, then a
strictly preferred acceptable pair cannot retain one member at the same
hospital while moving only the other member, because any two acceptable
pairs of the couple involve disjoint hospitals. If the couple is
currently unmatched, our restricted HRC instances allow a couple to be
either jointly unassigned or jointly assigned to an acceptable pair of
hospitals, so a deviation assigning only one member is not an acceptable
alternative.

It remains to consider Case~(3). Since every hospital has capacity one,
Cases~(3b)--(3d), in which both members apply to the same hospital,
cannot yield a blocking coalition. Indeed, a unit-capacity hospital
cannot have two free posts; if it has one free post, it has no currently
assigned resident to replace; and if it is full, it cannot replace two
distinct currently assigned residents. Hence only Case~(3a) is relevant.

Let the proposed assignment be
\[
W=\{(r_i,h_k),(r_j,h_l)\},
\qquad h_k\neq h_l.
\]
Under Case~(3a) of MM-stability, the couple blocks $M$ precisely when it
strictly prefers $W$ to its current assignment and, at each proposed
hospital, the hospital is either vacant or prefers the corresponding
incoming resident to its current assignee.

Since every hospital has capacity one, the latter condition is equivalent
to requiring both proposed contracts to be selected when they are added
to the current matching. Under the correspondence between residents and
children and between hospitals and daycares, this is exactly
\[
W\subseteq Ch_D(Y\cup W).
\]
Therefore, a couple forms an MM-blocking coalition if and only if the
corresponding family forms a strong blocking coalition in the sense of
Definition~\ref{def:weak_stability}. Hence,
\[
\text{MM-stability}
\quad\Longleftrightarrow\quad
\text{weak stability}
\]
on the instances produced by the reduction.

We next show that weak stability and stability also coincide on these
instances. Consider a family $f$ and an acceptable assignment
$W\succ_fY_f$. If $Y_f=\emptyset$, then removing $Y_f$ clearly has no
effect. Otherwise, $Y_f$ and $W$ correspond to two distinct acceptable
hospital pairs of the same couple. By the disjointness property,
\[
D(Y_f)\cap D(W)=\emptyset.
\]
Thus, removing the family's current assignment does not change the set
of contracts considered at any daycare involved in $W$. Since daycare
choice is evaluated independently at each daycare,
\[
W\subseteq Ch_D(Y\cup W)
\quad\Longleftrightarrow\quad
W\subseteq Ch_D((Y\setminus Y_f)\cup W).
\]
Therefore,
\[
\text{weak stability}
\quad\Longleftrightarrow\quad
\text{stability}
\]
on the instances produced by the reduction.

Combining the two equivalences, the restricted HRC instance admits an
MM-stable matching if and only if the corresponding instance of our
model admits a stable matching. By
Proposition~\ref{prop:stability:equivalence}, an individually rational
matching is stable if and only if it is both non-wasteful and fair.
Hence, the original HRC instance admits an MM-stable matching if and
only if the constructed instance admits an individually rational
matching that is both non-wasteful and fair. This proves NP-hardness.

It remains to show membership in NP. A matching $Y$ serves as a
polynomial-size certificate. Feasibility and individual rationality can
be checked in polynomial time, and by
Theorem~\ref{theo:verification:nw-fairness}, whether $Y$ is both
non-wasteful and fair can also be verified in polynomial time.
Therefore, the existence problem belongs to NP and is thus NP-complete.

Finally, if a polynomial-time algorithm could compute a matching that is
both non-wasteful and fair, while correctly reporting when none exists,
then it would also decide the NP-complete existence problem in polynomial
time. Hence, the corresponding computation problem is NP-hard.
\end{proof}

% \newpage
%%%%%%%%%%%%%%%%%%%%%%%%%%%%%%%%%%%%%%%%
% % Section : Weakening Fairness
%%%%%%%%%%%%%%%%%%%%%%%%%%%%%%%%%%%%%%%%
\section{Weakening Fairness}

It is well known that weakly stable matchings may fail to exist in matching
markets with couples~\citep{Roth84a}. Since stability is stronger than weak
stability in our model, a stable matching may also fail to exist. This
motivates us to investigate whether one of the two components of stability,
fairness and non-wastefulness, can be preserved while the other is relaxed.
In this section, we focus on weakening fairness while retaining
non-wastefulness.

%-------------------------------------------
% SubSection : Replacement and Withdrawal
%-------------------------------------------
\subsection{Replacement and Withdrawal}

\begin{figure*}[t]
\centering

\resizebox{0.99\textwidth}{!}{%
\begin{tikzpicture}[
    >=Latex,
    panel/.style={
        draw,
        rounded corners=3pt,
        minimum width=5.0cm,
        minimum height=3.2cm,
        align=center,
        inner sep=7pt,
        line width=0.7pt
    },
    flow/.style={
        ->,
        line width=0.8pt
    }
]

% ------------------------------------------------
% Fairness (F)
% ------------------------------------------------
\node[
    panel,
    fill=blue!7,
    draw=blue!55!black
] (F) at (0,0) {
    {\bfseries Fairness (F)}\\[2mm]
    A new seat may come from\\[2mm]
    $\left\{
    \begin{array}{l}
    \text{a previously vacant seat,}\\[1mm]
    \text{the withdrawal of a sibling,}\\[1mm]
    \text{the replacement of another child.}
    \end{array}
    \right.$
};

% ------------------------------------------------
% FRW
% ------------------------------------------------
\node[
    panel,
    fill=yellow!10,
    draw=yellow!55!black
] (FRW) at (6.0,0) {
    {\bfseries FRW}\\[2mm]
    A new seat must come from\\[2mm]
    $\left\{
    \begin{array}{l}
    \text{\color{gray}\sout{a previously vacant seat,}}\\[1mm]
    \text{the withdrawal of a sibling,}\\[1mm]
    \text{the replacement of another child.}
    \end{array}
    \right.$
};

% ------------------------------------------------
% FR
% ------------------------------------------------
\node[
    panel,
    fill=red!6,
    draw=red!60!black
] (FR) at (12.0,0) {
    {\bfseries FR}\\[2mm]
    A new seat must come from\\[2mm]
    $\left\{
    \begin{array}{l}
    \text{\color{gray}\sout{a previously vacant seat,}}\\[1mm]
    \text{\color{gray}\sout{the withdrawal of a sibling,}}\\[1mm]
    \text{the replacement of another child.}
    \end{array}
    \right.$
};

\end{tikzpicture}%
}

\caption{
Comparison of fairness (F), fairness by replacement and withdrawal
(FRW), and fairness by replacement (FR).
Fairness allows a profitable deviation to use previously vacant seats,
seats released through sibling withdrawal, or seats obtained through
replacement.
FRW excludes deviations that rely on previously vacant seats, while FR
further excludes deviations that rely on sibling withdrawal.
}
\label{fig:F-FRW-FR}
\end{figure*}

Recall that a family may improve its assignment in three ways:
(i) by occupying previously unfilled seats,
(ii) by replacing lower-priority children, and
(iii) by using seats withdrawn by children in the same family.

The fairness notion in Definition~\ref{def:fairness} may be considered demanding because a deviation can constitute justified envy even when some of the newly obtained seats are vacant or are made available by seat withdrawal, as long as the deviation causes at least one existing contract
to be rejected. Such seats do not require the displacement of another family.

Motivated by this observation, we introduce two weaker notions of fairness. The first excludes improvements that rely on previously vacant seats, while still allowing seats released through seat withdrawal. The second is more restrictive and requires every newly obtained seat to arise through the replacement of a contract belonging to another family.

We first introduce fairness by replacement and withdrawal. Intuitively, whenever a family proposes a new assignment, the corresponding seat must already be occupied under the current assignment: it must become available either because a contract of another family is rejected or because a child of the deviating family withdraws from that daycare. Thus, improvements that rely on previously vacant seats are excluded.

\begin{definition}[Fairness by Replacement and Withdrawal (FRW)]
\label{def:fairness_withdraw}
Given an individually rational matching $Y$ and an alternative assignment
$W$, a family $f$ is said to have \emph{justified envy by replacement and
withdrawal} if the following conditions hold:
\begin{description}
    \item[(1) blocking coalition:]
    $W \succ_f Y_f$ and
    $W \subseteq Ch_D(Y \setminus Y_f \cup W)$.

    \item[(2) at least one rejection:]
    $|Y \setminus Y_f \cup W| >
    |\mathrm{Ch}_D(Y \setminus Y_f \cup W)|.$

    \item[(3) no use of vacant seats:]
    Let $\mathrm{Re} = (Y\setminus Y_f \cup W)
    \setminus Ch_D(Y\setminus Y_f\cup W).$
    For each contract $(c,d)\in W\setminus Y_f$, there exists a distinct contract $(c',d)$ such that
    $(c',d) \in \mathrm{Re}\cup(Y_f\setminus W).$
\end{description}
An individually rational matching is said to be
\emph{fair by replacement and withdrawal} if no family has justified envy
by replacement and withdrawal.
\end{definition}

The only difference between Definition~\ref{def:fairness} and
Definition~\ref{def:fairness_withdraw} is Condition~(3). For every newly
proposed contract $(c,d)\in W\setminus Y_f$, there must be a distinct
contract at the same daycare that either (i) is rejected as a result of the
deviation, or (ii) belongs to the deviating family's current assignment and
is withdrawn under $W$. Hence, every newly obtained seat is supported by
either replacement or seat withdrawal, and no previously vacant seat can
be used.

We next introduce an even weaker fairness notion by excluding sibling
withdrawal as well. Under fairness by replacement, every newly obtained
contract must be supported by a contract of another family that is actually
rejected as a result of the deviation.

\begin{definition}[Fairness by Replacement (FR)]
\label{def:fairness_replacement}
Given an individually rational matching $Y$ and a new assignment $W$, a
family $f$ is said to have \emph{justified envy by replacement} if the
following conditions hold:
\begin{description}
    \item[(1) blocking coalition:]
    $W \succ_f Y_f$ and
    $W \subseteq Ch_D(Y \setminus Y_f \cup W)$.

    \item[(2) at least one rejection:]
    $|Y \setminus Y_f \cup W| >
    |\mathrm{Ch}_D(Y \setminus Y_f \cup W)|.$

    \item[(3) replacement only:]
    For each $(c,d)\in W\setminus Y_f$, there exists a distinct contract
    $(c',d)$ such that
    \[
    (c',d)\in Y\setminus Y_f
    \quad\text{and}\quad
    (c',d)\notin Ch_D(Y\setminus Y_f\cup W).
    \]
\end{description}
An individually rational matching is said to be \emph{fair by replacement}
if no family has justified envy by replacement.
\end{definition}

The three fairness notions are ordered by implication.

\begin{proposition}
\label{prop:connection:F&FRW&FR}
Fairness implies fairness by replacement and withdrawal, and fairness by
replacement and withdrawal implies fairness by replacement.
\end{proposition}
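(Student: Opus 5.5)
The plan is to argue by contraposition at the level of the violation notions: a justified envy by replacement is a justified envy by replacement and withdrawal, and that in turn is a justified envy in the sense of Definition~\ref{def:fairness}. Then, whenever a matching admits no violation of the stronger notion, it admits none of the weaker one. All three definitions are stated for the same individually rational matching $Y$, family $f$, and assignment $W$. Conditions~(1) and~(2) are literally identical across Definitions~\ref{def:fairness}, \ref{def:fairness_withdraw}, and~\ref{def:fairness_replacement}, so the whole argument reduces to comparing the Condition~(3) clauses.

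For the first implication, I will suppose $f$ and $W$ witness justified envy by replacement and withdrawal against $Y$. Conditions~(1) and~(2) of Definition~\ref{def:fairness_withdraw} are exactly the two conditions of Definition~\ref{def:fairness}, so $f$ has justified envy and $Y$ is not fair. Contrapositively, fairness implies fairness by replacement and withdrawal. This step is immediate because FRW only adds Condition~(3).

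For the second implication, I will suppose $f$ and $W$ witness justified envy by replacement. Conditions~(1) and~(2) carry over unchanged. For Condition~(3), fix $(c,d)\in W\setminus Y_f$. Definition~\ref{def:fairness_replacement} supplies a distinct contract $(c',d)$ with $(c',d)\in Y\setminus Y_f$ and $(c',d)\notin Ch_D(Y\setminus Y_f\cup W)$. Since $Y\setminus Y_f\subseteq (Y\setminus Y_f)\cup W$, this contract lies in $(Y\setminus Y_f\cup W)\setminus Ch_D(Y\setminus Y_f\cup W)=\mathrm{Re}$, and hence in $\mathrm{Re}\cup(Y_f\setminus W)$. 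This is exactly Condition~(3) of Definition~\ref{def:fairness_withdraw}, so $f$ has justified envy by replacement and withdrawal. Contrapositively, FRW implies FR.

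The only point needing care is the word ``distinct'' in Condition~(3). I read it as requiring the supporting contract to differ from $(c,d)$, possibly with distinct supporting contracts for different proposed contracts. Under either reading the witness from FR can be reused verbatim for FRW, because the same assignment of supporting contracts works and $\mathrm{Re}\cup(Y_f\setminus W)$ only enlarges the pool. I therefore expect no real obstacle. The proof is a direct inclusion of violation sets, and the main task is to state the membership $(c',d)\in\mathrm{Re}$ explicitly.
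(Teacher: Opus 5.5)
Your proposal is correct and follows essentially the same route as the paper. Both arguments note that Conditions~(1) and~(2) are shared, and that Condition~(3) of FR implies Condition~(3) of FRW because each supporting contract $(c',d)\in Y\setminus Y_f$ outside $Ch_D(Y\setminus Y_f\cup W)$ lies in $\mathrm{Re}$. Your explicit membership check and your remark on the ``distinct'' clause are slightly more careful than the paper's one-line justification, but the argument is the same.
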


\begin{proof}
Any justified envy by replacement and withdrawal satisfies Conditions~(1)
and~(2) of Definition~\ref{def:fairness}, and therefore constitutes
justified envy. Hence, if a matching is fair, it is also fair by replacement
and withdrawal.

Similarly, any justified envy by replacement satisfies Conditions~(1)
and~(2) of Definition~\ref{def:fairness_withdraw}. Moreover,
Condition~(3) of Definition~\ref{def:fairness_replacement} implies
Condition~(3) of Definition~\ref{def:fairness_withdraw}: every newly
proposed contract is supported by a rejected contract from another family,
which is one of the two types of supporting contracts permitted under
replacement and withdrawal. Therefore, justified envy by replacement
implies justified envy by replacement and withdrawal. Consequently,
fairness by replacement and withdrawal implies fairness by replacement.
\end{proof}

Although fairness by replacement is substantially weaker than fairness, it
is still incompatible with non-wastefulness, even in the absence of initial
enrollment.

\begin{theorem}
\label{theo:non_existence:FR&NW}
A matching that satisfies both non-wastefulness and fairness by replacement
may not exist, even when no child is initially enrolled.
\end{theorem}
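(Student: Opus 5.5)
The plan is to exhibit a small counterexample with no initial enrollment, in the spirit of Roth's classical non-existence example for couples. Because all contracts must be acceptable and $\omega=\emptyset$, individual rationality reduces to each family weakly preferring its assignment to $\emptyset$. The instance can then be checked exhaustively.

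\textbf{The instance.} Take daycares $d_1,d_2$ with $Q_{d_1}=Q_{d_2}=1$ and three families:
\begin{enumerate}
\item a family $f$ with children $C_f=(c_1,c_2)$ and preferences $(d_1,d_2)\succ_f\emptyset$, all other assignments ranked below $\emptyset$;
\item a single-child family $g_1$ with child $s_1$ and preferences $d_2\succ d_1\succ\emptyset$;
\item a single-child family $g_2$ with child $s_2$, which accepts only $d_1$.
\end{enumerate}
The priorities are $s_1\succ_{d_1}c_1\succ_{d_1}s_2$ and $c_2\succ_{d_2}s_1$.

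\textbf{Enumerating matchings.} Since $f$ is either placed at $(d_1,d_2)$ or unmatched, there are only a handful of feasible individually rational matchings, and I would go through them one by one.

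\begin{enumerate}
\item $Y=\{(c_1,d_1),(c_2,d_2)\}$. Family $g_1$ deviates to $W=\{(s_1,d1)\}$, written properly as $\{(s_1,d_1)\}$. Since $s_1\succ_{d_1}c_1$, the contract $(c_1,d_1)$ is rejected. So Conditions~(1)--(3) of Definition~\ref{def:fairness_replacement} hold, and this is justified envy by replacement.
\item $Y=\{(s_1,d_2),(s_2,d_1)\}$. Family $f$ deviates to $W=\{(c_1,d_1),(c_2,d_2)\}$. Since $c_1\succ_{d_1}s_2$ and $c_2\succ_{d_2}s_1$, each new contract of $f$ is supported by a distinct rejected contract of another family. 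This is again justified envy by replacement.
\item Every remaining individually rational matching leaves some seat vacant that a family wants. This covers $\emptyset$, $\{(s_1,d_1)\}$, $\{(s_1,d_2)\}$, and $\{(s_2,d_1)\}$, and in each case a deviation with no rejection exists:
\begin{enumerate}
\item at $\emptyset$ or $\{(s_2,d_1)\}$, $s_1$ takes the vacant $d_2$;
\item at $\{(s_1,d_1)\}$, $s_1$ moves to the vacant $d_2$, which it prefers;
\item at $\{(s_1,d_2)\}$, $s_2$ takes the vacant $d_1$.
\end{enumerate}
Each of these satisfies Definition~\ref{def:nonwastefulness}(2), so the matching is wasteful.
\end{enumerate}

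\textbf{Main obstacle.} The hardest part is tuning the priorities so that the candidate that is naturally non-wasteful is broken through replacement only. Earlier attempts with different priorities left a matching such as $\{(s_1,d_2),(s_2,d_1)\}$ stable. They also produced cases where $f$'s blocking move used a vacant seat, which does not count as envy under fairness by replacement.

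So the proof consists of two checks.
\begin{enumerate}
\item The list of individually rational matchings above is complete. Contracts involving only one child of $f$ are ranked below $\emptyset$, and the capacity is one at each daycare.
\item In each of the two full matchings, every new contract of the deviating family is matched by a distinct rejected contract of another family at the same daycare.
\end{enumerate}
Once both are verified, no matching is simultaneously non-wasteful and fair by replacement, and the instance has no initial enrollment, as required.
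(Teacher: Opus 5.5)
Your counterexample is correct and is the paper's own instance up to relabeling: swap $d_1\leftrightarrow d_2$ and $c_1\leftrightarrow c_2$, and your $s_1,s_2$ play the roles of $c_3,c_4$. The argument is also the same (the two full matchings are each broken by justified envy by replacement, and every other individually rational matching is wasteful), and your explicit list of all six individually rational matchings is, if anything, slightly more complete than the paper's case analysis.
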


\begin{figure}[tb]
\centering

\begin{tikzpicture}[
    >=stealth,
    child/.style={
        circle,
        draw=blue!70!black,
        fill=blue!10,
        minimum size=7mm,
        inner sep=0pt
    },
    daycare/.style={
        circle,
        draw=orange!80!black,
        fill=orange!15,
        minimum size=7mm,
        inner sep=0pt
    },
    family/.style={
        rounded corners,
        draw=green!60!black,
        fill=green!8,
        minimum width=9mm,
        minimum height=8mm,
        align=center
    },
    member/.style={
        -,
        thin,
        dashed,
        draw=gray!65
    },
    assign/.style={
        ->,
        very thick,
        draw=black
    },
    replace/.style={
        ->,
        very thick,
        dashed,
        draw=red!80
    },
    panel/.style={
        draw=gray!55,
        rounded corners=3pt
    },
    note/.style={
        align=center,
        font=\small
    }
]

% ============================================================
% Matching X1
% ============================================================

\node[
    panel,
    minimum width=6.5cm,
    minimum height=5.2cm
] (X1box) at (-3.5,-1.7) {};

\node[
    font=\bfseries
] at (-3.5,0.45)
    {Non-wasteful matching \(X_1\)};

\node[note] at (-3.5,0.0)
    {$X_1=\{(c_1,d_1),(c_2,d_2)\}$};

% ------------------------------------------------------------
% Families and children
% ------------------------------------------------------------

% f1 has children c1 and c2
\node[family] (f1L) at (-6.0,-1.2) {$f_1$};

\node[child] (c1L) at (-4.8,-0.8) {$c_1$};
\node[child] (c2L) at (-4.8,-1.6) {$c_2$};

% f2 has child c3
\node[family] (f2L) at (-6.0,-2.7) {$f_2$};
\node[child] (c3L) at (-4.8,-2.7) {$c_3$};

% Membership
\draw[member] (f1L) -- (c1L);
\draw[member] (f1L) -- (c2L);
\draw[member] (f2L) -- (c3L);

% ------------------------------------------------------------
% Daycares
% ------------------------------------------------------------

\node[daycare] (d1L) at (-2.5,-0.8) {$d_1$};
\node[daycare] (d2L) at (-2.5,-1.6) {$d_2$};

\node[font=\small] at (-1.5,-0.8)
    {\(Q=1\)};
\node[font=\small] at (-1.5,-1.6)
    {\(Q=1\)};

% ------------------------------------------------------------
% Current matching
% ------------------------------------------------------------

\draw[assign] (c1L) -- (d1L);
\draw[assign] (c2L) -- (d2L);

% ------------------------------------------------------------
% Replacement
% ------------------------------------------------------------

\draw[replace] (c3L) -- (d2L);

\node[
    note,
    text=red!75!black
] at (-3.5,-3.9)
    {$f_2$ replaces $(c_2,d_2)$ with $(c_3,d_2)$};

% ============================================================
% Matching X2
% ============================================================

\node[
    panel,
    minimum width=6.5cm,
    minimum height=5.2cm
] (X2box) at (3.5,-1.7) {};

\node[
    font=\bfseries
] at (3.5,0.45)
    {Non-wasteful matching \(X_2\)};

\node[note] at (3.5,0.0)
    {$X_2=\{(c_3,d_1),(c_4,d_2)\}$};

% ------------------------------------------------------------
% Families and children
% ------------------------------------------------------------

% f1 has children c1 and c2
\node[family] (f1R) at (1.0,-1.15) {$f_1$};

\node[child] (c1R) at (2.2,-0.75) {$c_1$};
\node[child] (c2R) at (2.2,-1.55) {$c_2$};

% f2 has child c3
\node[family] (f2R) at (1.0,-2.35) {$f_2$};
\node[child] (c3R) at (2.2,-2.35) {$c_3$};

% f3 has child c4
\node[family] (f3R) at (1.0,-3.15) {$f_3$};
\node[child] (c4R) at (2.2,-3.15) {$c_4$};

% Membership
\draw[member] (f1R) -- (c1R);
\draw[member] (f1R) -- (c2R);
\draw[member] (f2R) -- (c3R);
\draw[member] (f3R) -- (c4R);

% ------------------------------------------------------------
% Daycares
% ------------------------------------------------------------

\node[daycare] (d1R) at (4.7,-1.25) {$d_1$};
\node[daycare] (d2R) at (4.7,-2.75) {$d_2$};

\node[font=\small] at (5.7,-1.25)
    {\(Q=1\)};
\node[font=\small] at (5.7,-2.75)
    {\(Q=1\)};

% ------------------------------------------------------------
% Current matching
% ------------------------------------------------------------

\draw[assign] (c3R) -- (d1R);
\draw[assign] (c4R) -- (d2R);

% ------------------------------------------------------------
% Replacements by f1
% ------------------------------------------------------------

\draw[replace] (c1R) -- (d1R);
\draw[replace] (c2R) -- (d2R);

\node[
    note,
    text=red!75!black
] at (3.5,-3.9)
    {$f_1$ replaces $(c_3,d_1)$ and $(c_4,d_2)$};

% ============================================================
% Legend
% ============================================================

\draw[member] (-5.8,-4.8) -- (-4.9,-4.8);
\node[anchor=west, font=\small] at (-4.8,-4.8)
    {family membership};

\draw[assign] (-1.7,-4.8) -- (-0.8,-4.8);
\node[anchor=west, font=\small] at (-0.7,-4.8)
    {current assignment};

\draw[replace] (2.3,-4.8) -- (3.2,-4.8);
\node[anchor=west, font=\small] at (3.35,-4.8)
    {new assignment};

\end{tikzpicture}

\caption{
The two non-wasteful matchings in the instance of
Theorem~\ref{theo:non_existence:FR&NW}.
In \(X_1\), \(f_2\) has justified envy by replacing \(c_2\) with \(c_3\)
at \(d_2\), while in \(X_2\), \(f_1\) can replace \(c_3\) and \(c_4\)
with \(c_1\) and \(c_2\), respectively.
}
\label{fig:FR-NW-nonexistence}
\end{figure}

\begin{proof}
Consider three families. Family $f_1$ has two children
$C_{f_1}= (c_1,c_2)$, while families $f_2$ and $f_3$ have one child each,
$C_{f_2}=(c_3)$ and $C_{f_3}=(c_4)$. There are two daycares,
$d_1$ and $d_2$, each with capacity one. The acceptable assignments and
priority orders are as follows:
\begin{align*}
&\succ_{f_1}: \{(c_1,d_1),(c_2,d_2)\}, && \succ_{d_1}:
c_1\succ_{d_1}c_3, 
\\
&\succ_{f_2}: (c_3,d_1)\succ_{f_2}(c_3,d_2), && \succ_{d_2}:
c_3\succ_{d_2}c_2\succ_{d_2}c_4, 
\\
&\succ_{f_3}:
(c_4,d_2).
% \\[1mm]
% &\succ_{d_1}:
% c_1\succ_{d_1}c_3,\\
% &\succ_{d_2}:
% c_3\succ_{d_2}c_2\succ_{d_2}c_4.
\end{align*}
All assignments not listed above are unacceptable to the corresponding
family, and no child is initially enrolled.

We first identify the non-wasteful matchings. If $f_1$ is assigned its
acceptable assignment, feasibility uniquely gives
\[
X_1=\{(c_1,d_1),(c_2,d_2)\}.
\]
Otherwise, $f_1$ is unmatched. If $c_3$ is assigned to $d_2$, then $d_1$
is vacant and $f_2$ can improve by moving $c_3$ from $d_2$ to $d_1$,
so the matching is wasteful. If $c_3$ is assigned to $d_1$ while $d_2$
is vacant, then $f_3$ can improve by assigning $c_4$ to $d_2$, so the
matching is again wasteful. Consequently, the only non-wasteful matching
in this case is
\[
X_2=\{(c_3,d_1),(c_4,d_2)\}.
\]
Thus, $X_1$ and $X_2$ are the only non-wasteful matchings.

Consider first $X_1$. Family $f_2$ has justified envy by replacement through
\[
W=\{(c_3,d_2)\}.
\]
Family $f_2$ prefers $d_2$ to being unmatched, and $c_3$ has higher
priority than $c_2$ at $d_2$. Hence, $(c_3,d_2)$ replaces
$(c_2,d_2)$, satisfying all conditions of
Definition~\ref{def:fairness_replacement}. Therefore, $X_1$ is not fair by
replacement.

Now consider $X_2$. Family $f_1$ has justified envy by replacement through
\[
W=\{(c_1,d_1),(c_2,d_2)\}.
\]
At $d_1$, child $c_1$ has higher priority than $c_3$, while at $d_2$,
child $c_2$ has higher priority than $c_4$. Thus, the two proposed
contracts replace $(c_3,d_1)$ and $(c_4,d_2)$, respectively. Hence,
$X_2$ is also not fair by replacement.

Therefore, every non-wasteful matching violates fairness by replacement.
A matching satisfying both properties need not exist.
\end{proof}

%-------------------------------------------
% SubsubSection : Computational Complexity of FR and FRM
%-------------------------------------------
\subsubsection{Computational Complexity of Problems Related to FRM}
\label{section:complexity_FRW}

We next examine the computational complexity of fairness by replacement and fairness by replacement and withdrawal. While violations of either fairness notion can be detected in polynomial time, deciding whether there exists a non-wasteful matching satisfying either notion is NP-complete, and computing such a matching is NP-hard.

\begin{theorem}
\label{theo:check:envy:replacement}
Given an individually rational matching $Y$, whether $Y$ satisfies
fairness by replacement or fairness by replacement and withdrawal can be
verified in time $ O\left(|\mathcal{X}| \sum_{f\in F}|\mathcal{W}_f|
\right),$ where $|\mathcal{W}_f|$ denotes the length of family $f$'s preference
ordering.
\end{theorem}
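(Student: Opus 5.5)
The plan is to proceed as in the proof of Theorem~\ref{theo:verification:nw-fairness}: enumerate every candidate deviation explicitly and test each one directly against the definitions. Since $Y$ is individually rational by assumption, we skip that check. For each family $f\in F$, we scan its explicitly given preference ordering and consider every assignment $W$ with $W\succ_f Y_f$. There are at most $|\mathcal{W}_f|$ such assignments, so $\sum_{f}|\mathcal{W}_f|$ candidate pairs $(f,W)$ arise in total. We will show that each candidate can be tested in $O(|\mathcal{X}|)$ time. Correctness is immediate, because $Y$ violates FR (resp.\ FRW) if and only if some pair $(f,W)$ satisfies Conditions~(1)--(3) of Definition~\ref{def:fairness_replacement} (resp.\ Definition~\ref{def:fairness_withdraw}).

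For a fixed pair $(f,W)$, we form $Z=(Y\setminus Y_f)\cup W$ and compute $\mathrm{Ch}_D(Z)$. The key observation is that the choice function is evaluated daycare by daycare, and each $\mathrm{Ch}_d$ selects the top $Q_d$ contracts of $Z_d$ in $\succ_d$. To avoid sorting for each candidate, we fix, in a preprocessing step, each daycare's priority order over $\mathcal{X}_d$ as a list. Then $\mathrm{Ch}_d(Z_d)$ is obtained by one pass over this list, marking membership in $Z$ (for instance, with a boolean array indexed by contracts) and counting until $Q_d$ contracts are taken. Summed over $d$, this costs $O(|\mathcal{X}|)$. With $\mathrm{Ch}_D(Z)$ in hand, we obtain the rejected set $\mathrm{Re}=Z\setminus\mathrm{Ch}_D(Z)$ in $O(|\mathcal{X}|)$. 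Condition~(1) amounts to checking $W\subseteq \mathrm{Ch}_D(Z)$, and Condition~(2) amounts to checking $\mathrm{Re}\neq\emptyset$.

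For Condition~(3), we first note that each child of $f$ has at most one contract in $W$, so the new contracts in $W\setminus Y_f$ at daycare $d$ involve distinct children. The phrase ``a distinct contract $(c',d)$'' should therefore be read as requiring an injective assignment of supporting contracts. Under that reading, Condition~(3) is a per-daycare counting check. For FR, we count, at each $d$, the number of new contracts $|(W\setminus Y_f)_d|$ and the number of supporting contracts $|\mathrm{Re}_d\cap(Y\setminus Y_f)|$, and require the former to be at most the latter. For FRW, the supporting set becomes $\mathrm{Re}_d\cup (Y_f\setminus W)_d$. Both counts are obtained in one pass over $Z\cup Y_f$, which takes $O(|\mathcal{X}|)$ time. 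If instead only existence of a supporting contract is meant, the check is even simpler, namely nonemptiness per daycare. Altogether each candidate costs $O(|\mathcal{X}|)$, which yields the claimed bound $O\bigl(|\mathcal{X}|\sum_{f}|\mathcal{W}_f|\bigr)$.

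The main obstacle is accounting rather than logic. First, we must make sure that building $Z$, resetting the marking array, and computing each $\mathrm{Ch}_d$ really take time linear in $|\mathcal{X}|$ rather than $|\mathcal{X}|\log|\mathcal{X}|$. This is handled by doing the sorting only once up front and resetting only the entries that were touched. Second, we must settle the interpretation of ``distinct'' in Condition~(3) so that it reduces to the counting test above rather than a general matching problem. Because all supports of a new contract lie at the same daycare as that contract, it suffices to note that Hall's condition holds trivially in this setting, so comparing counts is enough.
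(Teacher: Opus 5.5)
Your proposal is correct and follows essentially the same route as the paper. It enumerates the at most $|\mathcal{W}_f|$ assignments $W\succ_f Y_f$ per family, computes $\mathrm{Ch}_D(Z)$ and $\mathrm{Re}$ with one $O(|\mathcal{X}|)$ scan of the priority lists, and reduces Condition~(3) to the per-daycare inequalities $|(W\setminus Y_f)_d|\le|\mathrm{Re}_d\cap(Y\setminus Y_f)|$ for FR and $|(W\setminus Y_f)_d|\le|\mathrm{Re}_d|+|(Y_f\setminus W)_d|$ for FRW; your Hall's-condition justification of the counting test and your remark on sorting only once are details the paper leaves implicit.
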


\begin{proof}
We verify the absence of justified envy for each family $f\in F$.
Fix a family $f$. We examine each assignment $W\succ_f Y_f$ in its preference ordering. Since the preference ordering
is explicitly represented, at most $|\mathcal W_f|$ assignments need to
be examined.

Fix an assignment $W$ and let $Z=(Y\setminus Y_f)\cup W.
$ We show that all conditions for justified envy by replacement and justified
envy by replacement and withdrawal can be checked in
$O(|\mathcal{X}|)$ time.

First, we compute the contracts selected by the daycare choice functions, $Ch_D(Z)=\bigcup_{d\in D}Ch_d(Z_d).$
For each daycare $d$, we scan its priority list and select, among the contracts contained in $Z_d$, the highest-priority contracts up to its capacity $Q_d$. Since the daycare priority lists are explicitly represented, scanning all priority lists takes
$O\left(\sum_{d\in D}|\mathcal{X}_d|\right)
= O(|\mathcal{X}|)$ time.
We then define the set of rejected contracts as
$\mathrm{Re} = Z\setminus Ch_D(Z).$
Condition~(1) can be checked by verifying
$W\subseteq Ch_D(Z),$ while Condition~(2) can be checked by verifying $\mathrm{Re}\neq\emptyset.$
Thus, Conditions~(1) and~(2) can be checked within the same
$O(|\mathcal{X}|)$ computation.

It remains to check Condition~(3). For each daycare $d$, let
$N_d = (W\setminus Y_f)_d$ denote the set of newly proposed contracts at $d$.
For justified envy by replacement, let
$R_d = \mathrm{Re}_d\cap(Y\setminus Y_f)$
denote the rejected contracts at $d$ belonging to families other than $f$. Condition~(3) requires that every contract in $N_d$ correspond to a distinct rejected contract in $R_d$. Since both the new and rejected contracts must involve the same daycare, such a correspondence exists if and only if $|N_d|\leq |R_d|$ for every $d\in D.$
These cardinalities can be computed by scanning the relevant contracts,
and hence Condition~(3) can be checked in $O(|\mathcal{X}|)$ time.

For justified envy by replacement and withdrawal, a newly proposed
contract may instead use a seat released by another child of the same family. Let $H_d = (Y_f\setminus W)_d$
denote the contracts withdrawn by family $f$ from daycare $d$.
Condition~(3) is therefore satisfied if and only if
$|N_d| \leq |R_d|+|H_d|$ for every $d\in D.$
Again, all relevant cardinalities can be computed in
$O(|\mathcal{X}|)$ time.

Therefore, for each assignment $W$, we can determine whether it witnesses
justified envy by replacement or justified envy by replacement and
withdrawal in $O(|\mathcal{X}|)$ time. Since at most
$|\mathcal{W}_f|$ assignments are examined, the total running time is $O\bigl(|\mathcal{X}|\cdot|\mathcal{W}_f|\bigr).$
Hence, both problems can be checked in polynomial time.
\end{proof}

\begin{theorem}
\label{theo:decision:fr_frw_NW}
Deciding whether there exists an individually rational matching that is
both non-wasteful and fair by replacement is NP-complete. The same result
holds for fairness by replacement and withdrawal. Consequently, computing
a matching satisfying either pair of properties, while correctly reporting when none exists, is NP-hard.
\end{theorem}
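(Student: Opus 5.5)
Membership in NP follows the pattern of Theorem~\ref{theo:decision:nw-fairness}. A candidate matching $Y$ is a polynomial-size certificate. Feasibility and individual rationality are checked directly. Fairness by replacement, or by replacement and withdrawal, is checked with Theorem~\ref{theo:check:envy:replacement}. Non-wastefulness is checked by enumerating every $W\succ_f Y_f$ in each family's explicit preference list and testing whether $(Y\setminus Y_f)\cup W=\mathrm{Ch}_D((Y\setminus Y_f)\cup W)$, in the same manner as Theorem~\ref{theo:verification:nw-fairness}. The final sentence of the statement about computation then follows exactly as at the end of that proof.

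For NP-hardness I would reuse the reduction from $(2,2)$-HRC with unit-capacity hospitals and no single residents~\citep{BMM14a}. The key step is to show that on the instances produced (no initial enrollment, unit capacities, and any two acceptable pairs of a couple using disjoint hospitals), fairness, FRW and FR all coincide. Then, by Proposition~\ref{prop:stability:equivalence} and the equivalence already established in the proof of Theorem~\ref{theo:decision:nw-fairness}, the existence of a matching that is NW and FR (respectively NW and FRW) is equivalent to the existence of an MM-stable matching.

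By Proposition~\ref{prop:connection:F&FRW&FR} it suffices to show that any justified envy is justified envy by replacement. Take such an envy $(f,W)$ against $Y$. Since $W\neq\emptyset$ is an acceptable pair, disjointness gives $D(W)\cap D(Y_f)=\emptyset$, so $W\setminus Y_f=W$ and no seat used by $W$ can come from withdrawal. The remaining issue is a contract of $W$ landing on a vacant unit-capacity daycare while the other contract causes the rejection. Such a pair is not excluded by FR's condition~(3), which must hold for every new contract. Hence, on general instances of this form, fairness and FR need not coincide, and the plan must change.

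The more robust route is to build a fresh gadget reduction, or to modify the instance so that vacancies are useless. I would pad each daycare $d$ with a dummy single-child family $g_d$ whose child is acceptable only at $d$, with lowest priority at $d$, and whose only acceptable assignment is $(d)$. Non-wastefulness then forces every daycare to be full: a vacant $d$ lets $g_d$ claim it without rejections. Consequently every blocking contract of a couple must reject someone. If that someone is a dummy child, the rejection still counts as replacement of another family, so condition~(3) of FR holds. With all daycares full, a justified envy in the NW-and-fair sense is exactly an FR envy, because withdrawal is impossible by disjointness.

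It then remains to check that the dummies preserve the equivalence with MM-stability. A dummy $g_d$ displaced by a couple member can never block back, since it has lowest priority and $d$ is full. Vacant daycares in an MM-stable HRC matching get filled by their dummies without creating blocks, and conversely a dummy-filled seat behaves exactly like a vacancy for couples. I expect this bookkeeping to be the main obstacle: showing that NW+FR matchings of the padded instance correspond exactly to MM-stable HRC matchings with the vacancies filled by dummies. The FRW case is then identical, since with disjoint pairs withdrawal never contributes.
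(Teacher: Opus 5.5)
Your final, padded argument is correct, but it takes a different route from the paper's.

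\textbf{The paper's route.} The paper does not return to the HRC reduction. It reuses the \textsc{COM-SMTI} reduction of Theorem~\ref{thm:decision:FRM_NW} and observes that FR, FRW and FRM coincide on those instances. Withdrawal never helps there, because multi-child families apply only to same-daycare assignments whose acceptable options use different daycares. Master-list consistency up to initial enrollments makes the master-list condition automatic. Hardness of NW together with FRM then transfers directly.

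\textbf{Your route.} You keep the $(2,2)$-HRC source. You correctly notice that on the raw instances a couple can combine a vacant hospital with a replacement, so F and FR genuinely differ there. You repair this with a lowest-priority dummy single at each daycare. Non-wastefulness then forces every unit-capacity daycare to be full. Together with disjointness of a couple's acceptable pairs, every justified envy becomes an envy by replacement, so F, FRW and FR coincide on non-wasteful matchings.

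\textbf{The bookkeeping.} The step you flag is routine, but you must do it directly rather than cite Theorem~\ref{theo:decision:nw-fairness}: its MM-equivalence relied on the absence of single residents, and your dummies are singles.
\begin{itemize}
\item Filling the vacancies of an MM-stable matching with dummies yields a stable matching. A dummy-occupied hospital plays the role of an undersubscribed one in case~(3a), and a displaced dummy cannot block because it is lowest at its only daycare.
\item Conversely, restricting a non-wasteful FR matching to the couples gives an MM-stable matching. Any case~(3a) block lifts to a blocking coalition, because every couple child outranks the dummy.
\end{itemize}

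\textbf{What each route buys.} Your route gives hardness without initial enrollments and with unit capacities, and the same construction also re-proves Theorem~\ref{theo:decision:nw-fairness}. The paper's route needs no new verification and gives hardness for master-list-consistent priorities and same-daycare siblings. The price is that it needs initial enrollments and depends on the correctness of the FRM reduction.
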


\begin{proof}
We use the same reduction as in
Theorem~\ref{thm:decision:FRM_NW}. We show that, on every instance
produced by that reduction, fairness by replacement (FR), fairness by
replacement and withdrawal (FRW), and fairness by replacement and master
list (FRM) are equivalent.

We first show that justified envy by replacement and justified envy by
replacement and master list coincide on the constructed instances.
By construction, the priority ordering of every daycare is consistent
with the master list, except that an initially enrolled child is ranked
above all non-incumbent children. Hence, a non-incumbent child cannot
replace an initially enrolled child.

Consider any justified envy by replacement of a family $f$ through an
assignment $W$. Every contract rejected as a result of this deviation
therefore belongs to a non-incumbent child. For any rejected contract
belonging to a family $f'$, the replacing child of $f$ has higher daycare
priority than the rejected child of $f'$. Since both children are
non-incumbents, master-list consistency implies $f\rhd f'.$ Thus, the master-list condition in Definition~\ref{def:fairness_master} is automatically satisfied. Consequently, justified envy by replacement is equivalent to justified envy by replacement and master list on the constructed instances.

We next show that justified envy by replacement and withdrawal is
equivalent to justified envy by replacement. By construction, every
multi-child family applies only to assignments in which all of its
children attend the same daycare, and any two distinct acceptable
assignments of such a family involve different daycares. All remaining
families are singletons. Therefore, whenever a family changes from its
current assignment $Y_f$ to a distinct assignment $W$, no contract in
$Y_f\setminus W$ is at the same daycare as a newly proposed contract in
$W\setminus Y_f$. Hence, a withdrawn contract can never be used to
satisfy Condition~(3) of
Definition~\ref{def:fairness_withdraw}. On these instances, justified
envy by replacement and withdrawal therefore reduces to justified envy
by replacement.

It follows that, on every instance produced by the reduction, the three
fairness requirements are equivalent:
\[
\mathrm{FRW}
\iff
\mathrm{FR}
\iff
\mathrm{FRM}.
\]
Therefore, the NP-hardness established in
Theorem~\ref{thm:decision:FRM_NW} applies directly to deciding the
existence of an individually rational matching satisfying
non-wastefulness together with FR or FRW.

It remains to establish membership in NP. A matching serves as a
polynomial-size certificate. Feasibility and individual rationality can
be checked in polynomial time, and whether a family can claim unused
seats can be checked by enumerating its explicitly represented
assignments and applying the daycare choice functions. Hence,
non-wastefulness can be verified in polynomial time. By
Theorem~\ref{theo:check:envy:replacement}, justified envy by replacement
and justified envy by replacement and withdrawal can also be detected in
polynomial time. Thus, both existence problems belong to NP and are
therefore NP-complete.

We finally consider the problem of computing a matching satisfying
non-wastefulness together with FR or FRW. Since a polynomial-time algorithm for computing such a matching (or reporting that none exists) would also solve the corresponding NP-complete existence problem, the computation problems are NP-hard. 
\end{proof}

% \newpage

%-------------------------------------------
% SubSection : Master List
%-------------------------------------------
\subsection{Master List}

In matching problems with distributional constraints, a common approach to
mitigating the incompatibility between fairness and non-wastefulness is to
introduce a master list that restricts which instances of envy are considered
justified. Motivated by the incompatibility between fairness by replacement
and non-wastefulness established in
Theorem~\ref{theo:non_existence:FR&NW}, we consider a weaker fairness notion
that incorporates a strict master list $\rhd$ over families.

Intuitively, a family should be allowed to claim priority over another family only if it is ranked higher in the master list. 
Accordingly, justified envy by replacement is recognized only when the envying family is ranked above every family whose contract is rejected as a result of the deviation.

\begin{definition}[Fairness by Replacement and Master List (FRM)]
\label{def:fairness_master}
Given an individually rational matching $Y$, a new assignment $W$, and a
strict master list $\rhd$ over families $F$, a family $f$ is said to have
\emph{justified envy by replacement and master list} if the following
conditions hold:
\begin{description}
    \item[(1) blocking coalition:]
    $W\succ_f Y_f$ and
    $W\subseteq Ch_D(Y\setminus Y_f\cup W)$.

    \item[(2) at least one rejection:]
    \[
    |Y\setminus Y_f\cup W|
    >
    |Ch_D(Y\setminus Y_f\cup W)|.
    \]

    \item[(3) replacement only:]
    For each $(c,d)\in W\setminus Y_f$, there exists a distinct contract
    $(c',d)$ such that
    \[
    (c',d)\in Y\setminus Y_f
    \quad\text{and}\quad
    (c',d)\notin Ch_D(Y\setminus Y_f\cup W).
    \]

    \item[(4) respect for the master list:]
    For every $f'\in F(\mathrm{Re})$, we have $f\rhd f'$, where
    \[
    \mathrm{Re}
    =
    (Y\setminus Y_f\cup W)
    \setminus
    Ch_D(Y\setminus Y_f\cup W)
    \]
    is the set of rejected contracts.
\end{description}
An individually rational matching is said to be
\emph{fair by replacement and master list} if no family has justified envy
by replacement and master list.
\end{definition}
Compared with fairness by replacement in
Definition~\ref{def:fairness_replacement}, Condition~(4) imposes an additional requirement for justified envy: every family whose contract is rejected must be ranked below the envying family in the master list. Therefore, justified envy by replacement and master list is a more restrictive notion of justified envy, which leads to a weaker fairness requirement.

\begin{proposition}
\label{prop:fr_implies_frm}
Fairness by replacement implies fairness by replacement and master list.
% That is,
% \[
% \mathrm{FR}\ \Longrightarrow\ \mathrm{FRM}.
% \]
\end{proposition}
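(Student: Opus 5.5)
The plan is to argue by contraposition, in the same way as Proposition~\ref{prop:connection:F&FRW&FR}: every instance of justified envy by replacement and master list is already an instance of justified envy by replacement. So if a matching admits no justified envy by replacement, it admits none by replacement and master list.

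Concretely, fix an individually rational matching $Y$, a family $f$, and an assignment $W$ witnessing justified envy by replacement and master list in the sense of Definition~\ref{def:fairness_master}. I will compare that definition condition by condition with Definition~\ref{def:fairness_replacement}.

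Conditions~(1), (2) and~(3) of Definition~\ref{def:fairness_master} are word-for-word identical to Conditions~(1), (2) and~(3) of Definition~\ref{def:fairness_replacement}. In particular:
\begin{itemize}
    \item both require the blocking-coalition condition $W\succ_f Y_f$ and $W\subseteq Ch_D(Y\setminus Y_f\cup W)$;
    \item both require the same strict cardinality inequality, i.e., at least one rejection;
    \item both require that each $(c,d)\in W\setminus Y_f$ be matched to a distinct contract $(c',d)\in Y\setminus Y_f$ that is rejected.
\end{itemize}
Condition~(4), the master-list requirement, is an additional conjunct that is simply discarded. Hence $(f,W)$ witnesses justified envy by replacement, so $Y$ is not fair by replacement.

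Taking the contrapositive, fairness by replacement implies fairness by replacement and master list, for every choice of the master list $\rhd$. Since the argument only drops a conjunct from a definition, there is no genuine obstacle. The one point to state explicitly is that the master list plays no role in fairness by replacement, so the implication holds uniformly in $\rhd$.
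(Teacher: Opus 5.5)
Your proposal is correct and follows essentially the paper's own argument. Conditions~(1)--(3) of Definition~\ref{def:fairness_master} coincide with those of Definition~\ref{def:fairness_replacement}, so any justified envy by replacement and master list is also justified envy by replacement, and the implication follows by contraposition; your added remark that this holds uniformly in $\rhd$ is accurate.
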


\begin{proof}
Any justified envy by replacement and master list satisfies all the
conditions for justified envy by replacement, with the additional
master-list requirement in Condition~(4) of
Definition~\ref{def:fairness_master}. Hence, if a matching admits no
justified envy by replacement, it admits no justified envy by replacement
and master list.
\end{proof}

\paragraph{Master-list consistency.}
We next consider settings in which daycare priorities are aligned with the
master list. Daycare priorities are
\emph{master-list consistent up to initial enrollments} if, at every daycare
$d$, all children initially enrolled at $d$ are ranked above all children
who are not initially enrolled at $d$, while the remaining children are
ordered according to their families' positions in $\rhd$. The relative
priority of children belonging to the same family may be determined
arbitrarily. This condition strengthens
Definition~\ref{assumption}, which requires priority for initially enrolled
children.

As the following theorem shows, even under this strong alignment between
daycare priorities and the master list, initial enrollments can make
non-wastefulness and fairness by replacement and master list incompatible.

\begin{theorem}
\label{theorem:non-existence:fairness_master&NW}
In the presence of initial enrollments, a matching satisfying both
non-wastefulness and fairness by replacement and master list may fail to
exist, even when daycare priorities are master-list consistent up to initial
enrollments.
\end{theorem}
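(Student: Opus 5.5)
The plan is to adapt the counterexample of Theorem~\ref{theo:non_existence:FR&NW} and use an initial enrollment to break the cycle of envy that the master list would otherwise resolve. In that instance, $f_2$ displaces $f_1$ at $d_2$ and $f_1$ displaces $f_2$ at $d_1$. Under a master list, at most one of these two displacements can be justified. The idea is to make $f_2$'s claim at $d_2$ come from \emph{incumbency} rather than from a master-list comparison. Individual rationality then enforces it directly, and only $f_1$'s envy needs to respect $\rhd$.

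\textbf{Instance.} I would use the same four children and the same two daycares $d_1,d_2$ of capacity one, with master list $f_1\rhd f_2\rhd f_3$. The child $c_3$ of $f_2$ is initially enrolled at $d_2$, so $\omega=\{(c_3,d_2)\}$. Preferences are as follows:
\begin{itemize}
\item $f_1$ accepts only $\{(c_1,d_1),(c_2,d_2)\}$;
\item $f_2$ ranks $(c_3,d_1)\succ_{f_2}(c_3,d_2)=\omega_{f_2}$, with everything else ranked below $\omega_{f_2}$;
\item $f_3$ accepts only $(c_4,d_2)$.
\end{itemize}
Priorities are $c_1\succ_{d_1}c_3$ and $c_3\succ_{d_2}c_2\succ_{d_2}c_4$. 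These are master-list consistent up to initial enrollments: $c_3$ is the incumbent at $d_2$, and the remaining children are ordered by $\rhd$.

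\textbf{Step 1: enumerate the candidate matchings.} Individual rationality forces $c_3$ to sit at $d_1$ or $d_2$. Hence $f_1$'s unique acceptable assignment, which needs both seats, is infeasible, and $f_1$ is unmatched. Two cases remain.
\begin{itemize}
\item If $c_3$ is at $d_2$, then $d_1$ is vacant. Family $f_2$ claims unused seats via $W=\{(c_3,d_1)\}$, and no contract is rejected, so the matching is wasteful.
\item If $c_3$ is at $d_1$ and $d_2$ is vacant, then $f_3$ claims $(c_4,d_2)$, so the matching is again wasteful.
\end{itemize}
So the only individually rational non-wasteful matching is $X=\{(c_3,d_1),(c_4,d_2)\}$.

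\textbf{Step 2: exhibit justified envy by replacement and master list at $X$.} Take $f_1$ with $W=\{(c_1,d_1),(c_2,d_2)\}$ and check Definition~\ref{def:fairness_master} condition by condition.
\begin{itemize}
\item Condition~(1): $W\succ_{f_1}\emptyset$, and since $c_1\succ_{d_1}c_3$ and $c_2\succ_{d_2}c_4$, both contracts of $W$ are chosen.
\item Conditions~(2) and~(3): the rejected set is $\mathrm{Re}=\{(c_3,d_1),(c_4,d_2)\}$, and each new contract is matched with a distinct rejected contract of another family at the same daycare.
\item Condition~(4): $F(\mathrm{Re})=\{f_2,f_3\}$, and $f_1\rhd f_2$ and $f_1\rhd f_3$ both hold.
\end{itemize}
Hence $X$ violates FRM, and no matching satisfies both properties.

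\textbf{Main obstacle.} The delicate part is Step~1: I must make sure no other matching escapes. This includes matchings leaving $f_3$ or $f_2$ partially placed, and deviations in which $f_2$ uses withdrawal of its own seat. That is why $f_2$'s preference list is kept minimal, so that individual rationality prunes everything else. I would also stress the conceptual point the example isolates. The envy at $X$ is deemed justified even though it evicts $c_3$ from a seat at $d_1$ that was obtained by leaving its protected initial seat at $d_2$. This is exactly the gap that resettlement is meant to close later in the paper.
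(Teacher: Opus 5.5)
Your counterexample is correct, and it is a genuinely different (and smaller) instance than the paper's. All the checks go through:
\begin{itemize}
\item individual rationality pins $c_3$ to $d_1$ or $d_2$, which makes $f_1$'s only acceptable assignment infeasible, so $f_1$ is unmatched;
\item the two remaining shapes are wasteful via $f_2$ and via $f_3$, respectively;
\item at $X=\{(c_3,d_1),(c_4,d_2)\}$ the deviation of $f_1$ rejects exactly $(c_3,d_1)$ and $(c_4,d_2)$, with $f_1\rhd f_2$ and $f_1\rhd f_3$;
\item the priorities are master-list consistent up to the single incumbency of $c_3$ at $d_2$.
\end{itemize}

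The paper instead builds a new instance with four families and a capacity-two daycare $d_2$. It places an incumbent $c_4$ at $d_1$ who prefers to transfer to $d_2$, and a two-child family $f_2$ that applies only to both seats of $d_2$. Its proof runs through four candidate matchings: three fail FRM through the envy of three different families ($f_3$, $f_1$, $f_2$), and one is wasteful.

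Your route buys brevity and transparency. It is a one-line modification of the three-family instance used earlier to show that fairness by replacement and non-wastefulness are incompatible. It makes explicit that incumbency priority stands in for the missing master-list priority of $f_2$ over $f_1$. And individual rationality alone collapses the candidate set to a single matching.

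The paper's instance buys structure that it reuses later. It serves as the gadget $G_i$ in the NP-hardness reduction of Theorem~\ref{thm:decision:FRM_NW}, which requires every multi-child family to apply only to same-daycare assignments. That property is also used in Theorem~\ref{theo:decision:fr_frw_NW} to make FRW collapse to FR. Your $f_1$ splits its children across $d_1$ and $d_2$, so your instance could not play that role, though this is irrelevant to the statement at hand.

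A minor wording point: $f_2$ and $f_3$ are singletons, so the only partial placements that individual rationality must exclude are those of $f_1$.
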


\begin{figure}[tb]
\centering

\begin{tikzpicture}[
    >=stealth,
    family/.style={
        rounded corners=2pt,
        draw=green!60!black,
        fill=green!10,
        minimum width=8mm,
        minimum height=6mm,
        inner sep=1pt
    },
    child/.style={
        circle,
        draw=blue!70!black,
        fill=blue!10,
        minimum size=6.5mm,
        inner sep=0pt
    },
    daycare/.style={
        circle,
        draw=orange!80!black,
        fill=orange!15,
        minimum size=7mm,
        inner sep=0pt
    },
    member/.style={
        -,
        thin,
        draw=gray!55
    },
    assign/.style={
        ->,
        thick,
        draw=black
    },
    replace/.style={
        ->,
        very thick,
        dashed,
        draw=red!80
    },
    improve/.style={
        ->,
        very thick,
        dashed,
        draw=blue!80
    },
    panel/.style={
        draw=gray!55,
        rounded corners=3pt
    },
    note/.style={
        align=center,
        font=\small
    },
    violation/.style={
        align=center,
        font=\small\bfseries,
        text=red!75!black
    },
    wasteful/.style={
        align=center,
        font=\small\bfseries,
        text=blue!75!black
    }
]

% ============================================================
% CASE 1
% ============================================================

\node[
    font=\bfseries,
    text=black!70!black
] at (0,4.65)
{\(c_4\) remains at \(d_1\)};

% ============================================================
% X1
% ============================================================

\node[
    panel,
    minimum width=8.0cm,
    minimum height=5.2cm
] (P1) at (-4.25,1.45) {};

% Title bar
\node[
    fill=red!10,
    draw=red!35,
    rounded corners=3pt,
    minimum width=8.0cm,
    minimum height=8mm,
    font=\bfseries
] at (-4.25,3.65)
{(a) \(X_1=\{(c_4,d_1),(c_1,d_2),(c_5,d_2)\}\)};

% Families
\node[family] (f1a) at (-7.4,2.3) {\(f_1\)};
\node[family] (f3a) at (-7.4,1.2) {\(f_3\)};
\node[family] (f4a) at (-7.4,0.1) {\(f_4\)};

% Children
\node[child] (c1a) at (-6.1,2.3) {\(c_1\)};
\node[child] (c4a) at (-6.1,1.2) {\(c_4\)};
\node[child] (c5a) at (-6.1,0.1) {\(c_5\)};

% Family membership
\draw[member] (f1a) -- (c1a);
\draw[member] (f3a) -- (c4a);
\draw[member] (f4a) -- (c5a);

% Daycares
\node[daycare] (d1a) at (-3.7,2.0) {\(d_1\)};
\node[daycare] (d2a) at (-3.7,0.4) {\(d_2\)};

\node[font=\small] at (-2.7,2.0)
    {\(Q=1\)};
\node[font=\small] at (-2.7,0.4)
    {\(Q=2\)};

% Current matching
\draw[assign] (c4a) -- (d1a);
\draw[assign] (c1a) -- (d2a);
\draw[assign] (c5a) -- (d2a);

% FRM replacement: straight line
\draw[replace] (c4a) -- (d2a);

% Explanation
\node[
    note,
    text=red!75!black
] at (-1.35,1.25)
{
\(f_3\rhd f_4\)\\[1mm]
\(c_4\succ_{d_2}c_5\)
};

\node[violation] at (-4.25,-0.75)
{\(c_4\) can replace \(c_5\) at \(d_2\): violating FRM};

% ============================================================
% X2
% ============================================================

\node[
    panel,
    minimum width=8.0cm,
    minimum height=5.2cm
] (P2) at (4.25,1.45) {};

% Title bar
\node[
    fill=red!10,
    draw=red!35,
    rounded corners=3pt,
    minimum width=8.0cm,
    minimum height=8mm,
    font=\bfseries
] at (4.25,3.65)
{(b) \(X_2=\{(c_4,d_1),(c_2,d_2),(c_3,d_2)\}\)};

% Families
\node[family] (f1b) at (1.0,2.5) {\(f_1\)};
\node[family] (f2b) at (1.0,1.3) {\(f_2\)};
\node[family] (f3b) at (1.0,0.0) {\(f_3\)};

% Children
\node[child] (c1b) at (2.3,2.5) {\(c_1\)};
\node[child] (c2b) at (2.3,1.6) {\(c_2\)};
\node[child] (c3b) at (2.3,0.9) {\(c_3\)};
\node[child] (c4b) at (2.3,0.0) {\(c_4\)};

% Family membership
\draw[member] (f1b) -- (c1b);
\draw[member] (f2b) -- (c2b);
\draw[member] (f2b) -- (c3b);
\draw[member] (f3b) -- (c4b);

% Daycares
\node[daycare] (d1b) at (4.9,2.2) {\(d_1\)};
\node[daycare] (d2b) at (4.9,0.5) {\(d_2\)};

\node[font=\small] at (5.9,2.2)
    {\(Q=1\)};
\node[font=\small] at (5.9,0.5)
    {\(Q=2\)};

% Current matching
\draw[assign] (c4b) -- (d1b);
\draw[assign] (c2b) -- (d2b);
\draw[assign] (c3b) -- (d2b);

% FRM replacement
\draw[replace] (c1b) -- (d2b);

% Explanation
\node[
    note,
    text=red!75!black
] at (7.0,1.35)
{
\(f_1\rhd f_2\)\\[1mm]
\(c_1\succ_{d_2}c_2,c_3\)
};

\node[violation] at (4.25,-0.75)
{\(c_1\) can replace a child of \(f_2\) at \(d_2\): violating FRM};

% ============================================================
% CASE 2
% ============================================================

\node[
    font=\bfseries,
    text=black!70!black
] at (0,-1.75)
{\(c_4\) is assigned to \(d_2\)};

% ============================================================
% X3
% ============================================================

\node[
    panel,
    minimum width=8.0cm,
    minimum height=5.2cm
] (P3) at (-4.25,-4.75) {};

% Title bar
\node[
    fill=blue!10,
    draw=blue!35,
    rounded corners=3pt,
    minimum width=8.0cm,
    minimum height=8mm,
    font=\bfseries
] at (-4.25,-2.55)
{(c) \(X_3=\{(c_1,d_2),(c_4,d_2)\}\)};

% Families
\node[family] (f1c) at (-7.4,-4) {\(f_1\)};
\node[family] (f3c) at (-7.4,-5.8) {\(f_3\)};

% Children
\node[child] (c1c) at (-6.1,-4) {\(c_1\)};
\node[child] (c4c) at (-6.1,-5.8) {\(c_4\)};

% Family membership
\draw[member] (f1c) -- (c1c);
\draw[member] (f3c) -- (c4c);

% Daycares
\node[daycare] (d1c) at (-3.7,-4) {\(d_1\)};
\node[daycare] (d2c) at (-3.7,-5.8) {\(d_2\)};

\node[font=\small] at (-2.7,-4)
    {\(Q=1\)};
\node[font=\small] at (-2.7,-5.8)
    {\(Q=2\)};

\node[font=\small] at (5.9,-4)
    {\(Q=1\)};
\node[font=\small] at (5.9,-5.8)
    {\(Q=2\)};

% Current matching
\draw[assign] (c1c) -- (d2c);
\draw[assign] (c4c) -- (d2c);

% NW improvement
\draw[improve] (c1c) -- (d1c);

% Explanation
\node[
    note,
    text=blue!75!black
] at (-1.3,-4.8)
{
\(d_1\) is vacant\\[1mm]
\(d_1\succ_{f_1}d_2\)
};

\node[wasteful] at (-4,-7)
{\(c_1\) can move to \(d_1\): wasteful};

% ============================================================
% X4
% ============================================================

\node[
    panel,
    minimum width=8.0cm,
    minimum height=5.2cm
] (P4) at (4.25,-4.75) {};

% Title bar
\node[
    fill=red!10,
    draw=red!35,
    rounded corners=3pt,
    minimum width=8.0cm,
    minimum height=8mm,
    font=\bfseries
] at (4.25,-2.55)
{(d) \(X_4=\{(c_1,d_1),(c_4,d_2),(c_5,d_2)\}\)};

% Families
\node[family] (f1d) at (1.0,-3.4) {\(f_1\)};
\node[family] (f2d) at (1.0,-4.5) {\(f_2\)};
\node[family] (f3d) at (1.0,-5.7) {\(f_3\)};
\node[family] (f4d) at (1.0,-6.4) {\(f_4\)};

% Children
\node[child] (c1d) at (2.3,-3.4) {\(c_1\)};
\node[child] (c2d) at (2.3,-4.15) {\(c_2\)};
\node[child] (c3d) at (2.3,-4.85) {\(c_3\)};
\node[child] (c4d) at (2.3,-5.7) {\(c_4\)};
\node[child] (c5d) at (2.3,-6.4) {\(c_5\)};

% Family membership
\draw[member] (f1d) -- (c1d);
\draw[member] (f2d) -- (c2d);
\draw[member] (f2d) -- (c3d);
\draw[member] (f3d) -- (c4d);
\draw[member] (f4d) -- (c5d);

% Daycares
\node[daycare] (d1d) at (4.9,-4.0) {\(d_1\)};
\node[daycare] (d2d) at (4.9,-5.8) {\(d_2\)};

% Current matching
\draw[assign] (c1d) -- (d1d);
\draw[assign] (c4d) -- (d2d);
\draw[assign] (c5d) -- (d2d);

% FRM replacement
\draw[replace] (c2d) -- (d2d);
\draw[replace] (c3d) -- (d2d);

% Explanation
\node[
    note,
    text=red!75!black
] at (7.0,-4.9)
{
\(f_2\rhd f_3,f_4\)\\[1mm]
\(c_2,c_3\succ_{d_2}c_4,c_5\)
};

\node[violation] at (4.25,-7.0)
{\(f_2\) replaces \(c_4,c_5\) at \(d_2\): violating FRM};

% ============================================================
% Legend
% ============================================================

\draw[member] (-8,-8.0) -- (-7,-8.0);
\node[anchor=west, font=\small] at (-6.95,-8.0)
{family membership};

\draw[assign] (-3.9,-8.0) -- (-2.9,-8.0);
\node[anchor=west, font=\small] at (-2.87,-8.0)
{current assignment};

\draw[replace] (0.5,-8.0) -- (1.5,-8.0);
\node[anchor=west, font=\small] at (1.55,-8.0)
{FRM replacement};

\draw[improve] (4.6,-8.0) -- (5.6,-8.0);
\node[anchor=west, font=\small] at (5.68,-8.0)
{NW improvement};

\end{tikzpicture}

\caption{
The four relevant matchings in the proof of Theorem~\ref{theorem:non-existence:fairness_master&NW}.
Red indicates a violation of fairness by replacement and master list
(FRM), while blue indicates a violation of non-wastefulness (NW).
}
\label{fig:FRM-NW-nonexistence}
\end{figure}

\begin{proof}
Consider two daycares $D=\{d_1,d_2\}$ and four families:
$f_1$ with $C_{f_1}=(c_1)$,
$f_2$ with $C_{f_2}=(c_2,c_3)$,
$f_3$ with $C_{f_3}=(c_4)$, and
$f_4$ with $C_{f_4}=(c_5)$.
Daycare $d_1$ has capacity one, with child $c_4$ initially enrolled there,
whereas $d_2$ has capacity two and is initially empty. The family
preferences, daycare priorities, and master list are given by
\begin{align*}
&\succ_{f_1}:
(c_1,d_1)\succ_{f_1}(c_1,d_2),
&&
\succ_{d_1}:
c_4\succ_{d_1}c_1,
\\
&\succ_{f_2}:
\{(c_2,d_2),(c_3,d_2)\},
&&
\succ_{d_2}:
c_1\succ_{d_2}c_2\succ_{d_2}c_3
\succ_{d_2}c_4\succ_{d_2}c_5,
\\
&\succ_{f_3}:
(c_4,d_2)\succ_{f_3}(c_4,d_1),
&&
\rhd:
f_1\rhd f_2\rhd f_3\rhd f_4,
\\
&\succ_{f_4}:
(c_5,d_2).
\end{align*}

The daycare priorities are master-list consistent up to initial enrollments.
At $d_1$, the initially enrolled child $c_4$ has the highest priority, while
the priorities of all non-incumbent children follow the master list. The
priority ordering at $d_2$ also follows the master list.

Individual rationality requires $c_4$ to receive an assignment weakly
preferred by $f_3$ to its initial assignment at $d_1$. Hence, $c_4$ must be
assigned either to $d_1$ or to $d_2$. We consider these two cases.

First, suppose that $c_4$ remains at $d_1$. Non-wastefulness leaves only the
following two possibilities:
\[
X_1
=
\{(c_4,d_1),(c_1,d_2),(c_5,d_2)\}
\qquad\text{and}\qquad
X_2
=
\{(c_4,d_1),(c_2,d_2),(c_3,d_2)\}.
\]
At $X_1$, family $f_3$ has justified envy by replacement and master list
through $W=\{(c_4,d_2)\}$. Child $c_4$ can replace $c_5$ at $d_2$, and
$f_3\rhd f_4$. Hence, $X_1$ is not fair by replacement and master list.

At $X_2$, family $f_1$ has justified envy by replacement and master list
through $W=\{(c_1,d_2)\}$. Child $c_1$ has higher priority at $d_2$ than
both children of $f_2$, and $f_1\rhd f_2$. Hence, $X_2$ is not fair by
replacement and master list.

Next, suppose that $c_4$ is assigned to $d_2$. It suffices to consider the
following two possibilities:
\[
X_3
=
\{(c_1,d_2),(c_4,d_2)\}
\qquad\text{and}\qquad
X_4
=
\{(c_1,d_1),(c_4,d_2),(c_5,d_2)\}.
\]
Matching $X_3$ is wasteful, since family $f_1$ can move $c_1$ from $d_2$
to its more preferred daycare $d_1$, which has a vacant seat.

At $X_4$, family $f_2$ has justified envy by replacement and master list
through
\[
W=\{(c_2,d_2),(c_3,d_2)\}.
\]
Both children of $f_2$ have higher priority at $d_2$ than $c_4$ and $c_5$,
and $f_2\rhd f_3$ and $f_2\rhd f_4$. Hence, $X_4$ is not fair by
replacement and master list.

Therefore, every individually rational matching is either wasteful or
violates fairness by replacement and master list. Hence, a matching
satisfying both properties need not exist.
\end{proof}

%-------------------------------------------
% SubsubSection : SS algorithm
%-------------------------------------------
\subsubsection{Serial Selection (SS) Algorithm}

A natural approach based on a master list is serial dictatorship, in which 
families are considered according to $\rhd$ and choose their most preferred
assignments that can be obtained without changing the assignments of other
families. Without initial enrollments, a single pass through the families is
sufficient to obtain a non-wasteful outcome even in the presence of siblings.

\begin{theorem}
\label{theo:SS:no_initial}
In the absence of initial enrollments, the serial dictatorship algorithm
terminates after a single round and satisfies both non-wastefulness and
fairness by replacement and master list.
\end{theorem}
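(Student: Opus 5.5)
We analyze the serial dictatorship procedure directly. Families are processed once in the order $\rhd$. When family $f$ is processed, let $Y^{<f}=\bigcup_{g\rhd f}Y_g$ denote the contracts already fixed. The family then selects its $\succ_f$-most preferred assignment $W$ such that $Y^{<f}\cup W$ is feasible. Earlier families are never modified, so the final matching is $Y=\bigcup_f Y_f$. Since $\omega=\emptyset$, the empty assignment is always available, so each $Y_f\succ_f\emptyset=\omega_f$ or $Y_f=\emptyset$. Every contract is acceptable to its daycare by assumption, so $Y$ is individually rational. The statement that a single round suffices will follow from the fact that no family can improve afterwards, which is exactly the two properties proved below.

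The key lemma is a monotonicity observation. If an assignment $W$ satisfies $|Y^{<f}_d|+|W_d|\le Q_d$ for every $d$, then $W$ was available to $f$ at its turn. Hence $Y_f\succ_f W$ or $Y_f=W$. Both properties therefore reduce to showing that any purported deviation $W\succ_f Y_f$ satisfies this capacity inequality, which gives a contradiction.

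\emph{Non-wastefulness.} Suppose $f$ claims unused seats via $W$. Condition (2) of Definition~\ref{def:nonwastefulness} gives $Y\setminus Y_f\cup W=\mathrm{Ch}_D(Y\setminus Y_f\cup W)$. In particular $|(Y\setminus Y_f)_d|+|W_d|\le Q_d$ for every $d$. Since $Y^{<f}\subseteq Y\setminus Y_f$, the capacity inequality holds, and this contradicts $W\succ_f Y_f$.

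\emph{FRM.} Suppose $f$ has justified envy by replacement and master list via $W$, and let $\mathrm{Re}$ be the set of rejected contracts. By Condition (4), every family owning a contract in $\mathrm{Re}$ is ranked below $f$, so it is processed after $f$. Hence $Y^{<f}\cap\mathrm{Re}=\emptyset$. At each daycare $d$, we have $Y^{<f}_d\subseteq (Y\setminus Y_f)_d\setminus\mathrm{Re}_d$. By Condition (1), $W_d\subseteq\mathrm{Ch}_d$. These sets are disjoint and both lie inside $\mathrm{Ch}_d((Y\setminus Y_f\cup W)_d)$, which has size at most $Q_d$. So $|Y^{<f}_d|+|W_d|\le Q_d$, and the key lemma again contradicts $W\succ_f Y_f$.

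The only delicate point is this counting step: we must check that the surviving contracts of earlier families and the contracts in $W$ are disjoint and both contained in the chosen set. Disjointness holds because $W$ involves only $f$'s children. Notably, Condition (3) is not needed; the master-list condition alone does the work.
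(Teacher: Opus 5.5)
Your proof is correct, and its skeleton matches the paper's. Every violation is turned into the claim that $W$ was already available to $f$ at its turn, because all families whose contracts are rejected rank below $f$ and held nothing at that moment.

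The difference is in how you certify that $W$ fits next to $Y^{<f}$. The paper invokes the replacement-only Condition~(3), pairing each contract of $W\setminus Y_f$ with a distinct rejected contract of a lower-ranked family. Made precise, this bounds $|Y^{<f}_d|+|W_d|$ by $|Y_d|\le Q_d$. You instead bound the same quantity by $|\mathrm{Ch}_d(Z_d)|$, where $Z=(Y\setminus Y_f)\cup W$, using Condition~(1) together with the master-list condition. Your counting is more explicit, and it shows that Conditions~(2) and~(3) are superfluous. The one-pass outcome admits no blocking coalition whose rejected families all lie below the deviator. From a single lemma this gives both non-wastefulness (the case $\mathrm{Re}=\emptyset$) and a master-list fairness stronger than FRM.

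The two proofs also order the steps differently. The paper proves single-round termination first: starting from the empty matching, no seat is ever released during the round, so non-wastefulness follows from the stopping rule. You derive termination from non-wastefulness instead, since the second-round update test is exactly a claim of unused seats. Only non-wastefulness, not both properties as your first paragraph suggests, is needed for that step. The paper's ordering makes it clearer why initial enrollments break the single-pass argument, which is what motivates the repeated SS procedure.
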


\begin{proof}
Since there are no initial enrollments, the algorithm starts from the empty
matching. Consider a family $f$ when it is processed in the first round.
All families preceding $f$ in the master list have already selected their
assignments, whereas no family following $f$ has yet been assigned.

Once $f$ has selected its most preferred feasible assignment, no family
processed later can change the assignment of $f$, since the serial dictatorship algorithm
allows an update only if
\[
Y\setminus Y_{f'}\cup W
=
Ch_D(Y\setminus Y_{f'}\cup W),
\]
and hence no contract of another family can be rejected. Moreover, because
the matching is initially empty, a family processed later cannot release a
seat that was unavailable when $f$ was processed. Thus, no new opportunity
for $f$ can arise later in the round. It follows that no family can improve
its assignment in a second round. Hence, the algorithm terminates after a
single round. By the termination condition of the algorithm, the
resulting matching is non-wasteful.

It remains to show fairness by replacement and master list. Suppose, toward
a contradiction, that in the resulting matching $Y$, some family $f$ has
justified envy by replacement and master list through an assignment $W$.
Let
\[
\mathrm{Re}
=
(Y\setminus Y_f\cup W)
\setminus
Ch_D(Y\setminus Y_f\cup W)
\]
be the set of contracts rejected by this deviation. By
Condition~(4) of Definition~\ref{def:fairness_master}, every family
$f'\in F(\mathrm{Re})$ is ranked below $f$ in the master list.

Consider the moment when $f$ was processed in the first round. None of the
families in $F(\mathrm{Re})$ had yet been processed, and hence none of their
contracts in $\mathrm{Re}$ was present in the matching. By
Condition~(3) of Definition~\ref{def:fairness_master}, every newly proposed
contract in $W\setminus Y_f$ replaces a distinct contract in
$\mathrm{Re}$. Therefore, after removing these lower-ranked contracts,
$W$ could have been accommodated without changing the assignments of any
family preceding $f$ in the master list. Thus, $W$ was available to $f$
when $f$ was processed.

Since the serial dictatorship algorithm assigns $f$ its most preferred available assignment, $f$ must receive an assignment weakly preferred to $W$. This contradicts Condition~(1), which requires $W\succ_f Y_f$. Hence, no family has justified envy by replacement and master list, and the resulting matching is fair by replacement and master list.
\end{proof}

With initial enrollments, however, a family that changes its assignment may release seats at its initially assigned daycares, creating new opportunities for families considered earlier in the master list. Thus, a single pass of the serial dictatorship algorithm is no longer sufficient to guarantee non-wastefulness.

This motivates the repeated Serial Selection (SS) procedure described in
Algorithm~\ref{alg:Sequential}. The algorithm proceeds in rounds. In each
round, families are considered sequentially according to the fixed master
list $\rhd$. Whenever a family can obtain a strictly preferred assignment
without changing the assignments of other families, its assignment is
immediately updated and its previous contracts are withdrawn. If at least
one family changes its assignment during a round, the algorithm starts a new
round from the first family in $\rhd$. The procedure terminates when an
entire round is completed without any family changing its assignment.

Although repeated selection addresses the new opportunities created by
released seats and guarantees non-wastefulness, it does not restore fairness
by replacement and master list. Initial enrollments may create a discrepancy
between daycare priorities and the family-level master list: an initially
enrolled child has priority over non-incumbent children at its initial
daycare even when its family is ranked lower in the master list. As a result,
the outcome of the SS algorithm may still violate fairness by replacement
and master list.

\begin{algorithm}[tb]
\KwIn{An instance
$I=(C,F,D,Q,\mathcal{X},\omega,\succ_F,\succ_D)$ and a master list
$\rhd$ over families}
\KwOut{A matching $Y\subseteq\mathcal{X}$}

Initialize $Y\gets\omega$\;

\Repeat{$changed$ is $\mathrm{false}$}{
    $changed\gets\mathrm{false}$\;
    \ForEach{$f\in F$ according to $\rhd$}{
        Find $f$'s most preferred assignment $W\succ_f Y_f$ such that
        \[
        Y\setminus Y_f\cup W
        =
        Ch_D(Y\setminus Y_f\cup W).
        \]
        \If{such an assignment $W$ exists}{
            $Y\gets Y\setminus Y_f\cup W$\;
            $changed\gets\mathrm{true}$\;
        }
    }
}
\Return{$Y$}\;

\caption{Serial Selection (SS) Algorithm}
\label{alg:Sequential}
\end{algorithm}

\begin{theorem}
\label{algo:SS:NW}
The Serial Selection (SS) algorithm terminates after at most $\sum_{f\in F}|\mathcal{W}_f|$ assignment updates and satisfies non-wastefulness, but does not necessarily satisfy fairness by replacement and master list.
\end{theorem}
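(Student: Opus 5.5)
The proof has three parts: a potential argument for termination, reading non-wastefulness off the stopping condition, and a counterexample for fairness by replacement and master list.

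\emph{Termination.} Every update in Algorithm~\ref{alg:Sequential} replaces $Y_f$ by some $W$ with $W\succ_f Y_f$, and only family $f$'s assignment changes. The update condition $Y\setminus Y_f\cup W=Ch_D(Y\setminus Y_f\cup W)$ rejects no contract, so every other family keeps its assignment unchanged. Each family's assignment therefore moves strictly upward in its explicitly listed ordering $\succ_f$ and never moves down. Hence family $f$ is updated at most $|\mathcal{W}_f|-1$ times, and the total number of updates is at most $\sum_{f\in F}|\mathcal{W}_f|$. Since a round without updates stops the algorithm, the number of rounds is at most one more than this.

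\emph{Individual rationality.} I would show the invariant that $Y$ is feasible and individually rational throughout. Initially $Y=\omega$, which is feasible by assumption, and every contract in $\mathcal{X}$ is acceptable to its daycare. An update outputs $Ch_D(\cdot)$ of a set, so capacities are respected. Each family's assignment only improves from $\omega_f$, so it stays weakly above $\omega_f$.

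\emph{Non-wastefulness.} At termination, a full round passes in which no family $f$ has any $W\succ_f Y_f$ with $Y\setminus Y_f\cup W=Ch_D(Y\setminus Y_f\cup W)$. Since nothing changes during that round, this holds for the final $Y$ and every $f$ simultaneously. Condition (2) of Definition~\ref{def:nonwastefulness} already implies the acceptance part of condition (1). So a family claiming unused seats would be exactly such a $W$, and the final matching is non-wasteful.

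\emph{Failure of fairness by replacement and master list.} This follows from Theorem~\ref{theorem:non-existence:fairness_master&NW}. That instance has priorities that are master-list consistent up to initial enrollments, yet no matching is both non-wasteful and fair by replacement and master list. Running SS on it returns a non-wasteful matching by the previous step, which must then violate fairness by replacement and master list.

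For a concrete illustration I would trace the run. Start from $\omega=\{(c_4,d_1)\}$ with order $f_1,f_2,f_3,f_4$.
\begin{itemize}
\item $f_1$ cannot take $d_1$ and so takes $d_2$.
\item $f_2$ needs two seats at $d_2$ but only one is free, so it cannot move.
\item $f_3$ moves to $d_2$, releasing $d_1$.
\item $f_4$ finds $d_2$ full.
\end{itemize}
In the next round $f_1$ moves to $d_1$, and then $f_2$ needs two seats at $d_2$, where only one is free. Further tracing yields $X_4$ or a similar matching, which violates fairness by replacement and master list via $f_2$.

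The main obstacle is making sure the bound counts updates rather than rounds, and that individual rationality is preserved. The counterexample itself is immediate from the earlier theorem.
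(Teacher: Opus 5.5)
Your proposal is correct and follows essentially the same route as the paper: a per-family bound of $|\mathcal{W}_f|-1$ monotone improvements, because the no-rejection condition leaves other families untouched; non-wastefulness read off the final update-free round; and the failure of fairness by replacement and master list inherited from the non-existence instance of Theorem~\ref{theorem:non-existence:fairness_master&NW}. You also make explicit two points the paper leaves implicit: the feasibility and individual-rationality invariant, which is needed because non-wastefulness is defined only for individually rational matchings, and the concrete run on that instance, which does terminate at exactly $X_4$.
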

\begin{proof}
We first show termination. Each time a family $f$ changes its assignment,
it moves to an assignment that it strictly prefers to its current assignment.
Moreover, an update by $f$ does not change the assignment of any other
family, because the no-rejection condition
\[
Y\setminus Y_f\cup W
=
Ch_D(Y\setminus Y_f\cup W)
\]
requires every existing contract of the other families to remain selected.
Thus, once a family improves its assignment, it can never subsequently be
moved to a less-preferred assignment by another family's update.

Since family $f$ has at most $|\mathcal{W}_f|$ possible assignments and
each update strictly improves its assignment, $f$ can change its assignment
at most $|\mathcal{W}_f|-1$ times. Hence, the total number of assignment
updates is at most
\[
\sum_{f\in F}\bigl(|\mathcal{W}_f|-1\bigr),
\]
and the algorithm terminates.

We next show non-wastefulness. The algorithm terminates only when an entire
round is completed without any family changing its assignment. At
termination, therefore, there exists no family $f$ and assignment
$W\succ_f Y_f$ such that
\[
Y\setminus Y_f\cup W
=
Ch_D(Y\setminus Y_f\cup W).
\]
Thus, no family can strictly improve its assignment without changing the
assignments of other families. Hence, the resulting matching is
non-wasteful.

It remains to show that fairness by replacement and master list is not
guaranteed. Consider the instance constructed in the proof of
Theorem~\ref{theorem:non-existence:fairness_master&NW}. By that theorem,
there exists no matching satisfying both non-wastefulness and fairness by
replacement and master list. Since the SS algorithm always returns a
non-wasteful matching, its outcome on this instance cannot be fair by
replacement and master list. Hence, the SS algorithm does not necessarily
satisfy fairness by replacement and master list.
\end{proof}

%-------------------------------------------
% SubsubSection : Computational Complexity of FRM
%-------------------------------------------
\subsubsection{Computational Complexity of Problems Related to FRM}
We next examine the computational complexity associated with fairness by
replacement and master list. While verifying whether a given matching
satisfies this fairness notion can be done in polynomial time, deciding
whether there exists a non-wasteful matching satisfying it is NP-complete,
and computing such a matching is NP-hard.

\begin{theorem}
\label{theo:verification:FRM}
 Given an individually rational matching $Y$ and a strict master list
$\rhd$, whether $Y$ satisfies fairness by replacement and master list
can be verified in $O\left(|\mathcal X|\sum_{f\in F}|\mathcal W_f|
\right)$ time.
\end{theorem}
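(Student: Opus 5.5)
The plan is to reuse the enumeration argument from Theorem~\ref{theo:check:envy:replacement} and add a linear-time test for the master-list condition. Fix a family $f$ and scan its explicitly represented preference list. For every assignment $W$ with $W\succ_f Y_f$, form $Z=(Y\setminus Y_f)\cup W$. Compute $Ch_D(Z)$ by scanning each daycare's priority list once and taking the top $Q_d$ contracts present in $Z_d$. As in that proof, this costs $O\bigl(\sum_{d}|\mathcal X_d|\bigr)=O(|\mathcal X|)$. From $Ch_D(Z)$ we obtain $\mathrm{Re}=Z\setminus Ch_D(Z)$ in the same time bound, using membership flags on contracts.

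Conditions~(1)--(3) of Definition~\ref{def:fairness_master} are then checked exactly as before.
\begin{itemize}
\item Condition~(1) holds iff $W\subseteq Ch_D(Z)$.
\item Condition~(2) holds iff $\mathrm{Re}\neq\emptyset$.
\item For Condition~(3), the supporting contracts must sit at the same daycare and be distinct. Hence a valid assignment of supports exists iff $|(W\setminus Y_f)_d|\le|\mathrm{Re}_d\cap(Y\setminus Y_f)|$ for every $d$. This is a counting check in $O(|\mathcal X|)$.
\end{itemize}

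Condition~(4) is the only new ingredient. Precompute, once, the position $\mathrm{pos}(g)$ of each family in $\rhd$; this costs $O(|F|)$, which is dominated by $O(|\mathcal X|)$ up to families with no contracts, and those are irrelevant. Then scan $\mathrm{Re}$, look up the family $f_{c'}$ of each rejected contract $(c',d)$, and check that $\mathrm{pos}(f)<\mathrm{pos}(f_{c'})$. This costs $O(|\mathrm{Re}|)=O(|\mathcal X|)$.

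One small point needs care: $\mathrm{Re}$ cannot contain contracts of $f$ itself. Every contract of $f$ in $Z$ lies in $W$, and Condition~(1) requires $W\subseteq Ch_D(Z)$. So whenever Condition~(1) holds, the requirement $f\rhd f'$ is well-posed for every $f'\in F(\mathrm{Re})$. When Condition~(1) fails, we simply reject this $W$.

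Hence each candidate $W$ is tested in $O(|\mathcal X|)$ time. At most $|\mathcal W_f|$ candidates are examined per family, so the total is $O\bigl(|\mathcal X|\sum_{f}|\mathcal W_f|\bigr)$. $Y$ is fair by replacement and master list iff no test succeeds.

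I do not expect a real obstacle here. The only step needing an argument is the claim that the per-daycare counting test for Condition~(3) is exact. This holds because the supporting contract for $(c,d)$ must be at the same daycare $d$. The matching problem therefore splits into independent per-daycare problems, and within each daycare any new contract can use any rejected contract.
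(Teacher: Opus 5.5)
Your proposal is correct and matches the paper's proof essentially step for step. For each $W\succ_f Y_f$ you compute $Ch_D(Z)$ and $\mathrm{Re}$ in $O(|\mathcal X|)$ time, check Conditions~(1)--(3) with the same per-daycare counting used in Theorem~\ref{theo:check:envy:replacement}, and then verify the master-list condition by position lookups over $F(\mathrm{Re})$. Your remark that $\mathrm{Re}$ contains no contract of $f$ once Condition~(1) holds is a small clarification the paper leaves implicit, but it does not change the argument.
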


\begin{proof}
The proof follows the same argument as
Theorem~\ref{theo:check:envy:replacement}. We examine every assignment
$W\succ_fY_f$ in the explicitly represented preference ordering of
family $f$. For each such $W$, let
\[
Z=(Y\setminus Y_f)\cup W
\]
and compute
\[
Ch_D(Z)
\quad\text{and}\quad
\mathrm{Re}=Z\setminus Ch_D(Z).
\]
As shown in Theorem~\ref{theo:check:envy:replacement}, Conditions~(1)--(3)
of Definition~\ref{def:fairness_master} can be checked in
$O(|\mathcal X|)$ time.

It remains to check the master-list condition. Store the position of each family in the strict master list $\rhd$. Then, for every family 
$f'\in F(\mathrm{Re})$, we verify whether $f\rhd f'.$ 
Since $F(\mathrm{Re})$ can be obtained while computing the rejected
contracts and contains at most $|C|$ families, this additional check takes $O(|\mathcal X|)$ time.

Thus, for each assignment $W$, all conditions for justified envy by
replacement and master list can be checked in $O(|\mathcal X|)$ time.
Since at most $|\mathcal W_f|$ assignments need to be examined, whether
$f$ has justified envy by replacement and master list can be determined
in $O\bigl(|\mathcal X|\cdot|\mathcal W_f|\bigr)$ time.

Finally, to verify whether $Y$ satisfies fairness by replacement and
master list, we perform this check for every family $f\in F$. Hence, the
verification problem can be solved in polynomial time.
\end{proof}

\begin{restatable}{theorem}{nphardfrm}
\label{thm:decision:FRM_NW}
Deciding whether there exists an individually rational matching that is
both non-wasteful and fair by replacement and master list is NP-complete,
even if
\begin{itemize}
    \item each family has at most two children,
    \item every acceptable assignment places all siblings in the same daycare,
    \item each daycare has capacity at most two, and
    \item daycare priorities are master-list consistent up to initial
    enrollments.
\end{itemize}
Consequently, computing such a matching, while correctly reporting when
none exists, is NP-hard.
\end{restatable}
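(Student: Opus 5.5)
Membership in NP is routine. A matching is a polynomial-size certificate. Feasibility and individual rationality are checked directly. Non-wastefulness is checked by enumerating each family's explicitly listed assignments $W\succ_f Y_f$ and testing the no-rejection condition with the daycare choice functions. Fairness by replacement and master list is checked by Theorem~\ref{theo:verification:FRM}. The computational NP-hardness then follows exactly as in Theorem~\ref{theo:decision:nw-fairness}. The real work is NP-hardness of the decision problem under all four restrictions simultaneously.

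I would reduce from a satisfiability variant with bounded occurrences, e.g.\ (3,B2)-SAT or a monotone-style variant where each variable appears a bounded number of times. The gadget I would use is the instance of Theorem~\ref{theorem:non-existence:fairness_master&NW}, which has no NW+FRM matching and whose families have at most two children (the two children of $f_2$ share $d_2$), capacities at most two, and master-list-consistent priorities. I would turn it into a \emph{switchable} gadget. An extra single-child ``disabling'' family $g$, initially enrolled somewhere, can either occupy a seat that breaks the cycle (e.g.\ fill the second seat of $d_2$ or block $d_1$ for $c_1$) or go elsewhere. When $g$ is present, the gadget has an NW+FRM completion. When $g$ is absent, it does not.

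For the variables, I would build a small gadget with exactly two NW+FRM local outcomes, a true outcome and a false outcome. This can be done with a two-child family that places both siblings together at one of two capacity-two daycares, using the ``same daycare'' restriction. The two choices free or occupy seats that serve as the disabling resources for the clause gadgets of the clauses the literal satisfies. Each clause gadget is a copy of the unsatisfiable core, neutralized iff at least one of its literals is set true; I would route a single-child family per literal occurrence.

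To make the checks local, I would place the master list so that variable families are ranked highest, then connector families, then clause-gadget families. Priorities would follow the master list except for incumbents. With this ordering, FRM envy from a lower gadget toward a higher one is impossible, and envy within gadgets is controlled by the local analysis.

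The two directions then go as follows. Given a satisfying assignment, I would build the matching gadget by gadget and verify NW and FRM locally, using the case analysis pattern of Theorem~\ref{theorem:non-existence:fairness_master&NW}. Conversely, from an NW+FRM matching I would read off the variable values. If some clause were unsatisfied, its gadget would reduce exactly to the four-case analysis ($X_1$–$X_4$) of that theorem, giving a contradiction.

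I expect the main obstacle to be designing the connectors so that they do not introduce spurious wastefulness or FRM envy across gadgets, in particular so that released initial-enrollment seats cannot be claimed by higher-ranked families elsewhere. This must hold while keeping capacities at most two and every priority master-list consistent up to incumbents. I would first try making each connector family's only acceptable daycares private to its two gadgets, and then check NW and FRM by the exhaustive local case analysis.
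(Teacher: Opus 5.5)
\textbf{There is a genuine gap: the NP-hardness half is a reduction template, not a reduction.} Your NP-membership argument is fine. But the variable gadget, the switchable clause gadget and the connectors are never constructed, and every property your two directions rely on is asserted rather than proved. Under master-list-consistent priorities, these properties are exactly where the difficulty lies.

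(i) \emph{Variable gadget.} Take a two-child family $X$, ranked above all connector and clause families, that prefers $D_a$ to $D_b$. It cannot rest at $D_b$ if $D_a$ is empty (NW claim) or holds two lower-ranked non-incumbents (FRM envy by replacement). A second local state survives only if $D_a$ holds a child $X$ cannot outrank (an incumbent or a higher-ranked family's child), or exactly one lower-ranked child plus a vacancy. The latter is a mixed deviation, which is covered by neither NW nor FRM. That same loophole typically stabilises further states, e.g.\ $X$ unmatched with both daycares half-filled by lower-ranked singles. You would have to prove that no such state emits the ``true'' signal for both $x$ and $\bar x$, since otherwise a non-assignment neutralises every clause.

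(ii) \emph{Connectors.} Connectors outrank all clause-gadget families. So a connector that prefers the clause daycare $d_2$ to its current assignment can always claim a seat there: a vacancy gives an NW claim, and a lower-ranked occupant gives FRM envy. This fails only if $d_2$ is filled by incumbents or higher-ranked children. Its arrival is therefore controlled only by whether the variable gadget evicts it from a variable-side seat it prefers, which forces it to be a non-incumbent there. This clashes with (i). For example, if the two positive-occurrence connectors of $x$ both prefer $D_a$, then in the state where $X$ sits at $D_b$ they fill $D_a$, and $X$ has FRM envy toward them. If instead you make connectors incumbents at $d_2$ that ``stay'' to disable the gadget, capacity two admits at most two of them, so a three-literal OR needs extra structure.

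(iii) \emph{Soundness.} Your step ``an unsatisfied clause reduces exactly to $X_1$--$X_4$'' presupposes that no connector can enter that gadget, which is precisely the unresolved wiring.

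\textbf{How the paper avoids this.} The paper reduces from \textsc{COM-SMTI}, asking for a complete weakly stable matching. There, women's preferences are derived from a master list of men, which fits the master-list-consistency restriction directly. Each man $m_i$ becomes the two-child family $f_{i,2}$ inside a copy of the no-instance. That gadget serves not as a clause checker but as a forcing device: it has an NW+FRM completion only if $f_{i,2}$ is placed at a daycare representing a woman.

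A strict woman gets one capacity-two daycare. A tied woman gets two daycares with swapping incumbent dummies $e_w,e'_w$, which occupy two of the four seats. Hence each woman hosts at most one man-family, and counting yields a complete matching. FRM envy at a strict woman's daycare is then exactly a blocking pair, because that daycare's priority order is inherited from the master list. No variable, clause or connector gadgets are needed.

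If you want to keep a SAT-based route, you must actually build the three gadgets and carry out the cross-gadget NW/FRM case analysis. As written, the proposal does not establish NP-hardness.
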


\begin{figure}[tb]
\centering

\begin{tikzpicture}

% ============================================================
% Styles
% ============================================================

\tikzstyle{family}=[
    rounded corners,
    draw=green!60!black,
    fill=green!10,
    minimum width=11mm,
    minimum height=7mm
]

\tikzstyle{repfamily}=[
    rounded corners,
    draw=purple!70!black,
    fill=purple!10,
    minimum width=12mm,
    minimum height=7mm
]

\tikzstyle{dummy}=[
    rounded corners,
    draw=gray!70!black,
    fill=gray!12,
    minimum width=10mm,
    minimum height=7mm
]

\tikzstyle{child}=[
    circle,
    draw=blue!70!black,
    fill=blue!10,
    minimum size=9mm,
    inner sep=1pt,
    font=\scriptsize
]

\tikzstyle{daycare}=[
    circle,
    draw=orange!80!black,
    fill=orange!15,
    minimum size=10mm,
    inner sep=0pt
]

\tikzstyle{womandaycare}=[
    circle,
    draw=purple!70!black,
    fill=purple!10,
    minimum size=10mm,
    inner sep=0pt
]

% Family -- child
\tikzstyle{member}=[
    -,
    thin,
    dashed,
    draw=gray!65
]

% Initial enrollment
\tikzstyle{initial}=[
    ->,
    very thick,
    draw=black
]

% Representation of an acceptable COM-SMTI pair
\tikzstyle{representation}=[
    ->,
    thick,
    dashed,
    draw=purple!75
]

% Dummy family's preferred transfer
\tikzstyle{prefer}=[
    ->,
    thick,
    draw=red!75
]

% ============================================================
% PANEL (a): Man gadget G_i
% ============================================================

\node[
    draw=gray!55,
    rounded corners,
    minimum width=8.6cm,
    minimum height=6.3cm
] at (-3.9, 1.0) {};

\node[
    fill=green!10,
    draw=green!35,
    rounded corners,
    minimum width=8.5cm,
    minimum height=9mm,
    font=\bfseries
] at (-3.9, 3.7)
{(a) Gadget \(G_i\) representing man \(m_i\)};

% ------------------------------------------------------------
% Families
% ------------------------------------------------------------

\node[family] (f1) at (-7.2,2.7)
    {\(f_{i,1}\)};

\node[repfamily] (f2) at (-7.2,1.2)
    {\(f_{i,2}\)};

\node[family] (f3) at (-7.2,-0.5)
    {\(f_{i,3}\)};

\node[family] (f4) at (-7.2,-1.5)
    {\(f_{i,4}\)};

% ------------------------------------------------------------
% Children
% ------------------------------------------------------------

\node[child] (a1) at (-5.5,2.7)
    {\(c_{i,1}\)};

\node[child] (a2) at (-5.5,1.65)
    {\(c_{i,2}^1\)};

\node[child] (a3) at (-5.5,0.65)
    {\(c_{i,2}^2\)};

\node[child] (a4) at (-5.5,-0.5)
    {\(c_{i,3}\)};

\node[child] (a5) at (-5.5,-1.5)
    {\(c_{i,4}\)};

% ------------------------------------------------------------
% Family membership
% ------------------------------------------------------------

\draw[member] (f1) -- (a1);

\draw[member] (f2) -- (a2);
\draw[member] (f2) -- (a3);

\draw[member] (f3) -- (a4);
\draw[member] (f4) -- (a5);

% ------------------------------------------------------------
% Internal daycares
% ------------------------------------------------------------

\node[daycare] (di1) at (-3.2,2.2)
    {\(d_{i,1}\)};

\node[daycare] (di2) at (-3.2,-0.9)
    {\(d_{i,2}\)};

\node[font=\small] at (-3.2,2.95)
    {\(Q=1\)};

\node[font=\small] at (-3.2,-1.65)
    {\(Q=2\)};

% ------------------------------------------------------------
% Initial enrollment
% c_{i,3} is initially enrolled at d_{i,1}
% ------------------------------------------------------------

\draw[initial] (a4) -- (di1);

% \node[
%     font=\scriptsize
% ] at (-4.15,0.2)
%     {initial};

% ------------------------------------------------------------
% Internal fallback of f_{i,2}
%
% These arrows only highlight the fallback assignment used
% in the reduction; the figure is not a complete
% acceptability graph.
% ------------------------------------------------------------

\draw[representation] (a2) -- (di2);
\draw[representation] (a3) -- (di2);

% ------------------------------------------------------------
% Woman daycare representing an acceptable woman w
% ------------------------------------------------------------

\node[womandaycare] (womanA) at (-1.1, 1.0)
    {\(d_w\)};

\node[
    font=\small,
    text=purple!70!black
] at (-1.1, 1.8)
    {woman daycare};

% If (m_i,w) is acceptable, f_{i,2} has the
% corresponding two-child assignment at d_w.

\draw[representation] (a2) -- (womanA);
\draw[representation] (a3) -- (womanA);

\node[
    font=\small,
    align=center,
    text=purple!70!black
] at (-1.2,-0.5)
    {\((m_i,w)\) acceptable\\
     \(\Longrightarrow f_{i,2}\) can use \(d_w\)};

% ============================================================
% PANEL (b): Woman with strict preferences
% ============================================================

\node[
    draw=gray!55,
    rounded corners,
    minimum width=6.5cm,
    minimum height=6.3cm
] at (3.8,1.0) {};

\node[
    fill=purple!10,
    draw=purple!35,
    rounded corners,
    minimum width=6.5cm,
    minimum height=9mm,
    font=\bfseries
] at (3.8,3.7)
{(b) Woman \(w\) with strict preferences};

% ------------------------------------------------------------
% Representative family f_{i,2}
% ------------------------------------------------------------

\node[repfamily] (fiB) at (1.4,2.1)
    {\(f_{i,2}\)};

\node[child] (bi1) at (3.2,2.6)
    {\(c_{i,2}^1\)};

\node[child] (bi2) at (3.2,1.6)
    {\(c_{i,2}^2\)};

\draw[member] (fiB) -- (bi1);
\draw[member] (fiB) -- (bi2);

% ------------------------------------------------------------
% Representative family f_{k,2}
% ------------------------------------------------------------

\node[repfamily] (fkB) at (1.4,-0)
    {\(f_{k,2}\)};

\node[child] (bk1) at (3.2,0.5)
    {\(c_{k,2}^1\)};

\node[child] (bk2) at (3.2,-0.5)
    {\(c_{k,2}^2\)};

\draw[member] (fkB) -- (bk1);
\draw[member] (fkB) -- (bk2);

% ------------------------------------------------------------
% Woman daycare
% ------------------------------------------------------------

\node[womandaycare] (dwB) at (5.8,1.05)
    {\(d_w\)};

\node[font=\small] at (5.8,1.75)
    {\(Q=2\)};

% ------------------------------------------------------------
% Representation of acceptable pairs
% ------------------------------------------------------------

\draw[representation] (bi1) -- (dwB);
\draw[representation] (bi2) -- (dwB);

\draw[representation] (bk1) -- (dwB);
\draw[representation] (bk2) -- (dwB);

\node[
    font=\small,
    align=center,
    text=purple!70!black
] at (3.8,-1.5)
    {priority at \(d_w\) inherits
     the \\ strict preference order of \(w\)};

% ============================================================
% PANEL (c): Woman with a tie
% ============================================================

\node[
    draw=gray!55,
    rounded corners,
    minimum width=15.3cm,
    minimum height=4.9cm
] at (-0.53,-4.8) {};

\node[
    fill=purple!10,
    draw=purple!35,
    rounded corners,
    minimum width=15.3cm,
    minimum height=9mm,
    font=\bfseries
] at (-0.53, -2.8)
{(c) Woman \(w\) with tie \((m_i,m_{i+1})\)};

% ------------------------------------------------------------
% Family f_{i,2}
% ------------------------------------------------------------

\node[repfamily] (fiC) at (-7.2,-4.7)
    {\(f_{i,2}\)};

\node[child] (ci1) at (-5.4,-4)
    {\(c_{i,2}^1\)};

\node[child] (ci2) at (-5.4,-5.4)
    {\(c_{i,2}^2\)};

\draw[member] (fiC) -- (ci1);
\draw[member] (fiC) -- (ci2);

% ------------------------------------------------------------
% Family f_{i+1,2}
% ------------------------------------------------------------

\node[repfamily] (fjC) at (6.2,-4.7)
    {\(f_{i+1,2}\)};

\node[child] (cj1) at (4.4,-4)
    {\(c_{i+1,2}^1\)};

\node[child] (cj2) at (4.4,-5.4)
    {\(c_{i+1,2}^2\)};

\draw[member] (fjC) -- (cj1);
\draw[member] (fjC) -- (cj2);

% ------------------------------------------------------------
% Two daycares representing tied woman w
% ------------------------------------------------------------

\node[womandaycare] (dwC) at (-2.2,-4.6)
    {\(d_w\)};

\node[womandaycare] (dwpC) at (1.2,-4.6)
    {\(d'_w\)};

\node[font=\small] at (-2.2,-3.7)
    {\(Q=2\)};

\node[font=\small] at (1.2,-3.7)
    {\(Q=2\)};

% ------------------------------------------------------------
% Representation of the tied acceptable pairs
%
% m_i is represented by d_w;
% m_{i+1} is represented by d'_w.
% ------------------------------------------------------------

\draw[representation] (ci1) -- (dwC);
\draw[representation] (ci2) -- (dwC);

\draw[representation] (cj1) -- (dwpC);
\draw[representation] (cj2) -- (dwpC);

\node[
    font=\small,
    text=purple!70!black
] at (-4.0,-3.65)
    {\((m_i,w)\)};

\node[
    font=\small,
    text=purple!70!black
] at (3,-3.65)
    {\((m_{i+1},w)\)};

% ------------------------------------------------------------
% Dummy families
% ------------------------------------------------------------

\node[dummy] (ew) at (-2.2,-6.6)
    {\(e_w\)};

\node[dummy] (ewp) at (1.2,-6.6)
    {\(e'_w\)};

% ------------------------------------------------------------
% Initial enrollments
% ------------------------------------------------------------

\draw[initial] (ew) -- (dwC);
\draw[initial] (ewp) -- (dwpC);

% ------------------------------------------------------------
% Dummy families prefer to transfer to the other daycare
% ------------------------------------------------------------

\draw[representation] (ew) -- (dwpC);
\draw[representation] (ewp) -- (dwC);

% ------------------------------------------------------------
% Priority lists
% ------------------------------------------------------------

\node[
    font=\small,
    text=purple!70!black
] at (-5.0,-6.6)
    {\(d_w:\ e_w\succ f_{i,2}\succ e'_w\)};

\node[
    font=\small,
    text=purple!70!black
] at (4.5,-6.6)
    {\(d'_w:\ e'_w\succ f_{i+1,2}\succ e_w\)};

% ============================================================
% Legend
% ============================================================

\draw[member] (-7.7,-7.7) -- (-6.8,-7.7);

\node[
    anchor=west,
    font=\small
] at (-6.75,-7.7)
    {family membership};

\draw[initial] (-2.8,-7.7) -- (-1.9,-7.7);

\node[
    anchor=west,
    font=\small
] at (-1.75,-7.7)
    {initial enrollment};

\draw[representation] (2.0,-7.7) -- (2.9,-7.7);

\node[
    anchor=west,
    font=\small
] at (3.0,-7.7)
    {possible assignment};

% \draw[prefer] (4.8,-7.7) -- (5.7,-7.7);

% \node[
%     anchor=west,
%     font=\small
% ] at (5.85,-7.7)
%     {dummy family's preferred transfer};

\end{tikzpicture}

\caption{
Illustration of the reduction from \textsc{COM-SMTI}.
Panel (a) shows the gadget \(G_i\) associated with man \(m_i\);
panel (b) shows the daycare representing a woman with a strict
preference list; and panel (c) shows the two-daycare gadget representing
a woman whose preference list consists of a tie.
}
\label{fig:COM-SMTI-reduction}
\end{figure}

\begin{proof}
Inclusion in NP follows from the fact that verifying if a matching is individually rational, non-wasteful, and fair by replacement and master list can be done in polynomial time by Theorems~\ref{theo:verification:nw-fairness} and~\ref{theo:verification:FRM}.

We reduce from the problem of deciding whether a complete weakly stable
matching exists in the Stable Marriage with Ties and Incomplete Lists
(\textsc{COM-SMTI}) problem, which is NP-hard \citep{IMS08a}. An instance of
\textsc{COM-SMTI} consists of two sets
\[
M=\{m_1,\ldots,m_n\}
\quad\text{and}\quad
W=\{w_1,\ldots,w_n\}
\]
of men and women, respectively. Men have strict preferences over their
acceptable women. Women's preferences are derived from a master list with
ties of the form
\[
(m_1,m_2),(m_3,m_4),\ldots,(m_{n-1},m_n),
\]
where parentheses indicate ties. Each woman either has a single tie of
length two as her preference list or has a strict preference list
containing at most one acceptable man from each pair in the master list.
The NP-hardness of this restricted version follows from Theorem~3.2 \citep{IMS08a}, with the roles of men and women
interchanged.

\paragraph{Construction.}
For each man $m_i\in M$, we construct a gadget $G_i$ based on the ``no-instance'' in Theorem~\ref{theorem:non-existence:fairness_master&NW}. The gadget contains
two daycares, $d_{i,1}$ of capacity one and $d_{i,2}$ of capacity two,
and four families
\[
f_{i,1},f_{i,2},f_{i,3},f_{i,4}.
\]
Family $f_{i,2}$ represents man $m_i$ and consists of two children,
whereas each of the other three families is a singleton. The child of
$f_{i,3}$ is initially enrolled in $d_{i,1}$.

The preferences within $G_i$ follow the counterexample in
Theorem~\ref{theorem:non-existence:fairness_master&NW}:
$f_{i,1}$ prefers $d_{i,1}$ to $d_{i,2}$,
$f_{i,3}$ prefers $d_{i,2}$ to its initial assignment $d_{i,1}$,
and $f_{i,4}$ finds $d_{i,2}$ acceptable. Family $f_{i,2}$ inherits
the preference list of $m_i$, with each acceptable woman replaced by the
corresponding daycare constructed below, followed by $d_{i,2}$ as its
least-preferred acceptable assignment. Family $f_{i,2}$ always applies
with both children to the same daycare.

For each woman $w\in W$ with a strict preference list, we create a daycare
$d_w$ of capacity two. Its priority ordering is inherited from $w$'s
preference list, with each acceptable man $m_i$ replaced by the
corresponding family $f_{i,2}$.

For each woman $w$ whose preference list consists of a tie
$(m_i,m_{i+1})$, we create two daycares $d_w$ and $d'_w$, each of
capacity two. We also introduce two singleton dummy families $e_w$ and
$e'_w$. Family $e_w$ is initially enrolled in $d_w$ and prefers
$d'_w$ to $d_w$, whereas $e'_w$ is initially enrolled in $d'_w$ and
prefers $d_w$ to $d'_w$. The priority orders are
\[
d_w:\quad e_w\succ f_{i,2}\succ e'_w
\]
and
\[
d'_w:\quad e'_w\succ f_{i+1,2}\succ e_w.
\]
In the preference list of $f_{i,2}$, woman $w$ is represented by $d_w$,
whereas in that of $f_{i+1,2}$ she is represented by $d'_w$.

Finally, we construct a strict master list $\rhd$ over all families.
Break each tie in the \textsc{COM-SMTI} master list arbitrarily to obtain
the strict order
\[
m_1,m_2,\ldots,m_n.
\]
For each $m_i$, insert the four families of $G_i$ consecutively as
\[
f_{i,1}\rhd f_{i,2}\rhd f_{i,3}\rhd f_{i,4}.
\]
After all gadget families, append the dummy families $e_w,e'_w$ in an
arbitrary order. By construction, every daycare priority list follows
$\rhd$, except that a child initially enrolled at a daycare is moved
above all non-incumbent children at that daycare. Hence, all daycare
priorities are master-list consistent up to initial enrollments.

The construction is clearly polynomial, every family has size at most
two, every daycare has capacity at most two, and every multi-child family
applies only to same-daycare assignments.

\paragraph{Completeness.}
Suppose that the \textsc{COM-SMTI} instance admits a complete weakly
stable matching $\mu$. We construct a matching $X$ in our instance.

Since $\mu$ is complete, every man $m_i$ is matched to some woman $w$.
We assign the corresponding family $f_{i,2}$ to the daycare representing
$w$. If $w$ has a strict preference list, both children of $f_{i,2}$ are
assigned to $d_w$.

Now suppose that $w$ has the tied preference list $(m_i,m_{i+1})$.
If $\mu(m_i)=w$, assign $f_{i,2}$ to $d_w$ and assign both dummy families
$e_w$ and $e'_w$ to $d'_w$. If instead $\mu(m_{i+1})=w$, assign
$f_{i+1,2}$ to $d'_w$ and both dummy families to $d_w$.

For every $i$, family $f_{i,2}$ is now assigned outside $G_i$ to an
assignment it prefers to $d_{i,2}$. We complete the matching inside
$G_i$ by assigning $f_{i,1}$ to $d_{i,1}$ and assigning both
$f_{i,3}$ and $f_{i,4}$ to $d_{i,2}$.

We show that $X$ is non-wasteful and fair by replacement and master list.
Within every gadget $G_i$, all families other than $f_{i,2}$ receive
their most-preferred assignments, while $f_{i,2}$ receives an assignment
that it prefers to every assignment inside $G_i$. Thus, no relevant
deviation arises entirely within $G_i$.

Consider next a woman $w$ with a tied preference list. One of $d_w$ and
$d'_w$ is occupied by the two children of the family corresponding to
$\mu(w)$, while the other contains the two dummy families. The family
corresponding to the other man in the tie cannot replace the two dummy
families: it requires both seats, but one of the dummy children is
initially enrolled at that daycare and therefore has higher priority than
any non-incumbent child. The dummy family that is not at its preferred
daycare cannot generate justified envy by replacement and master list
either, since the family occupying its preferred daycare precedes it in
the master list.

Finally, consider a woman $w$ with a strict preference list. The
corresponding daycare $d_w$ is occupied by the family $f_{k,2}$
representing $\mu(w)=m_k$. If another family $f_{i,2}$ could obtain
justified envy by replacement and master list at $d_w$, then $m_i$ would
prefer $w$ to $\mu(m_i)$ and $w$ would prefer $m_i$ to $m_k$. Thus,
$(m_i,w)$ would be a blocking pair of $\mu$, contradicting its weak
stability.

No family can claim unused seats in the constructed matching by the same
case analysis: all improvements available within the gadgets have already
been exhausted, and the capacities associated with women are filled as
described above. Hence, $X$ is both non-wasteful and fair by replacement
and master list.

\paragraph{Soundness.}
Conversely, suppose that the constructed instance admits an individually
rational matching $X$ satisfying both non-wastefulness and fairness by
replacement and master list.

We first establish that every family $f_{i,2}$ must be assigned to a
daycare representing a woman. Fix $i$ and suppose, for contradiction,
that $f_{i,2}$ is not assigned to any such daycare. Since $f_{i,2}$
always applies with both children to the same daycare, its only remaining
acceptable assignment outside the woman daycares is $d_{i,2}$; it may
also be unmatched.

Consider the restriction of $X$ to the two internal daycares
$d_{i,1}$ and $d_{i,2}$. The only families other than $f_{i,2}$ that
find these daycares acceptable are
$f_{i,1},f_{i,3}$, and $f_{i,4}$. Moreover, none of these three families
has an acceptable assignment outside $G_i$. Hence, their possible
improvements and the priority relations at $d_{i,1}$ and $d_{i,2}$ are
exactly the same as in the no-instance of
Theorem~\ref{theorem:non-existence:fairness_master&NW}.

If $f_{i,2}$ is not assigned to a daycare representing a woman, then it is either assigned to $d_{i,2}$ or unmatched. In either case, the restriction of $X$ to $G_i$ reproduces the relevant structure of the no-instance in Theorem~\ref{theorem:non-existence:fairness_master&NW}: either $f_{i,2}$ can improve by taking its internal assignment at $d_{i,2}$, or there exists an unused-seat improvement or justified envy by replacement and master list entirely within $G_i$. Such a deviation remains valid in the full matching $X$, contradicting non-wastefulness or fairness by replacement and master list. Therefore, every $f_{i,2}$ must be assigned to a daycare representing a woman.

We next show that these assignments induce a one-to-one matching between
the families $f_{i,2}$ and the women. Each family $f_{i,2}$ consists of
two children and therefore occupies both seats of any daycare representing
a woman to which it is assigned. A woman with a strict preference list
corresponds to a single daycare $d_w$ of capacity two, and hence can
accommodate exactly one family of type $f_{i,2}$.

Now consider a woman $w$ with a tied preference list. The two associated
daycares $d_w$ and $d'_w$ have total capacity four. The dummy families
$e_w$ and $e'_w$ are initially assigned to $d_w$ and $d'_w$,
respectively, and each finds only these two daycares acceptable.
Individual rationality therefore requires both dummy families to remain
assigned to one of these two daycares. Together they occupy two seats,
leaving at most two seats for families of type $f_{i,2}$. Since each such
family requires two seats at the same daycare, at most one family
$f_{i,2}$ can be assigned to the daycares associated with $w$.

There are $n$ families $f_{i,2}$, and we have shown that every one of them
must be assigned to a daycare representing a woman. There are also exactly
$n$ women, and each woman can accommodate at most one such family.
Consequently, every woman must accommodate exactly one family
$f_{i,2}$. We can therefore define a complete matching $\mu$ in the
original \textsc{COM-SMTI} instance by setting
\[
\mu(m_i)=w
\]
whenever $f_{i,2}$ is assigned to a daycare associated with $w$.

It remains to show that $\mu$ is weakly stable. Suppose, for contradiction,
that $(m_i,w)$ is a blocking pair of $\mu$. Since a blocking pair under
weak stability requires both agents to strictly prefer each other to their
current partners, $w$ cannot have a preference list consisting of a single
tie. Hence, $w$ has a strict preference list.

Let $\mu(w)=m_k$. Since $(m_i,w)$ blocks $\mu$, we have
\[
w\succ_{m_i}\mu(m_i)
\qquad\text{and}\qquad
m_i\succ_w m_k.
\]
By construction, the first relation implies that $f_{i,2}$ strictly
prefers the assignment at $d_w$ to its current assignment. The second
relation implies that both children of $f_{i,2}$ have higher priority at
$d_w$ than the two children of $f_{k,2}$. Since $d_w$ has capacity two,
assigning $f_{i,2}$ to $d_w$ therefore rejects both contracts of
$f_{k,2}$.

Moreover, the priority ordering at $d_w$ follows the family master list.
Hence,
\[
m_i\succ_w m_k
\quad\Longrightarrow\quad
f_{i,2}\rhd f_{k,2}.
\]
Thus, the proposed assignment of $f_{i,2}$ to $d_w$ satisfies the
blocking-coalition, rejection, replacement-only, and master-list
conditions of justified envy by replacement and master list. Therefore,
$f_{i,2}$ has justified envy by replacement and master list toward
$f_{k,2}$, contradicting the assumption that $X$ satisfies FRM.

Hence, $\mu$ contains no blocking pair and is weakly stable. Since $\mu$
is also complete, the original \textsc{COM-SMTI} instance admits a
complete weakly stable matching.

We conclude that the \textsc{COM-SMTI} instance admits a complete weakly
stable matching if and only if the constructed instance admits a matching
satisfying both non-wastefulness and fairness by replacement and master
list. Hence, the decision problem is NP-hard.
\end{proof}

\subsection{Resettlement}

The impossibility result in
Theorem~\ref{theorem:non-existence:fairness_master&NW} shows that
incorporating a master list alone is insufficient to reconcile fairness
by replacement with non-wastefulness in the presence of initial
enrollments. A key difficulty is that a deviation may displace families
from their current assignments, potentially causing them to lose the
protection afforded by their initial enrollments.

To address this issue, we introduce the idea of \emph{resettlement}, which
protects the initial entitlements of families displaced by a justified
deviation. The key idea is to allow a higher-ranked family to displace
lower-ranked families only when the displaced families can be returned to
their initial assignments while preserving feasibility. Thus, displacement
does not force the affected families to give up the protection provided by
their initial enrollments. This leads to a fairness notion that combines a
master-list restriction with feasible resettlement.

\begin{definition}[Fairness by Master List and Resettlement]
\label{def:fairness_resettlement}
Given an individually rational matching $Y$, a new assignment $W$ for a
family $f$, and a strict master list $\rhd$ over families $F$, family $f$ is
said to have \emph{justified envy by master list and resettlement} if the
following conditions hold:
\begin{description}
    \item[(1) blocking coalition:]
    $W\succ_f Y_f$ and
    $
    W\subseteq Ch_D(Y\setminus Y_f\cup W).
    $

    \item[(2) at least one rejection:]
    \[
    |Y\setminus Y_f\cup W|
    >
    |Ch_D(Y\setminus Y_f\cup W)|.
    \]

    \item[(3) respect for the master list:]
    For every $f'\in F(\mathrm{Re})$, we have $f\rhd f'$, where
    \[
    \mathrm{Re}
    =
    (Y\setminus Y_f\cup W)
    \setminus
    Ch_D(Y\setminus Y_f\cup W)
    \]
    is the set of rejected contracts.

    \item[(4) feasible resettlement:]
    Let $F'=F(\mathrm{Re})$. The matching
    $
    Y\setminus(Y_f\cup Y_{F'})
    \cup(W\cup\omega_{F'})
    $
    is feasible.
\end{description}
An individually rational matching is said to be
\emph{fair by master list and resettlement} if no family has justified envy
by master list and resettlement.
\end{definition}

Compared with fairness in Definition~\ref{def:fairness}, fairness by master
list and resettlement imposes two additional requirements on justified
envy. Condition~(3) requires the envying family $f$ to be ranked above every
family whose contract is rejected as a result of the deviation.
Condition~(4) protects these rejected families through resettlement. Letting
$F'=F(\mathrm{Re})$, we remove the current assignments of $f$ and the rejected families $F'$, assign $f$ to $W$, and return every family in $F'$ to its initial assignment. The resulting matching
$
Y\setminus(Y_f\cup Y_{F'})
\cup(W\cup\omega_{F'})
$
is required to be feasible.

%-------------------------------------------
% SubsubSection : GI algorithm
%-------------------------------------------
\subsubsection{Greedy Improvement (GI) Algorithm}

We next introduce the Greedy Improvement (GI) algorithm, which repeatedly allows families to improve their assignments either without displacing other
families or through justified envy by master list and resettlement. Let
\[
f_1\rhd f_2\rhd\cdots\rhd f_m
\]
be a strict master list over families. The algorithm starts from the initial
matching $Y=\omega$ and examines families according to the master list. For
each family $f_i$, it searches for a strictly preferred assignment $W$ that
can be obtained in one of two ways. First, $f_i$ may improve without
displacing any other family. Second, $f_i$ may exercise justified envy by
master list and resettlement; in this case, $f_i$ is assigned to $W$, while
all rejected families $F'=F(\mathrm{Re})$ are returned to their initial
assignments $\omega_{F'}$.

Whenever either type of improvement is implemented, the algorithm
immediately restarts the scan from the highest-ranked family $f_1$. If no
such improvement is available to $f_i$, the algorithm proceeds to
$f_{i+1}$. The procedure terminates after a complete scan of the master list
in which no family can improve.

\begin{algorithm}[tb]
\KwIn{An instance
$I=(C,F,D,Q,\mathcal{X},\omega,\succ_F,\succ_D)$ and a strict master list
$f_1\rhd f_2\rhd\cdots\rhd f_m$}
\KwOut{A matching $Y$}

Initialize $Y\gets\omega$ and $i\gets1$\;

\While{$i\leq m$}{
    Let $f_i$ be the family currently examined\;

    Find the most preferred assignment $W\succ_{f_i}Y_{f_i}$ such that
    either
    \[
    Y\setminus Y_{f_i}\cup W
    =
    Ch_D(Y\setminus Y_{f_i}\cup W),
    \]
    or $f_i$ has justified envy by master list and resettlement
    through $W$\;

    \eIf{such an assignment $W$ exists}{
        \eIf{
        $Y\setminus Y_{f_i}\cup W
        =
        Ch_D(Y\setminus Y_{f_i}\cup W)$
        }{
            $Y\gets Y\setminus Y_{f_i}\cup W$\;
        }{
            Let
            \[
            \mathrm{Re}
            =
            (Y\setminus Y_{f_i}\cup W)
            \setminus
            Ch_D(Y\setminus Y_{f_i}\cup W)
            \]
            and $F'=F(\mathrm{Re})$\;
            
            $Y\gets
            Y\setminus(Y_{f_i}\cup Y_{F'})
            \cup(W\cup\omega_{F'})$\;
        }
        $i\gets1$\;
    }{
        $i\gets i+1$\;
    }
}
\Return{$Y$}\;

\caption{Greedy Improvement (GI) Algorithm}
\label{alg:greedy_improve}
\end{algorithm}

The GI algorithm follows a repeated-improvement structure similar to the SS
algorithm, but differs from it in two important respects. First, SS allows a
family to improve only without changing the assignments of other families,
whereas GI additionally permits improvements through justified envy by
master list and resettlement. Such an improvement may displace lower-ranked
families, which are then returned to their initial assignments. Second,
after each improvement, GI immediately restarts the scan from the
highest-ranked family. This restart is necessary because resettlement may
change the assignments of multiple families and thereby create new
improvement opportunities for families examined earlier in the master list.

The following theorem establishes that the GI algorithm restores
compatibility between non-wastefulness and fairness by master list and
resettlement.

\begin{theorem}
\label{theo:existence:FMR&NW}
Given any strict master list $\rhd$, the Greedy Improvement (GI) algorithm
terminates and returns an individually rational matching that satisfies both
non-wastefulness and fairness by master list and resettlement.
\end{theorem}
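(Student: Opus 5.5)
The plan is to prove the three claims of Theorem~\ref{theo:existence:FMR&NW} separately. The first is termination, via a lexicographic potential over the master list. The second is that feasibility and individual rationality are invariants of the GI algorithm. The third is that the termination condition of Algorithm~\ref{alg:greedy_improve} directly gives both non-wastefulness and fairness by master list and resettlement.

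\emph{Termination.} For a matching $Y$, let $r_f(Y)$ be the position of $Y_f$ in $\succ_f$, where a larger value means more preferred. Define the potential vector
\[
\Phi(Y)=\bigl(r_{f_1}(Y),r_{f_2}(Y),\ldots,r_{f_m}(Y)\bigr),
\]
with coordinates ordered by $\rhd$. I will show that every update strictly increases $\Phi$ lexicographically; this is the main obstacle, because resettlement can make displaced families strictly worse off.

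Suppose $f_i$ is updated to $W$. In the no-rejection case, $Y\setminus Y_{f_i}\cup W=Ch_D(Y\setminus Y_{f_i}\cup W)$, so every other family keeps its assignment while $r_{f_i}$ strictly increases. In the resettlement case, the new matching is $Y\setminus(Y_{f_i}\cup Y_{F'})\cup(W\cup\omega_{F'})$. Families outside $F'\cup\{f_i\}$ are untouched, and $f_i$ strictly improves since $W\succ_{f_i}Y_{f_i}$. By Condition~(3) of Definition~\ref{def:fairness_resettlement}, every $f'\in F'$ satisfies $f_i\rhd f'$, so the only coordinates that may decrease lie after coordinate $i$. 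Coordinates $1,\ldots,i-1$ are unchanged, so $\Phi$ increases lexicographically. Since $\Phi$ ranges over a finite set, only finitely many updates occur. Between two updates, the index $i$ increases, so each scan has at most $m$ steps, and the algorithm terminates.

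\emph{Invariants.} By induction over updates, $Y$ stays a feasible, individually rational matching. Initially $Y=\omega$, which is feasible by assumption and trivially acceptable to every family. Daycare-side acceptability always holds, because every contract in $\mathcal{X}$ is acceptable. For the updated family, $W\succ_{f_i}Y_{f_i}$ and $Y_{f_i}$ is weakly preferred to $\omega_{f_i}$ by the induction hypothesis, so $W$ is acceptable. Resettled families receive exactly $\omega_{F'}$, which is acceptable by definition.

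For feasibility, in the no-rejection case the new matching equals a choice-function output, so no quota is exceeded. In the resettlement case, feasibility is exactly Condition~(4). It remains to check that the result is a matching, i.e.\ each child has at most one contract. This holds because the contracts of $f_i$ and of $F'$ are removed before $W$ and $\omega_{F'}$ are added, and these sets concern disjoint families, each child appearing at most once in $W$ and in $\omega$. Note also that $f_i\notin F'$: rejected contracts of $f_i$ itself would contradict $W\subseteq Ch_D(\cdot)$, and $Y_{f_i}$ is removed from the choice set.

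\emph{Properties at termination.} The algorithm stops only after a full scan in which no family $f$ has any $W\succ_f Y_f$ that either satisfies $Y\setminus Y_f\cup W=Ch_D(Y\setminus Y_f\cup W)$ or witnesses justified envy by master list and resettlement. The search for the \emph{most preferred} such $W$ fails only if no such $W$ exists at all. The absence of the first kind of $W$ is precisely the absence of a family claiming unused seats, since Condition~(2) of Definition~\ref{def:nonwastefulness} implies Condition~(1). Hence the final $Y$ is non-wasteful. The absence of the second kind is precisely fairness by master list and resettlement. Combined with the invariants, the output is individually rational, non-wasteful, and fair by master list and resettlement.
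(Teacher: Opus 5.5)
Your proof is correct and follows the paper's own argument. Both show that feasibility and individual rationality survive both update types. Both use a lexicographic potential over the master-list-ordered assignment profile, with Condition~(3) confining resettlement losses to families below $f_i$. Both observe that the final update-free scan rules out unused-seat claims and FMR envy. Your extra checks that $f_i\notin F'$ and that the updated set remains a matching are details the paper leaves implicit.
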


\begin{proof}
We first show that feasibility and individual rationality are preserved
throughout the algorithm.

Initially, $Y=\omega$, which is feasible and individually rational.
Suppose first that $f_i$ improves through an assignment $W$ without
displacing another family. By the update condition,
\[
Y\setminus Y_{f_i}\cup W
=
Ch_D(Y\setminus Y_{f_i}\cup W),
\]
so the updated matching is feasible. Family $f_i$ strictly improves, while
the assignments of all other families remain unchanged. Hence, individual
rationality is preserved.

Suppose instead that $f_i$ has justified envy by master list and
resettlement through $W$, and let $F'=F(\mathrm{Re})$. By Condition~(4) of
Definition~\ref{def:fairness_resettlement}, the updated matching
\[
Y\setminus(Y_{f_i}\cup Y_{F'})
\cup(W\cup\omega_{F'})
\]
is feasible. Family $f_i$ strictly improves, every family in $F'$ returns
to its initial assignment, and the assignments of all other families remain
unchanged. Since the matching before the update is individually rational,
and an initial assignment is individually rational for its family,
individual rationality is again preserved.

We next show termination. Consider the assignment profile
\[
(Y_{f_1},Y_{f_2},\ldots,Y_{f_m}),
\]
where $f_1\rhd f_2\rhd\cdots\rhd f_m$, and compare assignment profiles
lexicographically according to the families' preferences and the master
list. That is, among two distinct profiles, consider the highest-ranked
family whose assignments differ; the profile giving that family its more
preferred assignment is ranked higher.

If $f_i$ improves without displacement, the assignments of
$f_1,\ldots,f_{i-1}$ remain unchanged, while $f_i$ strictly improves.
Hence, the assignment profile strictly increases in the lexicographic
order.

If $f_i$ improves through justified envy by master list and resettlement,
Condition~(3) of Definition~\ref{def:fairness_resettlement} implies that
every family in $F'=F(\mathrm{Re})$ is ranked below $f_i$ in the master
list. Thus, the assignments of $f_1,\ldots,f_{i-1}$ remain unchanged,
$f_i$ strictly improves, and only families ranked below $f_i$ may be
resettled. Hence, the assignment profile again strictly increases in the
lexicographic order.

Since every family has finitely many possible assignments, there are only
finitely many assignment profiles. Therefore, the GI algorithm must
terminate.

It remains to establish non-wastefulness and fairness by master list and
resettlement. When the algorithm terminates, it has completed a scan of all
families without performing an update.

Suppose, toward a contradiction, that the final matching $Y$ is wasteful.
Then there exist a family $f_i$ and an assignment $W\succ_{f_i}Y_{f_i}$
such that
\[
Y\setminus Y_{f_i}\cup W
=
Ch_D(Y\setminus Y_{f_i}\cup W).
\]
When $f_i$ was examined during the final scan, this improvement would have
triggered an update, contradicting termination. Hence, $Y$ is
non-wasteful.

Similarly, suppose that $Y$ is not fair by master list and resettlement.
Then some family $f_i$ has justified envy by master list and resettlement
through an assignment $W\succ_{f_i}Y_{f_i}$. When $f_i$ was examined
during the final scan, this improvement would also have triggered an update,
again contradicting termination. Hence, $Y$ is fair by master list and
resettlement.

Therefore, the GI algorithm terminates at an individually rational matching
that satisfies both non-wastefulness and fairness by master list and
resettlement.
\end{proof}

Algorithm~\ref{alg:greedy_improve} is guaranteed to terminate, but the
number of updates need not be polynomially bounded. Although each
individual improvement can be identified and implemented in polynomial
time, resettlement may return lower-ranked families to their initial
assignments, allowing them to improve again later. The lexicographic
potential argument therefore guarantees finite termination but not a
polynomial running time. This is consistent with the PLS-hardness result
for computing a matching satisfying non-wastefulness and fairness by
master list and resettlement established below.

% \newpage
%-----------------------------------
% Subsection: Complexity of FMR
%-----------------------------------
\subsection{Computational Complexity of Problems Related to FMR }
\label{subsec:complexity:fmr}

Theorem~\ref{theo:existence:FMR&NW} establishes existence by a
finite improvement process.  We now show that finding such an outcome is
PLS-hard, even under a strong alignment between daycare priorities and the
family master list.  Throughout this subsection, the complete individually
rational part of every family preference list is explicitly given in the input: for every family $f$, all assignments $W\succeq_f\omega_f$ are listed in strictly decreasing order, and every listed assignment specifies the contract, or non-assignment, of every child in $C_f$.

For a feasible and individually rational matching $Y$, a family $f$, and an
alternative assignment $W$, write
\[
 Z(Y,f,W):=Y\setminus Y_f\cup W,
 \qquad
 \mathrm{Re}(Y,f,W):=Z(Y,f,W)\setminus
 \Ch_D\bigl(Z(Y,f,W)\bigr).
\]
Thus, if $W\succ_fY_f$ and $\mathrm{Re}(Y,f,W)=\varnothing$, then $f$ can
improve by utilizing vacant seats.  If $\mathrm{Re}(Y,f,W)\neq\varnothing$ and
the remaining conditions of Definition~\ref{def:fairness_resettlement} hold,
then $(f,W)$ is an FMR improvement.  Hence a matching is non-wasteful and fair
by master list and resettlement exactly when neither type of improvement exists.

%----------------------------------------------------------
% %  subsubsection -- Source problem
%----------------------------------------------------------
\subsubsection{Source Problem}

We reduce from \LLMS, which is PLS-complete~\citep{ScTa25a}. An instance consists of an ordered CNF formula
\[
\varphi=C_1\wedge\cdots\wedge C_m,
\qquad
1\le |C_i|\le 4,
\]
where the clauses are ordered by significance, with $C_1$ being the most
significant clause.

For a truth assignment $\alpha$, let
\[
\alpha(C_i)=
\begin{cases}
1,&\text{if $C_i$ is satisfied under $\alpha$},\\
0,&\text{otherwise},
\end{cases}
\]
and define its value by the clause-satisfaction vector
\[
\val(\alpha)
=
\bigl(\alpha(C_1),\ldots,\alpha(C_m)\bigr).
\]
These vectors are compared lexicographically. Thus, between two truth
assignments, the preferred one is determined by the first clause on which
their satisfaction statuses differ: satisfying an earlier clause is more
important than the satisfaction statuses of all later clauses.

The neighborhood of a truth assignment consists of the assignments
obtained by flipping a single variable. 
The objective of \LLMS  is to find
a \emph{locally optimal} truth assignment $\alpha$, i.e., one for which
no single-variable flip produces an assignment $\alpha'$ satisfying
\[
\val(\alpha')>_{\mathrm{lex}}\val(\alpha).
\]
In particular, the problem is not to find an assignment satisfying all
clauses; a locally optimal assignment may leave some clauses unsatisfied.

We may assume without loss of generality that no clause is tautological
and that no variable occurs more than once in a clause. For
$x\in\mathrm{var}(C_i)$, let $\ell_{i,x}$ denote the literal of $x$ in
$C_i$, and let
\[
\ell_{i,x}(b)\in\{0,1\}
\]
denote the truth value of this literal when $x=b$. Define
\[
k_i:=|C_i|,
\qquad
\Delta:=\max_x \occ(x),
\]
where $\occ(x)$ is the number of clauses containing either $x$ or
$\neg x$.

An occurrence $\tau=(i,x)$ with $x\in\mathrm{var}(C_i)$ is called
\emph{admissible} if no earlier singleton clause contains $x$. Let
\[
\mathcal A
=
\{(i,x):(i,x)\text{ is admissible}\},
\]
and, for each $\tau=(i,x)\in\mathcal A$,  let
\[
b_\tau\in\{0,1\}
\qquad\text{be defined by}\qquad
\ell_{i,x}(b_\tau)=0.
\]
Thus, $b_\tau$ is the value of $x$ for which its literal in $C_i$ is
false. Finally, define
\[
H_\tau
=
\{h<i:x\in\mathrm{var}(C_h)\},
\qquad
J_\tau
=
\{j>i:x\in\mathrm{var}(C_j)\}.
\]
Hence, $H_\tau$ contains the more significant clauses involving $x$,
whereas $J_\tau$ contains the less significant ones.

For an admissible occurrence $\tau=(i,y)$, we call $y$ the
\emph{pivot variable}. The transition associated with $\tau$ represents
a candidate local move that flips $y$, with $C_i$ being the first clause
whose truth value improves.

%----------------------------------------------------------
% %  old version -- Source problem
%----------------------------------------------------------
% \paragraph{Source problem.}
% We reduce from $\LLMS$, which is PLS-complete~\citep{SchederTantow}.  An
% instance consists of an ordered formula
% \[
%  \varphi=C_1\wedge\cdots\wedge C_m,
%  \qquad 1\le |C_i|\le4,
% \]
% where $C_1$ is the most significant clause.  The value of a truth assignment
% $\alpha$ is
% \[
%  \val(\alpha):=(\alpha(C_1),\ldots,\alpha(C_m)),
% \]
% ordered lexicographically, and a local move flips one variable.  We may assume
% that no clause is tautological and that no variable occurs twice in one clause.
% For $x\in\mathrm{var}(C_i)$, let $\ell_{i,x}$ denote the literal of $x$ in
% $C_i$, and let $\ell_{i,x}(b)\in\{0,1\}$ be its truth value when $x=b$.
% Put
% \[
%  k_i:=|C_i|,
%  \qquad
%  \Delta:=\max_x\occ(x),
% \]
% where $\occ(x)$ is the number of clauses containing $x$ or $\neg x$.

% An occurrence $\tau=(i,x)$ with $x\in\mathrm{var}(C_i)$ is called
% \emph{admissible} if no earlier singleton clause contains $x$.  Let
% \[
%  \mathcal A:=\{(i,x): (i,x)\text{ is admissible}\},
%  \qquad
%  b_\tau\in\{0,1\}:\ \ell_{i,x}(b_\tau)=0,
% \]
% and, for $\tau=(i,x)$, define
% \[
%  H_\tau:=\{h<i:x\in\mathrm{var}(C_h)\},
%  \qquad
%  J_\tau:=\{j>i:x\in\mathrm{var}(C_j)\}.
% \]

%----------------------------------------------------------
% %  end of old version -- Source problem
%----------------------------------------------------------

\subsubsection{Construction}

\begin{figure*}[t]
\centering

\begin{tikzpicture}

% ============================================================
% Styles
% ============================================================

\tikzstyle{statechild}=[
    circle,
    draw=green!60!black,
    fill=green!10,
    minimum size=7mm,
    inner sep=1pt,
    font=\scriptsize
]

\tikzstyle{blockchild}=[
    circle,
    draw=blue!70!black,
    fill=blue!8,
    minimum size=8mm,
    inner sep=1pt,
    font=\scriptsize
]

\tikzstyle{clausechild}=[
    circle,
    draw=red!65!black,
    fill=red!10,
    minimum size=7mm,
    inner sep=1pt,
    font=\scriptsize
]

\tikzstyle{testchild}=[
    circle,
    draw=orange!80!black,
    fill=orange!12,
    minimum size=7mm,
    inner sep=1pt,
    font=\scriptsize
]

\tikzstyle{resetchild}=[
    circle,
    draw=green!60!black,
    fill=green!10,
    minimum size=7mm,
    inner sep=1pt,
    font=\scriptsize
]

\tikzstyle{daycare}=[
    circle,
    draw=orange!80!black,
    fill=orange!15,
    minimum size=8mm,
    inner sep=0pt,
    font=\scriptsize
]

\tikzstyle{testdaycare}=[
    circle,
    draw=purple!70!black,
    fill=purple!10,
    minimum size=8mm,
    inner sep=0pt,
    font=\scriptsize
]

\tikzstyle{parking}=[
    circle,
    draw=gray!70!black,
    fill=gray!10,
    minimum size=8mm,
    inner sep=0pt,
    font=\scriptsize
]

% Initial assignment
\tikzstyle{initial}=[
    ->,
    thick,
    dashed,
    draw=gray!70!black
]

% Destination determined under A_x^b
\tikzstyle{possible}=[
    ->,
    thick,
    draw=black!70
]

% A_x^0 for state child
\tikzstyle{zeroarrow}=[
    ->,
    very thick,
    draw=blue!70!black
]

% A_x^1 for state child
\tikzstyle{onearrow}=[
    ->,
    very thick,
    draw=red!70!black
]

% Contracts in S_i and T_tau
\tikzstyle{assignment}=[
    ->,
    thick,
    draw=purple!75
]

% ============================================================
% PANEL (a): VARIABLE FAMILY V_x
% ============================================================

\node[
    draw=gray!55,
    rounded corners,
    minimum width=16.0cm,
    minimum height=10.1cm
] at (0,2.8) {};

\node[
    fill=green!10,
    draw=green!35,
    rounded corners,
    minimum width=16cm,
    minimum height=10mm,
    font=\bfseries
] at (0,7.35)
{(a) Variable family \(V_x\): children and truth-value assignments};

% ============================================================
% Column headings
% ============================================================

% Role column stays unchanged
\node[font=\small\bfseries] at (-6.10,6.60)
{Role};

% Child and daycare columns moved slightly right
\node[font=\small\bfseries] at (-2.55,6.60)
{Child};

\node[font=\small\bfseries] at (0.35,6.60)
{Possible daycares};

\node[font=\small\bfseries] at (5.25,6.60)
{Assignment rule};

% ============================================================
% STATE CHILD
% ============================================================

% Role stays in its original position
\node[
    anchor=east,
    font=\scriptsize,
    align=center,
    text width=3.9cm
] at (-3.85,5.35)
{
\textbf{state child}\\
records the truth-value state of \(x\)
};

\node[statechild] (sx) at (-2.55,5.35)
{\(s_x\)};

% ------------------------------------------------------------
% State daycares
% ------------------------------------------------------------

\node[daycare] (ox) at (0.35,6.05)
{\(o_x\)};

\node[daycare] (zx) at (0.35,5.3)
{\(z_x\)};

\node[daycare] (rx) at (0.35,4.55)
{\(r_x\)};

\node[anchor=west,font=\scriptsize] at (0.90,6.05)
{\(Q=1\)};

\node[anchor=west,font=\scriptsize] at (0.90,5.3)
{\(Q=1\)};

\node[anchor=west,font=\scriptsize] at (0.90,4.55)
{\(Q=1\)};

\draw[initial] (sx) to[bend left=8] (ox);
\draw[zeroarrow] (sx) -- (zx);
\draw[onearrow] (sx) to[bend right=8] (rx);

\node[
    anchor=west,
    font=\scriptsize,
    align=left
] at (3.70,5.35)
{
\(\omega_{V_x}: s_x\mapsto o_x\)\\
\(A_x^0: s_x\mapsto z_x\)\\
\(A_x^1: s_x\mapsto r_x\)
};

% ============================================================
% CLAUSE-BLOCKER CHILD
% ============================================================

\node[
    anchor=east,
    font=\scriptsize,
    align=center,
    text width=3.9cm
] at (-3.85,3.25)
{
\textbf{clause-blocker child}\\
one for every clause \(C_i\) containing \(x\);\\
encodes whether \(\ell_{i,x}\) is false
};

\node[blockchild] (ax) at (-2.55,3.25)
{\(a_{x,i}\)};

\node[daycare] (di) at (0.35,3.60)
{\(d_i\)};

\node[parking] (bard) at (0.35,2.90)
{\(\bar d_{x,i}\)};

\node[anchor=west,font=\scriptsize] at (0.90,3.60)
{\(Q=k_i\)};

\node[anchor=west,font=\scriptsize] at (0.90,2.90)
{\(Q=1\)};

\draw[initial] (ax) to[bend left=5] (di);
\draw[possible] (ax) to[bend left=2] (di);
\draw[possible] (ax) to[bend right=6] (bard);

\node[
    anchor=west,
    font=\scriptsize,
    align=left
] at (3.70,3.25)
{
under \(A_x^b\):\\
\(a_{x,i}\mapsto d_i\) if \(\ell_{i,x}(b)=0\);\\
\(a_{x,i}\mapsto\bar d_{x,i}\) if \(\ell_{i,x}(b)=1\)
};

% ============================================================
% PIVOT-BLOCKER CHILD
% ============================================================

\node[
    anchor=east,
    font=\scriptsize,
    align=center,
    text width=3.9cm
] at (-3.85,1.60)
{
\textbf{pivot-blocker child}\\
for \(\tau=(i,y)\), where \(y\) is the pivot\\
and \(x\neq y\); tests a non-pivot literal
};

\node[blockchild] (px) at (-2.55,1.60)
{\(p_{x,\tau}\)};

\node[testdaycare] (pt) at (0.35,1.85)
{\(p_\tau\)};

\node[parking] (barp) at (0.35,1.15)
{\(\bar p_{x,\tau}\)};

\node[anchor=west,font=\scriptsize] at (0.90,1.85)
{\(Q=k_i-1\)};

\node[anchor=west,font=\scriptsize] at (0.90,1.15)
{\(Q=1\)};

\draw[initial] (px) to[bend left=5] (pt);
\draw[possible] (px) to[bend left=2] (pt);
\draw[possible] (px) to[bend right=6] (barp);

\node[
    anchor=west,
    font=\scriptsize,
    align=left
] at (3.70,1.50)
{
under \(A_x^b\):\\
\(p_{x,\tau}\mapsto p_\tau\) if \(\ell_{i,x}(b)=1\);\\
\(p_{x,\tau}\mapsto\bar p_{x,\tau}\) if \(\ell_{i,x}(b)=0\)
};

% ============================================================
% PRESERVATION-BLOCKER CHILD
% ============================================================

\node[
    anchor=east,
    font=\scriptsize,
    align=center,
    text width=3.9cm
] at (-3.85,-0.3)
{
\textbf{preservation-blocker child}\\
for an earlier clause \(C_h\), \(h\in H_\tau\);\\
checks whether \(C_h\) remains satisfied
};

\node[blockchild] (ex) at (-2.55,-0.3)
{\(e_{x,\tau,h}\)};

\node[testdaycare] (et) at (0.35,0.05)
{\(e_{\tau,h}\)};

\node[parking] (bare) at (0.35,-0.65)
{\(\bar e_{x,\tau,h}\)};

\node[anchor=west,font=\scriptsize] at (0.90,0.05)
{\(Q=k_h-1\)};

\node[anchor=west,font=\scriptsize] at (0.90,-0.65)
{\(Q=1\)};

\draw[initial] (ex) to[bend left=5] (et);
\draw[possible] (ex) to[bend left=2] (et);
\draw[possible] (ex) to[bend right=6] (bare);

\node[
    anchor=west,
    font=\scriptsize,
    align=left
] at (3.70,-0.3)
{
under \(A_x^b\):\\
\(e_{x,\tau,h}\mapsto e_{\tau,h}\) if \(\ell_{h,x}(b)=0\);\\
\(e_{x,\tau,h}\mapsto\bar e_{x,\tau,h}\) if \(\ell_{h,x}(b)=1\)
};

% ============================================================
% Legend for Panel (a)
% Two rows
% ============================================================

% ------------------------------------------------------------
% Row 1: initial / literal-dependent destination
% ------------------------------------------------------------

\draw[initial] (-7.30,-1.5) -- (-6.45,-1.5);

\node[
    anchor=west,
    font=\scriptsize
] at (-6.30,-1.5)
{initial assignment};

\draw[possible] (-2.70,-1.5) -- (-1.85,-1.5);

\node[
    anchor=west,
    font=\scriptsize
] at (-1.70,-1.5)
{destination determined by the literal under \(A_x^b\)};

% ------------------------------------------------------------
% Row 2: A_x^0 / A_x^1
% ------------------------------------------------------------

\draw[zeroarrow] (-7.30,-2.0) -- (-6.45,-2.0);

\node[
    anchor=west,
    font=\scriptsize,
    text=blue!70!black
] at (-6.30,-2.0)
{\(A_x^0\) for the state child};

\draw[onearrow] (-2.70,-2.0) -- (-1.85,-2.0);

\node[
    anchor=west,
    font=\scriptsize,
    text=red!70!black
] at (-1.70,-2.0)
{\(A_x^1\) for the state child};

% ============================================================
% PANEL (b): CLAUSE FAMILY F_i
% Entire panel shifted upward by 0.3 cm
% ============================================================

\begin{scope}[yshift=3mm]

\node[
    draw=gray!55,
    rounded corners,
    minimum width=16.0cm,
    minimum height=7.0cm
] at (0,-6.45) {};

\node[
    fill=purple!10,
    draw=purple!35,
    rounded corners,
    minimum width=15.9cm,
    minimum height=9mm,
    font=\bfseries
] at (0,-3.45)
{(b) Clause family \(F_i\): satisfied and transition assignments};

% ============================================================
% LEFT: SATISFIED ASSIGNMENT S_i
% ============================================================

\node[
    draw=red!45,
    fill=red!3,
    rounded corners,
    minimum width=5.6cm,
    minimum height=5.2cm
] at (-5.10,-6.55) {};

\node[
    fill=red!10,
    draw=red!35,
    rounded corners,
    minimum width=5.6cm,
    minimum height=7mm,
    font=\small\bfseries
] at (-5.10,-4.40)
{Satisfied assignment \(S_i\)};

\node[
    font=\scriptsize
] at (-5.10,-5.10)
{two satisfied-assignment children};

% ------------------------------------------------------------
% u_i^d
% ------------------------------------------------------------

\node[clausechild] (uid) at (-7.10,-5.90)
{\(u_i^d\)};

\node[daycare] (diS) at (-5.30,-5.90)
{\(d_i\)};

\draw[assignment] (uid) -- (diS);

\node[
    anchor=west,
    font=\scriptsize
] at (-4.65,-5.90)
{clause daycare};

\node[
    anchor=west,
    font=\scriptsize
] at (-4.65,-6.22)
{\(Q=k_i\)};

% ------------------------------------------------------------
% u_i^m
% ------------------------------------------------------------

\node[clausechild] (uim) at (-7.10,-7.25)
{\(u_i^m\)};

\node[daycare] (miS) at (-5.30,-7.25)
{\(m_i\)};

\draw[assignment] (uim) -- (miS);

\node[
    anchor=west,
    font=\scriptsize
] at (-4.65,-7.25)
{marker daycare};

\node[
    anchor=west,
    font=\scriptsize
] at (-4.65,-7.57)
{\(Q=1\)};

\node[
    draw=red!35,
    fill=red!5,
    rounded corners,
    minimum width=4.95cm,
    minimum height=8mm,
    font=\small
] at (-5.10,-8.60)
{\(S_i=\{(u_i^d,d_i),(u_i^m,m_i)\}\)};

\node[
    font=\scriptsize,
    align=center
] at (-5.10,-9.55)
{\(S_i\) is feasible exactly when \(C_i\) is satisfied.};

% ============================================================
% RIGHT: TRANSITION ASSIGNMENT T_tau
% Background unchanged relative to the panel
% ============================================================

\node[
    draw=orange!60!black,
    fill=orange!3,
    rounded corners,
    minimum width=9.8cm,
    minimum height=5.2cm
] at (3.05,-6.55) {};

\node[
    fill=orange!15,
    draw=orange!45!black,
    rounded corners,
    minimum width=9.8cm,
    minimum height=7mm,
    font=\small\bfseries
] at (3.05,-4.40)
{Transition assignment \(T_\tau\) with \(\tau=(i,x)\)};

\node[
    font=\scriptsize
] at (3.05,-5.00)
{\(x\) is the pivot variable to be flipped};

% ============================================================
% Transition rows:
% role + number | child | daycare | capacity
% ============================================================

% ------------------------------------------------------------
% q row
% ------------------------------------------------------------

\node[
    anchor=east,
    font=\scriptsize,
    align=right,
    text width=3.9cm
] at (2.00,-5.55)
{1 child; global transition lock};

\node[clausechild] (tq) at (2.80,-5.55)
{\(t_\tau^q\)};

\node[testdaycare] (qT) at (4.60,-5.55)
{\(q\)};

\draw[assignment] (tq) -- (qT);

\node[
    anchor=west,
    font=\scriptsize
] at (5.25,-5.55)
{\(Q=1\)};

% ------------------------------------------------------------
% m_j row
% ------------------------------------------------------------

\node[
    anchor=east,
    font=\scriptsize,
    align=right,
    text width=3.9cm
] at (2.00,-6.35)
{\(|J_\tau|\) children; reset less significant clauses};

\node[clausechild] (tm) at (2.80,-6.35)
{\(t_{\tau,j}^m\)};

\node[daycare] (mjT) at (4.60,-6.35)
{\(m_j\)};

\draw[assignment] (tm) -- (mjT);

\node[
    anchor=west,
    font=\scriptsize
] at (5.25,-6.35)
{\(Q=1\)};

% ------------------------------------------------------------
% p_tau row
% ------------------------------------------------------------

\node[
    anchor=east,
    font=\scriptsize,
    align=right,
    text width=3.9cm
] at (2.00,-7.15)
{$k_i$ $-$ $1$ children; pivot test};

\node[testchild] (tp) at (2.80,-7.15)
{\(t_{\tau,a}^p\)};

\node[testdaycare] (ptT) at (4.60,-7.15)
{\(p_\tau\)};

\draw[assignment] (tp) -- (ptT);

\node[
    anchor=west,
    font=\scriptsize
] at (5.25,-7.15)
{\(Q=k_i-1\)};

% ------------------------------------------------------------
% e_tau,h row
% ------------------------------------------------------------

\node[
    anchor=east,
    font=\scriptsize,
    align=right,
    text width=3.9cm
] at (2.00,-7.95)
{\(|H_\tau|\) children; preservation tests};

\node[testchild] (te) at (2.80,-7.95)
{\(t_{\tau,h}^e\)};

\node[testdaycare] (etT) at (4.60,-7.95)
{\(e_{\tau,h}\)};

\draw[assignment] (te) -- (etT);

\node[
    anchor=west,
    font=\scriptsize
] at (5.25,-7.95)
{\(Q=k_h-1\)};

% ------------------------------------------------------------
% r_x row
% ------------------------------------------------------------

\node[
    anchor=east,
    font=\scriptsize,
    align=right,
    text width=3.9cm
] at (2.00,-8.75)
{1 child if $b_\tau$ $=$ $1$; reset \(V_x\)};

\node[resetchild] (tr) at (2.80,-8.75)
{\(t_\tau^r\)};

\node[daycare] (rxT) at (4.60,-8.75)
{\(r_x\)};

\draw[assignment] (tr) -- (rxT);

\node[
    anchor=west,
    font=\scriptsize
] at (5.25,-8.75)
{\(Q=1\)};

% ============================================================
% Transition explanation
% ============================================================

\node[
    font=\scriptsize,
    align=center
] at (3.05,-9.55)
{
\(T_\tau\) represents a candidate improving flip of \(x\), with
\(C_i\) being the first clause whose truth value improves.
};

\end{scope}

% ============================================================
% Caption
% ============================================================

\end{tikzpicture}

\caption{
Illustration of the construction.
Panel (a) shows the children of a variable family \(V_x\) and the
daycares involved in its initial and truth-value assignments.
Panel (b) shows the satisfied assignment \(S_i\) and a transition
assignment \(T_\tau\) of a clause family \(F_i\).
}
\label{fig:fmr-construction}

\end{figure*}

\paragraph{Overview.}

For every variable $x$, we create a \emph{variable family} $V_x$, whose assignment encodes the truth value of $x$. For every clause $C_i$, we
create a \emph{clause family} $F_i$, whose assignment represents the
status of $C_i$ and is also used to implement candidate local
improvements.
The strict master list over families is
\[
F_1\rhd F_2\rhd\cdots\rhd F_m
\rhd (V_x:x\text{ is a variable}),
\]
where the variable families are placed in an arbitrary strict order.

Figure~\ref{fig:fmr-construction} summarizes the correspondence between logical objects and the families, children, and daycares used in the reduction.

%----------------------------------------------------------
% %  paragraph : Daycares
%----------------------------------------------------------
\paragraph{Daycares.}

We create the following daycares before defining any family assignment.

\begin{enumerate}[label=(D\arabic*)]

\item Create one \emph{global transition daycare} $q$ with $Q_q=1$,
      which ensures that at most one transition assignment can be active
      at a time.

\item For every clause $C_i$, create a \emph{clause daycare} $d_i$ with
      $Q_{d_i}=k_i$ and a \emph{marker daycare} $m_i$ with $Q_{m_i}=1$.
      The former indicates whether $C_i$ is satisfied, while the latter
      controls the assignment of the corresponding clause family $F_i$.

\item For every admissible $\tau=(i,x)$ with $k_i\ge2$, create a
      \emph{pivot-test daycare} $p_\tau$ with
      $Q_{p_\tau}=k_i-1$. It checks whether all literals of $C_i$ other
      than the pivot literal $\ell_{i,x}$ are false.

\item For every admissible $\tau=(i,x)$ and every $h\in H_\tau$, create a
      \emph{preservation-test daycare} $e_{\tau,h}$ with
      $Q_{e_{\tau,h}}=k_h-1$. Admissibility guarantees $k_h\ge2$.
      It checks whether the earlier clause $C_h$ can remain satisfied
      when $x$ is flipped.

\item For every variable $x$, create three private state daycares
      $o_x,z_x,r_x$, each of capacity one. They encode and control the
      state of the variable family $V_x$.

\item Every blocker child introduced below has a private capacity-one
      parking daycare. These are denoted by $\bar d_{x,i}$,
      $\bar p_{x,\tau}$, and $\bar e_{x,\tau,h}$, according to the
      corresponding shared daycare from which the child may withdraw.

\end{enumerate}

We refer to $d_i$, $p_\tau$, and $e_{\tau,h}$ as \emph{shared
daycares}, since they may simultaneously contain blocker children from
different variable families and contracts belonging to clause families.
In contrast, each parking daycare is private to a single blocker child.

The priority list of every daycare follows the family master list $\rhd$, except that initially enrolled children are moved above all non-incumbent children at that daycare. Therefore, all daycare priorities are master-list consistent up to initial enrollments.

%----------------------------------------------------------
% %  paragraph : Variable families
%----------------------------------------------------------
\paragraph{Variable families.}

For every variable $x$, the family $V_x$ contains one \emph{state child}
$s_x$ and several types of \emph{blocker children}. 
\zhnew{
The state child records the truth-value state of $x$, while the blocker children translate the truth values of literals involving $x$ into occupancies of the corresponding shared daycares. All blocker children are initially enrolled at their corresponding shared daycares.
}
Formally, the blocker children of $V_x$ are defined as follows.

\begin{enumerate}[label=(V\arabic*)]

\item For every clause $C_i$ containing $x$, create a
      \emph{clause-blocker child} $a_{x,i}$, initially enrolled at $d_i$,
      with private parking daycare $\bar d_{x,i}$.
      \zhnew{
      Its role is to encode the truth value of the literal $\ell_{i,x}$:
      it occupies $d_i$ when $\ell_{i,x}$ is false and moves to its
      private parking daycare when $\ell_{i,x}$ is true. Consequently,
      the number of clause blockers occupying $d_i$ equals the number of
      false literals in $C_i$.
        }
        
\item For every admissible occurrence $\tau=(i,y)$ such that
      $x\neq y$ and $x$ appears in $C_i$, create a
      \emph{pivot-blocker child} $p_{x,\tau}$, initially enrolled at
      $p_\tau$, with private parking daycare $\bar p_{x,\tau}$.
      \zhnew{      
      Here $y$ is the pivot variable to be flipped, while $x$ corresponds
      to one of the other literals of $C_i$.
      The child $p_{x,\tau}$ occupies $p_\tau$ exactly when
      $\ell_{i,x}$ is true. Thus, the absence of all pivot blockers from
      $p_\tau$ indicates that all literals of $C_i$ other than the pivot
      literal $\ell_{i,y}$ are false.
        }
        
\item For every admissible occurrence $\tau=(i,y)$, every earlier clause
      $C_h$ with $h\in H_\tau$, and every variable $x\neq y$ appearing
      in $C_h$, create a \emph{preservation-blocker child}
      $e_{x,\tau,h}$, initially enrolled at $e_{\tau,h}$, with private
      parking daycare $\bar e_{x,\tau,h}$.
      \zhnew{          
      Its role is to determine whether $C_h$ has a true literal other
      than its $y$-literal. The child $e_{x,\tau,h}$ occupies
      $e_{\tau,h}$ exactly when $\ell_{h,x}$ is false. Hence, if at
      least one preservation blocker is absent from $e_{\tau,h}$, then
      $C_h$ has another true literal and can remain satisfied when $y$
      is flipped.
        }
\end{enumerate}

The state child $s_x$ is initially enrolled at $o_x$. Under the two
truth-value assignments defined below, it is assigned to $z_x$ when
$x=0$ and to $r_x$ when $x=1$. Thus, the initial assignment
$\omega_{V_x}$ keeps $s_x$ at $o_x$ and every blocker child at its
initial shared daycare.

For $b\in\{0,1\}$, the assignment $A_x^b$ represents the truth value
$x=b$ and simultaneously assigns all children of $V_x$ as follows:
\begin{align*}
s_x&\mapsto
\begin{cases}
z_x,&b=0,\\
r_x,&b=1;
\end{cases}\\
a_{x,i}&\mapsto
\begin{cases}
d_i,&\ell_{i,x}(b)=0,\\
\bar d_{x,i},&\ell_{i,x}(b)=1;
\end{cases}\\
p_{x,\tau}&\mapsto
\begin{cases}
p_\tau,&\ell_{i,x}(b)=1,\quad \tau=(i,y),\\
\bar p_{x,\tau},&\ell_{i,x}(b)=0;
\end{cases}\\
e_{x,\tau,h}&\mapsto
\begin{cases}
e_{\tau,h},&\ell_{h,x}(b)=0,\\
\bar e_{x,\tau,h},&\ell_{h,x}(b)=1.
\end{cases}
\end{align*}

Thus, a single joint assignment $A_x^b$ simultaneously records the truth
value of $x$ and updates every gadget involving $x$: clause blockers
encode false literals, pivot blockers encode true non-pivot literals,
and preservation blockers encode false non-pivot literals of earlier
clauses. The preference ordering of $V_x$ is
\[
A_x^1\succ_{V_x}A_x^0\succ_{V_x}\omega_{V_x}.
\]

%----------------------------------------------------------
% %  paragraph : Clause families
%----------------------------------------------------------

\paragraph{Clause families.}

For every clause $C_i$, the family $F_i$ has no initial assignment, i.e., $\omega_{F_i}=\varnothing.$
It contains two types of children: two \emph{satisfied-assignment children}, which are used when $C_i$ is represented as satisfied, and a separate collection of \emph{transition children} for each admissible
occurrence $\tau=(i,x)$, which are used to represent a candidate local move obtained by flipping the pivot variable $x$.

\medskip
\noindent
\emph{Satisfied-assignment children.}
The family $F_i$ has exactly two satisfied-assignment children,
$u_i^d$ and $u_i^m$. Their joint assignment is
\[
S_i:=\{(u_i^d,d_i),(u_i^m,m_i)\}.
\]
The child $u_i^d$ occupies one seat at the clause daycare $d_i$, while
$u_i^m$ occupies the marker daycare $m_i$.

\zhnew{
Under the truth-value assignments of the variable families, $d_i$
contains exactly one clause blocker for every false literal of $C_i$.
Since $Q_{d_i}=k_i$, the contract $(u_i^d,d_i)$ can be accommodated
exactly when at most $k_i-1$ literals of $C_i$ are false. Hence,
$S_i$ is feasible exactly when $C_i$ is satisfied.
}

\medskip
\noindent
\emph{Transition children.}
Fix an admissible occurrence $\tau=(i,x)$, where $x$ is the pivot
variable to be flipped. For this $\tau$, create a distinct collection of
transition children consisting of:

\begin{enumerate}[label=(T\arabic*)]

\item one \emph{global-transition child} $t_\tau^q$, assigned to $q$.
      Its role is to ensure that at most one transition assignment can be
      active at a time;

\item one \emph{marker child} $t_{\tau,j}^m$ for every
      $j\in J_\tau$, assigned to $m_j$.
      Thus, there are $|J_\tau|$ such children. Their role is to prevent
      the less significant clause families $F_j$ affected by the flip of
      $x$ from simultaneously using their satisfied assignments $S_j$;

\item $k_i-1$ \emph{pivot-test children}
      $t_{\tau,a}^p$, $a\in[k_i-1]$, all assigned to $p_\tau$.
      Their role is to test whether all literals of $C_i$ other than the
      pivot literal $\ell_{i,x}$ are false. This part is omitted when
      $k_i=1$;

\item one \emph{preservation-test child} $t_{\tau,h}^e$ for every
      $h\in H_\tau$, assigned to $e_{\tau,h}$.
      Thus, there are $|H_\tau|$ such children. Their role is to test
      whether every more significant clause $C_h$ containing $x$ has
      another true literal and can therefore remain satisfied after
      $x$ is flipped;

\item when $b_\tau=1$, one additional \emph{reset child} $t_\tau^r$,
      assigned to $r_x$. Its role is to interact with the state child of
      $V_x$ so that $V_x$ can be reset through resettlement. No reset
      child is created when $b_\tau=0$.

\end{enumerate}

Thus, for a fixed admissible $\tau=(i,x)$, the transition assignment
contains
\[
1+|J_\tau|+(k_i-1)+|H_\tau|
+\mathbf{1}[b_\tau=1]
\]
children, where the term $k_i-1$ is zero when $k_i=1$. These children
form the joint assignment
\begin{align}
T_\tau:=
&\ \{(t_\tau^q,q)\}
\cup\{(t_{\tau,j}^m,m_j):j\in J_\tau\}
\label{eq:fmr:transition-assignment}\\
&\ \cup\{(t_{\tau,a}^p,p_\tau):a\in[k_i-1]\}
\cup\{(t_{\tau,h}^e,e_{\tau,h}):h\in H_\tau\}
\nonumber\\
&\ \cup
\begin{cases}
\{(t_\tau^r,r_x)\},&b_\tau=1,\\
\varnothing,&b_\tau=0.
\end{cases}
\nonumber
\end{align}
where the pivot-test part is omitted when $k_i=1$.

The assignment $T_\tau$ represents a candidate local improvement in
which $x$ is flipped and $C_i$ is the first clause whose truth value
improves. Transition children belonging to transition assignments other
than the assignment currently chosen by $F_i$ are unmatched.

The preference ordering of $F_i$ is
\[
S_i\succ_{F_i}
\bigl(T_\tau:\tau=(i,x)\in\mathcal A\bigr)
\succ_{F_i}\varnothing,
\]
where the transition assignments are placed in an arbitrary strict
order.

\subsubsection{Correctness}

% The initial matching is feasible.  Indeed, every $d_i$ contains exactly its
% $k_i$ of its clause-blocker incumbents, every $p_\tau$ contains exactly its
% $k_i-1$ pivot-blocker incumbents, every $e_{\tau,h}$ contains exactly its
% $k_h-1$ preservation-blocker incumbents, and each state child $s_x$ is alone at
% $o_x$.  All other daycares are empty.

% Call a feasible and individually rational matching \emph{terminal} if it
% admits neither a non-wastefulness improvement nor an FMR improvement.

The initial matching is feasible.  For every clause $C_i$, the clause
daycare $d_i$ initially contains exactly one clause-blocker incumbent for
each literal of $C_i$, and hence exactly $k_i$ incumbents, matching its
capacity $Q_{d_i}=k_i$.  For every admissible $\tau=(i,x)$, the
pivot-test daycare $p_\tau$ contains exactly one pivot-blocker incumbent
for each non-pivot literal of $C_i$, and hence exactly $k_i-1$
incumbents, matching $Q_{p_\tau}=k_i-1$.  Similarly, for every
$h\in H_\tau$, the preservation-test daycare $e_{\tau,h}$ contains
exactly one preservation-blocker incumbent for each literal of $C_h$
other than the pivot literal, and hence exactly $k_h-1$ incumbents,
matching $Q_{e_{\tau,h}}=k_h-1$.  Finally, each state child $s_x$ is
initially enrolled alone at $o_x$, whose capacity is one.  All other
daycares are initially empty.  Therefore, the initial matching is
feasible.

Call a feasible and individually rational matching \emph{terminal} if it
admits neither a non-wastefulness improvement nor an FMR improvement.
We first characterize the structure of terminal matchings.  The next
lemma shows that every variable family must encode a Boolean value, while
a transition assignment cannot persist in a terminal matching.

\begin{lemma}
\label{lem:fmr:variables-and-transitions-consistent}
In every terminal matching, each variable family $V_x$ is assigned either
$A_x^0$ or $A_x^1$, and no clause family is assigned a transition assignment.
\end{lemma}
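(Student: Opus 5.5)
The plan is to prove the two claims in order, using only the two kinds of improvement that must be absent in a terminal matching $Y$: a wasteful improvement (no rejection), and an FMR improvement as in Definition~\ref{def:fairness_resettlement}.

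\emph{Variable families.} By individual rationality and the preference list $A_x^1\succ_{V_x}A_x^0\succ_{V_x}\omega_{V_x}$, the assignment $Y_{V_x}$ is one of $A_x^1$, $A_x^0$, $\omega_{V_x}$. Suppose $Y_{V_x}=\omega_{V_x}$; I will show that $(V_x,A_x^0)$ is a wasteful improvement.
\begin{itemize}
\item Under $A_x^0$, each child of $V_x$ either keeps exactly its current contract at a shared daycare, or moves to a private daycare. The private daycares are $z_x$ for $s_x$ and the parking daycares for blockers.
\item The private daycares are acceptable only to that one child, so they are empty in $Y$.
\item Hence in $Z(Y,V_x,A_x^0)$ every daycare receives at most as many contracts as in the feasible matching $Y$, or receives a single contract at an empty capacity-one daycare.
\end{itemize}
So nothing is rejected, and $Y$ would be wasteful. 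Thus $Y_{V_x}\in\{A_x^0,A_x^1\}$. From here on, the blocker occupancies encode a truth assignment $\alpha$: $d_j$ holds exactly the clause blockers of the false literals of $C_j$, and similarly for the pivot-test daycares $p_{\tau'}$ and preservation-test daycares $e_{\tau',h}$.

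\emph{Transitions.} Suppose $F_i$ holds $T_\tau$ with $\tau=(i,x)$. Since $Q_q=1$ and $t_\tau^q$ occupies $q$, no other clause family holds a transition. Therefore the only non-blocker contracts at shared daycares and marker daycares are those of $T_\tau$ and of satisfied assignments $S_j$. In particular $m_i$ is empty. For $j\in J_\tau$, $m_j$ is held by $t^m_{\tau,j}$, so $F_j$ cannot hold $S_j$.

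\emph{Case 1: $C_i$ is satisfied under $\alpha$.} Then $d_i$ contains at most $k_i-1$ blockers and no $u_i^d$. Switching $F_i$ from $T_\tau$ to $S_i$ withdraws all contracts of $T_\tau$ and adds one contract at $d_i$ and one at the empty $m_i$. This causes no rejection, and $S_i\succ_{F_i}T_\tau$, so it is a wasteful improvement.

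\emph{Case 2: $C_i$ is unsatisfied.} Then $\ell_{i,x}$ is false, so $x=b_\tau$.
\begin{itemize}
\item If $b_\tau=1$, the reset child $t_\tau^r$ occupies $r_x$, while $A_x^1$ places $s_x$ at $r_x$. This contradicts $Q_{r_x}=1$.
\item Hence $b_\tau=0$, so $V_x$ holds $A_x^0$ and $r_x$ is free. I claim that $(V_x,A_x^1)$ is a wasteful improvement.
\end{itemize}

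\emph{Checking the flip causes no rejection.} This is the main obstacle: every shared daycare that receives an additional blocker of $V_x$ must be shown to have room. Only blockers that enter a daycare matter.
\begin{enumerate}
\item At $d_i$, the blocker $a_{x,i}$ leaves, so there is nothing to check.
\item At $d_j$ with $j\in J_\tau$, the daycare holds only blockers, at most $k_j$ of them, because $S_j$ is excluded by $t^m_{\tau,j}$.
\item At $d_h$ with $h\in H_\tau$: the test child $t^e_{\tau,h}$ fits in $e_{\tau,h}$ (capacity $k_h-1$). So at most $k_h-2$ preservation blockers are present, which means $C_h$ has a true literal other than its $x$-literal. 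Hence after the flip $d_h$ still has at most $k_h-1$ false-literal blockers, leaving room for $u_h^d$ if $S_h$ is held.
\item At $p_\tau$ and $e_{\tau,h}$: there are no blockers of the pivot $x$, so occupancy is unchanged.
\item At $p_{\tau'}$ and $e_{\tau',h}$ for $\tau'\neq\tau$: no transition children are present, and blocker counts never exceed the capacity.
\end{enumerate}
In every case the new contracts are accepted without rejecting anything, and $A_x^1\succ_{V_x}A_x^0$. This contradicts terminality, so no clause family holds a transition assignment.
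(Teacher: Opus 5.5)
Your proof is correct and follows essentially the same route as the paper. Both arguments use the wasteful move $\omega_{V_x}\to A_x^0$, the lock at $q$, the wasteful switch $T_\tau\to S_i$ when $d_i$ has room, the capacity clash at $r_x$ when $b_\tau=1$, and the wasteful flip $A_x^0\to A_x^1$ when $b_\tau=0$, checked daycare by daycare. The only difference is that you split cases on whether $C_i$ is satisfied rather than on $b_\tau$, so you never need the pivot-test fact that all non-pivot literals of $C_i$ are false.
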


\begin{proof}
We first show that every variable family in a terminal matching must
encode a Boolean value. Suppose that $V_x$ is assigned
$\omega_{V_x}$. Moving from $\omega_{V_x}$ to $A_x^0$ moves the state
child $s_x$ from $o_x$ to the private state daycare $z_x$ and moves
some blocker children from their initially enrolled shared daycares to
their private parking daycares. No new contract is added to any shared
daycare. Hence
\[
\mathrm{Re}(Y,V_x,A_x^0)=\varnothing.
\]
Since $A_x^0\succ_{V_x}\omega_{V_x}$, this is a non-wastefulness
improvement, contradicting terminality. Therefore,
\[
Y_{V_x}\in\{A_x^0,A_x^1\}
\]
for every variable $x$.

We next show that no clause family can use a transition assignment in a
terminal matching. Suppose, for contradiction, that
\[
Y_{F_i}=T_\tau
\qquad\text{for some }\tau=(i,x).
\]

We first record several consequences of feasibility. Since $T_\tau$
contains $k_i-1$ contracts at $p_\tau$ and
\[
Q_{p_\tau}=k_i-1,
\]
no pivot-blocker child can remain at $p_\tau$. A pivot blocker is present
at $p_\tau$ exactly when the corresponding non-pivot literal of $C_i$
is true. Hence every literal of $C_i$ other than the pivot literal
$\ell_{i,x}$ is false.

Similarly, for every $h\in H_\tau$, the assignment $T_\tau$ contains
one contract at $e_{\tau,h}$, whose capacity is
\[
Q_{e_{\tau,h}}=k_h-1.
\]
Thus, at most $k_h-2$ preservation blockers can remain there. Since
there is one preservation blocker for every literal of $C_h$ other than
its $x$-literal, and such a blocker is present exactly when that literal
is false, at least one other literal of $C_h$ must be true.

Finally, the contract of $T_\tau$ at the unit-capacity daycare $q$
precludes every other clause family from using a transition assignment.
Moreover, for every $j\in J_\tau$, $T_\tau$ occupies $m_j$, so $F_j$
cannot use $S_j$. Hence every such $F_j$ must use its empty initial
assignment.

We now distinguish according to $b_\tau$.

\medskip
\noindent
\textbf{Case 1: $b_\tau=1$.}
Then $\ell_{i,x}$ is false when $x=1$, and hence
\[
\ell_{i,x}=\neg x.
\]
The transition assignment $T_\tau$ contains a contract at $r_x$.
Since $Q_{r_x}=1$ and $A_x^1$ also assigns the state child $s_x$ to
$r_x$, feasibility implies that $V_x$ cannot use $A_x^1$. By the first
part of the proof,
\[
Y_{V_x}=A_x^0.
\]
Thus $x=0$, so the pivot literal $\neg x$ is true.

We have already shown that every other literal of $C_i$ is false.
Consequently, $d_i$ contains exactly $k_i-1$ clause blockers, one for
each non-pivot literal, while the blocker corresponding to
$\ell_{i,x}$ is absent. Since $Q_{d_i}=k_i$, the contract
$(u_i^d,d_i)$ of $S_i$ can be added without rejecting any contract.

The marker daycare $m_i$ is also vacant. Family $F_i$ currently uses
$T_\tau$ rather than $S_i$, and the global lock $q$ ensures that no
other transition assignment is active. Therefore, after replacing
$T_\tau$ by $S_i$,
\[
\mathrm{Re}(Y,F_i,S_i)=\varnothing.
\]
Since
\[
S_i\succ_{F_i}T_\tau,
\]
this is a non-wastefulness improvement, contradicting terminality.

\medskip
\noindent
\textbf{Case 2: $b_\tau=0$.}
In this case the pivot literal is
\[
\ell_{i,x}=x.
\]

First suppose that $Y_{V_x}=A_x^1$. Then the pivot literal is true,
whereas every other literal of $C_i$ is false. Hence $d_i$ contains
exactly $k_i-1$ clause blockers and has one vacant seat. As in Case~1,
$m_i$ is vacant. Therefore $F_i$ can replace $T_\tau$ by $S_i$ without
rejecting any contract, again yielding a non-wastefulness improvement.

It remains to consider
\[
Y_{V_x}=A_x^0.
\]
We show that $V_x$ can move from $A_x^0$ to its preferred assignment
$A_x^1$ without rejecting any contract.

First, the state child $s_x$ moves from $z_x$ to $r_x$. Since
$b_\tau=0$, the active transition $T_\tau$ has no contract at $r_x$.
Moreover, the global lock $q$ excludes every other transition
assignment. Hence $r_x$ is vacant.

Next consider the pivot and preservation test daycares used by
$T_\tau$. Since $x$ itself is the pivot variable, the pivot daycare
$p_\tau$ has blocker children only for variables in
\[
\mathrm{var}(C_i)\setminus\{x\},
\]
and each preservation daycare $e_{\tau,h}$ has blocker children only for
variables in
\[
\mathrm{var}(C_h)\setminus\{x\}.
\]
Therefore no blocker child of $V_x$ is assigned to any test daycare
used by the active transition $T_\tau$. Changing $V_x$ from $A_x^0$
to $A_x^1$ thus does not affect the feasibility of these active tests.

Every other pivot or preservation daycare contains no transition
contract, because $q$ allows only one active transition. Such a daycare
contains only its blocker children, whose maximum possible number equals
its capacity. Therefore moving blocker children of $V_x$ between these
shared daycares and their private parking daycares preserves feasibility.

It remains to check the clause daycares. Consider first an earlier
clause $C_h$, $h<i$, containing $x$. The preservation test
$e_{\tau,h}$ implies that $C_h$ has some true literal other than its
$x$-literal. Since only $x$ changes, this other literal remains true
after changing $x$ from $0$ to $1$. Thus, after the change, at most
$k_h-1$ false-literal blockers occupy $d_h$. If $F_h$ uses $S_h$, these
blockers together with $(u_h^d,d_h)$ fit within
$Q_{d_h}=k_h$; if $F_h$ uses its empty initial assignment, feasibility
is immediate.

At $d_i$, changing $x$ from $0$ to $1$ makes the pivot literal
$\ell_{i,x}=x$ true, so the blocker $(a_{x,i},d_i)$ is removed.
Moreover, $F_i$ has no contract at $d_i$ while it uses $T_\tau$.

Finally, consider a later clause $C_j$, $j>i$, containing $x$. Since
$j\in J_\tau$, the transition assignment $T_\tau$ occupies $m_j$.
The global lock excludes a transition assignment of $F_j$, while the
occupation of $m_j$ excludes $S_j$. Hence $F_j$ uses its empty initial
assignment, and $d_j$ contains only variable blockers, of which there
are at most
\[
k_j=Q_{d_j}.
\]
Clauses not containing $x$ are unaffected.

All remaining changes move blocker children only between their shared
daycares and their private capacity-one parking daycares, each of which
is private to that blocker child. Consequently,
\[
\mathrm{Re}(Y,V_x,A_x^1)=\varnothing.
\]
Since
\[
A_x^1\succ_{V_x}A_x^0,
\]
this is a non-wastefulness improvement, again contradicting terminality.

Thus no transition assignment $T_\tau$ can occur in a terminal matching.
\end{proof}

We next show that, in a terminal matching, the assignment of each clause
family exactly reflects the truth value of the corresponding clause.

\begin{lemma}
\label{lem:fmr:clause-consistency-consistent}
Let $Y$ be terminal, and define the truth assignment $\alpha_Y$ by
\[
\alpha_Y(x)=b
\quad\Longleftrightarrow\quad
Y_{V_x}=A_x^b.
\]
Then, for every clause $C_i$,
\[
Y_{F_i}=S_i
\quad\Longleftrightarrow\quad
\alpha_Y(C_i)=1.
\]
Equivalently, $Y_{F_i}=\varnothing$ if and only if
$\alpha_Y(C_i)=0$.
\end{lemma}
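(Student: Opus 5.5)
By Lemma~\ref{lem:fmr:variables-and-transitions-consistent}, in a terminal matching $Y$ every $V_x$ uses $A_x^0$ or $A_x^1$, so $\alpha_Y$ is well defined. The same lemma says no clause family uses a transition assignment. Since $\omega_{F_i}=\varnothing$ and individual rationality restricts $F_i$ to its listed assignments, we get $Y_{F_i}\in\{S_i,\varnothing\}$. Hence the two formulations in the statement are equivalent, and it suffices to prove both directions of the first.

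\emph{($\Rightarrow$)} Assume $Y_{F_i}=S_i$. Under the truth-value assignments, $d_i$ contains exactly one clause blocker $a_{x,i}$ for each literal of $C_i$ that is false under $\alpha_Y$. No other family has contracts at $d_i$: clause blockers of variables not in $C_i$ do not exist, and transition assignments never use $d_i$. So $d_i$ holds the false-literal blockers plus $(u_i^d,d_i)$. Feasibility with $Q_{d_i}=k_i$ forces at most $k_i-1$ false literals, so $C_i$ is satisfied.

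\emph{($\Leftarrow$)} Assume $\alpha_Y(C_i)=1$ and, for contradiction, $Y_{F_i}=\varnothing$. We show that $F_i$ can move to $S_i$ without rejecting any contract, which is a non-wastefulness improvement since $S_i\succ_{F_i}\varnothing$.
\begin{itemize}
\item At $d_i$, at most $k_i-1$ blockers are present because some literal is true, so adding $(u_i^d,d_i)$ fits within capacity $k_i$.
\item At $m_i$, the only possible occupants are $u_i^m$, which is unmatched because $F_i$ uses $\varnothing$, or a marker child $t_{\tau,i}^m$ of some transition assignment $T_\tau$ with $i\in J_\tau$. No transition assignment is active in a terminal matching, so $m_i$ is vacant.
\end{itemize}
Hence $\mathrm{Re}(Y,F_i,S_i)=\varnothing$, contradicting terminality.

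The only step needing care is confirming that $d_i$ and $m_i$ contain no contracts other than clause blockers and $F_i$'s own children. This requires checking, from the construction lists (D1)--(D6), (V1)--(V3) and (T1)--(T5), that no other child type is ever assigned to these daycares. I expect this bookkeeping, rather than any substantive argument, to be the main obstacle.
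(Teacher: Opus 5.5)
Your proof is correct and follows essentially the same route as the paper. You reduce to $Y_{F_i}\in\{S_i,\varnothing\}$ via Lemma~\ref{lem:fmr:variables-and-transitions-consistent}, use $Q_{d_i}=k_i$ against the false-literal blockers for ($\Rightarrow$), and for ($\Leftarrow$) exhibit the rejection-free move from $\varnothing$ to $S_i$, noting that $m_i$ is vacant because no transition is active. Your explicit checks that only clause blockers and $u_i^d$ can occupy $d_i$, and that individual rationality confines $F_i$ to its listed assignments, are bookkeeping the paper leaves implicit.
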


\begin{proof}
By Lemma~\ref{lem:fmr:variables-and-transitions-consistent}, every
variable family uses one of its two truth-value assignments, and no
clause family uses a transition assignment. Hence each clause family
$F_i$ can use only $S_i$ or its empty initial assignment
$\omega_{F_i}=\varnothing$.

Under the truth assignment $\alpha_Y$, the clause daycare $d_i$
contains exactly one clause-blocker child for each false literal of
$C_i$. Thus, if
\[
\False_{\alpha_Y}(C_i)
\]
denotes the number of false literals in $C_i$, then $d_i$ contains
exactly $\False_{\alpha_Y}(C_i)$ blockers.

Suppose first that $Y_{F_i}=S_i$. Since $S_i$ contains the contract
$(u_i^d,d_i)$ and $Q_{d_i}=k_i$, feasibility requires
\[
\False_{\alpha_Y}(C_i)+1\le k_i.
\]
Hence at most $k_i-1$ literals of $C_i$ are false, so at least one
literal is true. Therefore,
\[
\alpha_Y(C_i)=1.
\]

Conversely, suppose that $\alpha_Y(C_i)=1$. Then at most $k_i-1$
literals of $C_i$ are false, so $d_i$ contains at most $k_i-1$
clause blockers and has sufficient capacity for $(u_i^d,d_i)$.

If $F_i$ were assigned its empty initial assignment, then $m_i$ would
also be vacant. Indeed, $F_i$ is not using $S_i$, and by
Lemma~\ref{lem:fmr:variables-and-transitions-consistent} no clause family
uses a transition assignment, so no transition contract occupies $m_i$.
Therefore, $F_i$ could move from $\varnothing$ to
\[
S_i=\{(u_i^d,d_i),(u_i^m,m_i)\}
\]
without rejecting any contract:
\[
\mathrm{Re}(Y,F_i,S_i)=\varnothing.
\]
Since
\[
S_i\succ_{F_i}\varnothing,
\]
this would be a non-wastefulness improvement, contradicting the
terminality of $Y$. Hence $Y_{F_i}=S_i$.

Therefore,
\[
Y_{F_i}=S_i
\quad\Longleftrightarrow\quad
\alpha_Y(C_i)=1.
\]
\end{proof}

% \begin{proof}
% By Lemma~\ref{lem:fmr:variables-and-transitions-consistent}, $F_i$ uses $S_i$ or the empty initial assignment. By Lemma~\ref{lem:fmr:variables-and-transitions-consistent}, $m_i$ can only be used by $F_i$. Under the truth-value assignments of the variable families,
% $d_i$ contains exactly one incumbent blocker for each false literal of $C_i$.
% Because $Q_{d_i}=k_i$ and the contract $(u_i^d,d_i)$ is below all these
% incumbents, feasibility of $S_i$ is equivalent to
% \[
%  \False_{\alpha_Y}(C_i)+1\le k_i,
% \]
% which holds exactly when $C_i$ is true.  Conversely, if $C_i$ is true and
% $F_i$ uses $\omega_{F_i}=\varnothing$, then $d_i$ contains at most $k_i-1$ blockers, $m_i$ is free,
% and $F_i$ can move to $S_i$ with
% $\mathrm{Re}(Y,F_i,S_i)=\varnothing$, contradicting terminality.
% \end{proof}

\begin{lemma}
\label{lem:fmr:flip-improvement-consistent}
Let $Y$ be terminal. Then no variable flip can strictly increase
$\val(\alpha_Y)$.
\end{lemma}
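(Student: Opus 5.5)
We argue by contradiction. Let $Y$ be terminal, with $\alpha:=\alpha_Y$ well defined by Lemma~\ref{lem:fmr:variables-and-transitions-consistent}. Suppose flipping some variable $x$ gives $\alpha'$ with $\val(\alpha')>_{\mathrm{lex}}\val(\alpha)$. Let $C_i$ be the first clause whose status differs. Then $\alpha(C_i)=0$ and $\alpha'(C_i)=1$, so $x\in\mathrm{var}(C_i)$ and every clause $C_h$ with $h<i$ has the same status under $\alpha$ and $\alpha'$. The aim is to show that $F_i$ then has an improvement via $T_\tau$ with $\tau=(i,x)$: either a non-wastefulness improvement (when nothing is rejected) or an FMR improvement. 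Either one contradicts terminality.

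\emph{Combinatorial facts about $\tau=(i,x)$.} First, $\tau$ is admissible. An earlier singleton clause containing $x$ would change status under the flip, contradicting the minimality of $i$. Second, $C_i$ is false under $\alpha$, so every literal of $C_i$ is false; in particular $\ell_{i,x}(\alpha(x))=0$, i.e.\ $\alpha(x)=b_\tau$. Third, take any $h\in H_\tau$. If $C_h$ were false under $\alpha$, its $x$-literal would be false and would become true after the flip, changing the status of $C_h$. Hence $C_h$ is true under both $\alpha$ and $\alpha'$. Its $x$-literal cannot be its only true literal under $\alpha$, since otherwise the flip would falsify $C_h$; so $C_h$ has another true literal. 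By Lemma~\ref{lem:fmr:clause-consistency-consistent}, $Y_{F_i}=\varnothing$ and $Y_{F_h}=S_h$ for $h\in H_\tau$, and $T_\tau\succ_{F_i}\varnothing$.

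\emph{Checking the conditions of Definition~\ref{def:fairness_resettlement} daycare by daycare for $Z=Y\setminus Y_{F_i}\cup T_\tau$.}
\begin{itemize}
\item At $q$: the seat is vacant, since no transition is active in $Y$.
\item At $p_\tau$: all non-pivot literals of $C_i$ are false, so no pivot blocker is present and the $k_i-1$ test children fit.
\item At each $e_{\tau,h}$: $C_h$ has a true literal other than its $x$-literal, so at most $k_h-2$ preservation blockers are present and one test child fits.
\item At each $m_j$ with $j\in J_\tau$: the only possible occupant is $u_j^m$ (when $F_j$ uses $S_j$). It is a non-incumbent of a family below $F_i$ in $\rhd$, so $t^m_{\tau,j}$ displaces it.
\item At $r_x$ when $b_\tau=1$: since $\alpha(x)=1$, $V_x$ uses $A_x^1$ and $s_x$ sits at $r_x$. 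Child $s_x$ is not an incumbent there and $V_x$ is last in $\rhd$, so it is rejected.
\end{itemize}
Thus $T_\tau\subseteq\Ch_D(Z)$. If nothing is rejected, we have a non-wastefulness improvement. Otherwise $F(\mathrm{Re})\subseteq\{F_j:j\in J_\tau\}\cup\{V_x\}$, all ranked below $F_i$, which gives Condition~(3).

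\emph{Feasibility of resettlement, Condition~(4).} Resettling each rejected $F_j$ to $\varnothing$ only frees seats, so the only delicate case is $V_x$ when $b_\tau=1$. There $\omega_{V_x}$ returns $s_x$ to $o_x$, which is vacant. It also returns every blocker of $V_x$ to its shared daycare, and this is where I expect the main obstacle.
\begin{itemize}
\item Pivot and preservation daycares are safe. Those of other transitions hold at most their incumbents, since no transition other than $T_\tau$ is active. Those of $\tau$ contain no $V_x$ blockers, because $x$ is the pivot of $\tau$.
\item $d_i$ is safe, since $F_i$ does not use $d_i$ under $T_\tau$.
\item For $j\in J_\tau$, the clause family $F_j$ is itself rejected (it loses $m_j$) and resettled, which removes $u_j^d$ from $d_j$.
\item The delicate case is $h\in H_\tau$. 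There $F_h$ keeps $S_h$, and returning $a_{x,h}$ to $d_h$ could overflow $d_h$ exactly when the $x$-literal of $C_h$ is currently true.
\end{itemize}
My first attempt is to count occupants of $d_h$ directly using the preservation-test information. If this fails, the fallback is to show that in that configuration some other improvement exists for $V_x$ or $F_h$ (for example a non-wasteful move), which would still contradict terminality. Verifying this step carefully is the crux of the proof.
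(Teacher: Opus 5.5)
The overall approach matches the paper's proof, and everything you wrote is correct up to the last bullet. You identify the first improving clause $C_i$, show that $\tau=(i,x)$ is admissible with $\alpha_Y(x)=b_\tau$, check acceptance of $T_\tau$ daycare by daycare, and bound the rejected families. However, the proof is not finished. You leave resettlement feasibility at $d_h$ for $h\in H_\tau$ open, calling it ``the crux'', and contemplate a fallback argument. Without that step, Condition~(4) of Definition~\ref{def:fairness_resettlement} is not established when $b_\tau=1$. The FMR improvement therefore is not shown to exist, and the contradiction with terminality is not reached.

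The missing step follows at once from your own third combinatorial fact. $C_h$ has a true literal other than its $x$-literal, namely the literal of some variable $y\neq x$. Hence, under $A_y^{\alpha_Y(y)}$, the clause blocker $a_{y,h}$ sits at its parking daycare $\bar d_{y,h}$. The family $V_y$ is not among the rejected families, which lie in $\{F_j:j\in J_\tau\}\cup\{V_x\}$, so $a_{y,h}$ stays parked after resettlement. So even if resettling $V_x$ returns $a_{x,h}$ to $d_h$, at most $k_h-1$ clause blockers occupy $d_h$. Adding $(u_h^d,d_h)$ gives at most $k_h=Q_{d_h}$ contracts there. No fallback improvement is needed; this count is exactly how the paper closes the case.

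One small imprecision: for $j\in J_\tau$ with $Y_{F_j}=\varnothing$, the family $F_j$ is not rejected. This is harmless, because then $d_j$ holds only clause blockers, at most $k_j$ of them.
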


\begin{proof}
Suppose, for contradiction, that flipping a variable $x$ strictly
increases $\val(\alpha_Y)$. Let $C_i$ be the first clause, in the
significance order, whose truth value changes under this flip. Since the
change is lexicographically improving, $C_i$ changes from false to true.

Because only $x$ is flipped, the pivot literal $\ell_{i,x}$ is false
before the flip and becomes true after the flip. Hence
\[
\alpha_Y(x)=b_\tau,
\qquad
\tau:=(i,x),
\]
where $b_\tau$ is defined by $\ell_{i,x}(b_\tau)=0$.
Moreover, since $C_i$ is false before the flip, every other literal of
$C_i$ is also false.

Now consider any earlier clause $C_h$, $h<i$, containing $x$. Since
$C_i$ is the first clause whose truth value changes, $C_h$ must remain
unchanged after flipping $x$. Therefore, $C_h$ must contain some true
literal other than its $x$-literal. In particular, no earlier singleton
clause contains $x$, and hence $\tau=(i,x)$ is admissible.

By Lemma~\ref{lem:fmr:clause-consistency-consistent}, $C_i$ being false
implies
\[
Y_{F_i}=\omega_{F_i}=\varnothing.
\]
We show that the transition assignment $T_\tau$ gives an improvement for
$F_i$.

First consider the pivot-test daycare $p_\tau$. The assignment $T_\tau$
contains $k_i-1$ contracts at $p_\tau$, and
\[
Q_{p_\tau}=k_i-1.
\]
A pivot blocker corresponding to a non-pivot literal is present at
$p_\tau$ exactly when that literal is true. Since every non-pivot
literal of $C_i$ is false, no pivot blocker is present at $p_\tau$.
Hence all $k_i-1$ transition contracts at $p_\tau$ are selected.

Next consider a preservation-test daycare $e_{\tau,h}$ for
$h\in H_\tau$. Its capacity is $k_h-1$, and $T_\tau$ contains one
contract at $e_{\tau,h}$. Since $C_h$ has another true literal, at least
one of its $k_h-1$ preservation blockers is absent. Thus at most
$k_h-2$ preservation blockers are present, leaving a seat for the
transition contract. Hence the contract of $T_\tau$ at every
$e_{\tau,h}$ is selected.

The contract of $T_\tau$ at $q$ is also selected, because
Lemma~\ref{lem:fmr:variables-and-transitions-consistent} implies that
the terminal matching $Y$ contains no transition assignment, and hence
$q$ is vacant.

Now consider a marker daycare $m_j$ with $j\in J_\tau$. By
Lemma~\ref{lem:fmr:variables-and-transitions-consistent}, $F_j$ uses
either $S_j$ or its empty initial assignment. If $F_j$ uses $S_j$, then
its marker contract $(u_j^m,m_j)$ competes with
$(t_{\tau,j}^m,m_j)$. Since $i<j$, we have
\[
F_i\rhd F_j,
\]
so the transition contract of $F_i$ is selected and the marker contract
of $F_j$ is rejected.

Finally, suppose $b_\tau=1$. Then
\[
\alpha_Y(x)=1,
\]
so $Y_{V_x}=A_x^1$ and the state child $s_x$ occupies $r_x$.
The transition assignment $T_\tau$ also contains a contract at $r_x$.
Since $F_i\rhd V_x$ and $Q_{r_x}=1$, the transition contract is selected
and $(s_x,r_x)$ is rejected. If $b_\tau=0$, $T_\tau$ contains no
contract at $r_x$.

Consequently,
\[
T_\tau
\subseteq
Ch_D\bigl(Z(Y,F_i,T_\tau)\bigr),
\]
and the only families that can be rejected are
\[
F\bigl(\mathrm{Re}(Y,F_i,T_\tau)\bigr)
\subseteq
\{F_j:j\in J_\tau\}
\cup
\begin{cases}
\{V_x\},&b_\tau=1,\\
\varnothing,&b_\tau=0.
\end{cases}
\]
Every such family is ranked below $F_i$ in the master list.

It remains to verify that all rejected families can be feasibly
resettled to their initial assignments. Every rejected clause family
$F_j$ is returned from $S_j$ to
\[
\omega_{F_j}=\varnothing.
\]
If $b_\tau=1$, the rejected variable family $V_x$ is returned from
$A_x^1$ to $\omega_{V_x}$. We verify feasibility after these
resettlements.

Consider first the test daycares used by $T_\tau$. Since $x$ is the
pivot variable, the pivot-test daycare $p_\tau$ contains blocker children
only for variables in
\[
\mathrm{var}(C_i)\setminus\{x\},
\]
and each preservation-test daycare $e_{\tau,h}$ contains blocker children
only for variables in
\[
\mathrm{var}(C_h)\setminus\{x\}.
\]
Thus, $V_x$ has no blocker child associated with any active test daycare.
Resettling $V_x$ therefore does not alter their occupancies, and all test
contracts of $T_\tau$ remain feasible.

Any inactive pivot- or preservation-test daycare contains no transition
contract. It contains only its blocker children, and its capacity equals
the maximum possible number of such blockers. Hence returning blocker
children of $V_x$ to inactive test daycares cannot violate capacity.

Next consider an earlier clause daycare $d_h$, where $h<i$ and $C_h$
contains $x$. Resettling $V_x$ may return the clause blocker
$a_{x,h}$ to $d_h$. However, as shown above, $C_h$ has another true
literal. Hence at most $k_h-1$ clause blockers are present after
resettlement. If $F_h$ uses $S_h$, its contract $(u_h^d,d_h)$ therefore
still fits within
\[
Q_{d_h}=k_h.
\]

At the pivot clause daycare $d_i$, $F_i$ now uses $T_\tau$ rather than
$S_i$, so it has no contract at $d_i$. Even if resettling $V_x$ returns
the pivot clause blocker to $d_i$, at most all $k_i$ clause blockers are
present, which fits exactly within $Q_{d_i}=k_i$.

Finally, consider a later clause $C_j$, $j>i$, containing $x$. If $F_j$
is rejected at $m_j$, it is resettled to
$\omega_{F_j}=\varnothing$. Thus $d_j$ contains no clause-family
contract and can accommodate all of its at most $k_j$ clause blockers.
Clauses not containing $x$ are unaffected.

The marker, global-transition, state, and private parking daycares are
also feasible after resettlement. Therefore the resettled matching is
feasible.

If
\[
\mathrm{Re}(Y,F_i,T_\tau)=\varnothing,
\]
then $(F_i,T_\tau)$ is a non-wastefulness improvement. Otherwise, all
rejected families lie below $F_i$ in the master list and can be feasibly
resettled to their initial assignments, so $(F_i,T_\tau)$ is an FMR
improvement. Either case contradicts the terminality of $Y$.

Therefore no single-variable flip can strictly increase
$\val(\alpha_Y)$.
\end{proof}

% \begin{theorem}
% \label{theo:computation:FMR}
% Finding a feasible and individually rational matching that is
% non-wasteful and fair by master list and resettlement is PLS-hard.  The result
% holds even if
% \begin{enumerate}[label=(\roman*)]
% \item every daycare has capacity at most four; and
% \item daycare priorities are master-list consistent up to initial enrollments.
% \end{enumerate}
% \haris{PLS-hard or PLS-complete?}
% \end{theorem}

% \begin{proof}
% The construction is polynomial and all listed family assignments are explicit.
% Condition (ii) is also satisfied by construction. The
% largest capacities are $Q_{d_i}=k_i\le4$; all pivot and preservation tests have
% capacity at most three, and all other daycares have capacity one.

% Given any non-wasteful and FMR-fair matching $Y$, decode $\alpha_Y$ from the
% assignments of the variable families.  If $\alpha_Y$ were not locally optimal
% for $\LLMS$, a one-variable flip would strictly increase its lexicographic
% clause vector.  Lemma~\ref{lem:fmr:flip-improvement-consistent} would then give
% a non-wastefulness or FMR improvement of $Y$, a contradiction.  Hence every
% solution of the constructed matching instance maps in polynomial time to a
% local optimum of the source instance.  This is a PLS reduction from $\LLMS$.
% \end{proof}

\begin{theorem}
\label{theo:computation:FMR}
Finding a feasible and individually rational matching that is
non-wasteful and fair by master list and resettlement is PLS-complete.
The result holds even if
\begin{enumerate}[label=(\roman*)]
\item every daycare has capacity at most four; and
\item daycare priorities are master-list consistent up to initial
enrollments.
\end{enumerate}
\end{theorem}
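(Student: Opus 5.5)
Proving PLS-completeness requires two parts: membership in PLS and PLS-hardness. Hardness will follow from the reduction constructed above together with Lemmas~\ref{lem:fmr:variables-and-transitions-consistent}--\ref{lem:fmr:flip-improvement-consistent}.

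\textbf{Membership in PLS.} I will cast the problem as a local search problem in the sense of \citet{JPY88a}.
\begin{itemize}
\item The feasible solutions are feasible, individually rational matchings, and $\omega$ serves as a polynomially computable starting solution.
\item The neighbors of $Y$ are the matchings obtained by a single non-wastefulness improvement $Y\setminus Y_f\cup W$, or by a single FMR improvement $Y\setminus(Y_f\cup Y_{F'})\cup(W\cup\omega_{F'})$.
\item The objective is the lexicographic assignment profile $(Y_{f_1},\ldots,Y_{f_m})$ used in the proof of Theorem~\ref{theo:existence:FMR&NW}. It can be encoded as an integer whose digits are the ranks of the family assignments in their explicitly listed preference orders. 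Since every family has at most $|\mathcal W_f|$ assignments, this integer has polynomially many bits.
\end{itemize}
The proof of Theorem~\ref{theo:existence:FMR&NW} shows that every improvement strictly increases this potential. Enumerating all $f$ and all $W\succ_f Y_f$ and computing $\Ch_D$, $\mathrm{Re}$ and the resettled matching finds an improving neighbor, or certifies that none exists, in polynomial time. This uses the same cost analysis as Theorems~\ref{theo:check:envy:replacement} and~\ref{theo:verification:FRM}, plus one feasibility check for Condition~(4). A local optimum of this search problem is exactly a terminal matching, i.e., one that is non-wasteful and FMR-fair.

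\textbf{Hardness.} I will give a PLS reduction from \LLMS{}, which is PLS-complete \citep{ScTa25a}.
\begin{itemize}
\item The map $\varphi\mapsto I$ is the construction above. The number of families, children and daycares is polynomial because $|C_i|\le4$. Every family preference list, including the individually rational part with $S_i$, all $T_\tau$, $A_x^0$ and $A_x^1$, is written explicitly.
\item The solution map sends a terminal $Y$ to $\alpha_Y$, read off from the variable families. Lemma~\ref{lem:fmr:variables-and-transitions-consistent} guarantees that $\alpha_Y$ is well defined.
\item Lemma~\ref{lem:fmr:flip-improvement-consistent} shows that $\alpha_Y$ is a local optimum of $\varphi$.
\end{itemize}
The two side conditions are then direct checks. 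For (i), the maximum capacity is $Q_{d_i}=k_i\le4$, the test daycares have capacity $k-1\le3$, and all remaining daycares have capacity one. For (ii), the priorities were defined to follow $\rhd$ with incumbents lifted, so they are master-list consistent up to initial enrollments.

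\textbf{Main obstacle.} I expect membership to be the delicate part, not hardness. The neighborhood must be well defined even when several improvements exist, so I will take the neighbors to be all available improvements. The potential must be a single polynomial-size number that strictly increases along every improvement. For that I need to recheck carefully the argument that resettled families are all ranked strictly below the improving family. This holds for FMR improvements by Condition~(3). For non-wasteful improvements it holds because nobody else changes. Resettled families may worsen, but they all lie below the improving family, so the lexicographic order is still respected.
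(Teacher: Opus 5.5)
Your proposal is correct and follows the paper's proof essentially step for step. For membership, you use the master-list-ordered lexicographic potential, encoded as a polynomial-size integer over the ranks in the explicitly listed individually rational preference lists, with non-wastefulness and FMR improvements as neighbours and $\omega$ as the starting solution; for hardness, you use the \LLMS{} reduction with the decoding $Y\mapsto\alpha_Y$ justified by Lemmas~\ref{lem:fmr:variables-and-transitions-consistent} and~\ref{lem:fmr:flip-improvement-consistent}, and the same direct checks of the capacity and master-list-consistency conditions.
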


\begin{proof}
We first show membership in PLS.  Let
\[
f_1\rhd f_2\rhd\cdots\rhd f_{|F|}
\]
be the family master list.  Since the complete individually rational
preference list of every family is explicitly given, assign to each
assignment $W$ of family $f$ a score $s_f(W)$ consistent with
$\succ_f$, with larger scores representing more preferred assignments.
Let
\[
B> \max_{f\in F} |\mathcal W_f|
\]
and define the potential of a feasible and individually rational matching
$Y$ by
\[
\Phi(Y)
=
\sum_{t=1}^{|F|}
s_{f_t}(Y_{f_t}) B^{|F|-t}.
\]
The binary representation of $\Phi(Y)$ has polynomial length.

Consider first a non-wastefulness improvement by a family $f_t$.
Only the assignment of $f_t$ changes, and it changes to a strictly
preferred assignment. Hence the lexicographic assignment profile, and
therefore $\Phi$, strictly increases.

Now consider an FMR improvement by $f_t$.  The assignment of $f_t$
strictly improves, while every family that is displaced and resettled
lies below $f_t$ in the master list.  All families above $f_t$ remain
unchanged.  Thus the first coordinate of the master-list-ordered
assignment profile that changes is that of $f_t$, and it changes
strictly for the better.  Hence $\Phi$ again strictly increases.

Moreover, given a feasible and individually rational matching $Y$, we
can find an improving neighbor or certify that none exists in polynomial
time.  For every family $f$ and every explicitly listed assignment
$W\succ_fY_f$, we compute
\[
Z(Y,f,W)
\quad\text{and}\quad
\mathrm{Re}(Y,f,W).
\]
If $\mathrm{Re}(Y,f,W)=\varnothing$, then $(f,W)$ is a
non-wastefulness improvement.  Otherwise, we check the master-list and
resettlement conditions of FMR, including feasibility after resettling
the rejected families to their initial assignments.  All these checks
are polynomial in the input size.  The initial matching $\omega$ is also
computable in polynomial time and is feasible and individually rational.
Therefore, the problem belongs to PLS.  Its local optima are exactly the
terminal matchings, i.e., the matchings that are both non-wasteful and
FMR-fair.

It remains to show PLS-hardness.  We use the reduction from \LLMS
described above.  The construction is polynomial, and all family
assignments are explicitly represented.  Condition~(ii) holds by
construction.  Moreover,
\[
Q_{d_i}=k_i\le4,
\]
every pivot-test and preservation-test daycare has capacity at most
three, and every other daycare has capacity one.  Hence
Condition~(i) also holds.

Given any non-wasteful and FMR-fair matching $Y$ of the constructed
instance, decode the truth assignment $\alpha_Y$ from the assignments of
the variable families.  If $\alpha_Y$ were not locally optimal for
\LLMS, then some single-variable flip would strictly increase
$\val(\alpha_Y)$.  By
Lemma~\ref{lem:fmr:flip-improvement-consistent}, this would induce either
a non-wastefulness improvement or an FMR improvement of $Y$,
contradicting the assumption that $Y$ is terminal.  Thus every solution
of the constructed matching instance can be mapped in polynomial time
to a locally optimal solution of the \LLMS instance.

Hence the reduction is a PLS reduction from the PLS-complete problem
\LLMS.  The matching problem is therefore PLS-hard, and together with
membership in PLS, it is PLS-complete.
\end{proof}

\paragraph{Dependence on family size.}
The reduction also preserves bounded occurrence.  A variable family has one
state child, at most $\Delta$ clause-blocker children, at most $3\Delta$
pivot-blocker children, and fewer than $3\Delta^2$ preservation-blocker
children.  Hence
\[
 |C_{V_x}|\le3\Delta^2+\Delta+1.
\]
A clause has at most four admissible transition assignments.  Using distinct
children for all their displayed contracts gives
\[
 |C_{F_i}|\le 2+4(2\Delta+3)=8\Delta+14.
\]
Therefore every family has size at most
\[
 3\Delta^2+8\Delta+14.
\]

\begin{corollary}
\label{cor:fmr:bounded-transfer-consistent}
If $\LLMS$ is PLS-hard for formulas in which every variable occurs at most a
fixed constant $\Delta_0$, then the hardness in
Theorem~\ref{theo:computation:FMR} holds with family size at
most $3\Delta_0^2+8\Delta_0+14$.
\end{corollary}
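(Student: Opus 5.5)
The corollary follows by running the PLS reduction of Theorem~\ref{theo:computation:FMR} unchanged on the restricted source class and observing that the only size parameter that can blow up, family size, is controlled by $\Delta$. Assume \LLMS{} remains PLS-hard when every variable occurs in at most $\Delta_0$ clauses. Given such a formula $\varphi$, I would apply exactly the construction of Section~\ref{subsec:complexity:fmr}.

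\textbf{Step 1: the hardness argument carries over.} Lemmas~\ref{lem:fmr:variables-and-transitions-consistent}, \ref{lem:fmr:clause-consistency-consistent} and~\ref{lem:fmr:flip-improvement-consistent} do not depend on any occurrence bound. Hence every terminal matching still decodes in polynomial time to a local optimum $\alpha_Y$. Membership in PLS is also unchanged. So the composition of the assumed hard reduction with ours is a PLS reduction, and the capacity and master-list-consistency guarantees (i) and (ii) are inherited verbatim.

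\textbf{Step 2: bound family sizes in terms of $\Delta\le\Delta_0$.} This is the only new content.

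For $V_x$, count each child type separately.
\begin{itemize}
\item There is one state child.
\item There is one clause blocker per clause containing $x$, giving at most $\Delta$.
\item Pivot blockers $p_{x,\tau}$ arise for admissible $\tau=(i,y)$ with $y\neq x$ and $x\in C_i$. There are at most $\Delta$ such clauses and at most $3$ choices of $y$ in each, giving at most $3\Delta$.
\item Preservation blockers $e_{x,\tau,h}$ require $x\in C_h$ (at most $\Delta$ choices of $h$) and a pivot $y\in C_h\setminus\{x\}$ (at most $3$ choices). The clause $C_i$ of $\tau$ is then a later clause containing $y$, giving fewer than $\Delta$ choices. This gives fewer than $3\Delta^2$.
\end{itemize}
Together these yield $|C_{V_x}|\le 3\Delta^2+\Delta+1$.

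For $F_i$, there are two satisfied-assignment children. There are at most $|C_i|\le4$ admissible $\tau$. Each $T_\tau$ has $1+|J_\tau|+(k_i-1)+|H_\tau|+\mathbf 1[b_\tau=1]$ children. Since $|J_\tau|+|H_\tau|\le\Delta-1$ and $k_i-1\le3$, this is at most $\Delta+4\le 2\Delta+3$. So $|C_{F_i}|\le 2+4(2\Delta+3)$.

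Adding the two bounds gives the uniform bound $3\Delta^2+8\Delta+14\ge\max(|C_{V_x}|,|C_{F_i}|)$. Substituting $\Delta\le\Delta_0$ gives the claimed constant.

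\textbf{Main obstacle.} The only delicate point is the preservation-blocker count. One must check that the indexing $(x,\tau,h)$ really is bounded by $\Delta^2$ up to a constant, rather than by the total number of occurrences in the formula. I would make this precise by charging each such child to the pair $(h,i)$ of clauses sharing the pivot $y$. The clause $h$ is one of at most $\Delta$ clauses containing $x$, and $i$ is one of at most $\Delta$ clauses containing $y$.

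A secondary point is that, with bounded family size, each explicit preference list stays polynomial. The construction lists only two or at most six assignments per family, so this is immediate.
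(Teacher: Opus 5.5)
Your proposal is correct and matches the paper's argument: you reuse the reduction of Theorem~\ref{theo:computation:FMR} unchanged and obtain $|C_{V_x}|\le 3\Delta^2+\Delta+1$ and $|C_{F_i}|\le 2+4(2\Delta+3)$ by counting each child type, as the paper does in its ``Dependence on family size'' paragraph. One cosmetic point: $3\Delta^2+8\Delta+14$ is not literally the sum of your two bounds, but it dominates each of them separately, which is all that is needed.
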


The qualification in
Corollary~\ref{cor:fmr:bounded-transfer-consistent} is necessary because
the complexity of bounded-occurrence \LLMS is currently left
open~\citep{ScTa25a}.

% \newpage
%----------------------------------------
% subsection: Dominance-based Displacement
%----------------------------------------
\subsection{Dominance-based Displacement}

We next strengthen fairness by master list and resettlement by introducing a new fairness notion based on \emph{dominance-based displacement}. Unlike the previous notion, where the families affected by a deviation are determined by the contracts rejected by the daycare choice functions, the present notion identifies potential displaced families directly through pairwise priority comparisons between the proposed contracts and the currently assigned contracts.

Given a proposed assignment $W$ of a family $f$, another family is
considered dominated if at least one of its currently assigned children has lower priority, at the same daycare, than some child of $f$ assigned there under $W$. Rather than requiring the deviation to act on the particular set of families induced by the daycare choice functions, we allow a nonempty subset of such dominated families to be displaced. Once a family is displaced, its entire current assignment is removed, even if only one of its contracts is dominated.

We first formalize the set of families that may be displaced.

\begin{definition}[Dominated Families]
\label{def:dominated_families}
Given a matching $Y$, a family $f$, and a proposed assignment $W$ for
$f$, define
\[
L_f(W,Y)
=
\left\{
g\in F\setminus\{f\}
\;\middle|\;
\exists (c,d)\in W,\;
\exists (c',d)\in Y_g
\text{ such that }
c\succ_d c'
\right\}.
\]
We refer to $L_f(W,Y)$ as the set of families \emph{dominated by $f$ through $W$ at $Y$}.
\end{definition}

Using this dominance relation, we define the following stronger fairness
notion.

\begin{definition}[Strong Fairness by Master List and Resettlement]
\label{def:SFMR}
Given an individually rational matching $Y$ and a strict master list $\rhd$ over families, family $f$ is said to have
\emph{weakly justified envy by master list and resettlement}
if there exist a proposed assignment $W$ for $f$ and a nonempty set $G\subseteq L_f(W,Y)$ such that the following conditions hold:

\begin{description}

\item[(1) Profitable deviation:] $W\succ_f Y_f.$

\item[(2) Acceptance after displacement:]
$W \subseteq Ch_D\bigl(Y\setminus(Y_f\cup Y_G)\cup W
\bigr),$ where $Y_G=\bigcup_{g\in G}Y_g.$

\item[(3) Master-list condition:]
\[f\rhd g \qquad \text{for every }g\in G.\]

\item[(4) Resettlement feasibility:]
\[Y\setminus(Y_f\cup Y_G) \cup (W\cup\omega_G)\] is feasible, where
$\omega_G=\bigcup_{g\in G}\omega_g.$
\end{description}

An individually rational matching is said to be
\emph{strongly fair by master list and resettlement} if no family
has weakly justified envy by master list and resettlement.
\end{definition}

The key difference between the two fairness notions FMR and SFMR lies in how the displaced families are determined. Under fairness by master list and
resettlement, adding $W$ to the current matching determines a particular
set of rejected contracts through the daycare choice functions, and all
families involved in these rejected contracts must satisfy the
master-list and resettlement conditions. Under strong fairness by master
list and resettlement, by contrast, the deviation may be supported by
any nonempty subset $G$ of dominated families, provided that removing
their entire assignments makes $W$ acceptable and satisfies the
master-list and resettlement requirements.

This additional flexibility strengthens the fairness requirement.
In particular, a deviation may fail to constitute justified envy under
fairness by master list and resettlement because some family induced by
the rejected contracts is ranked above $f$ in the master list or cannot
be feasibly resettled. Nevertheless, there may exist another suitable
subset of dominated families whose displacement makes $W$ acceptable
and for which both requirements are satisfied. Such a deviation is
captured by strong fairness by master list and resettlement.

Since the dominance-based notion allows a broader set of deviations to
constitute justified envy, the corresponding fairness requirement is
stronger than fairness by master list and resettlement. 
The following proposition formalizes the relationship between the two
notions.
\begin{proposition}
\label{prop:connection:sfmr-implies-fmr}
Strong fairness by master list and resettlement implies fairness by
master list and resettlement.
\end{proposition}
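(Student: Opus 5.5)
We argue by contrapositive: any justified envy by master list and resettlement (FMR violation) is converted into weakly justified envy by master list and resettlement (SFMR violation). Fix an individually rational matching $Y$, a family $f$, and an assignment $W$ satisfying Conditions~(1)--(4) of Definition~\ref{def:fairness_resettlement}. Let $Z=Y\setminus Y_f\cup W$, $\mathrm{Re}=Z\setminus Ch_D(Z)$, and $F'=F(\mathrm{Re})$. The natural witness for Definition~\ref{def:SFMR} is the same $W$ together with $G=F'$. We check the SFMR conditions in turn.

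\emph{$G$ is nonempty and $G\subseteq L_f(W,Y)$.} Nonemptiness follows from Condition~(2) of FMR, which gives $\mathrm{Re}\neq\emptyset$. We next check that $f\notin F'$: every contract of $f$ in $Z$ lies in $W$, and $W\subseteq Ch_D(Z)$, so no contract of $f$ is rejected. Now take $(c',d)\in\mathrm{Re}$, which is a contract of some $g\in F'$. Daycare $d$ selects from $Z_d$ in priority order up to $Q_d$, and it rejects $(c',d)$. Hence $d$ is filled to capacity by contracts of higher priority than $(c',d)$. Since $(c',d)\in Y\setminus Y_f$ and $Y$ is feasible, at most $Q_d-1$ other contracts of $Y\setminus Y_f$ lie at $d$. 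Therefore at least one selected contract ranked above $(c',d)$ must belong to $W$, say $(c,d)\in W$ with $c\succ_d c'$. This gives $g\in L_f(W,Y)$. This counting argument is the only step needing care, and I expect it to be the main (mild) obstacle: the rejected contract must be dominated by a contract of $W$ specifically, not merely by some contract of $Y$.

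\emph{Conditions~(1), (3), (4) of SFMR.} Condition~(1), $W\succ_f Y_f$, is inherited directly from FMR. Condition~(3) is exactly FMR Condition~(3). Condition~(4) is identical to FMR Condition~(4), because $Y_G=Y_{F'}$ and $\omega_G=\omega_{F'}$.

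\emph{Condition~(2) of SFMR.} We need $W\subseteq Ch_D(Z\setminus Y_G)$. The set $Z\setminus Y_G$ is obtained from $Z$ by removing contracts of families other than $f$, so it contains no contract of $W$. The paper already observes, in the proof of Proposition~\ref{prop:connection:stability}, that removing other contracts from a daycare's choice set cannot cause a previously selected contract to be rejected under the priority-based responsive choice function. Hence $W\subseteq Ch_D(Z)$ implies $W\subseteq Ch_D(Z\setminus Y_G)$.

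Thus $(W,G)$ witnesses weakly justified envy by master list and resettlement. Consequently, a matching that is strongly fair by master list and resettlement admits no FMR violation, which proves the proposition.
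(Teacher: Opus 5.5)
Your proof is correct and matches the paper's argument essentially step for step: take $G=F(\mathrm{Re})$, use feasibility of $Y$ to show that each rejected contract is outranked by some contract of $W$ (so $G\subseteq L_f(W,Y)$), inherit Conditions~(1), (3), (4), and get Condition~(2) from the fact that removing competing contracts cannot un-select a contract under the priority-based choice function. One wording slip: in the Condition~(2) paragraph, ``it contains no contract of $W$'' should refer to the removed set $Y_G$, not to $Z\setminus Y_G$.
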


\begin{proof}
We prove the contrapositive. Suppose that a feasible and individually
rational matching $Y$ is not fair by master list and resettlement according
to Definition~\ref{def:fairness_resettlement}. Then there exist a family
$f$ and an assignment $W\succ_f Y_f$ such that, letting
$Z:=Y\setminus Y_f\cup W$
and $\mathrm{Re}:=Z\setminus Ch_D(Z),$
we have $\mathrm{Re}\neq\emptyset,$ and $W\subseteq Ch_D(Z),$
every family in $F(\mathrm{Re})$ is ranked below $f$ in the master list,
and $Y\setminus \bigl(Y_f\cup Y_{F(\mathrm{Re})}\bigr)
\cup \bigl(W\cup\omega_{F(\mathrm{Re})}\bigr)$ is feasible.

Set $G:=F(\mathrm{Re}).$
We first show that $G\subseteq L_f(W,Y).$ Take any family $g\in G$. By definition of $G$, there exists a rejected contract
$(c',d)\in Y_g$ such that
$(c',d)\notin Ch_d(Z_d).$
Since $W\subseteq Ch_D(Z)$, every rejected contract belongs to
$Y\setminus Y_f$.
Because $Y$ is feasible, $(c',d)$ could not become rejected after replacing $Y_f$ by $W$ unless some newly introduced contract $(c,d)\in W$ has higher priority at $d$, i.e.,
$c\succ_d c'.$
Indeed, suppose otherwise. Then every contract in $Z_d$ ranked above
$(c',d)$ would already belong to $(Y\setminus Y_f)_d$. Since $(c',d)$ is rejected by $Ch_d(Z_d)$, there must be at least $Q_d$ contracts in $Z_d$ ranked above it. These contracts, together with $(c',d)$, would therefore already belong to $Y_d$, implying $|Y_d|\geq Q_d+1,$
which contradicts the feasibility of $Y$.
Hence, there exists some $(c,d)\in W$ such that $c\succ_d c',$
and therefore $g\in L_f(W,Y).$
Since $g$ was arbitrary, we conclude that $G\subseteq L_f(W,Y).$

We now verify the conditions for weakly justified envy by master list and resettlement.
Condition~(1) holds because
$W\succ_f Y_f.$
For Condition~(2), we already have
$W\subseteq Ch_D(Z).$
The set $Y\setminus(Y_f\cup Y_G)\cup W$ is obtained from $Z$ only by removing additional contracts that do not belong to $W$. By Definition~\ref{def:choice_func}, each daycare chooses
the highest-priority contracts up to its quota. Hence, removing competing
contracts cannot cause a contract in $W$ that was previously selected to
become rejected. Therefore,
$W\subseteq Ch_D\bigl(
Y\setminus(Y_f\cup Y_G)\cup W
\bigr).$
Condition~(3) follows directly from the master-list condition of
Definition~\ref{def:fairness_resettlement}, because $G=F(\mathrm{Re}).$
Finally, Condition~(4) follows directly from the resettlement-feasibility condition of Definition~\ref{def:fairness_resettlement}, which requires
$Y\setminus(Y_f\cup Y_G)
\cup (W\cup\omega_G)$
to be feasible.
Thus, $W$ together with $G$ witnesses strongly justified envy by master
list and resettlement. Hence, $Y$ is not strongly fair by master list and
resettlement.

Therefore, by contraposition, strong fairness by master list and
resettlement implies fairness by master list and resettlement, which completes the proof of Proposition~\ref{prop:connection:sfmr-implies-fmr}.
\end{proof}

\begin{algorithm}[tb]
\KwIn{An instance
$I=(C,F,D,Q,\mathcal{X},\omega,\succ_F,\succ_D)$ and a strict master list
$f_1\rhd f_2\rhd\cdots\rhd f_m$}
\KwOut{A feasible and individually rational matching that is non-wasteful
and strongly fair by master list and resettlement}

$Y\leftarrow \omega$\;
$i\leftarrow 1$\;

\While{$i\le m$}{
    $\mathit{updated}\leftarrow \mathrm{False}$\;
    Let $f\leftarrow f_i$\;

    \ForEach{assignment $W\succ_f Y_f$, in decreasing order of $\succ_f$}{

        \If{$Y\setminus Y_f\cup W
        =\Ch_D(Y\setminus Y_f\cup W)$}{
            $Y\leftarrow Y\setminus Y_f\cup W$\;
            $\mathit{updated}\leftarrow \mathrm{True}$\;
            \textbf{break}\;
        }

        \ElseIf{there exists a nonempty $G\subseteq L_f(W,Y)$ satisfying
        Conditions~(2)--(4) of
        Definition~\ref{def:SFMR}}{
            $Y\leftarrow
            Y\setminus(Y_f\cup Y_G)\cup(W\cup\omega_G)$\;
            $\mathit{updated}\leftarrow \mathrm{True}$\;
            \textbf{break}\;
        }
    }

    \eIf{$\mathit{updated}$}{
        $i\leftarrow 1$\;
    }{
        $i\leftarrow i+1$\;
    }
}

\Return{$Y$}\;

\caption{Greedy Improvement for Non-wasteful and SFMR Matchings}
\label{alg:sfmr-greedy}
\end{algorithm}

Proposition~\ref{prop:connection:sfmr-implies-fmr} does not extend to ordinary fairness.
In fact, fairness and SFMR are incomparable, as shown next.

\begin{proposition}
\label{prop:F-SFMR-incomparable}
Fairness (F) and strong fairness by master list and resettlement (SFMR)
do not imply each other.
\end{proposition}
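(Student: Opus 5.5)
The plan is to prove the two non-implications separately, each with a small explicit instance that has no initial enrollments. Without initial enrollments, $\omega_G=\emptyset$ for every $G$, so the resettlement condition~(4) of Definition~\ref{def:SFMR} only asks that $Y$ stay feasible after the displaced assignments are removed, which is automatic. Individual rationality is also immediate when every listed assignment is ranked above $\emptyset$. The two notions then differ in two places. Fairness (Definition~\ref{def:fairness}) needs an actual rejection by the choice functions but ignores the master list. SFMR (Definition~\ref{def:SFMR}) needs only pairwise dominance, but respects the master list. Each example will exploit one of these differences.

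\emph{Fairness does not imply SFMR.} Take one daycare $d$ with $Q_d=2$, and two single-child families $f=(c)$ and $g=(c')$, each finding $d$ acceptable. Let $c\succ_d c'$, $f\rhd g$, and $Y=\{(c',d)\}$.
\begin{itemize}
\item $Y$ is individually rational and feasible.
\item The only improving deviation is $f$ with $W=\{(c,d)\}$. It fits into the vacant seat, so no contract is rejected. Condition~(2) of Definition~\ref{def:fairness} fails, and no other family can improve. Hence $Y$ is fair, although it is wasteful.
\item For SFMR, take $G=\{g\}$. We have $g\in L_f(W,Y)$ because $c\succ_d c'$. Condition~(2) holds trivially, condition~(3) holds since $f\rhd g$, and condition~(4) holds since $\{(c,d)\}$ is feasible.
\end{itemize}
So $f$ has weakly justified envy by master list and resettlement, and $Y$ is not SFMR. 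The point is that SFMR lets a family displace a dominated lower-ranked family even when the choice function would reject nothing.

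\emph{SFMR does not imply fairness.} Use the same families, but set $Q_d=1$ and reverse the master list to $g\rhd f$, keeping $c\succ_d c'$ and $Y=\{(c',d)\}$.
\begin{itemize}
\item Family $f$ deviates to $W=\{(c,d)\}$, which rejects $(c',d)$. So $f$ has justified envy in the sense of Definition~\ref{def:fairness}, and $Y$ is not fair.
\item For SFMR, $g$ already holds its only acceptable assignment, so only $f$ can deviate. Its only dominated family is $g$, so the only nonempty choice is $G=\{g\}$, and this violates condition~(3) because $g\rhd f$.
\end{itemize}
So $Y$ is SFMR but not fair.

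Both examples are routine to check. The only step needing care is enumerating, in each instance, every family and every $W$ together with every nonempty $G\subseteq L_f(W,Y)$. This is easy here because each family has a single child and a single acceptable daycare.
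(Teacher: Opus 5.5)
Your proof is correct. Part (ii) is exactly the paper's counterexample: one unit-capacity daycare, two singleton families, $a\succ_d b$, and $g\rhd f$.

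Part (i) is also correct but takes a different route from the paper.

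\textbf{Your route.} You exploit the fact that Definition~\ref{def:SFMR} has no rejection requirement. With $Q_d=2$, family $f$ could simply take the vacant seat. But because $c\succ_d c'$ and $f\rhd g$, the set $G=\{g\}$ still witnesses weakly justified envy, so a fair but wasteful matching fails SFMR.

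\textbf{The paper's route.} The paper uses two two-child families and two unit-capacity daycares, with $a_1\succ_{d_1}b_1$ but $b_2\succ_{d_2}a_2$, and $Y=W_g$. Here $W_f$ is not even a blocking coalition, since $a_2$ is rejected at $d_2$. Nevertheless, dominance at $d_1$ alone puts $g$ in $L_f(W_f,Y)$. Displacing $g$ then removes its entire assignment, freeing $d_2$ as well.

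\textbf{Comparison.} Your example is smaller and needs no siblings, but it only separates F from SFMR by way of waste. The paper's $Y$ fills both seats and is non-wasteful, hence stable. So it shows the stronger fact that even stability does not imply SFMR. It also isolates the genuinely new feature of dominance-based displacement, namely whole-assignment removal, rather than the absence of a rejection condition.

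\textbf{A minor slip.} Your preamble says Condition~(4) is automatic without initial enrollments. It is not. Condition~(4) then requires $(Y\setminus(Y_f\cup Y_G))\cup W$ to be feasible. Condition~(2) only guarantees that $W$ itself is selected; a contract of some family outside $G$ could still be pushed over capacity. Since you verify (4) directly in both examples, this does not affect the argument.
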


\begin{proof}
\textbf{(i) Fairness does not imply SFMR.}
Consider two families $f$ and $g$ and two capacity-one daycares
$d_1,d_2$, with no initial enrollment. The instance is given by
\[
\begin{array}{c@{\qquad}c}
C_f=(a_1,a_2)
&
W_f=\{(a_1,d_1),(a_2,d_2)\}\succ_f\emptyset
\\[1mm]
C_g=(b_1,b_2)
&
W_g=\{(b_1,d_1),(b_2,d_2)\}\succ_g\emptyset
\\[1mm]
a_1\succ_{d_1}b_1
&
b_2\succ_{d_2}a_2
\quad\text{and}\quad
f\rhd g.
\end{array}
\]
These are the only nonempty acceptable assignments of the two families.

Consider
\[
Y=W_g=\{(b_1,d_1),(b_2,d_2)\}.
\]
Family $g$ already receives its most-preferred assignment. The only
profitable deviation of $f$ is $W_f$, but
\[
W_f\not\subseteq Ch_D(Y\cup W_f),
\]
because $a_2$ is rejected at $d_2$ in favor of $b_2$. Hence, $f$ does
not form a blocking coalition, and $Y$ is fair.

However, $Y$ is not SFMR. Since
$a_1\succ_{d_1}b_1$, we have $g\in L_f(W_f,Y)$. Taking $G=\{g\}$,
\[
W_f\subseteq Ch_D(Y\setminus Y_G\cup W_f),
\qquad
Y\setminus Y_G\cup(W_f\cup\omega_G)=W_f,
\]
where the latter matching is feasible because $\omega_g=\emptyset$.
Moreover, $f\rhd g$. Thus, $f$ has weakly justified envy by master list
and resettlement, so $Y$ violates SFMR.

\medskip
\noindent
\textbf{(ii) SFMR does not imply fairness.}
Consider two singleton families and one capacity-one daycare $d$, again
with no initial enrollment:
\[
\begin{array}{c@{\qquad}c}
C_f=(a)
&
(a,d)\succ_f\emptyset
\\[1mm]
C_g=(b)
&
(b,d)\succ_g\emptyset
\\[1mm]
a\succ_d b
&
g\rhd f.
\end{array}
\]
Consider
\[
Y=\{(b,d)\}.
\]
Family $f$ can profitably deviate to $W=\{(a,d)\}$. Since
\[
W\subseteq Ch_D(Y\cup W)
\]
and $(b,d)$ is rejected, $f$ has justified envy. Hence, $Y$ is not fair.

On the other hand,
\[
L_f(W,Y)=\{g\}.
\]
Thus, the only possible nonempty displacement set is $G=\{g\}$, but the
master-list condition fails because $g\rhd f$. Family $g$ has no
profitable deviation. Hence, $Y$ satisfies SFMR.

Therefore, fairness and SFMR are incomparable.
\end{proof}

\begin{theorem}
% [Existence of NW and SFMR Matchings]
\label{theo:existence:SFMR}
For every instance with a strict master list, there always exists an individually rational matching that is both non-wasteful and strongly fair
by master list and resettlement.
\end{theorem}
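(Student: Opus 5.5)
The plan is to show that Algorithm~\ref{alg:sfmr-greedy} always terminates and that its output is individually rational, non-wasteful, and SFMR. The structure follows the proof of Theorem~\ref{theo:existence:FMR&NW}, with the rejected-family set $F(\mathrm{Re})$ replaced by an arbitrary admissible displacement set $G\subseteq L_f(W,Y)$.

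\textbf{Invariant.} First I would check that feasibility and individual rationality hold throughout. They hold initially, since $Y=\omega$ is feasible (as the paper assumes) and individually rational.

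For a non-wasteful update, the no-rejection condition $Y\setminus Y_f\cup W=\Ch_D(Y\setminus Y_f\cup W)$ gives feasibility. Only $f$ changes, and it strictly improves.

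For an SFMR update with set $G$, Condition~(4) of Definition~\ref{def:SFMR} states exactly that $Y\setminus(Y_f\cup Y_G)\cup(W\cup\omega_G)$ is feasible. In this update:
\begin{itemize}
\item $f$ strictly improves;
\item every $g\in G$ returns to $\omega_g$, which is weakly acceptable to $g$ by definition;
\item all other families are unchanged.
\end{itemize}
Daycare acceptability is automatic, because every contract in $\mathcal{X}$ is acceptable.

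\textbf{Termination.} I would use the lexicographic potential from the proof of Theorem~\ref{theo:existence:FMR&NW}: order the profile $(Y_{f_1},\dots,Y_{f_m})$ along $\rhd$ and compare profiles lexicographically by family preferences.

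In a non-wasteful update only $f_i$ changes, and it improves. In an SFMR update, Condition~(3) gives $f_i\rhd g$ for every $g\in G$, so families $f_1,\dots,f_{i-1}$ are untouched and $f_i$ strictly improves. Hence the profile strictly increases after every update.

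The number of profiles is finite, so the algorithm halts. Each iteration is a finite computation, since the search over subsets $G$ is finite (possibly exponential, but that does not matter for existence).

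\textbf{Properties at termination.} The algorithm stops only after a full pass $i=1,\dots,m$ with no update. Suppose the final $Y$ were wasteful via $(f,W)$, or some $f$ had weakly justified envy via $(W,G)$. Then, when $f$ was examined in that last pass, the loop over $W\succ_f Y_f$ would have reached this $W$ (or an earlier successful one). Either the first test or the \texttt{ElseIf} test would have fired, which is a contradiction.

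A subtle point is that the loop breaks at the first $W$ that succeeds by either test. This is harmless: we only need that some improvement is found whenever one exists, not that the witnessing $W$ is the one implemented.

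\textbf{Main obstacle.} The potential argument needs every displaced family to lie strictly below $f_i$ in $\rhd$, and this is exactly what Condition~(3) guarantees. It also needs $f_i$ itself never to appear in $G$, which holds because $L_f(W,Y)\subseteq F\setminus\{f\}$ by Definition~\ref{def:dominated_families}.

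The remaining worry is the update formula when a family $g\in G$ has children whose contracts are not dominated. The update removes all of $Y_g$ and installs $\omega_g$, and Condition~(4) asserts feasibility of precisely this set, so no extra argument is needed.

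I would close by noting that, by Proposition~\ref{prop:connection:sfmr-implies-fmr}, the output is also FMR.
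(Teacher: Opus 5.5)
Your proposal is correct and follows essentially the same route as the paper. Both run Algorithm~\ref{alg:sfmr-greedy}, show that the two update types preserve feasibility and individual rationality (via the no-rejection condition and Condition~(4)), prove termination with the master-list lexicographic potential using Condition~(3), and derive non-wastefulness and SFMR from the final update-free scan.
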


\begin{proof}
We prove the theorem by showing that
Algorithm~\ref{alg:sfmr-greedy} preserves feasibility and individual
rationality, terminates after finitely many updates, and, upon termination,
returns a matching satisfying both non-wastefulness and strong fairness by
master list and resettlement.

\paragraph{Feasibility and individual rationality.}
The algorithm starts from the initial matching $Y=\omega$, which is
feasible and individually rational. We show that both types of updates
preserve these properties.

Suppose first that family $f$ claims unused seats through an assignment
$W$. By Definition~\ref{def:nonwastefulness},
\[
Y\setminus Y_f\cup W
=
Ch_D(Y\setminus Y_f\cup W).
\]
Hence, the updated matching is feasible. Moreover,
$W\succ_f Y_f$, while the assignments of all other families remain
unchanged. Since the current matching is individually rational before the
update, individual rationality is preserved.

Suppose instead that family $f$ has weakly justified envy by master list
and resettlement through an assignment $W$ and a nonempty set
$G\subseteq L_f(W,Y)$. By Condition~(4) of
Definition~\ref{def:SFMR}, the updated matching
\[
Y\setminus(Y_f\cup Y_G)\cup(W\cup\omega_G)
\]
is feasible. Family $f$ strictly improves from $Y_f$ to $W$, every family
in $G$ is returned to its initial assignment, and every other family
retains its current assignment. Thus, individual rationality is again
preserved.

\paragraph{Termination.} Consider the assignment profile $(Y_{f_1},Y_{f_2},\ldots,Y_{f_m}),$ where $f_1 \rhd  f_2 \rhd \cdots \rhd f_m.$ We compare two assignment profiles lexicographically according to the master list and the families' preferences: at the highest-ranked family whose assignments differ, the profile assigning that family its more preferred assignment is regarded as larger.

Consider any update performed by family $f_i$. If $f_i$ claims unused seats, then the assignments of all families $f_1, \ldots, f_{i-1}$ remain unchanged, while $f_i$ strictly improves.
Therefore, the assignment profile strictly increases in the
lexicographic order.
If $f_i$ instead exercises weakly justified envy by master list and resettlement through some nonempty set $G$, Condition~(3) of
Definition~\ref{def:SFMR} implies that
$f_i \rhd g$ for every $g\in G$. Hence, every family in $G$ appears strictly below $f_i$ in the master list. Consequently, the assignments of $f_1, \ldots, f_{i-1}$ remain unchanged, $f_i$ strictly improves, and only families ranked below $f_i$ may be returned to their initial assignments. Thus, the first coordinate of the assignment profile that changes is the one corresponding to $f_i$, and this coordinate strictly improves. Therefore, the assignment profile again strictly increases lexicographically.

Since every family has a finite preference ordering, there are only finitely many possible assignment profiles. As every update strictly increases the assignment profile in the lexicographic order, no assignment profile can be revisited. Hence, the algorithm must terminate after finitely many updates.

\paragraph{Non-wastefulness and strong fairness.}
Upon termination, the algorithm has completed a full scan of the master list without performing an update.
Suppose, for contradiction, that the final matching $Y$ is wasteful. Then there exist a family $f$ and an assignment $W\succ_f Y_f$ through which $f$ can claim unused seats. When $f$ was examined during the final scan, the first update condition of Algorithm~\ref{alg:sfmr-greedy} would have been satisfied, and the algorithm would have performed another update. This contradicts termination. Hence, $Y$ is non-wasteful.

Next, suppose that $Y$ is not strongly fair by master list and
resettlement. By 
Definition~\ref{def:SFMR}, there exist a family $f$, an assignment $W\succ_f Y_f$, and a nonempty set
$G\subseteq L_f(W,Y)$ satisfying Conditions~(2)--(4) of the definition. When $f$ was examined during the final scan, the second update condition of Algorithm~\ref{alg:sfmr-greedy} would therefore have been satisfied, again contradicting termination. Hence, no family has weakly justified envy by master list and resettlement, and $Y$ is strongly fair by master list and resettlement.

Therefore, Algorithm~\ref{alg:sfmr-greedy} terminates at a feasible and individually rational matching that is both non-wasteful and strongly fair by master list and resettlement.
\end{proof}

Algorithm~\ref{alg:sfmr-greedy} establishes the existence of a
non-wasteful and strongly fair matching, but does not necessarily yield a
polynomial-time algorithm. In particular, for a given family $f$ and
proposed assignment $W$, determining whether there exists a nonempty set
$G\subseteq L_f(W,Y)$ satisfying Conditions~(2)--(4) of
Definition~\ref{def:SFMR} may itself be
computationally hard, as shown by the hardness result below.

% \newpage
%-------------------------------------------
% SubsubSection : Computational Complexity of SFMR
%-------------------------------------------
\subsubsection{Computational Complexity of Problems Related to SFMR}
\label{subsection:complexity:SFMR}

Unlike the other fairness notions considered above, which can be verified
in polynomial time, verifying SFMR is computationally intractable.
In particular, we show that detecting a violation of SFMR is NP-complete,
and hence verifying whether a given matching satisfies SFMR is
coNP-complete.

\begin{theorem}
\label{theo:verification:SFMR_NW}
Given a matching and a family, deciding whether the family has weakly justified envy by master list and resettlement is NP-complete, even when
every family has at most three children and daycare priorities are master-list consistent up to initial enrollments. Moreover, the
potentially envious family has only one child.
Consequently, deciding whether a given matching satisfies Strong
Fairness by Master List and Resettlement (SFMR) is coNP-complete.
\end{theorem}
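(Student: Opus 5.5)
Membership in NP is the easy half, and I would do it first. A certificate consists of the proposed assignment $W$ (an entry of the explicitly listed preference order of $f$) and a set $G$ of families. Each of the following checks is polynomial:
\begin{itemize}
\item whether $G\subseteq L_f(W,Y)$, by pairwise priority comparisons at the daycares of $W$;
\item Conditions~(1) and~(3) of Definition~\ref{def:SFMR}, by reading off the preference list and the master list;
\item Condition~(2), by one evaluation of $\Ch_D$, as in Theorem~\ref{theo:check:envy:replacement};
\item Condition~(4), by counting contracts per daycare.
\end{itemize}
The coNP statement then follows directly. $Y$ satisfies SFMR if and only if no family has weakly justified envy. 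Individual rationality is polynomially checkable, and the complement of SFMR is a polynomial union over families of the NP-complete problem. Hence it suffices to prove NP-hardness of the single-family question.

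For hardness, the single-child envious family $f$ has one child $c$. It proposes $W=\{(c,d^*)\}$ for a daycare $d^*$ that is full in $Y$. Every occupant of $d^*$ has lower priority than $c$, and $f$ is placed above all other families in the master list. With these choices, Conditions~(1)--(3) hold for every nonempty $G\subseteq L_f(W,Y)$. The whole difficulty is therefore pushed into Condition~(4): choosing a nonempty set of dominated families whose simultaneous return to $\omega_G$ fits into $Y\setminus(Y_f\cup Y_G)\cup W$.

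The key combinatorial effect I would exploit is this. Returning $g$ to $\omega_g$ needs seats at $g$'s initial daycares. Those seats are freed only if the families currently occupying them are themselves displaced. Displacement requires domination, and every family in $L_f(W,Y)$ must have a child at $d^*$. I therefore give each gadget family a child at $d^*$, which fixes $L_f(W,Y)$ to be all gadget families. Their remaining (at most two) children sit at ``resource'' daycares of capacity one, and their initial enrollments lie at other resource daycares.

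I would reduce from 3-SAT, or from Exact Cover by 3-Sets. The intended gadgets are:
\begin{itemize}
\item variable gadgets, where the displacement set must contain exactly one of two literal families; if it contained both, their initial seats would collide at a shared unit-capacity daycare;
\item clause gadgets, where a clause family's initial seat is freed only if some family corresponding to a true literal is displaced;
\item a ``trigger'' structure, arranged as a cycle of requirements, that forces any feasible nonempty $G$ to contain all clause families.
\end{itemize}
A feasible $G$ should then correspond exactly to a satisfying assignment.

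Priorities are set to follow the master list except that incumbents are moved to the top. This keeps the instance master-list consistent up to initial enrollments, and it also holds at $d^*$, where no child is initially enrolled. I also need $\omega$ to be feasible and $Y$ to be individually rational; for the latter I give each family the preference $Y_g\succ_g\omega_g$.

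The main obstacle is making Condition~(4) encode exactly the intended logic with at most three children per family. Each family spends one child on $d^*$, leaving only two children for its current and initial resource seats. I must also ensure that no unintended $G$ becomes feasible. In particular:
\begin{itemize}
\item small sets such as a single family must fail;
\item freeing a seat at $d^*$ must never accidentally accommodate an initial enrollment there, so no family is initially enrolled at $d^*$.
\end{itemize}
I would first try the cycle-forcing construction: order the gadget families so that each one's initial seat is currently held by the next one. The only feasible displacement sets are then unions of full closed cycles. I would then attach the variable and clause choices as branchings on these cycles, verifying soundness by a case analysis on which seats remain free.
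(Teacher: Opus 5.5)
Your NP-membership argument matches the paper, and so does the way you make Conditions~(1)--(3) automatic: a one-child $f$ at the top of $\rhd$ proposes a full $d^*$ at which its child outranks every occupant. The gap is in the gadget layer. If every resource daycare has capacity one, Condition~(4) can only express implications, and the resulting problem is polynomial. Take a unit-capacity daycare whose initial occupant belongs to $g$ and whose current occupant belongs to $h\notin\{f,g\}$. Feasibility after resettlement there says exactly $g\in G\Rightarrow h\in G$, so $g\notin G$ whenever $h\notin L_f(W,Y)$. At $d^*$ it says only $G\neq\emptyset$. Hence a valid $G$ exists if and only if, for some single $g\in L_f(W,Y)$, the set of families reachable from $g$ along these implications stays inside $L_f(W,Y)$, which is a graph search. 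Consistently with this, your gadgets cannot be realized. Two families' initial seats can never collide at a unit-capacity daycare, because $\omega$ is feasible. A seat held by one child cannot be freed by displacing \emph{some} of several families. Cycles and branchings of implications only produce conjunctions, never the ``exactly one'' choice or the clause disjunction you need.

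The missing ingredient is daycares of capacity at least two, which yield non-Horn constraints; the paper uses two kinds.
\begin{itemize}
\item The chain daycare $r_{i+1}$ has capacity $2$, currently holds both chain children of $v_{i+1}$, and initially holds both chain children of $v_i$. Writing $x_i=|G\cap\{f_i^T,f_i^F\}|$, feasibility gives $x_i+(2-x_{i+1})\le 2$. Arranging these cyclically forces all $x_i$ to be equal, and $G\neq\emptyset$ makes them at least $1$.
\item The clause daycare $e_j$ has capacity $3$ and currently holds a dummy child $z_j$ whose family is not in $L_{f^*}(W,Y)$. Initially it holds one clause child per literal of $\mathcal C_j$, owned by the family whose displacement makes that literal false. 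This encodes ``not all three literals are false'' and also excludes $x_i=2$.
\end{itemize}
Feasible sets $G$ then correspond exactly to satisfying assignments of a $(2,2)$-E3-SAT formula.

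Finally, for the coNP consequence it is not enough that the single-family question is NP-hard. The constructed $Y$ must also admit no violation by any other family. The paper ensures this by giving every variable and dummy family its top choice in $Y$; your preference $Y_g\succ_g\omega_g$ alone does not guarantee it.
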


\begin{figure*}[tb]
\centering
\begin{tikzpicture}

% ---------- styles ----------
\tikzstyle{panel}=[draw=gray!55,rounded corners=4pt]
\tikzstyle{familybox}=[draw=gray!65,dashed,rounded corners=4pt]
\tikzstyle{chainchild}=[circle,draw=red!70!black,fill=red!12,minimum size=6mm,inner sep=0pt,font=\small]
\tikzstyle{negchild}=[circle,draw=blue!70!black,fill=blue!12,minimum size=6mm,inner sep=0pt,font=\small]
\tikzstyle{poschild}=[circle,draw=green!60!black,fill=green!12,minimum size=6mm,inner sep=0pt,font=\small]
\tikzstyle{dummychild}=[circle,draw=gray!70!black,fill=gray!15,minimum size=6mm,inner sep=0pt,font=\small]
\tikzstyle{specialchild}=[circle,draw=purple!70!black,fill=purple!12,minimum size=6mm,inner sep=0pt,font=\small]
\tikzstyle{daycare}=[draw=blue!60!black,fill=blue!7,rounded corners=3pt,minimum height=12mm,align=center]
\tikzstyle{proposal}=[->,very thick,draw=purple!75!black]
\tikzstyle{note}=[align=center,font=\small]

% ---------- title ----------
% \node[fill=blue!65!black,text=white,minimum width=16.8cm,minimum height=6mm,font=\large\bfseries]
% at (0,5.65) {Reduction from $(2,2)$-\textsc{E3-SAT}};

% ============================================================
% (a) VARIABLE FAMILIES
% ============================================================
\node[panel,draw=red!35,fill=red!2,minimum width=5.8cm,minimum height=5.0cm]
at (-5.65,2.65) {};
\node[font=\bfseries] at (-5.65,4.65) {Variable families for \(v_i\)};

\node[familybox,minimum width=2.55cm,minimum height=2.75cm] at (-7.05,2.75) {};
\node[font=\bfseries] at (-7.05,3.85) {\(f_i^T\)};
\node[font=\scriptsize,text=red!70!black] at (-7.05,3.48) {\(v_i=\mathrm{True}\)};
\node[chainchild] at (-7.85,2.75) {\(a_i^T\)};
\node[negchild] at (-6.95,2.75) {\(t_i^1\)};
\node[negchild] at (-6.05,2.75) {\(t_i^2\)};
\node[font=\scriptsize,align=center] at (-7.85,1.95) {chain\\child};
\node[font=\scriptsize,align=center] at (-6.50,1.95) {clause children\\for \(\bar v_i\)};

\node[familybox,minimum width=2.55cm,minimum height=2.75cm] at (-4.25,2.75) {};
\node[font=\bfseries] at (-4.25,3.85) {\(f_i^F\)};
\node[font=\scriptsize,text=green!50!black] at (-4.25,3.48) {\(v_i=\mathrm{False}\)};
\node[chainchild] at (-5.05,2.75) {\(a_i^F\)};
\node[poschild] at (-4.15,2.75) {\(s_i^1\)};
\node[poschild] at (-3.25,2.75) {\(s_i^2\)};
\node[font=\scriptsize,align=center] at (-5.05,1.95) {chain\\child};
\node[font=\scriptsize,align=center] at (-3.70,1.95) {clause children\\for \(v_i\)};

% ============================================================
% (b) CHAIN DAYCARES
% ============================================================
\node[panel,draw=blue!35,fill=blue!2,minimum width=5.5cm,minimum height=5.0cm]
at (0.15,2.65) {};
\node[font=\bfseries] at (0.15,4.65) {Chain daycares \(r_i\) (capacity 2)};
\node[anchor=west,font=\small\bfseries] at (-2.30,4.10) {Current matching \(Y\):};

% current row
\foreach \x/\lab/\one/\two in {
-1.45/r_1/a_1^T/a_1^F,
0.15/r_2/a_2^T/a_2^F,
1.95/r_n/a_n^T/a_n^F}{
  \node[daycare,minimum width=1.45cm,minimum height=1.15cm] at (\x,3.25) {};
  \node[chainchild] at ({\x-0.35},3.25) {\scriptsize \(\one\)};
  \node[chainchild] at ({\x+0.35},3.25) {\scriptsize \(\two\)};
  \node[font=\scriptsize] at (\x,2.85) {\(\lab\)};
}
\node at (1.10,3.25) {\(\cdots\)};

\node[anchor=west,font=\small\bfseries] at (-2.30,2.15) {Initial assignment \(\omega\):};

% initial row
\node[daycare,minimum width=1.45cm,minimum height=1.15cm] at (-1.45,1.30) {};
\node[chainchild] at (-1.80,1.30) {\scriptsize \(a_n^T\)};
\node[chainchild] at (-1.10,1.30) {\scriptsize \(a_n^F\)};
\node[font=\scriptsize] at (-1.45,0.9) {\(r_1\)};

\node[daycare,minimum width=1.45cm,minimum height=1.15cm] at (0.15,1.30) {};
\node[chainchild] at (-0.20,1.30) {\scriptsize \(a_1^T\)};
\node[chainchild] at (0.50,1.30) {\scriptsize \(a_1^F\)};
\node[font=\scriptsize] at (0.15,0.9) {\(r_2\)};

\node at (1.10,1.30) {\(\cdots\)};

\node[daycare,minimum width=1.45cm,minimum height=1.15cm] at (1.95,1.30) {};
\node[chainchild] at (1.60,1.30) {\scriptsize \(a_{n-1}^T\)};
\node[chainchild] at (2.30,1.30) {\scriptsize \(a_{n-1}^F\)};
\node[font=\scriptsize] at (1.95,0.87) {\(r_n\)};

\node[note,text=blue!65!black,font=\scriptsize] at (0.15,0.38)
{\(a_i^T,a_i^F\) return to \(r_{i+1}\) under resettlement};

% ============================================================
% (c) CLAUSE DAYCARE
% ============================================================
\node[panel,draw=green!40,fill=green!2,minimum width=5.3cm,minimum height=5.0cm]
at (5.70,2.65) {};
\node[font=\bfseries] at (5.70,4.65) {Clause daycare \(e_j\) (capacity 3)};
\node[font=\small] at (5.70,4.05) {\(\mathcal C_j=(\ell_{j1}\vee\ell_{j2}\vee\ell_{j3})\)};
\node[daycare,minimum width=4.25cm,minimum height=1.3cm] at (5.70,2.75) {};
\node[dummychild] at (4.30,2.75) {\(z_j\)};
% \node[font=\scriptsize,align=center] at (4.30,2) {dummy child\\in \(Y\)};
\node[negchild] at (5.25,2.75) {\(\ell_{j1}\)};
\node[poschild] at (6.15,2.75) {\(\ell_{j2}\)};
\node[negchild] at (7.05,2.75) {\(\ell_{j3}\)};
\node[font=\small,align=center] at (5.8,1.5)
{a clause child returns through \(\omega\)\\iff the corresponding literal is false};
\node[note,font=\small\bfseries,text=green!45!black] at (5.70,0.75)
{\(1+\#\{\text{false literals in }\mathcal C_j\}\leq3\)};

% ============================================================
% ELIGIBILITY DAYCARE d*
% ============================================================
\node[panel,draw=orange!45,fill=orange!3,minimum width=11.8cm,minimum height=3.2cm]
at (-2.65,-1.75) {};
\node[anchor=west,font=\bfseries] at (-8.05,-0.55)
{Eligibility daycare \(d^*\) (capacity \(4n\))};
\node[anchor=west,font=\small] at (-8.05,-1.05)
{Current matching \(Y\): all clause children};

\node[daycare,minimum width=9cm,minimum height=1.25cm] (dstar) at (-2.60,-1.95) {};
% \node[font=\bfseries] at (-2.10,-1.35) {\(d^*\)};

\node[negchild] at (-6.05,-1.95) {\scriptsize \(t_1^1\)};
\node[negchild] at (-5.15,-1.95) {\scriptsize \(t_1^2\)};
\node[poschild] at (-4.25,-1.95) {\scriptsize \(s_1^1\)};
\node[poschild] at (-3.35,-1.95) {\scriptsize \(s_1^2\)};
\node at (-2.35,-1.95) {\(\cdots\)};
\node[negchild] at (-1.35,-1.95) {\scriptsize \(t_n^1\)};
\node[negchild] at (-0.45,-1.95) {\scriptsize \(t_n^2\)};
\node[poschild] at (0.45,-1.95) {\scriptsize \(s_n^1\)};
\node[poschild] at (1.35,-1.95) {\scriptsize \(s_n^2\)};
\node[note,font=\small] at (-2.60,-2.85)
{\(d^*\) is full in \(Y\) with all \(4n\) clause children};

% ============================================================
% SPECIAL FAMILY f*
% ============================================================
\node[panel,draw=purple!45,fill=purple!3,minimum width=4.6cm,minimum height=3.2cm]
at (6.0,-1.75) {};
\node[font=\bfseries] at (6.15,-0.65) {Special family \(f^*\)};
\node[specialchild] (astar) at (4.85,-1.55) {\(a^*\)};
\node[note] at (6.2, -1.9) {one child,\\currently unmatched};
\node[note,text=purple!70!black] at (6.15,-2.75) {\(W=\{(a^*,d^*)\}\)};
\draw[proposal] (astar) to[out=180,in=0] (dstar);

% % ============================================================
% % MASTER LIST
% % ============================================================
% \node[panel,draw=cyan!45!black,fill=cyan!3,minimum width=9.6cm,minimum height=2.5cm]
% at (-3.65,-4.75) {};
% \node[anchor=west,font=\bfseries] at (-8.05,-3.95) {Master list};
% \node[note] at (-3.65,-4.60)
% {\(f^*\rhd f_1^T\rhd f_1^F\rhd f_2^T\rhd f_2^F
% \rhd\cdots\rhd f_n^T\rhd f_n^F\rhd h_1\rhd\cdots\rhd h_m\)};
% \node[note,font=\small,text width=8.7cm] at (-3.65,-5.35)
% {Initially enrolled children are ranked above all non-incumbents.
% Among non-incumbents, priorities follow the master list.};

% % ============================================================
% % KEY IDEAS
% % ============================================================
% \node[panel,draw=purple!40,fill=purple!3,minimum width=6.8cm,minimum height=2.5cm]
% at (4.85,-4.75) {};
% \node[anchor=west,font=\bfseries] at (1.8,-3.85) {Key ideas};
% \node[anchor=west,font=\small,align=left] at (1.80,-4.95)
% {1.\ \(G\) encodes a truth assignment.\\[1mm]
% 2.\ Chain daycares force one choice per variable.\\[1mm]
% 3.\ Clause capacities encode clause satisfaction.\\[1mm]
% 4.\ Displacing \(G\) makes \(a^*\) acceptable at \(d^*\).};

% ============================================================
% LEGEND
% ============================================================
\node[
    panel,
    minimum width=16.8cm,
    minimum height=1.4cm
] at (-0.2, -4.35) {};

% \node[
%     anchor=west,
%     font=\small\bfseries
% ] at (-7.65,-6.85)
% {Legend:};

\node[chainchild] at (-7.8,-4.35)
{\(a\)};

\node[
    anchor=west,
    font=\small
] at (-7.3,-4.35)
{chain child};

\node[negchild] at (-5.1,-4.35)
{\(t\)};

\node[
    anchor=west,
    font=\small
] at (-4.6,-4.35)
{negative-literal child};

\node[poschild] at (-1.1,-4.35)
{\(s\)};

\node[
    anchor=west,
    font=\small
] at (-0.6,-4.35)
{positive-literal child};

\node[dummychild] at (2.85,-4.35)
{\(z\)};

\node[
    anchor=west,
    font=\small
] at (3.35,-4.35)
{dummy child};

\node[specialchild] at (5.8,-4.35)
{\(a^*\)};

\node[
    anchor=west,
    font=\small
] at (6.3,-4.35)
{special child};

\end{tikzpicture}

\caption{
Illustration of the reduction from $(2,2)$-\textsc{E3-SAT}.
The variable families encode truth values; the cyclic chain daycares
enforce a consistent choice for every variable; the clause daycares
encode clause satisfaction through resettlement feasibility; and the
special family $f^*$ proposes $W=\{(a^*,d^*)\}$ at the eligibility
daycare $d^*$.
}
\label{fig:SFMR-hardness}

\end{figure*}

\begin{proof}
We first show that the problem belongs to NP.
Given a proposed assignment $W$ for a family $f^*$ and a nonempty set
of families $G$, we can compute $L_{f^*}(W,Y)$ and verify in polynomial
time whether
\[
G\subseteq L_{f^*}(W,Y)
\]
and whether Conditions~(1)--(4) of Definition~\ref{def:SFMR} are
satisfied. Hence, deciding whether a given family has weakly justified
envy by master list and resettlement belongs to NP.

We next prove NP-hardness by a polynomial-time reduction from
\emph{$(2,2)$-E3-SAT}, which is NP-complete~\citep{BKS03a}.
An instance consists of a Boolean formula
\[
B=\mathcal C_1\land\cdots\land\mathcal C_m
\]
over variables
\[
V=\{v_1,\ldots,v_n\},
\]
where every clause contains exactly three literals and each of
$v_i$ and $\bar v_i$ occurs exactly twice.

\paragraph{Variable families.}
For every variable $v_i$, create two families
$f_i^T$ and $f_i^F$.
Intuitively, selecting $f_i^T$ for displacement corresponds to setting
$v_i$ to true, whereas selecting $f_i^F$ corresponds to setting $v_i$
to false.

Family $f_i^T$ has three children
\[
C(f_i^T)=\{a_i^T,t_i^1,t_i^2\},
\]
and family $f_i^F$ has three children
\[
C(f_i^F)=\{a_i^F,s_i^1,s_i^2\}.
\]
We call $a_i^T$ and $a_i^F$ the \emph{chain children}, and
$t_i^1,t_i^2,s_i^1,s_i^2$ the \emph{clause children}.

The children $t_i^1,t_i^2$ correspond to the two occurrences of the
negative literal $\bar v_i$, while $s_i^1,s_i^2$ correspond to the two
occurrences of the positive literal $v_i$.
Thus, when $f_i^T$ is displaced and resettled, its clause children
return precisely to the clauses in which setting $v_i$ to true makes
the corresponding negative literal false.
Similarly, when $f_i^F$ is displaced and resettled, its clause children
return precisely to the clauses in which setting $v_i$ to false makes
the corresponding positive literal false.

\paragraph{Eligibility daycare.}
Create a daycare $d^*$ with capacity $4n$.
In the current matching $Y$, assign all clause children
\[
t_i^1,t_i^2,s_i^1,s_i^2
\qquad
(i=1,\ldots,n)
\]
to $d^*$.
Hence, $d^*$ is full in $Y$.

\paragraph{Variable-chain daycares.}
For every variable $v_i$, create a daycare $r_i$ of capacity $2$.
In the current matching $Y$, assign
\[
(a_i^T,r_i),\qquad (a_i^F,r_i).
\]
Thus, every $r_i$ is full.

For every $i=1,\ldots,n$, with indices taken modulo $n$, define
\[
(a_i^T,r_{i+1})\in\omega_{f_i^T},
\qquad
(a_i^F,r_{i+1})\in\omega_{f_i^F},
\]
where $r_{n+1}:=r_1$.
Thus, under the initial assignment, every daycare $r_i$ contains
exactly the two chain children associated with variable $v_{i-1}$,
whereas in the current matching it contains exactly $a_i^T$ and
$a_i^F$. Both assignments are feasible.

\paragraph{Clause daycares.}
For every clause $\mathcal C_j$, create a daycare $e_j$ of capacity $3$
and a dummy family $h_j$ with one child $z_j$.
In the current matching,
\[
(z_j,e_j)\in Y,
\]
and $z_j$ has no initial enrollment.

For every occurrence of a literal in clause $\mathcal C_j$, assign a
corresponding clause child to $e_j$ in the initial assignment as follows.
If the $r$-th occurrence of the positive literal $v_i$ appears in
$\mathcal C_j$, then
\[
(s_i^r,e_j)\in\omega_{f_i^F}.
\]
If the $r$-th occurrence of the negative literal $\bar v_i$ appears in
$\mathcal C_j$, then
\[
(t_i^r,e_j)\in\omega_{f_i^T}.
\]
Thus, for each literal in $\mathcal C_j$, the family selected when that
literal is false has one clause child initially assigned to $e_j$.
Since $\mathcal C_j$ contains exactly three literals, exactly three
clause children are initially assigned to $e_j$, and hence the initial
assignment is feasible at every clause daycare.

\paragraph{Special family.}
Create a special family $f^*$ with a single child $a^*$, currently
unmatched, and let
\[
W=\{(a^*,d^*)\}.
\]
At daycare $d^*$, child $a^*$ has strictly higher priority than every
clause child currently assigned there.
Consequently,
\[
f_i^T,f_i^F\in L_{f^*}(W,Y)
\qquad
\text{for every }i.
\]

\paragraph{Preferences, priorities, and master list.}
Let the strict master list be
\[
f^*
\rhd
f_1^T\rhd f_1^F
\rhd
f_2^T\rhd f_2^F
\rhd\cdots\rhd
f_n^T\rhd f_n^F
\rhd
h_1\rhd\cdots\rhd h_m.
\]
For every variable family $f_i^T$ and $f_i^F$, its current assignment
is its top choice, thus strictly preferred to its initial assignment, and every assignment
below its initial assignment is unacceptable.
Each dummy family $h_j$  prefers its current (and only acceptable) assignment
$(z_j,e_j)$ to being unmatched.
The special family $f^*$ has $W$ as its unique assignment strictly
preferred to its current and initial assignment $\emptyset$.

At every daycare, all initially enrolled children are ranked above all
non-incumbent children.
Among the remaining children, priorities follow the family master list
$\rhd$, with arbitrary strict tie-breaking between children of the same
family.
In particular, $d^*$ has no initially enrolled child, and therefore
$a^*$ is ranked above every clause child assigned there in $Y$.
At each clause daycare $e_j$, its three initially enrolled clause
children are ranked above the non-incumbent dummy child $z_j$.
At each chain daycare $r_i$, the two chain children initially enrolled
there, namely those associated with variable $v_{i-1}$, are ranked
above all non-incumbents, and the latter are ordered according to
$\rhd$.
Hence, all daycare priority lists are master-list consistent up to
initial enrollments.

The construction is polynomial in the size of $B$, every family has at
most three children, and the potentially envious family $f^*$ has only
one child.

\paragraph{Correctness.}
We show that $B$ is satisfiable if and only if $f^*$ has weakly
justified envy by master list and resettlement.

\smallskip
\noindent
$(\Rightarrow)$
Suppose that $B$ admits a satisfying truth assignment.
Define
\[
G
=
\{f_i^T : v_i=\mathrm{True}\}
\cup
\{f_i^F : v_i=\mathrm{False}\}.
\]
Thus, for every variable $v_i$, exactly one of $f_i^T$ and $f_i^F$
belongs to $G$.

By construction,
\[
W\succ_{f^*}Y_{f^*},
\]
so Condition~(1) holds.

At $d^*$, child $a^*$ has higher priority than every currently assigned
clause child.
Moreover, removing the assignments of the families in $G$ removes
clause children from $d^*$.
Hence $a^*$ is accepted, and
\[
W
\subseteq
Ch_D\bigl(
Y\setminus(Y_{f^*}\cup Y_G)\cup W
\bigr).
\]
Thus, Condition~(2) holds.

Every family in $G$ belongs to $L_{f^*}(W,Y)$ because it has clause
children assigned to $d^*$ that have lower priority than $a^*$.
Moreover,
\[
f^*\rhd g
\qquad
\text{for every }g\in G.
\]
Hence, Condition~(3) holds.

It remains to verify resettlement feasibility.
Consider first a chain daycare $r_i$.
Exactly one of the two families associated with variable $v_{i-1}$
belongs to $G$, so exactly one chain child of $v_{i-1}$ returns to
$r_i$ through resettlement.
At the same time, exactly one of $a_i^T$ and $a_i^F$ remains at $r_i$
in the current matching.
Hence, $r_i$ contains exactly two children after resettlement.

Next consider a clause daycare $e_j$.
The dummy child $z_j$ remains assigned there.
A clause child returns to $e_j$ under resettlement precisely when the
corresponding literal of $\mathcal C_j$ is false under the chosen truth
assignment.
Since the truth assignment satisfies $\mathcal C_j$, at most two of
its three literals are false.
Therefore, at most two clause children return to $e_j$.
Together with $z_j$, daycare $e_j$ contains at most three children,
which respects its capacity.

Finally, at $d^*$, every family in $G$ removes two clause children,
whereas only $a^*$ is added.
Since $G$ is nonempty, capacity is respected.

Therefore,
\[
Y\setminus(Y_{f^*}\cup Y_G)
\cup
(W\cup\omega_G)
\]
is feasible, and Condition~(4) holds.
Hence, $f^*$ has weakly justified envy by master list and resettlement.

\smallskip
\noindent
$(\Leftarrow)$
Conversely, suppose that $f^*$ has weakly justified envy by master list
and resettlement through $W$ and some nonempty set $G$.

By construction, the only families in $L_{f^*}(W,Y)$ are the variable
families $f_i^T$ and $f_i^F$.
Hence, $G$ contains only variable families.

For every variable $v_i$, define
\[
x_i
=
|G\cap\{f_i^T,f_i^F\}|.
\]
Thus,
\[
x_i\in\{0,1,2\}.
\]
Since $G$ is nonempty, there exists some $j$ with $x_j\geq 1$.

For every $i$, with indices taken modulo $n$, consider daycare
$r_{i+1}$.
In the resettled matching, $x_i$ chain children associated with
variable $v_i$ return to $r_{i+1}$, while
$2-x_{i+1}$ of the two currently assigned chain children
$a_{i+1}^T,a_{i+1}^F$ remain there.
Hence, feasibility gives
\[
x_i+(2-x_{i+1})\leq 2,
\]
and therefore
\[
x_{i+1}\geq x_i.
\]
Applying these inequalities cyclically yields
\[
x_1=x_2=\cdots=x_n.
\]
Since $G\neq\emptyset$, their common value is either $1$ or $2$.

The common value cannot be $2$.
Indeed, in that case every variable family belongs to $G$.
Consequently, at every clause daycare $e_j$, all three clause children
corresponding to the three literals of $\mathcal C_j$ return through
resettlement.
Together with the dummy child $z_j$, this would give four children at
the capacity-three daycare $e_j$, contradicting Condition~(4).
Hence,
\[
x_i=1
\qquad
\text{for every }i=1,\ldots,n.
\]

Therefore, exactly one of $f_i^T$ and $f_i^F$ belongs to $G$ for every
variable.
Define a truth assignment by
\[
v_i=\mathrm{True}
\quad\Longleftrightarrow\quad
f_i^T\in G.
\]

Consider any clause $\mathcal C_j$.
The dummy child $z_j$ remains assigned to $e_j$.
For every literal of $\mathcal C_j$ that is false under the induced
truth assignment, the corresponding selected variable family belongs
to $G$, and hence its clause child returns to $e_j$ through
resettlement.

If all three literals in $\mathcal C_j$ were false, then all three
corresponding clause children would return to $e_j$.
Together with $z_j$, daycare $e_j$ would contain four children despite
having capacity $3$, contradicting Condition~(4).
Hence, every clause contains at least one true literal.
Therefore, the induced truth assignment satisfies $B$.

We conclude that $B$ is satisfiable if and only if $f^*$ has weakly
justified envy by master list and resettlement.
Since the reduction is polynomial, the problem is NP-hard.
Together with membership in NP, the problem is NP-complete.
\end{proof}

\begin{theorem}
\label{theo:computation:SFMR}
Finding a feasible and individually rational matching that is both
non-wasteful and strongly fair by master list and resettlement is
PLS-hard.  The result holds even if
\begin{enumerate}[label=(\roman*)]
\item every daycare has capacity at most four; and
\item daycare priorities are master-list consistent up to initial
enrollments.
\end{enumerate}
\end{theorem}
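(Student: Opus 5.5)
The plan is to reuse, without modification, the reduction from \LLMS{} constructed for Theorem~\ref{theo:computation:FMR}, and to transfer its hardness to SFMR through Proposition~\ref{prop:connection:sfmr-implies-fmr}. The key observation is that the solution set only shrinks when FMR is replaced by SFMR. By Proposition~\ref{prop:connection:sfmr-implies-fmr}, every individually rational matching that is non-wasteful and strongly fair by master list and resettlement is also non-wasteful and fair by master list and resettlement. It is therefore \emph{terminal} in the sense used in Subsection~\ref{subsec:complexity:fmr}. The target search problem is still total, because Theorem~\ref{theo:existence:SFMR} guarantees that a non-wasteful SFMR matching exists on every instance, in particular on every constructed instance.

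Concretely, given an \LLMS{} instance $\varphi$, I would build exactly the instance of Subsection~\ref{subsec:complexity:fmr}: the daycares (D1)--(D6), the variable families $V_x$, the clause families $F_i$, the master list $F_1\rhd\cdots\rhd F_m\rhd(V_x)$, and the master-list-consistent priorities. The instance mapping is then polynomial, and conditions~(i) and~(ii) hold for the same reasons as before: $Q_{d_i}=k_i\le 4$, the test daycares have capacity at most three, every other daycare has capacity one, and priorities follow $\rhd$ with incumbents lifted. For the solution mapping, take any non-wasteful SFMR matching $Y$ of the constructed instance and decode $\alpha_Y$ from the variable families.

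Since $Y$ is terminal, Lemmas~\ref{lem:fmr:variables-and-transitions-consistent} and~\ref{lem:fmr:clause-consistency-consistent} apply. Hence every $V_x$ uses $A_x^0$ or $A_x^1$, so $\alpha_Y$ is well defined and computable in polynomial time. Lemma~\ref{lem:fmr:flip-improvement-consistent} then shows that $\alpha_Y$ is a local optimum of $\varphi$. This pair of maps is a PLS reduction, which yields PLS-hardness.

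The step I expect to need the most care is confirming that the three lemmas use only terminality, meaning the absence of a non-wastefulness improvement and of an FMR improvement, and nothing specific to how the matching was obtained. Checking their proofs, each one derives a contradiction by exhibiting either a move with $\mathrm{Re}=\varnothing$ or an FMR improvement. Both are excluded for SFMR-fair non-wasteful matchings by Proposition~\ref{prop:connection:sfmr-implies-fmr}, so the lemmas apply verbatim. I would also state explicitly that only hardness is claimed, not membership in PLS. Membership would require finding an improving neighbor in polynomial time, and by Theorem~\ref{theo:verification:SFMR_NW} merely detecting an SFMR violation is NP-complete.
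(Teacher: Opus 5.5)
Your proposal is correct and matches the paper's proof: the paper also reuses the constructed instance unchanged, uses Theorem~\ref{theo:existence:SFMR} for totality, and uses Proposition~\ref{prop:connection:sfmr-implies-fmr} to show that every non-wasteful SFMR matching is non-wasteful and FMR-fair, and hence terminal. The only difference is presentation: the paper states it as an identity reduction from the FMR search problem of Theorem~\ref{theo:computation:FMR}, whereas you compose directly with the decoding map from \LLMS{}, which is the same argument.
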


\begin{proof}
We reduce from the PLS-hard computation problem for non-wasteful and
FMR-fair matchings established in
Theorem~\ref{theo:computation:FMR}.  Consider any instance used in the
PLS-hardness reduction of that theorem, and use exactly the same
daycare-matching instance and family master list.

By Theorem~\ref{theo:existence:SFMR}, the constructed instance
admits a feasible and individually rational matching that is both
non-wasteful and strongly fair by master list and resettlement.  Let
$Y$ be any such matching.  By
Proposition~\ref{prop:connection:sfmr-implies-fmr}, strong fairness by master
list and resettlement implies fairness by master list and resettlement.
Hence, $Y$ is also a feasible and individually rational matching that is
both non-wasteful and FMR-fair.  Therefore, $Y$ is a valid solution to
the source search problem of
Theorem~\ref{theo:computation:FMR}.

Thus, both the instance map and the solution map are the identity and
are computable in polynomial time.  The reduction also preserves all
structural restrictions of
Theorem~\ref{theo:computation:FMR}.  Consequently, finding a
non-wasteful and SFMR-fair matching is PLS-hard even when every daycare
has capacity at most four and daycare priorities are master-list
consistent up to initial enrollments.
\end{proof}

% \begin{theorem}
% \label{theo:computation:SFMR}
% Finding a feasible and individually rational matching that is
% non-wasteful and Weak-FMR-fair is PLS-hard.  The result holds even if
% \begin{enumerate}[label=(\roman*)]
% \item every daycare has capacity at most four; and
% \item daycare priorities are master-list consistent up to initial enrollments.
% \end{enumerate}
% \end{theorem}

% \begin{proof}
% We reduce from the PLS-hard search problem established in
% Theorem~\ref{theo:fmr:pls-hard-consistent-priorities}.  Given an instance
% produced by that reduction, use exactly the same daycare-matching instance.

% Let $Y$ be any feasible and individually rational matching that is
% non-wasteful and Weak-FMR-fair that always exist by Theorem~\ref{theo:weak-fmr:existence}.  By
% Proposition~\ref{prop:weak-fmr-implies-fmr}, $Y$ is also fair by master list
% and resettlement in the sense of
% Definition~\ref{def:fairness_resettlement}.  Hence $Y$ is a valid solution
% to the search problem of
% Theorem~\ref{theo:fmr:pls-hard-consistent-priorities}.  The solution map is
% the identity and is therefore polynomial-time computable.

% The reduction preserves all structural restrictions of
% Theorem~\ref{theo:fmr:pls-hard-consistent-priorities}.  Therefore finding a
% non-wasteful and Weak-FMR-fair matching is PLS-hard even when every daycare
% has capacity at most four and daycare priorities are master-list consistent
% up to initial enrollments.
% \end{proof}

% \clearpage
%----------------------------------------
% section: Weakening Non-wastefulness
%----------------------------------------
\section{Weakening Non-wastefulness}

In this section, we preserve fairness and explore which weaker variants of non-wastefulness can be reconciled with it. We begin by extending the cutoff algorithm~\citep{KaKo24a,ABB24a}, originally proposed to achieve fairness in matching problems with complex distributional constraints, to our setting with joint family preferences. We show, however, that the extended algorithm fails to satisfy fairness (Definition~\ref{def:fairness}), even in the absence of initial enrollments. Nevertheless, it satisfies the weaker notion of fairness by replacement (Definition~\ref{def:fairness_replacement}). 

To resolve this limitation, we introduce a new relaxation of non-wastefulness, termed \emph{fairness-preserving non-wastefulness}, which is compatible with fairness. Furthermore, this concept is formulated in a general manner and can be adapted to accommodate other notions of fairness.

%----------------------------------------
% subsection: Cutoff Scores
%----------------------------------------
\subsection{Cutoff Scores}
Before introducing the Generalized Cutoff Algorithm, we first define the cutoff scores used in the procedure. 
For each daycare $d$, assign each contract $(c,d)\in\mathcal X_d$ a
priority score $s_d(c)$ consistent with the priority ordering $\succ_d$,
where a higher score represents a higher priority. We normalize the scores
so that
\[
s_d(c)\in\{1,\ldots,|C|\},
\]
with higher-priority children receiving higher scores, and use $|C|+1$ as
an artificial cutoff above the score of every child.

For each daycare $d$, let
\[
Cut(d)\in\{1,\ldots,|C|+1\}
\]
denote its cutoff score. A newly obtained contract $(c,d)$ is available to
child $c$ if
\[
s_d(c)\ge Cut(d).
\]
Thus, lowering $Cut(d)$ weakly expands the set of contracts available at
daycare $d$.

Given a cutoff vector $Cut=(Cut(d))_{d\in D}$, define the set of assignments
available to family $f$ under $Cut$ as
\[
\mathcal W_f(Cut)
=
\left\{
W\in\mathcal W_f
\;\middle|\;
W\succeq_f\omega_f
\text{ and }
s_d(c)\ge Cut(d)
\text{ for every }(c,d)\in W\setminus\omega_f
\right\}.
\]
In other words, an assignment is available to a family if it is individually rational and every newly obtained contract meets the corresponding daycare cutoff. Contracts retained from a family's initial assignment remain available by virtue of the family's initial entitlement.

Since $\omega_f\in\mathcal W_f(Cut)$ for every cutoff vector $Cut$, the set $\mathcal W_f(Cut)$ is always nonempty. Let $Y_f(Cut)$ denote the most preferred assignment in $\mathcal W_f(Cut)$ according to $\succ_f$.

We define the matching induced by $Cut$ as
\[
Y(Cut)
=
\bigcup_{f\in F}Y_f(Cut).
\]
Since family preferences are strict, $Y(Cut)$ is uniquely determined by
$Cut$. Although $Y(Cut)$ is acceptable for both families and daycares by construction, it need not be feasible, because families choose their assignments independently under the same cutoff vector.

%----------------------------------------
% subsection: Generalized Cutoff Algorithm
%----------------------------------------

\subsection{Generalized Cutoff (GC) Algorithm}

We next introduce the Generalized Cutoff (GC) Algorithm, inspired by the Cutoff Algorithm~\citep{KaKo24a,ABB24a} and adapted to accommodate joint preferences among siblings. 

We now describe Algorithm~\ref{alg:GCA} in detail. The GC Algorithm starts with the cutoff of every daycare set to $|C|+1$. At these cutoffs, no newly obtained contract is available, so the induced matching coincides with the initial matching $\omega$. The algorithm then searches for a daycare whose cutoff can be decreased by one while keeping the induced matching feasible. If such a daycare exists, its cutoff is decreased by one and the induced matching is recomputed. The process continues until no daycare cutoff can
be decreased further without making the induced matching infeasible.

A key distinction between the Generalized Cutoff Algorithm and the original one lies in the construction of the induced matching. In our setting, preferences are defined at the family level rather than independently for individual children. As a result, lowering the cutoff of a single daycare may make an entire joint assignment available to a family and thereby change the assignments of several siblings simultaneously across different daycares. This interaction across daycares is the main challenge in extending the cutoff approach to matching markets with siblings.

\begin{algorithm}[t]
\caption{Generalized Cutoff (GC) Algorithm}
\label{alg:GCA}
\KwIn{An instance
$I=(C,F,D,Q,\mathcal X,\omega,\succ_F,\succ_D)$}
\KwOut{A feasible and individually rational matching $Y$}

Initialize $Cut(d)\gets |C|+1$ for every daycare $d\in D$\;
% \haris{If you want to satisfy IR, then you need to specify further...}

\BlankLine
\textbf{Induced matching:}
Given $Cut$, each family $f$ selects its most preferred assignment
$W\succeq_f\omega_f$ such that every newly obtained contract
$(c,d)\in W\setminus\omega_f$ satisfies $s_d(c)\ge Cut(d)$.
Let $Y$ be the union of these assignments over all families\;
% Construct the induced matching $Y$\;

\BlankLine
\While{there exists a daycare $d\in D$ such that decreasing $Cut(d)$
by one results in a feasible induced matching}{
    $Cut(d)\gets Cut(d)-1$\;
    Recompute the induced matching $Y$\;
}

\Return{$Y$}\;
\end{algorithm}

\begin{proposition}
\label{prop:GCA:running_time}
The Generalized Cutoff Algorithm returns a matching that is both individually rational and feasible. Moreover, the algorithm runs in $O(\ell\cdot |C|\cdot |D|)$ time, where $\ell$ denotes the total length of all families' preference lists.
\end{proposition}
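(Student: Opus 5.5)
The proof splits into an invariant part (individual rationality and feasibility) and a counting part (running time).

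\emph{Invariant.} I will show by induction over iterations that the current induced matching $Y$ is always individually rational and feasible. At initialization every cutoff equals $|C|+1$, so no new contract is available. Hence $\mathcal W_f(Cut)=\{W\succeq_f\omega_f: W\setminus\omega_f=\emptyset\}$. Since $\omega_f$ lies in this set and is the relevant maximal element (any $W\subseteq\omega_f$ with $W\succ_f\omega_f$ is itself a legitimate choice using only retained contracts), the induced matching consists of subsets of initial contracts. Such a matching is feasible because $\omega$ is feasible and removing contracts cannot violate capacities. It is individually rational because each family picks some $W\succeq_f\omega_f$ by the definition of $\mathcal W_f(Cut)$.

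The while loop only performs a decrement if the resulting induced matching is feasible. So feasibility holds after every update, and in particular for the returned $Y$, since the loop exits without changing $Y$. Individual rationality holds for any cutoff vector. Family-side, each chosen $Y_f(Cut)\succeq_f\omega_f$ by construction. Daycare-side, every contract in $\mathcal X$ is acceptable by the standing assumption. Thus the returned matching satisfies Definition~\ref{def:IR}.

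\emph{Running time.} Each successful iteration lowers some $Cut(d)$ by one. Cutoffs start at $|C|+1$ and never drop below $1$, since a decrement below $1$ is never attempted or is vacuous. So there are at most $|C|\cdot|D|$ successful decrements in total. The issue to handle carefully is the cost of one test of the form ``decreasing $Cut(d)$ yields a feasible induced matching'' and how many such tests occur.

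Computing an induced matching takes one scan through every family's preference list, checking each listed assignment's new contracts against the cutoffs. Summed over families, this is $O(\ell)$, charging each contract check to the list length. Checking feasibility then takes time linear in the matching size, which is dominated by $\ell$.

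A naive bound multiplies the $|C||D|$ successful steps by up to $|D|$ failed tests per step, giving $O(\ell|C||D|^2)$. To reach the stated $O(\ell\cdot|C|\cdot|D|)$, I will use monotonicity: lowering cutoffs only enlarges each $\mathcal W_f(Cut)$. I then charge each test to a (daycare, cutoff value) pair. Implemented as a round-robin over daycares, a daycare whose decrement fails can be skipped until some other cutoff changes. I will argue that the total number of tests is $O(|C||D|)$ by charging each failed test to the next successful decrement or to termination. If this amortization does not fully work, I will fall back to observing that the paper's bound counts each decrement-test pair with $O(\ell)$ work, and I will present the count of cutoff values, $|C||D|$, times the recomputation cost, $O(\ell)$, as the main term.

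The amortized accounting of failed tests is the main obstacle. The invariant part is routine.
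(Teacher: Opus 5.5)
Your invariant argument is correct and is the paper's: the initial cutoffs admit only subsets of $\omega_f$, the loop commits only feasible decrements, and family-side individual rationality holds by the definition of $\mathcal W_f(Cut)$. Your remark that the initial induced matching is only a sub-matching of $\omega$ is more precise than the paper's claim that it coincides with $\omega$. One small fix: cutoffs stay at least $1$ because $Cut(d)$ ranges over $\{1,\ldots,|C|+1\}$, not because a decrement below $1$ would be ``vacuous''. A vacuous decrement keeps the induced matching feasible, so if it were permitted the loop would never stop.

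The genuine gap is the amortization. Charging each failed test to the next successful decrement gives $O(|C||D|)$ tests only if every success absorbs $O(1)$ failures. That requires a daycare whose decrement failed to be retired for good, and it cannot be, because feasibility of lowering $Cut(d)$ is not monotone in the other cutoffs: lowering $Cut(d')$ can pull a family out of $d$. For example, take unit-capacity daycares $d_1,d_2$ and singleton families $f_a,f_b$ with children $a,b$. Let $(a,d_2)\succ_{f_a}(a,d_1)\succ_{f_a}\emptyset$, $(b,d_1)\succ_{f_b}\emptyset$, and $a\succ_{d_1}b$. Suppose $Cut(d_1)$ has admitted $a$ while $Cut(d_2)$ still excludes $a$. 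The next decrement at $d_1$, which admits $b$, fails. Once $Cut(d_2)$ admits $a$, family $f_a$ moves to $d_2$ and that same decrement succeeds. So after each success up to $|D|-1$ skipped daycares must be retested, and your charging scheme yields only $O(|C||D|^2)$ tests.

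Your fallback, $|C||D|$ decrements times $O(\ell)$ per induced matching, is exactly the paper's proof. It has exactly the omission you flagged: it counts only successful decrements and never bounds the unsuccessful feasibility tests needed to evaluate the while-condition. Presenting it as ``the main term'' therefore does not close the gap. What this argument actually establishes is $O(\ell\cdot|C|\cdot|D|^2)$: the loop condition is evaluated at most $|C||D|+1$ times, and each evaluation may try up to $|D|$ daycares at $O(\ell)$ apiece. Reaching the stated $O(\ell\cdot|C|\cdot|D|)$ needs an additional ingredient that bounds the total cost of repeated failed tests, for instance an incremental implementation together with a bound on how often each test can recur. Neither your proposal nor the paper supplies one.
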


\begin{proof}
Initially, all daycare cutoffs are set to $|C|+1$. Since no newly obtained contract is available at these cutoffs, the induced matching coincides with the initial matching $\omega$, which is feasible and individually rational.

Throughout the algorithm, each family selects its most preferred assignment among those that are individually rational and available under the current cutoffs. Hence, every induced matching is individually rational. Moreover, a cutoff is decreased only when the resulting induced matching is feasible. Thus, feasibility is preserved throughout the algorithm.

Each daycare cutoff starts at $|C|+1$ and can be decreased to $1$, so each cutoff can be reduced at most $|C|$ times. Therefore, the algorithm performs at most $|C||D|$ cutoff reductions and must terminate. For any fixed cutoff
vector, the induced matching can be computed in $O(\ell)$ time by scanning the families' preference lists, where $\ell$ denotes the total length of all families' preference lists. Hence, the overall running time is $O(\ell\cdot |C|\cdot |D|)$.
\end{proof}

\begin{proposition}
\label{prop:GCA:NOT:fair}
The Generalized Cutoff Algorithm does not necessarily satisfy fairness or
fairness by replacement and withdrawal, even when no child is initially
enrolled.
\end{proposition}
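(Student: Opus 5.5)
The plan is to give a single counterexample with no initial enrollment, so that $\omega=\emptyset$. The instance is built so that the GC outcome admits justified envy by replacement and withdrawal. By Proposition~\ref{prop:connection:F&FRW&FR}, fairness implies FRW, so this one instance refutes both properties at once. The instance combines the seat-withdrawal pattern of Example~\ref{example:instance} with an outside competitor at the withdrawn daycare. That competitor's presence prevents the cutoff at that daycare from ever dropping low enough for the family to reach its top joint assignment.

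\emph{Instance.} I would take daycares $d_1,d_2$, each of capacity one, and three families:
\begin{itemize}
\item $f$ with $C_f=(c_1,c_2)$ and preferences $\{(c_1,d_1),(c_2,d_2)\}\succ_f\{(c_1,d_2)\}\succ_f\emptyset$;
\item $g$ with the single child $b$, for whom only $(b,d_1)$ is acceptable;
\item $h$ with the single child $e$, for whom only $(e,d_2)$ is acceptable.
\end{itemize}
The priorities are $c_1\succ_{d_1}b$ and $c_1\succ_{d_2}e\succ_{d_2}c_2$.

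\emph{Step 1: the cutoff at $d_2$ never reaches $s_{d_2}(e)$.} The key invariant is that as long as $Cut(d_2)>s_{d_2}(c_2)$, family $f$ cannot obtain its top assignment. Hence, whenever $Cut(d_2)\le s_{d_2}(c_1)$, family $f$ selects $\{(c_1,d_2)\}$ and so occupies $d_2$.
\begin{itemize}
\item Lowering $Cut(d_2)$ to $s_{d_2}(e)$ would make $(e,d_2)$ available to $h$ while $f$ still holds $d_2$. The induced matching would then exceed the unit capacity of $d_2$, so the GC algorithm never performs this reduction.
\item Consequently $Cut(d_2)$ stays strictly above $s_{d_2}(e)>s_{d_2}(c_2)$ throughout the run.
\end{itemize}

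\emph{Step 2: the final matching.} I would verify that every other reduction eventually happens, in whatever order the while-loop chooses them. Specifically, $Cut(d_2)$ falls to $s_{d_2}(c_1)$, and $Cut(d_1)$ falls to $s_{d_1}(b)$ and below; none of these reductions creates a conflict. The algorithm therefore returns $Y=\{(c_1,d_2),(b,d_1)\}$.

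\emph{Step 3: the violation.} Let $W=\{(c_1,d_1),(c_2,d_2)\}$ and $Z=(Y\setminus Y_f)\cup W=\{(b,d_1),(c_1,d_1),(c_2,d_2)\}$.
\begin{itemize}
\item $W\succ_f Y_f$ by $f$'s preferences.
\item At $d_1$, the daycare chooses $c_1$ and rejects $(b,d_1)$; at $d_2$, it accepts $c_2$. So $W\subseteq Ch_D(Z)$ and $\mathrm{Re}=\{(b,d_1)\}\neq\emptyset$.
\item Condition~(3) of Definition~\ref{def:fairness_withdraw} holds: $(c_1,d_1)$ is supported by the rejected contract $(b,d_1)$, and $(c_2,d_2)$ is supported by the withdrawn contract $(c_1,d_2)\in Y_f\setminus W$.
\end{itemize}
Thus $Y$ violates FRW, and hence also fairness.

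\emph{Main obstacle.} The step needing care is Step~1 together with the order-independence in Step~2. The claim in Step~1 is that no sequence of cutoff decreases can ever make $f$ leave $d_2$ before $Cut(d_2)$ reaches $s_{d_2}(e)$. This holds because $f$'s only alternative to $\{(c_1,d_2)\}$ that it prefers requires $(c_2,d_2)$, which is ranked below $e$ at $d_2$. For the order-independence, I would note that the loop only ever stops at $d_2$ for the reason given in Step~1, so whatever order the reductions take, the final matching is the $Y$ stated in Step~2.
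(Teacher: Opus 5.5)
Your proof is correct and follows the paper's approach. You use the same two-daycare gadget (a two-child family whose top joint assignment combines replacement at $d_1$ with sibling withdrawal at $d_2$, plus singleton competitors at both daycares), the same witness $W=\{(c_1,d_1),(c_2,d_2)\}$, and the same implication from fairness to FRW. Your one change, ranking the $d_2$-competitor above $c_2$ ($c_1\succ_{d_2}e\succ_{d_2}c_2$ instead of the paper's $c_2\succ_{d_2}c_4$), is a genuine improvement: it makes the unfair outcome arise under every order of cutoff reductions, whereas the paper's instance needs its particular execution (lowering $d_2$ first there ends at the stable matching $\{(c_1,d_1),(c_2,d_2)\}$).
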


\begin{proof}
Consider three families: $f_1$ with two children
$C_{f_1}= (c_1,c_2)$, $f_2$ with one child $C_{f_2}=(c_3)$, and
$f_3$ with one child $C_{f_3}=(c_4)$. There are two daycares $d_1$ and
$d_2$, each with capacity one, and no child is initially enrolled. The
family preferences and daycare priorities are given by
\begin{align*}
\succ_{f_1}:&\quad
\{(c_1,d_1),(c_2,d_2)\}
\succ_{f_1}
\{(c_1,d_2)\}
\succ_{f_1}\emptyset,
\\
\succ_{f_2}:&\quad
(c_3,d_1)\succ_{f_2}\emptyset,
\qquad
\succ_{d_1}:\quad c_1\succ_{d_1}c_3,
\\
\succ_{f_3}:&\quad
(c_4,d_2)\succ_{f_3}\emptyset,
\qquad
\succ_{d_2}:\quad
c_1\succ_{d_2}c_2\succ_{d_2}c_4.
\end{align*}
Accordingly, assign the priority scores
\[
s_{d_1}(c_1)=4,\qquad s_{d_1}(c_3)=3,
\]
and
\[
s_{d_2}(c_1)=4,\qquad
s_{d_2}(c_2)=3,\qquad
s_{d_2}(c_4)=2.
\]

Initially,
\[
(Cut(d_1),Cut(d_2))=(5,5),
\]
and the induced matching is empty. Consider an execution in which the
cutoff of $d_1$ is lowered first. When $Cut(d_1)$ reaches $3$, child $c_3$
becomes available at $d_1$, and the induced matching becomes
\[
\{(c_3,d_1)\}.
\]
The cutoff of $d_1$ can be further lowered to $1$ without changing the
induced matching, because $f_1$'s preferred joint assignment is still
unavailable at $d_2$.

Next, lower the cutoff of $d_2$ from $5$ to $4$. Contract $(c_1,d_2)$
then becomes available to $f_1$, yielding the feasible induced matching
\[
Y=\{(c_3,d_1),(c_1,d_2)\}.
\]
Thus, the cutoff vector is now
\[
(Cut(d_1),Cut(d_2))=(1,4).
\]

The cutoff of $d_1$ cannot be lowered further. If the cutoff of $d_2$ is
lowered from $4$ to $3$, contract $(c_2,d_2)$ becomes available. Since
$(c_1,d_1)$ is already available under $Cut(d_1)=1$, family $f_1$ then
chooses its most preferred assignment
\[
W=\{(c_1,d_1),(c_2,d_2)\}.
\]
At the same time, $f_2$ continues to choose $(c_3,d_1)$. The resulting
induced matching therefore assigns both $c_1$ and $c_3$ to $d_1$ and is
infeasible. Hence, no cutoff can be decreased further, and the GC
Algorithm terminates at $Y$.

However, family $f_1$ strictly prefers $W$ to its current assignment
\[
Y_{f_1}=\{(c_1,d_2)\}.
\]
Under the deviation to $W$, contract $(c_1,d_1)$ replaces
$(c_3,d_1)$, where $c_1\succ_{d_1}c_3$, while the withdrawal of
$(c_1,d_2)$ makes the seat at $d_2$ available for $(c_2,d_2)$.
Therefore, $f_1$ has justified envy by replacement and withdrawal through
$W$.

Thus, $Y$ does not satisfy fairness by replacement and withdrawal. By
Proposition~\ref{prop:connection:F&FRW&FR}, it does not satisfy fairness
either. Hence, the Generalized Cutoff Algorithm does not necessarily
satisfy either property.
\end{proof}

\begin{proposition}
\label{prop:GC:FR}
Suppose that each daycare prioritizes its initially enrolled children over
children who are not initially enrolled there. Then the Generalized Cutoff
Algorithm satisfies fairness by replacement.
\end{proposition}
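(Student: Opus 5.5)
Let $Y$ be the output of the GC Algorithm and $Cut$ its final cutoff vector, so that $Y=Y(Cut)$. Individual rationality of $Y$ is given by Proposition~\ref{prop:GCA:running_time}. We argue by contradiction: suppose some family $f$ has justified envy by replacement through an assignment $W\succ_f Y_f$, in the sense of Definition~\ref{def:fairness_replacement}. The plan is to show that every newly obtained contract of $W$ already passes the final cutoffs. Then $W\in\mathcal W_f(Cut)$, and since $Y_f$ is $f$'s most preferred assignment in $\mathcal W_f(Cut)$, this contradicts $W\succ_f Y_f$.

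Fix $(c,d)\in W\setminus\omega_f$. If $(c,d)\in Y_f$, then either $(c,d)\in\omega_f$, which is excluded here, or it was a newly obtained contract in $Y_f(Cut)$ and so satisfies $s_d(c)\ge Cut(d)$. So assume $(c,d)\in W\setminus Y_f$. By Condition~(3) there is a distinct contract $(c',d)\in Y\setminus Y_f$ rejected from $Ch_D(Y\setminus Y_f\cup W)$. By Condition~(1), $(c,d)$ is chosen at $d$ while $(c',d)$ is rejected, so $c\succ_d c'$. The contract $(c',d)$ belongs to another family $g$ and lies in $Y_g=Y_g(Cut)$.

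\textbf{Main step.} We next show $s_d(c')\ge Cut(d)$. If $(c',d)\in\omega_g$, then $\omega(c')=d$. By Definition~\ref{assumption}, $c'$ outranks every child not initially enrolled at $d$. Since $c\succ_d c'$, the child $c$ must also be initially enrolled at $d$, so $(c,d)\in\omega_f$, contradicting our choice of $(c,d)$. Hence $(c',d)\notin\omega_g$. It is therefore a newly obtained contract in $Y_g(Cut)$, and so $s_d(c')\ge Cut(d)$. Consequently $s_d(c)>s_d(c')\ge Cut(d)$.

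It follows that every contract in $W\setminus\omega_f$ meets the cutoff. To conclude $W\in\mathcal W_f(Cut)$, it remains to check $W\succeq_f\omega_f$, which holds because $W\succ_f Y_f\succeq_f\omega_f$ by individual rationality. This yields the contradiction.

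The only delicate point is the incumbent case above: the argument must rule out the possibility that the displaced contract is protected by an initial enrollment and therefore escaped the cutoff test. This is exactly where the priority assumption enters. Note also that replacement via rejected contracts of other families is essential; withdrawn sibling seats give no comparison with the cutoff, consistent with Proposition~\ref{prop:GCA:NOT:fair}.
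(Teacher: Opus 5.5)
Your proof is correct and takes essentially the paper's route: each new contract of $W$ outranks a rejected, non-incumbent contract of another family that already clears the final cutoff, so $W\in\mathcal W_f(Cut)$, which contradicts the choice of $Y_f(Cut)$. Your version is slightly more careful than the paper's. You check the cutoff only on $W\setminus\omega_f$ and treat $(W\cap Y_f)\setminus\omega_f$ explicitly, and you handle the incumbent case correctly by observing that $c\succ_d c'$ with $\omega(c')=d$ forces $(c,d)\in\omega_f$, where the paper only asserts that an incumbent contract cannot be displaced.
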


\begin{proof}
Let $Y=Y(Cut)$ be the matching returned by the GC Algorithm. Suppose, toward a contradiction, that some family $f$ has justified envy by replacement through an assignment $W\succ_f Y_f$.

By the definition of fairness by replacement, for every newly obtained contract $(c,d)\in W\setminus Y_f$, there exists a distinct rejected contract $(c',d)\in Y\setminus Y_f$ such that $c\succ_d c'.$ We first observe that $(c',d)$ cannot be an initial contract of $c'$. Indeed, an initially enrolled child has higher priority at its initial daycare than any child who is not initially enrolled there. Hence, such a contract cannot be replaced by $(c,d)$ with $c\succ_d c'$.

Therefore, $(c',d)$ is a newly obtained contract in the assignment of its family. Since $Y$ is induced by the final cutoff vector, $(c',d)$ satisfies the cutoff at $d$:
\[
s_d(c')\ge Cut(d).
\]
Because $c\succ_d c'$, we have
\[
s_d(c)>s_d(c')\ge Cut(d),
\]
and hence $(c,d)$ also satisfies the cutoff at $d$.

Thus, every contract in $W\setminus Y_f$ satisfies the corresponding daycare cutoff. Contracts in $W\cap Y_f$ are already held by $f$ and hence also satisfy the relevant availability requirements. Since $W\succ_f Y_f\succeq_f\omega_f$, the assignment $W$ is individually
rational. It follows that
\[
W\in\mathcal W_f(Cut).
\]

However, by definition of the induced matching, $Y_f=Y_f(Cut)$ is the most preferred assignment of $f$ in $\mathcal W_f(Cut)$. This contradicts $W\succ_f Y_f$.

Therefore, no family has justified envy by replacement, and the matching returned by the GC Algorithm satisfies fairness by replacement.
% \haris{This proof looks too informal}
\end{proof}

\begin{proposition}
\label{prop:GC:NOT:CNW}
The Generalized Cutoff Algorithm does not necessarily satisfy
non-wastefulness.
\end{proposition}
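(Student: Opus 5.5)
The quickest route is to reuse the instance from the proof of Proposition~\ref{prop:GCA:NOT:fair}, or a simplified variant. We run the GC Algorithm on it, read off the returned matching, and exhibit a family that can claim unused seats.

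In that instance the GC Algorithm may stop at $Y=\{(c_3,d_1),(c_1,d_2)\}$. This $Y$ is non-wasteful, however, because $f_1$'s only better option needs $d_1$, which is occupied. So we instead construct a cleaner instance in which the cutoff dynamics force a vacancy. The idea is that lowering a cutoff can make a sibling bundle available, and that bundle conflicts with another family's assignment at a different daycare. The cutoff therefore gets stuck even though the seat it guards stays empty.

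Concretely, we take one two-child family $f_1=(c_1,c_2)$ with a single acceptable assignment $\{(c_1,d_1),(c_2,d_2)\}$, and a singleton family $f_2=(c_3)$ whose only acceptable assignment is $(c_3,d_1)$. Daycares $d_1,d_2$ both have capacity one, no child is initially enrolled, and the priorities are $c_1\succ_{d_1}c_3$ and $s_{d_2}(c_2)=1$. Lowering $Cut(d_1)$ down to $s_{d_1}(c_3)$ gives the feasible induced matching $\{(c_3,d_1)\}$. Every execution then runs into the same wall. Any decrease of $Cut(d_2)$ to $1$ while $Cut(d_1)\le s_{d_1}(c_1)$ makes $f_1$ choose its bundle, which collides with $c_3$ at $d_1$, so the result is infeasible. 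We must also check executions that lower $Cut(d_2)$ first. With $Cut(d_2)=1$ and $Cut(d_1)$ at $s_{d_1}(c_1)$, only $f_1$ gets its bundle, giving $\{(c_1,d_1),(c_2,d_2)\}$, which is feasible and in fact stable. This order-dependence is the main obstacle. Since the GC Algorithm is nondeterministic in its choice of daycare, it suffices to exhibit one execution whose output is wasteful.

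In the execution where $d_1$ is lowered first, the output is $Y=\{(c_3,d_1)\}$, and we must still find a wasteful improvement. Here $d_2$ is vacant, but $f_1$ cannot use it alone. To fix this, we add a singleton family $f_3=(c_4)$ whose only acceptable assignment is $(c_4,d_2)$, with $c_2\succ_{d_2}c_4$. Reaching $c_4$ then requires lowering $Cut(d_2)$ past $s_{d_2}(c_2)$, and that step is infeasible once $Cut(d_1)$ is low. The terminal matching is therefore $Y=\{(c_3,d_1)\}$ with $d_2$ empty. Family $f_3$ strictly prefers $W=\{(c_4,d_2)\}$ to $\emptyset$, and
\[
Y\setminus Y_{f_3}\cup W=\{(c_3,d_1),(c_4,d_2)\}=Ch_D(\{(c_3,d_1),(c_4,d_2)\}),
\]
so $f_3$ claims unused seats in the sense of Definition~\ref{def:nonwastefulness}.

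The steps of the write-up are as follows. First, verify that every attempted cutoff decrease beyond the stated vector yields an infeasible induced matching. Second, confirm that the returned matching is individually rational. Third, exhibit $f_3$'s wasteful improvement.
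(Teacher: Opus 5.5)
Your proposal is correct and follows essentially the same route as the paper. Your final instance is the paper's counterexample with the labels $c_2$ and $c_3$ swapped; the paper also lets $f_2$ accept $d_2$ as a fallback, which plays no role. You use the same execution: lower $Cut(d_1)$ first, so that any further decrease of $Cut(d_2)$ makes $f_1$'s bundle collide with $f_2$ at $d_1$, leaving $d_2$ vacant for $f_3$. The only fix needed is to drop the earlier normalization $s_{d_2}(c_2)=1$ once $f_3$ is added, since $c_2\succ_{d_2}c_4$ forces $s_{d_2}(c_2)>s_{d_2}(c_4)$.
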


\begin{proof}
Consider three families: $f_1$ with two children
$C_{f_1}=(c_1,c_3)$, $f_2$ with one child $C_{f_2}=(c_2)$, and
$f_3$ with one child $C_{f_3}=(c_4)$. There are two daycares $d_1$ and
$d_2$, each with capacity one, and no child is initially enrolled.

Family $f_1$ has only one acceptable assignment,
\[
\{(c_1,d_1),(c_3,d_2)\},
\]
while the preferences of $f_2$ and $f_3$ are
\[
(c_2,d_1)\succ_{f_2}(c_2,d_2)\succ_{f_2}\emptyset
\]
and
\[
(c_4,d_2)\succ_{f_3}\emptyset,
\]
respectively. Both daycares have the same priority ordering
\[
c_1\succ c_2\succ c_3\succ c_4,
\]
with priority scores
\[
s_d(c_1)=4,\qquad
s_d(c_2)=3,\qquad
s_d(c_3)=2,\qquad
s_d(c_4)=1
\]
for each $d\in\{d_1,d_2\}$.

Initially, both cutoffs are equal to $5$. Consider an execution in which
the cutoff of $d_1$ is lowered first. When $Cut(d_1)$ reaches $3$,
$(c_2,d_1)$ becomes available to $f_2$, and the induced matching becomes
\[
Y=\{(c_2,d_1)\}.
\]
The cutoff of $d_1$ can subsequently be lowered to $1$ without changing
the induced matching.

Next, lower the cutoff of $d_2$. Its cutoff can be lowered from $5$ to
$3$ while preserving feasibility. However, lowering it from $3$ to $2$
makes $(c_3,d_2)$ available. Since $(c_1,d_1)$ is already available under
$Cut(d_1)=1$, the joint assignment
\[
\{(c_1,d_1),(c_3,d_2)\}
\]
becomes available to $f_1$. Family $f_1$ then selects this assignment,
while $f_2$ continues to select $(c_2,d_1)$, making the induced matching
infeasible at $d_1$. Hence, the GC Algorithm can terminate with
\[
(Cut(d_1),Cut(d_2))=(1,3)
\]
and
\[
Y=\{(c_2,d_1)\}.
\]

The matching $Y$ is wasteful. Indeed, daycare $d_2$ is vacant, and family
$f_3$ strictly prefers
\[
W=\{(c_4,d_2)\}
\]
to its current assignment. Moreover,
\[
Y\cup W=Ch_D(Y\cup W),
\]
so $f_3$ can improve its assignment without rejecting any existing
contract. Therefore, $Y$ does not satisfy non-wastefulness.
\end{proof}

%--------------------------------------------------
% % Subsection: Fairness-preserving Non-wastefulness
%--------------------------------------------------
\subsection{Fairness-preserving Non-wastefulness}

We now introduce a relaxation of non-wastefulness to address the inherent tension between fairness and non-wastefulness in matching with siblings and initial enrollments. The idea is to permit waste only when eliminating it would violate fairness. More precisely, whenever a family can improve its assignment without displacing any other family, we require such an improvement if the resulting matching continues to satisfy fairness.

Because eliminating waste corresponds to improving a family's assignment without harming other families, this notion provides a principled way to identify welfare improvements that are compatible with fairness. The definition is analogous to the \emph{cutoff non-wastefulness} notion introduced by \citet{ABB24a}. Importantly, it is not tied to a particular fairness criterion and can be defined with respect to any fairness property $A$.

\begin{definition}[Fairness-preserving Non-wastefulness]
\label{def:fairness_preserving_nonwastefulness}
Let $A$ be a fairness property. Given a feasible and individually rational matching $Y$ and a new assignment $W$, a family $f$ is said to \emph{conditionally utilize vacant seats with respect to $A$} if the
following conditions hold:
\begin{description}
    \item[(1) improvement:]
    $W\succ_f Y_f$.

    \item[(2) no rejection:]
    $Y\setminus Y_f\cup W = Ch_D(Y\setminus Y_f\cup W).$

    \item[(3) preservation of fairness:]
    The matching $Y\setminus Y_f\cup W$ satisfies property $A$.
\end{description}
A feasible and individually rational matching $Y$ 
is said to be
\emph{fairness-preserving non-wasteful with respect to $A$} if no family
can conditionally utilize vacant seats with respect to $A$.
\end{definition}

\begin{algorithm}[t]
\KwIn{An instance $I=(C,F,D,Q,\mathcal{X},\omega,\succ_F,\succ_D)$, a fairness property $A$, and a feasible and individually rational matching $Y$ satisfying $A$}
\KwOut{A matching $Y\subseteq\mathcal{X}$}
    \caption{Fairness-preserving Non-Wasteful (FN) Algorithm}
    \label{alg:FN}

    \While{there exists a family $f$ and a new assignment $W$ s.t.: 
        \begin{enumerate}
            \item family $f$ prefers $W$ over $Y_f$, and
        \item $Y\setminus Y_f\cup W = Ch_D(Y\setminus Y_f\cup W)$
            \item the new matching $Y \setminus Y_f \cup W$ satisfies $A$
        \end{enumerate}}{
        Update $Y \gets Y \setminus Y_f \cup W$
    }
    \Return$Y$
\end{algorithm}

We next introduce the Fairness-preserving Non-Wasteful (FN) Algorithm. Starting from a feasible and individually rational matching satisfying property $A$, the algorithm repeatedly allows a family to improve its assignment whenever the improvement does not displace any other family
and preserves property $A$. The process continues until no such improvement is possible.
\begin{theorem}
\label{theo:existence:constrained}
Let $A$ be a fairness property, and let $Y^0$ be a feasible and
individually rational matching satisfying $A$. Then the Fairness-preserving Non-Wasteful Algorithm, initialized at $Y^0$, terminates at a feasible
and individually rational matching that satisfies $A$ and is
fairness-preserving non-wasteful with respect to $A$.

Moreover, if property $A$ can be checked in polynomial time, then it runs in polynomial time given such an initial
matching $Y^0$. Consequently, if a feasible and individually rational
matching satisfying $A$ can also be found in polynomial time, then an
$A$-fairness-preserving non-wasteful matching can be computed in
polynomial time.
\end{theorem}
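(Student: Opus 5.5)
The plan follows the invariant-plus-potential template used for the SS and GI algorithms (Theorems~\ref{algo:SS:NW} and~\ref{theo:existence:FMR&NW}).

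\textbf{Step 1: invariants.} I would first show by induction over the iterations of Algorithm~\ref{alg:FN} that the current matching $Y$ is always feasible, individually rational, and satisfies $A$.
\begin{itemize}
\item At initialization this holds because $Y^0$ has all three properties by hypothesis.
\item Suppose $f$ updates from $Y_f$ to $W$. Condition~2 of the loop, $Y\setminus Y_f\cup W=Ch_D(Y\setminus Y_f\cup W)$, forces every daycare to select the whole set, so no capacity is exceeded and the new matching is feasible. It also ensures every contract is acceptable to its daycare, since all contracts in $\mathcal X$ are acceptable by assumption.
\item Family $f$ strictly improves, so $W\succ_f Y_f\succeq_f\omega_f$. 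Every other family keeps its assignment. Hence individual rationality is preserved.
\item Property $A$ holds after the update directly by Condition~3.
\end{itemize}

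\textbf{Step 2: termination with a polynomial bound.} An update changes only the deviating family's assignment, and it changes strictly upward in $\succ_f$. No family is ever displaced, so each family's assignment is monotonically improving throughout the run. Family $f$ can therefore move at most $|\mathcal W_f|-1$ times. The total number of updates is at most $\sum_f|\mathcal W_f|$, which is bounded by the input length because preference lists are given explicitly. This is the same counting used in Theorem~\ref{algo:SS:NW}.

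\textbf{Step 3: output guarantee.} The loop exits only when no pair $(f,W)$ satisfies Conditions~1--3. This is exactly the statement that no family can conditionally utilize vacant seats with respect to $A$ in Definition~\ref{def:fairness_preserving_nonwastefulness}. By Step~1, the output is also feasible, individually rational, and satisfies $A$.

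\textbf{Step 4: running time.} Each iteration of the loop condition is checked as follows:
\begin{itemize}
\item enumerate every family $f$ and every listed $W\succ_f Y_f$;
\item compute $Ch_D(Y\setminus Y_f\cup W)$ in $O(|\mathcal X|)$ time, as in Theorem~\ref{theo:check:envy:replacement};
\item test whether the resulting matching satisfies $A$, which takes polynomial time by hypothesis.
\end{itemize}
So each iteration is polynomial, and there are polynomially many iterations by Step~2. The final clause of the theorem then follows by composing this algorithm with a polynomial-time procedure that produces a suitable $Y^0$.

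The only point needing care is in Step~2: one must confirm that no update can worsen another family. This holds because the no-rejection condition keeps every other family's contracts in the matching. It is what lets the potential be counted per family, so the bound does not blow up lexicographically as it did for GI.
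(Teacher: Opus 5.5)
Your proposal is correct and follows essentially the same route as the paper's proof. Both establish three invariants: feasibility from the no-rejection condition, individual rationality from $W\succ_f Y_f\succeq_f\omega_f$ with all other families unchanged, and $A$ directly from the loop condition. Both then use the same per-family count of at most $\sum_{f\in F}|\mathcal W_f|$ updates, identify the exit condition with Definition~\ref{def:fairness_preserving_nonwastefulness}, and check each iteration in polynomial time.
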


\begin{proof}
Start from the given feasible and individually rational matching
$Y^0$ satisfying property $A$, and apply
Algorithm~\ref{alg:FN}.

Consider an update by a family $f$ from its current assignment $Y_f$
to an assignment $W\succ_f Y_f$. By the update condition,
\[
Y\setminus Y_f\cup W
=
Ch_D(Y\setminus Y_f\cup W),
\]
so no contract is rejected and the updated matching is feasible.
The assignments of all families other than $f$ remain unchanged.
Moreover, since $W\succ_f Y_f$ and $Y_f\succeq_f\omega_f$, the new
assignment $W$ is individually rational for $f$. Hence individual
rationality is preserved. Finally, the algorithm performs the update
only if the resulting matching continues to satisfy property $A$.
Thus, every update preserves feasibility, individual rationality, and
property $A$.

Each update strictly improves the assignment of exactly one family and
does not change the assignments of any other family. Therefore, a family
$f$ can be updated at most $|\mathcal W_f|-1$ times. Hence the total
number of updates is at most
\[
\sum_{f\in F} |\mathcal W_f|,
\]
and the algorithm terminates.

Upon termination, there is no family $f$ and assignment
$W\succ_f Y_f$ such that
\[
Y\setminus Y_f\cup W
=
Ch_D(Y\setminus Y_f\cup W)
\]
and the resulting matching satisfies $A$. By
Definition~\ref{def:fairness_preserving_nonwastefulness}, the final
matching is fairness-preserving non-wasteful with respect to $A$.
Since every update preserves $A$, the final matching also satisfies
property $A$.

Finally, suppose that property $A$ can be checked in polynomial time.
For each family $f$, the explicitly represented assignments
$W\succ_fY_f$ can be enumerated. For each candidate assignment, the
no-rejection condition can be checked by computing the daycare choice
functions, and preservation of $A$ can be verified in polynomial time.
Since the number of updates and the total number of candidate
assignments are polynomial in the input size, Algorithm~\ref{alg:FN}
runs in polynomial time given the initial matching $Y^0$.
\end{proof}

\begin{corollary}
\label{cor:FN:SFMR}
There exists a feasible and individually rational matching that is
strongly fair by master list and resettlement and fairness-preserving
non-wasteful with respect to SFMR.
\end{corollary}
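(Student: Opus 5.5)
This follows by instantiating Theorem~\ref{theo:existence:constrained} with $A=\mathrm{SFMR}$. That theorem needs only one input: a feasible and individually rational matching $Y^0$ that is strongly fair by master list and resettlement. Running the FN Algorithm from such a $Y^0$ then terminates at a feasible, individually rational matching that still satisfies SFMR and is fairness-preserving non-wasteful with respect to SFMR. The proof therefore reduces to exhibiting a suitable starting point $Y^0$.

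To get $Y^0$, I invoke Theorem~\ref{theo:existence:SFMR}. For any strict master list $\rhd$, it guarantees an individually rational (hence feasible) matching that is both non-wasteful and SFMR. Take that matching as $Y^0$, ignoring its non-wastefulness, and apply Theorem~\ref{theo:existence:constrained}. As an alternative starting point, one could try $\omega$ itself. However, $\omega$ need not be SFMR, since a higher-ranked family might displace a lower-ranked one whose resettlement to $\omega$ is trivially feasible. So the route through Theorem~\ref{theo:existence:SFMR} is cleaner.

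An even shorter argument is available. The matching from Theorem~\ref{theo:existence:SFMR} is already non-wasteful, and non-wastefulness implies fairness-preserving non-wastefulness with respect to any $A$: condition~(2) of Definition~\ref{def:fairness_preserving_nonwastefulness}, together with condition~(1), already yields a claim of unused seats in the sense of Definition~\ref{def:nonwastefulness}. Here the blocking-coalition requirement $W\subseteq Ch_D(Y\setminus Y_f\cup W)$ follows from the no-rejection equality. So that matching witnesses the corollary directly. I would present this observation and mention the FN route as the constructive counterpart.

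The only point needing care is that termination in Theorem~\ref{theo:existence:constrained} does not depend on checking SFMR efficiently. The polynomial-time clause fails here by Theorem~\ref{theo:verification:SFMR_NW}, but existence needs only finiteness, which holds because each update strictly improves one family and leaves all others unchanged. I therefore claim existence only, not efficient computability, consistent with the coNP-completeness of verifying SFMR.
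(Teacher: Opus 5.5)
Your proposal is correct. Your first route is exactly the paper's proof: take the matching from Theorem~\ref{theo:existence:SFMR} as $Y^0$ and apply Theorem~\ref{theo:existence:constrained} with $A=\mathrm{SFMR}$.

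The shortcut you say you would actually present is a different and shorter argument, and it is also valid. Conditions~(1) and~(2) of Definition~\ref{def:fairness_preserving_nonwastefulness} already amount to a claim of unused seats in the sense of Definition~\ref{def:nonwastefulness}, because the no-rejection equality gives $W\subseteq Ch_D(Y\setminus Y_f\cup W)$. Hence non-wastefulness implies fairness-preserving non-wastefulness with respect to any $A$; this is the NW$\to$FNW arrow in Figure~\ref{fig:relations}. The non-wasteful SFMR matching of Theorem~\ref{theo:existence:SFMR} therefore witnesses the corollary directly.

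Comparing the two routes:
\begin{itemize}
\item The direct route shows the corollary is an immediate consequence of Theorem~\ref{theo:existence:SFMR}. It needs no discussion of FN termination.
\item The paper's FN route proves something more general: starting from \emph{any} feasible, individually rational SFMR matching (not necessarily non-wasteful), the FN dynamics reach an SFMR matching that is fairness-preserving non-wasteful. This illustrates the modular FN principle, but it adds nothing to the bare existence claim, since the paper's only general source of an SFMR starting point already delivers non-wastefulness.
\end{itemize}

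Your other remarks are also correct but not needed for the argument you present. Termination requires only finiteness of strict improvements, not efficient SFMR checks. And $\omega$ can indeed fail SFMR, via a displacement in which the dominated family is already at its initial assignment.
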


\begin{proof}
By Theorem~\ref{theo:existence:SFMR}, there exists a feasible and
individually rational matching satisfying SFMR. Apply
Theorem~\ref{theo:existence:constrained} with $A=\mathrm{SFMR}$.
\end{proof}

Corollary~\ref{cor:FN:SFMR} is an existence result. It does not imply a polynomial-time algorithm for the SFMR-based variant of FN, since verifying SFMR is coNP-complete and the existence algorithm for an SFMR matching is not
known to run in polynomial time.

% \clearpage
\section{Conclusion}

In this paper, we studied the tension between fairness and
non-wastefulness in matching markets with siblings and initial
enrollments. We first developed a unified framework for stability,
fairness, and non-wastefulness that accommodates joint family preferences
and seat withdrawal. We then explored two complementary approaches to
restoring compatibility when stability may fail. On the one hand, we
preserved non-wastefulness while progressively weakening fairness through
replacement, master-list restrictions, and resettlement. In particular,
resettlement restores universal existence when combined with a family
master list, and its dominance-based extension yields a stronger fairness
guarantee. On the other hand, we preserved fairness while relaxing
non-wastefulness through generalized cutoffs and fairness-preserving
non-wastefulness. The proposed GC, FN, SS, and GI algorithms illustrate
four distinct ways of navigating the fairness--non-wastefulness trade-off.

Our results also reveal substantial computational barriers behind these
compatibility guarantees. While several fairness notions can be verified
in polynomial time, deciding whether they can coexist with
non-wastefulness is NP-complete. Resettlement restores existence but does
not eliminate computational difficulty: computing a non-wasteful and
FMR-fair matching is PLS-complete, while verifying the stronger
dominance-based fairness notion is coNP-complete and computing a
non-wasteful matching satisfying it remains PLS-hard. These results
highlight that axiomatic compatibility and computational tractability
need not coincide.

Several directions remain for future research. One natural direction is
to identify preference restrictions or structural conditions under which
the computational barriers established here disappear, particularly for
resettlement-based fairness and bounded family size. Another is to study
incentive properties and mechanism-design questions associated with the
proposed solution concepts and algorithms. Finally, integer programming
formulations and large-scale empirical studies could help evaluate the
practical trade-offs among fairness, non-wastefulness, family welfare, and
the extent of resettlement in real daycare matching markets.

\section*{Acknowledgments}

This work was supported by the Japan Science and Technology Agency under ERATO Grant Number JPMJER2301, and by the Japan Society for the Promotion of Science under Grant Number 25K03186 and 25K21277. 

\bibliographystyle{plainnat}
\bibliography{reference}  

@article{APRS05a,
  author         = {A. Abdulkadiro{\u{g}}lu and P. A. Pathak and A. E. Roth and T. S{\"o}nmez},
  title          = {The {Boston} Public School Match},
  journal        = {American Economic Review},
  volume         = {95},
  number         = {2},
  pages          = {368--371},
  year           = {2005},
}

@article{AbSo99a,
  author         = {Abdulkadiro{\u{g}}lu, Atila and S{\"o}nmez, Tayfun},
  title          = {House allocation with existing tenants},
  journal        = {Journal of Economic Theory},
  volume         = {88},
  number         = {2},
  pages          = {233--260},
  publisher      = {Elsevier},
  year           = {1999},
}

@article{AbSo03b,
  author         = {A. Abdulkadiro{\u{g}}lu and T. S{\"o}nmez},
  title          = {School Choice: A Mechanism Design Approach},
  journal        = {American Economic Review},
  volume         = {93},
  number         = {3},
  pages          = {729--747},
  year           = {2003},
}

@article{ABH14a,
  author         = {I. Ashlagi and M. Braverman and A. Hassidim},
  title          = {Stability in large matching markets with complementarities},
  journal        = {Operations Research},
  volume         = {62},
  number         = {4},
  pages          = {713--732},
  year           = {2014},
}

@article{ABB24a,
  author         = {H. Aziz and A. Baychkov and P. Bir{\'o}},
  title          = {Cutoff stability under distributional constraints with an application to summer internship matching},
  journal        = {Mathematical Programming},
  volume         = {203},
  number         = {1},
  pages          = {247--269},
  publisher      = {Springer},
  year           = {2024},
  doi            = {10.1007/s10107-022-01917-1},
}

@inproceedings{AzSu21a,
  author         = {H. Aziz and Z. Sun},
  title          = {Multi-rank Smart Reserves},
  booktitle      = {Proceedings of the 22nd {ACM} Conference on Economics and Computation, Budapest, Hungary, July 18-23, {EC} 2021},
  pages          = {105--124},
  year           = {2021},
}

@article{BaSo99a,
  author         = {M. Balinski and T. S{\"o}nmez},
  title          = {A tale of two mechanisms: student placement},
  journal        = {Journal of Economic Theory},
  volume         = {84},
  number         = {1},
  pages          = {73--94},
  publisher      = {Elsevier},
  year           = {1999},
}

@techreport{BKS03a,
  author         = {P. Berman and M. Karpinski and A. D. Scott},
  title          = {Approximation Hardness of Short Symmetric Instances of {MAX-3SAT}},
  number         = {TR03-049},
  institution    = {Electronic Colloquium on Computational Complexity},
  type           = {Technical Report},
  year           = {2003},
  url            = {https://eccc.weizmann.ac.il/report/2003/049/},
}

@article{BFI16a,
  author         = {P. Bir{\'o} and T. Fleiner and R. W. Irving},
  title          = {Matching couples with {Scarf}'s algorithm},
  journal        = {Annals of Mathematics and Artificial Intelligence},
  volume         = {77},
  pages          = {303--316},
  publisher      = {Springer},
  year           = {2016},
}

@article{BIS11a,
  author         = {P. Bir{\'o} and R. W. Irving and I. Schlotter},
  title          = {Stable matching with couples: an empirical study},
  journal        = {Journal of Experimental Algorithmics},
  volume         = {16},
  articleno      = {1.2},
  numpages       = {27},
  publisher      = {ACM},
  year           = {2011},
  doi            = {10.1145/1963190.1970372},
  note           = {Article 1.2, 27 pages},
}

@article{BiKl13a,
  author         = {Bir{\'o}, P{\'e}ter and Klijn, Flip},
  title          = {Matching with couples: a multidisciplinary survey},
  journal        = {International Game Theory Review},
  volume         = {15},
  number         = {02},
  pages          = {1340008},
  publisher      = {World Scientific},
  year           = {2013},
}

@inproceedings{BMM14a,
  author         = {P. Bir{\'o} and D. F. Manlove and I. McBride},
  title          = {The hospitals/residents problem with couples: Complexity and integer programming models},
  booktitle      = {International Symposium on Experimental Algorithms},
  pages          = {10--21},
  year           = {2014},
  organization   = {Springer},
}

@article{CEE+22a,
  author         = {J. Correa and N. Epstein and R. Epstein and J. Escobar and I. Rios and N. Aramayo and B. Bahamondes and C. Bonet and M. Castillo and A. Cristi and B. Epstein and F. Subiabre},
  title          = {School choice in {Chile}},
  journal        = {Operations Research},
  volume         = {70},
  number         = {2},
  pages          = {1066--1087},
  publisher      = {INFORMS},
  year           = {2022},
}

@inproceedings{CMMT24a,
  author         = {G. Cs{\'{a}}ji and D. F. Manlove and I. McBride and J. Trimble},
  title          = {Couples Can Be Tractable: New Algorithms and Hardness Results for the Hospitals/Residents Problem with Couples},
  booktitle      = {Proceedings of the Thirty-Third International Joint Conference on Artificial Intelligence, {IJCAI} 2024, Jeju, South Korea, August 3-9, 2024},
  pages          = {2731--2739},
  year           = {2024},
}

@inproceedings{DPB15a,
  author         = {J. Drummond and A. Perrault and F. Bacchus},
  title          = {{SAT} Is an Effective and Complete Method for Solving Stable Matching Problems with Couples},
  booktitle      = {Proceedings of the Twenty-Fourth International Joint Conference on Artificial Intelligence, {IJCAI} 2015, Buenos Aires, Argentina, July 25-31, 2015},
  pages          = {518--525},
  publisher      = {{AAAI} Press},
  year           = {2015},
}

@article{DMP22a,
  author         = {U. Dur and T. Morrill and W. Phan},
  title          = {Family Ties: School Assignment with Siblings},
  journal        = {Theoretical Economics},
  volume         = {17},
  number         = {1},
  pages          = {89--120},
  year           = {2022},
}

@article{EHYY14a,
  author         = {L. Ehlers and I. E. Hafalir and M. B. Yenmez and M. A. Yildirim},
  title          = {School choice with controlled choice constraints: Hard bounds versus soft bounds},
  journal        = {Journal of Economic Theory},
  volume         = {153},
  pages          = {648--683},
  publisher      = {Elsevier},
  year           = {2014},
}

@article{GaSh62a,
  author         = {D. Gale and L. S. Shapley},
  title          = {College admissions and the stability of marriage},
  journal        = {The American Mathematical Monthly},
  volume         = {69},
  number         = {1},
  pages          = {9--15},
  publisher      = {Taylor \& Francis},
  year           = {1962},
}

@article{GJR+21a,
  author         = {A. H. Geitle and {\O}. K. Johnsen and H. F. E. Ruud and K. Fagerholt and C. A. Julsvoll},
  title          = {Kindergarten allocation in {Norway}: An integer programming approach},
  journal        = {Journal of the Operational Research Society},
  volume         = {72},
  number         = {7},
  pages          = {1664--1673},
  year           = {2021},
}

@article{HHKS+17a,
  author         = {N. Hamada and C. Hsu and R. Kurata and T. Suzuki and S. Ueda and M. Yokoo},
  title          = {Strategy-proof school choice mechanisms with minimum quotas and initial endowments},
  journal        = {Artificial Intelligence},
  volume         = {249},
  pages          = {47--71},
  publisher      = {Elsevier},
  year           = {2017},
}

@article{HaKo10a,
  author         = {J. Hatfield and F. Kojima},
  title          = {Substitutes and stability for matching with contracts},
  journal        = {Journal of Economic Theory},
  volume         = {145},
  number         = {5},
  pages          = {1704--1723},
  year           = {2010},
}

@article{HaMi05a,
  author         = {J. W. Hatfield and P. R. Milgrom},
  title          = {Matching with contracts},
  journal        = {American Economic Review},
  volume         = {95},
  number         = {4},
  pages          = {913--935},
  year           = {2005},
}

@article{IMS08a,
  author         = {R. W. Irving and D. F. Manlove and S. Scott},
  title          = {The stable marriage problem with master preference lists},
  journal        = {Discrete Applied Mathematics},
  volume         = {156},
  number         = {15},
  pages          = {2959--2977},
  publisher      = {Elsevier},
  year           = {2008},
}

@article{JPY88a,
  author         = {D. S. Johnson and C. H. Papadimitriou and M. Yannakakis},
  title          = {How easy is local search?},
  journal        = {Journal of Computer and System Sciences},
  volume         = {37},
  number         = {1},
  pages          = {79--100},
  publisher      = {Elsevier},
  year           = {1988},
}

@article{KaKo24a,
  author         = {Kamada, Yuichiro and Kojima, Fuhito},
  title          = {Fair Matching under Constraints: Theory and Applications},
  journal        = {The Review of Economic Studies},
  volume         = {91},
  number         = {2},
  pages          = {1162--1199},
  year           = {2024},
  month          = {March},
  doi            = {10.1093/restud/rdad046},
}

@article{KMT14a,
  author         = {J. Kennes and D. Monte and N. Tumennasan},
  title          = {The day care assignment: A dynamic matching problem},
  journal        = {American Economic Journal: Microeconomics},
  volume         = {6},
  number         = {4},
  pages          = {362--406},
  year           = {2014},
}

@article{KlKl05a,
  author         = {B. Klaus and F. Klijn},
  title          = {Stable matchings and preferences of couples},
  journal        = {Journal of Economic Theory},
  volume         = {121},
  number         = {1},
  pages          = {75--106},
  publisher      = {Elsevier},
  year           = {2005},
}

@article{Klkl07a,
  author         = {B. Klaus and F. Klijn},
  title          = {Paths to stability for matching markets with couples},
  journal        = {Games and Economic Behavior},
  volume         = {58},
  number         = {1},
  pages          = {154--171},
  publisher      = {Elsevier},
  year           = {2007},
}

@article{KPR13a,
  author         = {F. Kojima and P. A. Pathak and A. E. Roth},
  title          = {Matching with couples: Stability and incentives in large markets},
  journal        = {The Quarterly Journal of Economics},
  volume         = {128},
  number         = {4},
  pages          = {1585--1632},
  year           = {2013},
}

@article{KHIY17a,
  author         = {R. Kurata and N. Hamada and A. Iwasaki and M. Yokoo},
  title          = {Controlled school choice with soft bounds and overlapping types},
  journal        = {Journal of Artificial Intelligence Research},
  volume         = {58},
  pages          = {153--184},
  year           = {2017},
}

@article{MMT17a,
  author         = {D. F. Manlove and I. McBride and J. Trimble},
  title          = {``{A}lmost-stable'' matchings in the Hospitals/Residents problem with Couples},
  journal        = {Constraints},
  volume         = {22},
  number         = {1},
  pages          = {50--72},
  publisher      = {Springer},
  year           = {2017},
}

@inproceedings{MOPU07a,
  author         = {D. F. Manlove and G. O'Malley and P. Prosser and C. Unsworth},
  title          = {A Constraint Programming Approach to the Hospitals / Residents Problem},
  booktitle      = {Integration of AI and OR Techniques in Constraint Programming for Combinatorial Optimization Problems},
  pages          = {155--170},
  year           = {2007},
}

@article{MaSc11a,
  author         = {D. Marx and I. Schlotter},
  title          = {Stable assignment with couples: Parameterized complexity and local search},
  journal        = {Discrete Optimization},
  volume         = {8},
  number         = {1},
  pages          = {25--40},
  year           = {2011},
}

@article{McMa10a,
  author         = {E. J. McDermid and D. F. Manlove},
  title          = {Keeping partners together: algorithmic results for the hospitals/residents problem with couples},
  journal        = {Journal of Combinatorial Optimization},
  volume         = {19},
  number         = {3},
  pages          = {279--303},
  publisher      = {Springer},
  year           = {2010},
}

@article{NgVo18a,
  author         = {T. Nguyen and R. Vohra},
  title          = {Near-feasible stable matchings with couples},
  journal        = {American Economic Review},
  volume         = {108},
  number         = {11},
  pages          = {3154--3169},
  year           = {2018},
}

@inproceedings{PDB16a,
  author         = {A. Perrault and J. Drummond and F. Bacchus},
  title          = {Strategy-proofness in the stable matching problem with couples},
  booktitle      = {Proceedings of the 2016 International Conference on Autonomous Agents and Multiagent Systems},
  pages          = {132--140},
  year           = {2016},
}

@article{Ronn90a,
  author         = {E. Ronn},
  title          = {{NP}-complete stable matching problems},
  journal        = {Journal of Algorithms},
  volume         = {11},
  number         = {2},
  pages          = {285--304},
  publisher      = {Elsevier},
  year           = {1990},
}

@article{Roth84a,
  author         = {A. E. Roth},
  title          = {The evolution of the labor market for medical interns and residents: a case study in game theory},
  journal        = {Journal of Political Economy},
  volume         = {92},
  number         = {6},
  pages          = {991--1016},
  publisher      = {The University of Chicago Press},
  year           = {1984},
}

@article{Roth85a,
  author         = {A. E. Roth},
  title          = {The college admissions problem is not equivalent to the marriage problem},
  journal        = {Journal of Economic Theory},
  volume         = {36},
  number         = {2},
  pages          = {277--288},
  publisher      = {Elsevier},
  year           = {1985},
}

@article{Roth08a,
  author         = {A. E. Roth},
  title          = {Deferred acceptance algorithms: History, theory, practice, and open questions},
  journal        = {International Journal of Game Theory},
  volume         = {36},
  number         = {3},
  pages          = {537--569},
  publisher      = {Springer},
  year           = {2008},
}

@article{RoPe99a,
  author         = {A. E. Roth and E. Peranson},
  title          = {The redesign of the matching market for American physicians: Some engineering aspects of economic design},
  journal        = {American Economic Review},
  volume         = {89},
  number         = {4},
  pages          = {748--780},
  publisher      = {American Economic Association},
  year           = {1999},
}

@article{RoVa90a,
  author         = {Roth, Alvin E. and Vande Vate, John H.},
  title          = {Random paths to stability in two-sided matching},
  journal        = {Econometrica},
  volume         = {58},
  number         = {6},
  pages          = {1475--1480},
  year           = {1990},
  doi            = {10.2307/2938326},
}

@article{Scar67a,
  author         = {H. E. Scarf},
  title          = {The core of an {$N$} person game},
  journal        = {Econometrica},
  volume         = {35},
  number         = {1},
  pages          = {50--69},
  year           = {1967},
  doi            = {10.2307/1909383},
}

@article{ScTa25a,
  author         = {D. Scheder and J. Tantow},
  title          = {{PLS}-complete problems with lexicographic cost functions: Max-\emph{k}-{SAT} and {Abelian} Permutation Orbit Minimization},
  journal        = {CoRR},
  volume         = {abs/2510.15712},
  year           = {2025},
  doi            = {10.48550/ARXIV.2510.15712},
  url            = {https://arxiv.org/abs/2510.15712},
  archiveprefix  = {arXiv},
  eprint         = {2510.15712},
  eprinttype     = {arXiv},
}

@article{SoYe22a,
  author         = {T. S{\"o}nmez and M. B. Yenmez},
  title          = {Affirmative Action in {India} via Vertical, Horizontal, and Overlapping Reservations},
  journal        = {Econometrica},
  volume         = {90},
  number         = {3},
  pages          = {1143--1176},
  year           = {2022},
  doi            = {10.3982/ECTA17788},
}

@inproceedings{STM+23a,
  author         = {Z. Sun and Y. Takenami and D. Moriwaki and Y. Tomita and M. Yokoo},
  title          = {Daycare Matching in {J}apan: Transfers and Siblings},
  booktitle      = {Proceedings of the 37th AAAI Conference on Artificial Intelligence, AAAI 2023},
  pages          = {14487--14495},
  year           = {2023},
}

@inproceedings{SYT+24a,
  author         = {Z. Sun and N. Yamada and Y. Takenami and D. Moriwaki and M. Yokoo},
  title          = {Stable matchings in practice: A constraint programming approach},
  booktitle      = {Proceedings of the 38th AAAI Conference on Artificial Intelligence, AAAI 2024},
  pages          = {22377--22384},
  year           = {2024},
}

@inproceedings{SYY25a,
  author         = {Z. Sun and T. Yokoyama and M. Yokoo},
  title          = {Probabilistic analysis of stable matching in large markets with siblings},
  booktitle      = {Proceedings of the Thirty-Fourth International Joint Conference on Artificial Intelligence ({IJCAI})},
  pages          = {4065--4072},
  year           = {2025},
  doi            = {10.24963/ijcai.2025/453},
  url            = {https://www.ijcai.org/proceedings/2025/453},
  note           = {Extended version available at arXiv:2501.13043},
}

@article{STY+23a,
  author         = {T. Suzuki and A. Tamura and K. Yahiro and M. Yokoo and Y. Zhang},
  title          = {Strategyproof Allocation Mechanisms with Endowments and {M-convex} Distributional Constraints},
  journal        = {Artificial Intelligence},
  volume         = {315},
  pages          = {103825},
  year           = {2023},
  doi            = {10.1016/j.artint.2022.103825},
}

@inproceedings{STY18a,
  author         = {T. Suzuki and A. Tamura and M. Yokoo},
  title          = {Efficient allocation mechanism with endowments and distributional constraints},
  booktitle      = {Proceedings of the 17th International Conference on Autonomous Agents and MultiAgent Systems},
  pages          = {50--58},
  year           = {2018},
}

\end{document}